\newtheorem{theorem}{Theorem} [section]
\newtheorem{proposition}[theorem]{Proposition}	
\newtheorem{lemma}[theorem]{Lemma}
\newtheorem{remark}[theorem]{Remark}
\theoremstyle{definition}
\let\oldbibliography\thebibliography
\renewcommand{\thebibliography}[1]{\oldbibliography{#1}
\setlength{\itemsep}{-0.5pt}}
\def\XXint#1#2#3{{\setbox0=\hbox{$#1{#2#3}{\int}$}
\vcenter{\hbox{$#2#3$}}\kern-.5\wd0}}
\tikzset{->-/.style={decoration={
				markings,
				mark=at position #1 with {\arrow{latex}}},postaction={decorate}}}
	\tikzset{-<-/.style={decoration={
				markings,
				mark=at position #1 with {\arrowreversed{latex}}},postaction={decorate}}}
\tikzset{cross/.style={cross out, draw, 
         minimum size=2*(#1-\pgflinewidth), 
         inner sep=0pt, outer sep=0pt}}
\numberwithin{equation}{section}
\def\bigO{{\cal O}}
\begin{document}
\title{Large gap asymptotics on annuli \\ in the random normal matrix model}
\author{Christophe Charlier}

\maketitle

\begin{abstract}
We consider a two-dimensional determinantal point process arising in the random normal matrix model and which is a two-parameter generalization of the complex Ginibre point process. In this paper, we prove that the probability that no points lie on any number of annuli centered at $0$ satisfies large $n$ asymptotics of the form
\begin{align*}
\exp \bigg( C_{1} n^{2} + C_{2} n \log n + C_{3} n +  C_{4} \sqrt{n} + C_{5}\log n + C_{6} + \mathcal{F}_{n} + \bigO\big( n^{-\frac{1}{12}}\big)\bigg),
\end{align*}
where $n$ is the number of points of the process. We determine the constants $C_{1},\ldots,C_{6}$ explicitly, as well as the oscillatory term $\mathcal{F}_{n}$ which is of order $1$. We also allow one annulus to be a disk, and one annulus to be unbounded. For the complex Ginibre point process, we improve on the best known results: (i) when the hole region is a disk, only $C_{1},\ldots,C_{4}$ were previously known, (ii) when the hole region is an unbounded annulus, only $C_{1},C_{2},C_{3}$ were previously known, and (iii) when the hole region is a regular annulus in the bulk, only $C_{1}$ was previously known. For general values of our parameters, even $C_{1}$ is new. A main discovery of this work is that $\mathcal{F}_{n}$ is given in terms of the Jacobi theta function. As far as we know this is the first time this function appears in a large gap problem of a two-dimensional point process. 


\end{abstract}
\noindent
{\small{\sc AMS Subject Classification (2020)}: 41A60, 60B20, 60G55.}

\noindent
{\small{\sc Keywords}: Random normal matrices, Ginibre point process, gap probabilities.}


\section{Introduction and statement of results}\label{section: introduction}
Consider the probability density function
\begin{align}\label{def of point process}
\frac{1}{n!Z_{n}} \prod_{1 \leq j < k \leq n} |z_{k} -z_{j}|^{2} \prod_{j=1}^{n}|z_{j}|^{2\alpha}e^{-n |z_{j}|^{2b}}, \qquad  b>0, \; \alpha > -1,
\end{align}
where $z_{1},\ldots,z_{n} \in \mathbb{C}$ and $Z_{n}$ is the normalization constant. We are interested in the gap probability
\begin{align}\label{def of Pn}
\mathcal{P}_{n} & := \mathbb{P}\bigg( \#\Big\{z_{j}:|z_{j}|\in [r_{1},r_{2}]\cup [r_{3},r_{4}]\cup ... \cup [r_{2g-1},r_{2g}] \Big\}=0 \bigg),
\end{align}
where $0 \leq r_{1} < r_{2} < \ldots < r_{2g} \leq +\infty$. Thus $\mathcal{P}_{n}$ is the probability that no points lie on $g$ annuli centered at $0$ and whose radii are given by $r_{1},\ldots,r_{2g}$. One annulus is a disk if $r_{1}=0$, and one annulus is unbounded if $r_{2g}=+\infty$. In this paper we obtain the large $n$ asymptotics of $\mathcal{P}_{n}$, up to and including the term of order 1. 

\newpage The particular case $b=1$ and $\alpha=0$ of \eqref{def of point process} is known as the complex Ginibre point process \cite{Ginibre} (or simply \textit{Ginibre process}, for short) and is the most well-studied two-dimensional determinantal point process of the theory of random matrices. It describes the distribution of the eigenvalues of an $n \times n$ random matrix whose entries are independent complex centered Gaussian random variables with variance $\frac{1}{n}$. For general values of $b>0$ and $\alpha>-1$, \eqref{def of point process} is the joint eigenvalue density of a normal matrix $M$ taken with respect to the probability measure \cite{Mehta}
\begin{align}\label{def of random normal matrix model}
\frac{1}{\mathcal{Z}_{n}}|\det(M)|^{2\alpha}e^{-n \, \mathrm{tr}((MM^{*})^{b})}dM.
\end{align}
Here $dM$ denotes the measure induced by the flat Euclidian metric of $\mathbb{C}^{n\times n}$ on the set of normal $n \times n$ matrices (see e.g. \cite{CZ1998,ElbauFedler} for details), $M^{*}$ is the conjugate transpose of $M$, ``$\mbox{tr}$" denotes the trace, and $\mathcal{Z}_{n}$ is the normalization constant.

\medskip The limiting mean density (with respect to $d^{2}z$) as $n \to + \infty$ of the points $z_{1},\ldots,z_{n}$ is given by \cite{SaTo, Charlier 2d jumps}
\begin{align}\label{limiting density}
\frac{b^{2}}{\pi}|z|^{2b-2},
\end{align}
and is supported on the disk centered at $0$ and of radius $b^{-\frac{1}{2b}}$. In particular, for $b=1$, the limiting density is uniform over the unit disk; this is a well-known result of Ginibre \cite{Ginibre}. Since the points accumulate on a compact set as $n \to + \infty$, this means that for large $n$, $\mathcal{P}_{n}$ is the probability of a rare event, namely that there are $g$ ``large gaps/holes" in the form of annuli. 

\medskip The probability to observe a hole on a disk centered at $0$ and of radius $r<1$ in the Ginibre process was first studied by Forrester, who obtained \cite[eq (27)]{ForresterHoleProba}
\begin{align}\label{JLM result for finite Ginibre}
\mathcal{P}_{n} = \exp \bigg( C_{1}n^{2} + C_{2}n \log n + C_{3} n + C_{4}\sqrt{n} + o(\sqrt{n}) \bigg), \qquad \mbox{as } n \to + \infty,
\end{align}
where
\begin{align*}
& C_{1} = -\frac{r^{4}}{4}, \qquad C_{2} = -\frac{r^{2}}{2}, \qquad C_{3} = r^{2}\Big( 1-\log(r\sqrt{2\pi}) \Big), \\
& C_{4} = \sqrt{2} \, r \bigg\{ \int_{-\infty}^{0}\log \bigg( \frac{1}{2}\mathrm{erfc}(y) \bigg)dy + \int_{0}^{+\infty} \bigg[\log \bigg( \frac{1}{2}\mathrm{erfc}(y) \bigg) +y^{2} +\log y + \log(2\sqrt{\pi})\bigg]dy \bigg\},
\end{align*}
and $\mathrm{erfc}$ is the complementary error function defined by
\begin{align}\label{def of erfc}
\mathrm{erfc} (z) = \frac{2}{\sqrt{\pi}}\int_{z}^{\infty} e^{-t^{2}}dt.
\end{align}
The constant $C_{1}$ was also given independently by Jancovici, Lebowitz and Manificat in \cite[eq (2.7)]{JLM1993}. As noticed in \cite[eq (13)]{ForresterHoleProba}, $C_{1}$ and $C_{2}$ also follow from the asymptotic expansion obtained in an equivalent problem considered in the earlier work \cite{GHS1988}. The constants $C_{1},C_{2},C_{3}$ have also been obtained in the more recent work \cite{APS2009} using a different method; see also \cite[eq (49)]{L et al 2019} for another derivation of $C_{1}$. Although Forrester's result \eqref{JLM result for finite Ginibre} is $30$ years old, as far as we know it is the most precise result available in the literature prior to this work. 

\medskip When the hole region is an unbounded annulus centered at $0$ and of inner radius $r<1$, the following third order asymptotics for $\mathcal{P}_{n}$ were obtained by Cunden, Mezzadri and Vivo in \cite[eq (51)]{CMV2016} for the Ginibre process:
\begin{align}\label{CMV result for unbounded annulus}
\mathcal{P}_{n} = \exp \bigg( C_{1}n^{2} + C_{2}n \log n + C_{3} n + o(n) \bigg), \qquad \mbox{as } n \to + \infty,
\end{align}
where $C_{1} = \frac{r^{4}}{4}-r^{2}+\frac{3}{4}+\log r$, $ C_{2} = \frac{r^{2}-1}{2}$, $C_{3} = (1-r^{2}) \big( 1 - \log \frac{\sqrt{2\pi}(1-r^{2})}{r} \big)$.

\medskip Hole probabilities of more general domains have been considered in \cite{AR Infinite Ginibre} for the Ginibre process. In particular, for a large class of open sets $U$ lying in the unit disk, Adhikari and Reddy in \cite{Adhikari} proved that
\begin{align*}
\mathbb{P}\bigg( \#\Big\{z_{j}:z_{j}\in U \Big\}=0 \bigg) = \exp \bigg( C_{1} n^{2}+ o(n^{2}) \bigg), \qquad \mbox{as } n \to + \infty,
\end{align*}
where the constant $C_{1}=C_{1}(U)$ is given in terms of a certain equilibrium measure related to a problem of potential theory. When $U$ is either a disk, an annulus, an ellipse, a cardioid, an equilateral triangle or a half-disk, $C_{1}$ has been computed explicitly. Some of these results have then been generalized for a wide class of point processes by Adhikari in \cite{Adhikari}. For the point process \eqref{def of point process} (with arbitrary $b>0$ but $\alpha=0$), he obtained
\begin{align}
& \mathbb{P}\bigg( \#\Big\{z_{j}:|z_{j}|\in [0,r] \Big\}=0 \bigg) = \exp \bigg( - \frac{br^{4b}}{4} n^{2}+ o(n^{2}) \bigg), \nonumber \\
& \mathbb{P}\bigg( \#\Big\{z_{j}:|z_{j}|\in [r_{1},r_{2}] \Big\}=0 \bigg) = \exp \bigg( -\bigg( \frac{b}{4}(r_{2}^{4b}-r_{1}^{4b}) - \frac{(r_{2}^{2b}-r_{1}^{2b})^{2}}{4 \log(\frac{r_{2}}{r_{1}})} \bigg) n^{2}+ o(n^{2}) \bigg), \label{one annuli result of AR}
\end{align}
as $n \to + \infty$ with $0<r_{1}<r_{2}<b^{-\frac{1}{2b}}$ and $r \in (0,b^{-\frac{1}{2b}})$ fixed, see \cite[Theorem 1.2 and eqs (3.5)--(3.6)]{Adhikari}. 

\medskip These are the only works which we are aware of and which fall exactly in our setting. There are however several other works that fall just outside. In \cite{Shirai}, Shirai considered the infinite Ginibre process, which, as its name suggests, is the limiting point process arising in the bulk of the (finite) Ginibre process. He proved, among other things, that the probability of the hole event $\#\big\{z_{j}:|z_{j}| \leq r \big\}=0$ behaves as $\exp \big( \frac{-r^{4}}{4} + o(r^{4}) \big)$ as $r \to + \infty$ (see also \cite[Proposition 7.2.1]{HKPV2009} for a different proof). This result can be seen as a less precise analogue of \eqref{JLM result for finite Ginibre} for the infinite Ginibre process, and was later generalized for more general shapes of holes in \cite{AR Infinite Ginibre} and then for more general point processes in \cite{Adhikari}. Hole probabilities for product random matrices have been investigated in \cite{AS2013, AIS2014}. The existing literature on large gap problems in dimension $\geq 2$ goes beyond random matrix theory. The random zeros of the hyperbolic analytic function $\sum_{k=0}^{+\infty}\xi_{k} z^{k}$ --- here the $\xi_{k}$'s are independent standard complex Gaussians --- form a determinantal point process \cite{PV2005}, and the associated large gap problem on a centered disk has been solved in \cite[Corollary 3 (i)]{PV2005}. Another well studied two-dimensional point process is the random zeros of the standard Gaussian entire function. This function is given by $\sum_{k=0}^{+\infty}\xi_{k} \frac{z^{k}}{\sqrt{k!}}$, where the $\xi_{k}$'s are independent standard complex Gaussians. In \cite{ST2003}, the probability for this function to have no zeros in a centered disk of radius $r$ was shown to be, for all sufficiently large $r$, bounded from below by $\exp(-Cr^{4})$ and bounded from above by $\exp(-cr^{4})$ for some positive constants $c$ and $C$. This result was later improved by Nishry in \cite{Nishry}, who proved that this probability is $\exp(-\frac{e^{2}}{4}r^{4}+o(r^{4}))$ as $r \to + \infty$. A similar result as in \cite{ST2003} was obtained in \cite{Hough} for a different kind of random functions with diffusing coefficients. Also, for a $d$-dimensional process of noninteracting fermions, it is shown in \cite{GDS2021} that the hole probability on a spherical domain of radius $r$ behaves as $\exp(-cr^{d+1}+o(r^{3}))$ as $r \to + \infty$, and an explicit expression for $c>0$ is also given. 

\medskip In its full generality, the random normal matrix model is associated with a given confining potential $Q:\mathbb{C}\to \mathbb{R}\cup\{+\infty\}$ and is defined by a probability measure proportional to $e^{-n\mathrm{tr}Q(M)}dM$, where $dM$ is as in \eqref{def of random normal matrix model}. The random normal matrix model has been extensively studied over the years, see e.g. \cite{CZ1998, ElbauFedler} for early works, \cite{Forrester, RiderVirag, AHM2011, LebleSerfaty} for smooth linear statistics, \cite{Rider, AKS2018, WebbWong, DeanoSimm, FenzlLambert, Charlier 2d jumps} for non-smooth linear statistics, and \cite{BBLM2015, HedWenn2017, LeeYang, LeeYang3, AC2021} for recent investigations on planar orthogonal polynomials. Despite such progress, the problem of determining large gap asymptotics in this model has remained an outstanding problem. In this work we focus on $Q(z) = |z|^{2b}+\frac{2\alpha}{n} \log |z|$, which is a generalization of the Gaussian potential $|z|^{2}$ known as the Mittag–Leffler potential \cite{AK2013}.

\medskip Let us now explain our results in more detail. We obtain the large $n$ asymptotics of $\mathcal{P}_{n}$ for general values of $b>0$ and $\alpha > -1$ in four different cases:
\begin{enumerate}
\item \vspace{-0.0cm} The case $0 < r_{1} < \ldots < r_{2g} < b^{-\frac{1}{2b}}$ \hspace{3cm} is stated in Theorem \ref{thm:main thm},
\item \vspace{-0.2cm} The case $0 < r_{1} < \ldots < r_{2g-1} < b^{-\frac{1}{2b}} < r_{2g}=+\infty$ \hspace{0.64cm} is stated in Theorem \ref{thm:main thm 2},
\item \vspace{-0.2cm} The case $0 = r_{1} < r_{2} < \ldots < r_{2g} < b^{-\frac{1}{2b}}$ \hspace{2.26cm} is stated in Theorem \ref{thm:main thm 3},
\item \vspace{-0.2cm} The case $0 = r_{1} < r_{2} < \ldots < r_{2g-1} < b^{-\frac{1}{2b}} < r_{2g}=+\infty$ is stated in Theorem \ref{thm:main thm 4}.
\end{enumerate}
In other words, we cover the situations where the hole region consists of
\begin{enumerate}
\item \vspace{-0.1cm} $g$ annuli inside the disk of radius $b^{-\frac{1}{2b}}$ (``the bulk"), 
\item \vspace{-0.2cm} $g-1$ annuli in the bulk and one unbounded annulus ($g \geq 1$),
\item \vspace{-0.2cm} $g-1$ annuli in the bulk and one disk ($g \geq 1$),
\item \vspace{-0.2cm} $g-2$ annuli in the bulk, one unbounded annulus, and one disk ($g \geq 2$).
\end{enumerate}
For each of these four cases, we prove that
\begin{align}\label{asymp of Pn in intro}
\mathcal{P}_{n} = \exp \bigg( C_{1} n^{2} + C_{2} n \log n + C_{3} n +  C_{4} \sqrt{n} + C_{5}\log n + C_{6} + \mathcal{F}_{n} + \bigO\big( n^{-\frac{1}{12}}\big)\bigg),
\end{align}
as $n \to + \infty$, and we give explicit expressions for the constants $C_{1},\ldots,C_{6}$. 

\medskip The quantity $\mathcal{F}_{n}$ fluctuates around $0$ as $n$ increases, is of order $1$, and is given in terms of the Jacobi theta function (see e.g. \cite[Chapter 20]{NIST})
\begin{align}\label{def of Jacobi theta}
\theta(z|\tau) := \sum_{\ell=-\infty}^{+\infty} e^{2 \pi i \ell z}e^{\pi i \ell^{2} \tau}, \qquad z \in \mathbb{C}, \quad \tau \in i(0,+\infty).
\end{align}
Note that $\theta(z|\tau)=\theta(z+1|\tau)$ for all $z \in \mathbb{C}$ and $\tau \in i(0,+\infty)$; in particular $\mathbb{R}\ni x\mapsto \theta(x|\tau)$ is periodic of period $1$. To our knowledge, this is the first time the Jacobi theta function appears in a large gap problem of a two-dimensional point process. 

\medskip The presence of oscillations in these asymptotics can be explained by the following heuristics. It is easy to see (using Bayes' formula) that $\mathcal{P}_{n}$ is also equal to the partition function (= normalization constant) of the point process \eqref{def of point process} \textit{conditioned} on the event that $\#\{z_{j}:|z_{j}|\in [r_{1},r_{2}]\cup [r_{3},r_{4}]\cup ... \cup [r_{2g-1},r_{2g}] \}=0$. As is typically the case in the asymptotic analysis of partition functions, an important role is played by the $n$-tuples $(z_{1},\ldots,z_{n})$ which maximize the density of this conditional process. One is then left to understand the configurations of such $n$-tuples when $n$ is large. To be more concrete, suppose for example that $g=1$ and $0<r_{1}<r_{2}<b^{-\frac{1}{2b}}$. Since the support of \eqref{limiting density} is the centered disk of radius $b^{-\frac{1}{2b}}$, it is natural to expect that the points in the conditional process will accumulate as $n \to + \infty$ on two separated components (namely the centered disk of radius $r_{1}$, and an annulus whose small radius is $r_{2}$). The $n$-tuples $(z_{1},\ldots,z_{n})$ maximizing the conditional density may differ from each other by the number of $z_{j}$'s lying on a given component. This, in turn, produces some oscillations in the behavior of $\mathcal{P}_{n}$. More generally, if the points in the conditional process accumulate on several components (``the multi-component regime"), then one expects some oscillations in the asymptotics of $\mathcal{P}_{n}$. (There exist several interesting studies on conditional processes in dimension two, see e.g. \cite{GN2019, NishryWennman, Seo}.) In the setting of this paper, there are three cases for which there is no oscillation (i.e. $\mathcal{F}_{n}=0$): when the hole region consists of only one disk (the case $g=1$ of Theorem \ref{thm:main thm 2}), only one unbounded annulus (the case $g=1$ of Theorem \ref{thm:main thm 3}), or one disk and one unbounded annulus (the case $g=2$ of Theorem \ref{thm:main thm 4}). This is consistent with our above discussion since in each of these three cases the points of the conditional process will accumulate on a single connected component.

\medskip It has already been observed that the Jacobi theta function (and more generally, the Riemann theta function) typically describes the oscillations of various statistics of one-dimensional point processes in ``the multi-cut regime". Indeed, Widom in \cite{Widom1995} discovered that the large gap asymptotics of the one-dimensional sine process, when the gaps consist of \textit{several} intervals, contain oscillations of order 1 given in terms of the solution to a Jacobi inversion problem. These oscillations were then substantially simplified by Deift, Its and Zhou in \cite{DIZ1997}, who expressed them in terms of the Riemann theta function. Since then, there has been other works of this vein, see \cite{BG2} for $\beta$-ensembles, \cite{ClaeysGravaMcLaughlin} for partition functions of random matrix models, \cite{FK2020} for the sine process, \cite{BCL1, BCL2, KM2021} for the Airy process, and \cite{BCL3} for the Bessel process. In all these works, the Riemann theta function describes the fluctuations in the asymptotics, thereby suggesting that this function is a universal object related to the multi-cut regime of one-dimensional point processes. Our results show that, perhaps surprisingly, the universality of the Jacobi theta function goes beyond dimension $1$. 

\medskip Another function that plays a predominant role in the description of the large $n$ asymptotics of $\mathcal{P}_{n}$ is the complementary error function (defined in \eqref{def of erfc}). This function already emerged in the constant $C_{4}|_{(b=1,\alpha=0)}$ of Forrester, see \eqref{JLM result for finite Ginibre}. Interestingly, the constant $C_{4}$ of Theorem \ref{thm:main thm} involves the same integrals (which are independent of $b$ and $\alpha$), namely
\begin{align}\label{int1 intro}
\int_{-\infty}^{0}\log \bigg( \frac{1}{2}\mathrm{erfc}(y) \bigg)dy, \qquad \int_{0}^{+\infty} \bigg[\log \bigg( \frac{1}{2}\mathrm{erfc}(y) \bigg) +y^{2} +\log y + \log(2\sqrt{\pi})\bigg]dy,
\end{align}
and the constants $C_{6}$ of Theorems \ref{thm:main thm 2} and \ref{thm:main thm 3} involve
\begin{align}
& \int_{-\infty}^{0} \bigg\{ 2y\log \bigg( \frac{1}{2}\mathrm{erfc}(y)\bigg) + \frac{e^{-y^{2}}(1-5y^{2})}{3\sqrt{\pi}\mathrm{erfc}(y)} \bigg\}dy, \label{int2 intro easy} \\
& \int_{0}^{+\infty} \bigg\{ 2y\log \bigg( \frac{1}{2}\mathrm{erfc}(y)\bigg) + \frac{e^{-y^{2}}(1-5y^{2})}{3\sqrt{\pi}\mathrm{erfc}(y)} + \frac{11}{3}y^{3} + 2y \log y + \bigg( \frac{1}{2} + 2 \log(2\sqrt{\pi}) \bigg)y \bigg\}dy. \label{int2 intro}
\end{align}
Using the well-known large $y$ asymptotics of $\mathrm{erfc}(y)$ \cite[7.12.1]{NIST}
\begin{align}\label{large y asymp of erfc}
& \mathrm{erfc}(y) = \frac{e^{-y^{2}}}{\sqrt{\pi}}\bigg( \frac{1}{y}-\frac{1}{2y^{3}}+\frac{3}{4y^{5}}-\frac{15}{8y^{7}} + \bigO(y^{-9}) \bigg), & & \mbox{as } y \to + \infty,
\end{align} 
and $\mathrm{erfc}(-y) = 2-\mathrm{erfc}(y)$, it is easy to check that
\begin{align*}
\log \bigg( \frac{1}{2}\mathrm{erfc}(y) \bigg) = \bigO(e^{-\frac{y^{2}}{2}}),  \quad 2y\log \bigg( \frac{1}{2}\mathrm{erfc}(y)\bigg) + \frac{e^{-y^{2}}(1-5y^{2})}{3\sqrt{\pi}\mathrm{erfc}(y)} = \bigO(e^{-\frac{y^{2}}{2}}), \qquad \mbox{as } y \to - \infty,
\end{align*}
and that
\begin{align*}
& \log \bigg( \frac{1}{2}\mathrm{erfc}(y) \bigg) +y^{2} +\log y + \log(2\sqrt{\pi}) = \bigO(y^{-2}), \\
& 2y\log \bigg( \frac{1}{2}\mathrm{erfc}(y)\bigg) + \frac{e^{-y^{2}}(1-5y^{2})}{3\sqrt{\pi}\mathrm{erfc}(y)} + \frac{11}{3}y^{3} + 2y \log y + \bigg( \frac{1}{2} + 2 \log(2\sqrt{\pi}) \bigg)y = \bigO(y^{-3}), 
\end{align*}
as $y \to + \infty$, which implies that the integrals in \eqref{int1 intro}, \eqref{int2 intro easy} and \eqref{int2 intro} are finite, as it must.

\medskip We expect that the estimate $\bigO\big(n^{-\frac{1}{12}}\big)$ for the error term in \eqref{asymp of Pn in intro} is not optimal and could be reduced to $\bigO\big(n^{-\frac{1}{2}}\big)$. However, proving this is a very technical task, and we will not pursue that here. We now state our main results, and discuss our method of proof afterwards.

\begin{figure}[h!]
\begin{center}
\begin{tikzpicture}
\node at (0,0) {\includegraphics[width=5cm]{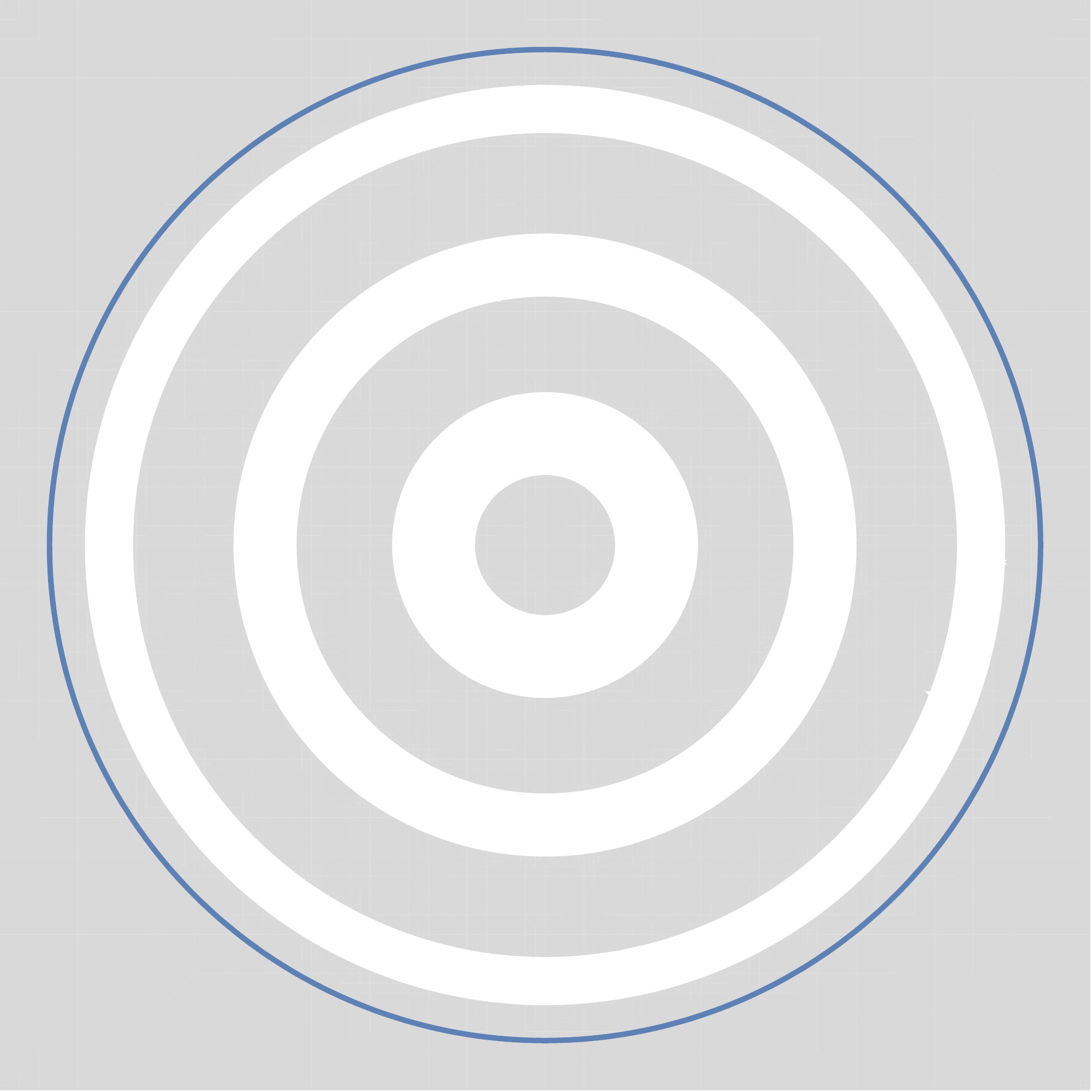}};
\node at (0,0.5) {\footnotesize gap};
\node at (0,1.27) {\footnotesize gap};
\node at (0,1.975) {\footnotesize gap};
\draw[-<-=0,->-=1] (0,0)--(2.28,0);
\node at (1.2,0.2) {\footnotesize $b^{-\frac{1}{2b}}$};
\end{tikzpicture}
\end{center}
\vspace{-0.4cm}\caption{This situation is covered by Theorem \ref{thm:main thm} with $g=3$. 
}
\end{figure}
\begin{theorem}\label{thm:main thm}
($g$ annuli in the bulk) 

Let 
\begin{align*}
g \in \{1,2,\ldots\}, \qquad \alpha > -1, \qquad b>0, \qquad 0 < r_{1} < \ldots < r_{2g} < b^{-\frac{1}{2b}}
\end{align*}
be fixed parameters. As $n \to + \infty$, we have
\begin{align}\label{asymp in main thm}
\mathcal{P}_{n} = \exp \bigg( C_{1} n^{2} + C_{2} n \log n + C_{3} n +  C_{4} \sqrt{n} + C_{5}\log n + C_{6} + \mathcal{F}_{n} + \bigO\big( n^{-\frac{1}{12}}\big)\bigg), 
\end{align}
where
\begin{align*}
& C_{1} = \sum_{k=1}^{g} \bigg\{ \frac{(r_{2k}^{2b}-r_{2k-1}^{2b})^{2}}{4 \log(\frac{r_{2k}}{r_{2k-1}})} - \frac{b}{4}(r_{2k}^{4b}-r_{2k-1}^{4b}) \bigg\}, \\
& C_{2} = - \sum_{k=1}^{g} \frac{b(r_{2k}^{2b}-r_{2k-1}^{2b})}{2}, \\
& C_{3} = \sum_{k=1}^{g} \bigg\{ b(r_{2k}^{2b}-r_{2k-1}^{2b}) \bigg( \frac{1}{2}+\log \frac{b}{\sqrt{2\pi}} \bigg) + b^{2} \Big( r_{2k}^{2b}\log (r_{2k}) - r_{2k-1}^{2b}\log (r_{2k-1}) \Big) \\
& \hspace{1.65cm} -(t_{2k}-br_{2k-1}^{2b})\log(t_{2k}-br_{2k-1}^{2b})-(br_{2k}^{2b}-t_{2k})\log(br_{2k}^{2b}-t_{2k}) \bigg\}, \\
& C_{4} = \sqrt{2}b \bigg\{ \int_{-\infty}^{0}\log \bigg( \frac{1}{2}\mathrm{erfc}(y) \bigg)dy + \int_{0}^{+\infty} \bigg[\log \bigg( \frac{1}{2}\mathrm{erfc}(y) \bigg) +y^{2} +\log y + \log(2\sqrt{\pi})\bigg]dy \bigg\} \sum_{k=1}^{2g}r_{k}^{b}, \\
& C_{5} = 0, \\
& C_{6} = \frac{g}{2}\log (\pi)+ \sum_{k=1}^{g} \bigg\{ \frac{1-2b^{2}}{12}\log \bigg(\frac{r_{2k}}{r_{2k-1}}\bigg) + \frac{b^{2}r_{2k}^{2b}}{br_{2k}^{2b}-t_{2k}}+ \frac{b^{2}r_{2k-1}^{2b}}{t_{2k}-br_{2k-1}^{2b}} - \frac{1}{2}\log \log \bigg( \frac{r_{2k}}{r_{2k-1}} \bigg) \\
& + \frac{\big[ \log \big(\frac{br_{2k}^{2b}-t_{2k}}{t_{2k}-br_{2k-1}^{2b}} \big) \big]^{2}}{4 \log \big( \frac{r_{2k}}{r_{2k-1}} \big)} - \sum_{j=1}^{+\infty} \log \bigg( 1- \bigg( \frac{r_{2k-1}}{r_{2k}} \bigg)^{2j} \bigg) \bigg\}, \\
& \mathcal{F}_{n} = \sum_{k=1}^{g} \log \theta \Bigg(t_{2k}n + \frac{1}{2} - \alpha + \frac{\log \big(\frac{br_{2k}^{2b}-t_{2k}}{t_{2k}-br_{2k-1}^{2b}} \big)}{2 \log \big( \frac{r_{2k}}{r_{2k-1}} \big)} \Bigg| \frac{\pi i}{\log(\frac{r_{2k}}{r_{2k-1}})} \Bigg),
\end{align*}
$\theta$ is given by \eqref{def of Jacobi theta}, and for $k \in \{1,\ldots,g\}$
\begin{align}
t_{2k} & := \frac{1}{2} \frac{r_{2k}^{2b}-r_{2k-1}^{2b}}{\log \big( \frac{r_{2k}}{r_{2k-1}} \big)} \in (br_{2k-1}^{2b},br_{2k}^{2b}). \label{def of t2kstar in thm}
\end{align}
\end{theorem}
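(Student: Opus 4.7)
Since the density \eqref{def of point process} is rotationally invariant and the monomials $\{z^{j}\}_{j\ge 0}$ are orthogonal with respect to $|z|^{2\alpha}e^{-n|z|^{2b}}\,d^{2}z$, the partition function factorises as $Z_{n}=n!\prod_{j=0}^{n-1}h_{j}$ with
\begin{equation*}
h_{j}=2\pi\!\int_{0}^{\infty}\!r^{2j+2\alpha+1}e^{-nr^{2b}}\,dr=\frac{\pi}{b}\,n^{-(j+\alpha+1)/b}\,\Gamma\!\Bigl(\tfrac{j+\alpha+1}{b}\Bigr).
\end{equation*}
The same factorisation applies to the conditioned partition function, since the forbidden region is also rotationally symmetric, and
\begin{equation*}
\mathcal{P}_{n}=\prod_{j=0}^{n-1}p_{j},\qquad p_{j}:=\frac{h_{j}^{(A)}}{h_{j}},\qquad h_{j}^{(A)}:=2\pi\!\int_{[0,\infty)\setminus\bigcup_{k}[r_{2k-1},r_{2k}]}\!r^{2j+2\alpha+1}e^{-nr^{2b}}\,dr.
\end{equation*}
The problem thus reduces to an asymptotic analysis of $\log p_{j}$ and a careful summation over $j\in\{0,\dots,n-1\}$.

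Write $\tau:=j/n$ and $f_{\tau}(r):=-r^{2b}+2\tau\log r$, whose unique maximum on $(0,\infty)$ is at $r_{\ast}(\tau):=(\tau/b)^{1/(2b)}$. Depending on the location of $r_{\ast}(\tau)$, I would distinguish three regimes: (i) $r_{\ast}(\tau)$ is $O(1)$-away from every $r_{k}$ and sits in the allowed region, in which case $p_{j}=1+O(e^{-cn})$; (ii) $r_{\ast}(\tau)\in(r_{2k-1},r_{2k})$ lies $O(1)$-away from both endpoints, in which case integration by parts at the endpoints of the forbidden annulus gives
\begin{equation*}
p_{j}\;\sim\;\frac{d_{k}^{-}(\tau)}{\sqrt{n}}\,e^{-nV_{k}^{-}(\tau)}+\frac{d_{k}^{+}(\tau)}{\sqrt{n}}\,e^{-nV_{k}^{+}(\tau)}
\end{equation*}
with $V_{k}^{-}(\tau):=f_{\tau}(r_{\ast}(\tau))-f_{\tau}(r_{2k-1})$, $V_{k}^{+}(\tau):=f_{\tau}(r_{\ast}(\tau))-f_{\tau}(r_{2k})$, and the identity $V_{k}^{-}(t_{2k})=V_{k}^{+}(t_{2k})$ holding by the very definition \eqref{def of t2kstar in thm} of $t_{2k}$; (iii) $r_{\ast}(\tau)$ is within $O(n^{-1/2})$ of some $r_{k}$, in which case a Gaussian approximation of the integrand yields $p_{j}=\tfrac12\mathrm{erfc}(y)+O(n^{-1/2})$, with $y$ the rescaled distance of $j/n$ to $\lambda_{k}:=b\,r_{k}^{2b}$. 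I would then patch these three regimes into one uniform expansion of $\log p_{j}$ on overlapping sub-intervals and decompose $\log\mathcal{P}_{n}=\sum_{j}\log p_{j}$ into (A) the trivial tails from (i); (B) for each $k\in\{1,\dots,g\}$, an ``interior-of-annulus'' sum over $j/n\in(\lambda_{2k-1}+\varepsilon,\lambda_{2k}-\varepsilon)$; and (C) the $2g$ ``transition'' sums of width $O(\sqrt n)$ about each $r_{k}$.

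Sum (C) is handled by Euler--Maclaurin after the substitution $j=n\lambda_{k}+y\sqrt n$: it reduces to integrals of $\log\tfrac12\mathrm{erfc}(\cdot)$, whose regularisation produces exactly the two integrals in \eqref{int1 intro}, while the Jacobian $dr_{\ast}/d\tau|_{\tau=\lambda_{k}}=(2b^{2}r_{k}^{2b-1})^{-1}$ provides the prefactor $\sqrt{2}\,b\,r_{k}^{b}$ of $C_{4}$; sum (C) also feeds $C_{3}$ and the endpoint pieces of $C_{6}$. Sum (B) is where the theta function enters. After factoring out the geometric mean $e^{-n(V_{k}^{-}+V_{k}^{+})/2}$, the remaining discrete sum has the lattice form
\begin{equation*}
\sum_{j}\log\!\Bigl(1+e^{-2L_{k}(j-nt_{2k})+\delta_{k}(j/n)}\Bigr),\qquad L_{k}:=\log(r_{2k}/r_{2k-1}),
\end{equation*}
where $\delta_{k}$ is a smooth function collecting the sub-leading pieces, namely the shift $\tfrac12-\alpha$ coming from the weight $r^{2\alpha+1}$ and the term $\tfrac{1}{2L_{k}}\log\bigl((br_{2k}^{2b}-t_{2k})/(t_{2k}-br_{2k-1}^{2b})\bigr)$ coming from the ratio $d_{k}^{+}/d_{k}^{-}$ of the two Laplace prefactors. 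Standard Euler--Maclaurin on the smooth part $-n(V_{k}^{-}+V_{k}^{+})/2$ generates the bulk contributions to $C_{1}n^{2}$, $C_{2}n\log n$, $C_{3}n$ and to the non-oscillatory pieces of $C_{6}$, while the lattice sum reassembles, via the Jacobi triple product
\begin{equation*}
\theta(z|\sigma)=\prod_{m\ge 1}(1-q^{2m})(1+q^{2m-1}e^{2\pi iz})(1+q^{2m-1}e^{-2\pi iz}),\qquad q=e^{\pi i\sigma},
\end{equation*}
into $\log\theta(\,\cdot\,|\,\pi i/L_{k})$ with the argument displayed in $\mathcal{F}_{n}$; the factor $\prod_{j\ge 1}(1-(r_{2k-1}/r_{2k})^{2j})$ visible in $C_{6}$ is the Dedekind-eta-type normaliser associated with this identity (after the modular inversion $\sigma\mapsto -1/\sigma$, which matches the rate $e^{-L_{k}}$ of the lattice sum).

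The main obstacle is step (B) near the crossover $j\approx nt_{2k}$: making the expansion of $\log p_{j}$ uniform across the interior regime (ii) and both neighbouring transition regions (iii) at $r_{2k-1}$ and $r_{2k}$, while tracking every sub-leading piece that feeds into the exact argument of the theta function. This bookkeeping is what ultimately determines the error $O(n^{-1/12})$ quoted in \eqref{asymp in main thm} and, as the author indicates, ought to be improvable to $O(n^{-1/2})$ at the cost of yet more care.
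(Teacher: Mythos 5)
Your proposal is correct and follows essentially the same route as the paper: the product formula $\mathcal{P}_n=\prod_j p_j$ with $p_j=h_j^{(A)}/h_j$ is exactly the paper's formula \eqref{exact formula for log Pn}, your three regimes for the location of $r_\ast(j/n)$ mirror the paper's splitting into $S_0$, the odd-$k$ $S_{2k-1}$ (exponentially small), the even-$k$ $S_{2k-1}$ (source of the theta oscillation), and the transition sums $S_{2k}$, and your Laplace/Gaussian analysis of the one-dimensional integrals is just Temme's uniform asymptotics of $\gamma(a,z)/\Gamma(a)$ expressed at the level of the integrals rather than as a cited special-function expansion. You also correctly identify $t_{2k}$ as the crossover, the Jacobi triple product plus modular inversion as the mechanism producing $\theta(\cdot|\pi i/L_k)$ and the Dedekind-eta-type normaliser in $C_6$, and the $O(n^{1/12})$-scale matching between regimes (ii) and (iii) as the technical bottleneck that fixes the error order --- all of which match the paper's Sections 2--3.
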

\begin{remark}
By setting $\alpha=0$ and $g=1$ in Theorem \ref{thm:main thm}, we obtain $C_{1} = \frac{(r_{2}^{2b}-r_{1}^{2b})^{2}}{4 \log(\frac{r_{2}}{r_{1}})} - \frac{b}{4}(r_{2}^{4b}-r_{1}^{4b})$, which agrees with \eqref{one annuli result of AR}.
\end{remark}
\begin{remark}
The constant $C_{5}=0$ has been included in \eqref{asymp in main thm} to ease the comparison with Theorems \ref{thm:main thm 2}, \ref{thm:main thm 3} and \ref{thm:main thm 4} below.
\end{remark}

\begin{figure}[h!]
\begin{center}
\begin{tikzpicture}
\node at (0,0) {\includegraphics[width=5cm]{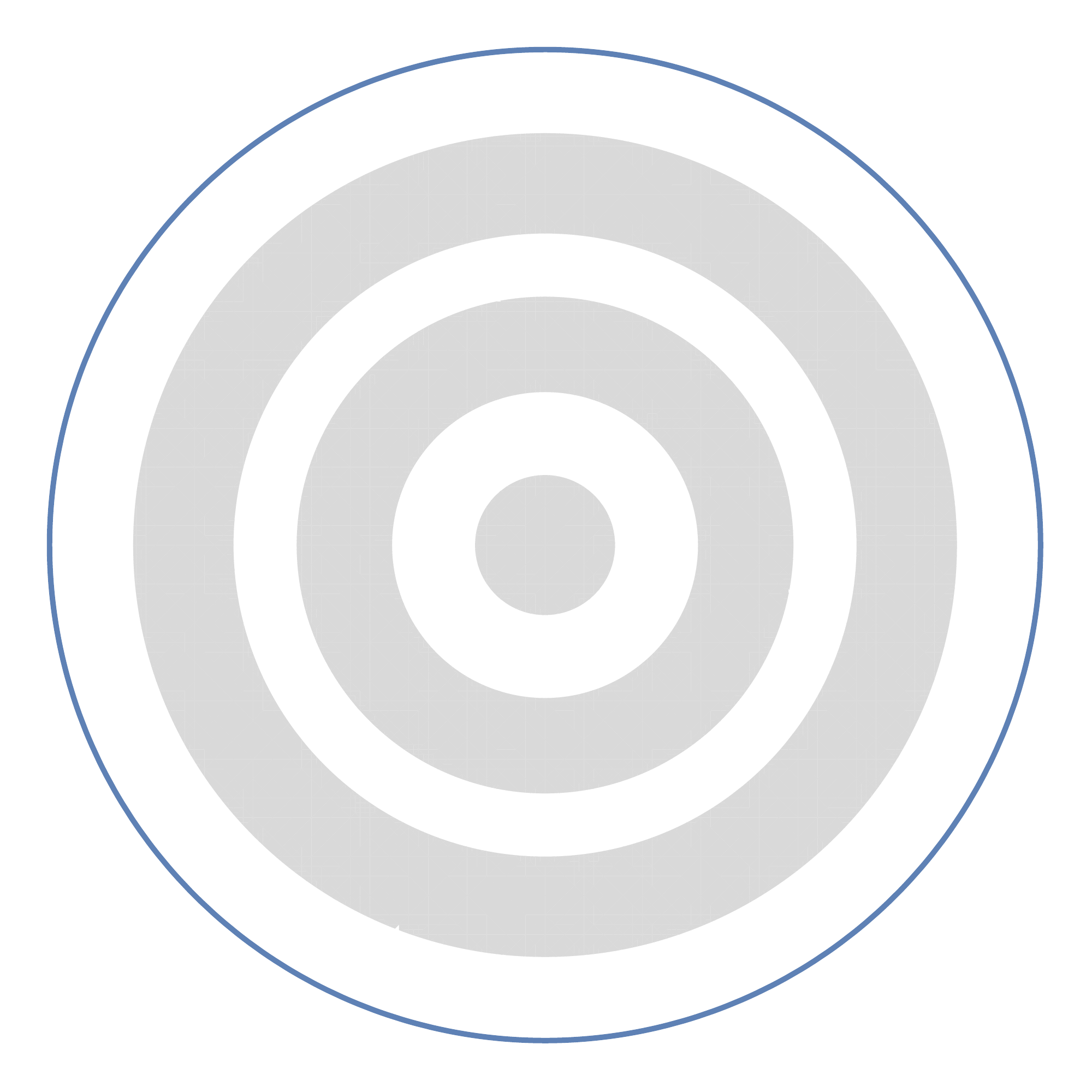}};
\node at (0,0.5) {\footnotesize gap};
\node at (0,1.27) {\footnotesize gap};
\node at (0,2.05) {\footnotesize gap};
\draw[-<-=0,->-=1] (0,0)--(2.28,0);
\node at (1.2,0.2) {\footnotesize $b^{-\frac{1}{2b}}$};
\end{tikzpicture}
\end{center}
\caption{This situation is covered by Theorem \ref{thm:main thm 2} with $g=3$.}
\end{figure}
\begin{theorem}\label{thm:main thm 2}
($g-1$ annuli in the bulk and one unbounded annulus)

Let 
\begin{align*}
g \in \{1,2,\ldots\}, \qquad \alpha > -1, \qquad b>0, \qquad 0 < r_{1} < \ldots < r_{2g-1} < b^{-\frac{1}{2b}} < r_{2g}=+\infty
\end{align*}
be fixed parameters. As $n \to + \infty$, we have
\begin{align}\label{asymp in main thm 2}
\mathcal{P}_{n} = \exp \bigg( C_{1} n^{2} + C_{2} n \log n + C_{3} n +  C_{4} \sqrt{n} + C_{5} \log n + C_{6} + \mathcal{F}_{n} + \bigO\big( n^{-\frac{1}{12}}\big)\bigg), 
\end{align}
where
\begin{align*}
& C_{1} = \sum_{k=1}^{g-1} \bigg\{ \frac{(r_{2k}^{2b}-r_{2k-1}^{2b})^{2}}{4 \log(\frac{r_{2k}}{r_{2k-1}})} - \frac{b}{4}(r_{2k}^{4b}-r_{2k-1}^{4b}) \bigg\} + \frac{br_{2g-1}^{4b}}{4}-r_{2g-1}^{2b} + \frac{1}{2b}\log \big( br_{2g-1}^{2b} \big) + \frac{3}{4b}, \\
& C_{2} = - \sum_{k=1}^{g-1} \frac{b(r_{2k}^{2b}-r_{2k-1}^{2b})}{2} + \frac{br_{2g-1}^{2b}}{2}-\frac{1}{2}, \\
& C_{3} = \sum_{k=1}^{g-1} \bigg\{ b(r_{2k}^{2b}-r_{2k-1}^{2b}) \bigg( \frac{1}{2}+\log \frac{b}{\sqrt{2\pi}} \bigg) + b^{2} \Big( r_{2k}^{2b}\log (r_{2k}) - r_{2k-1}^{2b}\log (r_{2k-1}) \Big) \\
&  -(t_{2k}-br_{2k-1}^{2b})\log(t_{2k}-br_{2k-1}^{2b})-(br_{2k}^{2b}-t_{2k})\log(br_{2k}^{2b}-t_{2k}) \bigg\} \\
&  -r_{2g-1}^{2b} \bigg( \alpha + \frac{b+1}{2}+ b \log \bigg( \frac{br_{2g-1}^{b}}{\sqrt{2\pi}} \bigg) \bigg) - (1-br_{2g-1}^{2b})\log \big( 1-br_{2g-1}^{2b} \big) + \frac{1+2\alpha}{2b}\log \big( br_{2g-1}^{2b}\big) \\
& + \frac{b+2\alpha+1}{2b}+\frac{1}{2}\log \bigg( \frac{b}{2\pi} \bigg), \\
& C_{4} = \sqrt{2}b \bigg\{ \int_{-\infty}^{0}\log \bigg( \frac{1}{2}\mathrm{erfc}(y) \bigg)dy + \int_{0}^{+\infty} \bigg[\log \bigg( \frac{1}{2}\mathrm{erfc}(y) \bigg) +y^{2} +\log y + \log(2\sqrt{\pi})\bigg]dy \bigg\} \sum_{k=1}^{2g-1}r_{k}^{b}, \\
& C_{5} = - \frac{1+2\alpha}{4}, \\
& C_{6} = \frac{g-1}{2}\log (\pi)+ \sum_{k=1}^{g-1} \bigg\{ \frac{1-2b^{2}}{12}\log \bigg(\frac{r_{2k}}{r_{2k-1}}\bigg) + \frac{b^{2}r_{2k}^{2b}}{br_{2k}^{2b}-t_{2k}}+ \frac{b^{2}r_{2k-1}^{2b}}{t_{2k}-br_{2k-1}^{2b}} - \frac{1}{2}\log \log \bigg( \frac{r_{2k}}{r_{2k-1}} \bigg) \\
& + \frac{\big[ \log \big(\frac{br_{2k}^{2b}-t_{2k}}{t_{2k}-br_{2k-1}^{2b}} \big) \big]^{2}}{4 \log \big( \frac{r_{2k}}{r_{2k-1}} \big)} - \sum_{j=1}^{+\infty} \log \bigg( 1- \bigg( \frac{r_{2k-1}}{r_{2k}} \bigg)^{2j} \bigg) \bigg\} \\
& - \frac{2\alpha +1}{4}\log(2\pi) - \frac{1+2\alpha}{2}\log(1-br_{2g-1}^{2b}) + \frac{b^{2}r_{2g-1}^{2b}}{1-br_{2g-1}^{2b}}+b + \frac{b^{2}+6b\alpha + 6\alpha^{2} + 6\alpha +3b+1}{12b}\log(b) \\
&  + \frac{b^{2}+6\alpha^{2}+6\alpha+1}{6}\log(r_{2g-1}) +2b \int_{-\infty}^{0} \bigg\{ 2y\log \bigg( \frac{1}{2}\mathrm{erfc}(y)\bigg) + \frac{e^{-y^{2}}(1-5y^{2})}{3\sqrt{\pi}\mathrm{erfc}(y)} \bigg\}dy \\
& +2b \int_{0}^{+\infty} \bigg\{ 2y\log \bigg( \frac{1}{2}\mathrm{erfc}(y)\bigg) + \frac{e^{-y^{2}}(1-5y^{2})}{3\sqrt{\pi}\mathrm{erfc}(y)} + \frac{11}{3}y^{3} + 2y \log y + \bigg( \frac{1}{2} + 2 \log(2\sqrt{\pi}) \bigg)y \bigg\}dy, \\
& \mathcal{F}_{n} = \sum_{k=1}^{g-1} \log \theta \Bigg(t_{2k}n + \frac{1}{2} - \alpha + \frac{\log \big(\frac{br_{2k}^{2b}-t_{2k}}{t_{2k}-br_{2k-1}^{2b}} \big)}{2 \log \big( \frac{r_{2k}}{r_{2k-1}} \big)} \Bigg| \frac{\pi i}{\log(\frac{r_{2k}}{r_{2k-1}})} \Bigg),
\end{align*}
$\theta$ is given by \eqref{def of Jacobi theta}, and $t_{2k}$ is given by \eqref{def of t2kstar in thm} for $k \in \{1,\ldots,g-1\}$.
\end{theorem}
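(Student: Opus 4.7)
The plan is to exploit the rotational invariance of the density \eqref{def of point process}: since the weight $|z|^{2\alpha}e^{-n|z|^{2b}}d^{2}z$ is radial, the monomials $\{z^{j}\}_{j=0}^{n-1}$ are orthogonal with respect to it, and they remain orthogonal after the radial integration is restricted to the complement of the $g$ annuli (which preserve rotational symmetry). Standard manipulations then give the factorization
\[
\mathcal{P}_{n}=\prod_{j=0}^{n-1} p_{j},\qquad p_{j}=1-\sum_{k=1}^{g}\Bigl[P(a_{j},nr_{2k}^{2b})-P(a_{j},nr_{2k-1}^{2b})\Bigr],
\]
where $a_{j}:=(j+\alpha+1)/b$ and $P(a,x)=\gamma(a,x)/\Gamma(a)$ is the regularized lower incomplete gamma function (obtained after the substitution $u=n|z|^{2b}$). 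The whole problem then reduces to the large $n$ analysis of $\sum_{j=0}^{n-1}\log p_{j}$ up to and including $O(1)$.

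The next step is to split the sum according to the position of $a_{j}\asymp j/b$ relative to the $2g$ transition points $nr_{k}^{2b}$. Three regimes arise. \emph{(i) Safe bulk}: $a_{j}$ lies between two consecutive annuli at distance $\gg\sqrt{n}$ from every $nr_{k}^{2b}$; classical Gaussian-tail bounds for $\Gamma(a,x)$ give $\log p_{j}=O(e^{-cn})$, and this region contributes only exponentially small errors. \emph{(ii) Edge region}: $|a_{j}-nr_{k}^{2b}|=O(\sqrt{n})$ for some edge $k$; here the Temme uniform asymptotic $P(a,a+y\sqrt{2a})=\tfrac{1}{2}\mathrm{erfc}(-y)+O(a^{-1/2})$ applies, and converting the sum of $\log p_{j}$ over the $O(\sqrt{n})$ relevant $j$'s into a Riemann integral in $y$ produces the prefactor $\sqrt{2}\,b\sum_{k=1}^{2g}r_{k}^{b}$ and the erfc-integrals of \eqref{int1 intro}, yielding $C_{4}\sqrt{n}$. \emph{(iii) Dangerous bulk}: $a_{j}$ lies deep inside some annulus $[nr_{2k-1}^{2b},nr_{2k}^{2b}]$ at distance $\gg\sqrt{n}$ from both endpoints; standard saddle-point asymptotics of $\Gamma(a,x)$ and $\gamma(a,x)$ give
\[
p_{j}\sim A^{-}(j/n)\,e^{-n\phi_{-}(j/n)}+A^{+}(j/n)\,e^{-n\phi_{+}(j/n)},
\]
with $\phi_{\pm}(s)=r_{2k\mp 1}^{2b}-s+s\log(s/r_{2k\mp 1}^{2b})$, $\phi_{\pm}''(s)=1/s$, and $\phi_{-}(s)=\phi_{+}(s)$ precisely at $s=t_{2k}/b$. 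This is the source of $C_{1}n^{2},\;C_{2}n\log n,\;C_{3}n$, of most of $C_{6}$, and of the oscillatory term $\mathcal{F}_{n}$.

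The large terms and the non-oscillatory part of $C_{6}$ arise by a careful Euler--Maclaurin expansion in regime (iii). Writing $p_{j}=2\sqrt{A^{-}A^{+}}\,e^{-n(\phi_{-}+\phi_{+})/2}\cosh(n(\phi_{+}-\phi_{-})/2)$ and decomposing $\log\cosh x=|x|-\log 2+\log(1+e^{-2|x|})$ separates a smooth piece (symmetric exponent plus the linear $|x|$ part) from a bounded piece. The smooth piece is summed by Euler--Maclaurin against an explicit integral in $t\in(br_{2k-1}^{2b},br_{2k}^{2b})$ of $\phi_{\pm}$-type functions, combined with the Stirling expansion of $\Gamma(a_{j})$; the $O(\sqrt{n})$ boundary values at $t=br_{2k\mp 1}^{2b}$ are matched against regime (ii) to cancel exactly, while the Gaussian integral near the balance point $s=t_{2k}/b$ contributes the $-\tfrac{1}{2}\log\log(r_{2k}/r_{2k-1})$ term of $C_{6}$. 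This computation reproduces the stated $C_{1},C_{2},C_{3}$ and all non-oscillating pieces of $C_{6}$.

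The oscillatory term $\mathcal{F}_{n}$ comes from the bounded remainder $\log(1+e^{-2|x|})$ summed over the $O(\sqrt{n})$ values of $j$ around $j^{*}:=\lfloor nt_{2k}\rfloor$. Using $\phi_{-}'(t_{2k}/b)-\phi_{+}'(t_{2k}/b)=2\log(r_{2k}/r_{2k-1})=:2L$ together with $\phi_{-}''=\phi_{+}''$ (so the quadratic-in-$M$ parts cancel in $\phi_{+}-\phi_{-}$ with $M:=j+\alpha+1-nt_{2k}$), this remainder reduces modulo $O(n^{-1/2})$ to
\[
\sum_{m\in\mathbb{Z}}\log\bigl(1+e^{-2L|m+c_{n}|}\bigr),\qquad c_{n}=\alpha+1-\{nt_{2k}\}+\tfrac{1}{2L}\log\tfrac{A^{+}(t_{2k}/b)}{A^{-}(t_{2k}/b)}.
\]
The Jacobi triple product identity $\prod_{\ell\geq 1}(1-q^{\ell})(1+zq^{\ell-1})(1+z^{-1}q^{\ell})=\sum_{k\in\mathbb{Z}}z^{k}q^{k(k-1)/2}$ applied with $q=e^{-2L}$ and $z=q^{c_{n}}$, followed by completing the square in the resulting quadratic exponent, converts this sum into $\log\theta(c_{n}-\tfrac{1}{2}\,|\,\pi i/L)$ plus $n$-independent constants absorbed into $C_{6}$; using evenness and $1$-periodicity of $\theta(\cdot|\tau)$ rewrites the argument as $nt_{2k}+\tfrac{1}{2}-\alpha+\tfrac{1}{2L}\log\tfrac{br_{2k}^{2b}-t_{2k}}{t_{2k}-br_{2k-1}^{2b}}$, matching the theorem. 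The main obstacle is the uniform matching of regimes (ii) and (iii) at the intermediate scales $\sqrt{n}\ll|a_{j}-nr_{k}^{2b}|\ll n$: all $O(1)$ constants must line up exactly, which requires tracking every subleading Stirling correction and verifying that cross-terms in $p_{j}$ from annuli $k'\neq k$ contribute only $O(e^{-cn})$.
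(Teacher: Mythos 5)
Your proposal follows essentially the same strategy as the paper: reduce to the product $\mathcal{P}_{n}=\prod_{j}p_{j}$ via rotation invariance (Lemma~\ref{lemma: exact identity} and \eqref{exact formula for log Pn}), split the index sum into safe-bulk / edge / dangerous-bulk regimes, use Temme's uniform expansion (Lemma~\ref{lemma: uniform}) at the edges, precise Riemann-sum / Euler--Maclaurin approximations (Lemmas~\ref{lemma:Riemann sum NEW} and~\ref{lemma:Riemann sum}) everywhere, and the Jacobi triple product (the paper's Lemma~\ref{lemma: Theta is expressible in terms of Jacobi}) to convert the localized oscillatory remainder into $\log\theta$. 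Your $\log\cosh$ split of the dangerous-bulk summand is a clean reformulation of the paper's split of $S_{2k-1}$ into $S_{2k-1}^{(1)}$ and $S_{2k-1}^{(2)}$ around $j_{k,\star}$ (Lemmas~\ref{lemma:S2km1 part 1}--\ref{lemma:S2km1 part 2}); the net computation is the same.

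However, there is a genuine gap: your proposal does not address the defining new feature of Theorem~\ref{thm:main thm 2}, namely $r_{2g}=+\infty$. Concretely, (a) your description of regime (iii) with two competing exponents $\phi_{\pm}$ and a balance point at $t_{2k}/b$ is silently applied to all $g$ annuli, but for $k=g$ one endpoint is $\infty$, so only one exponential survives, the cosh mechanism and the balance point disappear, and this annulus contributes \emph{no} oscillation — this is precisely why $\mathcal{F}_{n}$ in Theorem~\ref{thm:main thm 2} runs only over $k=1,\dots,g-1$ and why the paper needs the separate Lemma~\ref{lemma: S2km1 k =2g}. (b) You state that the edge region produces a prefactor $\sqrt{2}\,b\sum_{k=1}^{2g}r_{k}^{b}$, but with $r_{2g}=+\infty$ the corresponding edge does not exist, and $C_{4}$ must carry $\sum_{k=1}^{2g-1}r_{k}^{b}$; your claimed sum is off by one term (and would in fact be $+\infty$). (c) The new contributions to $C_{1},C_{2},C_{3}$, the nonzero $C_{5}=-\tfrac{1+2\alpha}{4}$, and the $\alpha$-dependent pieces of $C_{6}$ all come from the tail sum over $j$ with $j/n$ above $br_{2g-1}^{2b}$, which in the paper's analysis (Lemma~\ref{lemma: S2km1 k =2g}) is handled by a single application of Lemma~\ref{lemma: asymp of gamma for lambda bounded away from 1}~(ii) combined with the Riemann-sum lemma. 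Your outline does not describe this tail computation, so as written it cannot produce the stated constants.
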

\begin{remark}
It is easy to check that the constants $C_{1}, C_{2}, C_{3}$ of Theorem \ref{thm:main thm 2}, when specialized to $b=1$, $\alpha=0$ and $g=1$, are the same as the constants of Cunden, Mezzadri and Vivo in \eqref{CMV result for unbounded annulus}.
\end{remark}

The constants $C_{6}$ appearing in Theorems \ref{thm:main thm 3} and \ref{thm:main thm 4} below are notably different than in the previous two theorems, because they involve a new quantity $\mathcal{G}(b,\alpha)$ which is defined by
\begin{align}
& \mathcal{G}(b,\alpha) = \lim_{N\to + \infty} \Bigg[ \sum_{j=1}^{N} \log \Gamma\bigg( \frac{k+\alpha}{b} \bigg) - \Bigg\{ \frac{N^{2}}{2b}\log N - \frac{3+2\log b}{4b}N^{2} + \frac{1+2\alpha-b}{2b}N \log N \nonumber \\
& + \bigg( \frac{\log(2\pi)}{2} + \frac{b-2\alpha-1}{2b}(1+\log b) \bigg)N + \frac{1-3b+b^{2}+6\alpha-6b \alpha + 6\alpha^{2}}{12b} \log N \Bigg\} \Bigg], \label{lol17}
\end{align} 
where $\Gamma(z)=\int_{0}^{\infty} t^{z-1}e^{-t}dt$ is the Gamma function. Interestingly, this same object $\mathcal{G}(b,\alpha)$ also appears in the large gap asymptotics at the hard edge of the Muttalib-Borodin ensemble, see \cite[Theorem 1.1]{CLM1} ($\mathcal{G}(b,\alpha)$ here corresponds to $d(\frac{1}{b},\frac{\alpha}{b}-1)$ in \cite{CLM1}). It was also shown in \cite{CLM1} that if $b$ is a rational, then $\mathcal{G}(b,\alpha)$ can be expressed in terms of the Riemann $\zeta$-function and Barnes' $G$ function, two well-known special functions (see e.g. \cite[Chapters 5 and 25]{NIST} for the definitions of these functions). More precisely, we have the following.
\begin{proposition}(Taken from \cite[Proposition 1.4]{CLM1})
If $b = \frac{n_{1}}{n_{2}}$ for some positive integers $n_{1},n_{2}$, then $\mathcal{G}(b,\alpha)$ is explicitly given by
\begin{align*}
\mathcal{G}(b,\alpha) & = n_{1}n_{2}\zeta'(-1) + \frac{b(n_{2}-n_{1})+2n_{1}\alpha}{4b}\log(2\pi) \\
& - \frac{1-3b+b^{2}+6\alpha-6b\alpha+6\alpha^{2}}{12b}\log n_{1} - \sum_{j=1}^{n_{2}}\sum_{k=1}^{n_{1}} \log G \bigg( \frac{j+\frac{\alpha}{b}-1}{n_{2}} + \frac{k}{n_{1}} \bigg).
\end{align*}
\end{proposition}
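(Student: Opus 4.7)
With $b = n_{1}/n_{2}$, we have $(k+\alpha)/b = n_{2}(k+\alpha)/n_{1}$. Since the limit defining $\mathcal{G}(b,\alpha)$ is assumed to exist, we may evaluate it along the subsequence $N = M n_{1}$. Writing $k = (m-1)n_{1} + \ell$ with $\ell \in \{1,\ldots,n_{1}\}$ and $m \in \{1,\ldots,M\}$, the argument of $\Gamma$ becomes $n_{2}(m-1) + n_{2}(\ell+\alpha)/n_{1}$. Applying the Gauss multiplication formula
\begin{align*}
\Gamma(n_{2}z) = n_{2}^{n_{2}z-1/2}(2\pi)^{(1-n_{2})/2}\prod_{p=0}^{n_{2}-1}\Gamma(z+p/n_{2})
\end{align*}
with $z = (m-1) + (\ell+\alpha)/n_{1}$, each Gamma splits into $n_{2}$ factors with unit step in $m$, up to explicit prefactors involving $\log n_{2}$ and $\log(2\pi)$. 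The resulting inner sums over $m$ then telescope via the defining identity $G(z+1) = \Gamma(z)\,G(z)$, producing ratios $\log G(a_{\ell,p}+M)-\log G(a_{\ell,p})$ where $a_{\ell,p} = (\ell+\alpha)/n_{1}+p/n_{2}$ and $(\ell,p)$ runs through $\{1,\ldots,n_{1}\}\times\{0,\ldots,n_{2}-1\}$. The substitution $(\ell,p) \mapsto (k,j-1)$ identifies $a_{\ell,p} = (j-1)/n_{2} + (k+\alpha)/n_{1} = (j+\alpha/b-1)/n_{2} + k/n_{1}$; thus the finite contribution $-\sum_{\ell,p}\log G(a_{\ell,p})$ coincides term-by-term with the last double sum in the proposition.

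\textbf{Asymptotic expansion.} Inserting the classical expansion
\begin{align*}
\log G(z+1) = \tfrac{z^{2}}{2}\log z - \tfrac{3 z^{2}}{4} + \tfrac{z}{2}\log(2\pi) - \tfrac{1}{12}\log z + \zeta'(-1) + \bigO(z^{-1}), \qquad z \to +\infty,
\end{align*}
into each of the $n_{1}n_{2}$ evaluations $\log G(a_{\ell,p}+M)$, the $\zeta'(-1)$ contribution immediately produces the leading constant $n_{1}n_{2}\,\zeta'(-1)$ of the proposition. Expanding the $z^{2}\log z$, $z^{2}$, $z\log(2\pi)$ and $\log z$ terms in powers of $1/M$ introduces the finite sums $\sum_{\ell,p} a_{\ell,p}$ and $\sum_{\ell,p}a_{\ell,p}^{2}$, which are evaluated by the elementary identities $\sum_{\ell=1}^{n_{1}}\ell^{r}$ for $r \in \{0,1,2\}$ (and similarly in $p$). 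Converting $M = N/n_{1}$ so that $\log M = \log N - \log n_{1}$, the polynomial and logarithmic terms in $N$ recombine with the Gauss-multiplication prefactors to reproduce exactly the counter-terms inside the braces in \eqref{lol17}; the remaining finite contributions collect into the claimed $\log(2\pi)$ and $\log n_{1}$ coefficients of the proposition.

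\textbf{Main obstacle.} The reduction via Gauss multiplication and Barnes' $G$ is straightforward; the genuinely delicate part is the bookkeeping. Each of the five counter-terms in \eqref{lol17} (of orders $N^{2}\log N$, $N^{2}$, $N\log N$, $N$, and $\log N$) must be recovered exactly from a linear combination of three sources: the Gauss-multiplication prefactors, the four non-constant terms of the $\log G$-expansion, and the conversion $\log M = \log N - \log n_{1}$. The most intricate check is the coefficient of $\log N$, which is quadratic in $\alpha$ and receives contributions both from the $-\frac{1}{12}\log z$ term and from the $\log n_{1}$ byproduct of the $z^{2}\log z$ piece; the latter simultaneously yields the coefficient $-(1-3b+b^{2}+6\alpha-6b\alpha+6\alpha^{2})/(12b)$ of $\log n_{1}$ in the proposition. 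All identities reduce via the single substitution $b = n_{1}/n_{2}$ and require no further analytic input.
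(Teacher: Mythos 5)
The paper does not actually prove this proposition: it is imported verbatim from \cite[Proposition 1.4]{CLM1}, and no argument is supplied here. Your sketch therefore provides a genuine proof where the paper offers only a citation. The route you take — reduce along the subsequence $N=Mn_{1}$, split each $\Gamma\big(n_{2}\big((m-1)+\tfrac{\ell+\alpha}{n_{1}}\big)\big)$ via the Gauss multiplication formula into $n_{2}$ unit-step factors, telescope the $m$-sums using $G(z+1)=\Gamma(z)G(z)$, and expand the $n_{1}n_{2}$ resulting $\log G(M+a_{\ell,p})$ terms with the classical asymptotics — is the standard and correct technique for such rational-$b$ evaluations, and is almost certainly the same mechanism used in \cite{CLM1}. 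Your identification of the finite part $-\sum_{\ell,p}\log G(a_{\ell,p})$ with the double sum in the proposition is exact (with $a_{\ell,p}=\tfrac{\ell+\alpha}{n_1}+\tfrac{p}{n_2}=\tfrac{(j-1)+\alpha/b}{n_2}+\tfrac{k}{n_1}$ under $(\ell,p)\mapsto(k,j-1)$), and the $n_1n_2\,\zeta'(-1)$ constant falls out immediately; a spot-check at $n_1=n_2=1$ confirms the remaining $\log(2\pi)$ coefficient. One small caveat worth stating in a written-out version: besides $\log M=\log N-\log n_1$, the Gauss-multiplication prefactors carry $\log n_2$, and these must be recombined via $\log b=\log n_1-\log n_2$ so that the $\log b$-dependent counter-terms of \eqref{lol17} cancel and only $\log n_1$ survives in the final answer — you allude to this ("recombine with the Gauss-multiplication prefactors") but it is the single piece of bookkeeping I would insist on seeing displayed. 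Also note the paper's display \eqref{lol17} has an index typo ($\sum_{j=1}^{N}$ should read $\sum_{k=1}^{N}$), which you implicitly corrected.
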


\begin{figure}[h!]
\begin{center}
\begin{tikzpicture}
\node at (0,0) {\includegraphics[width=5cm]{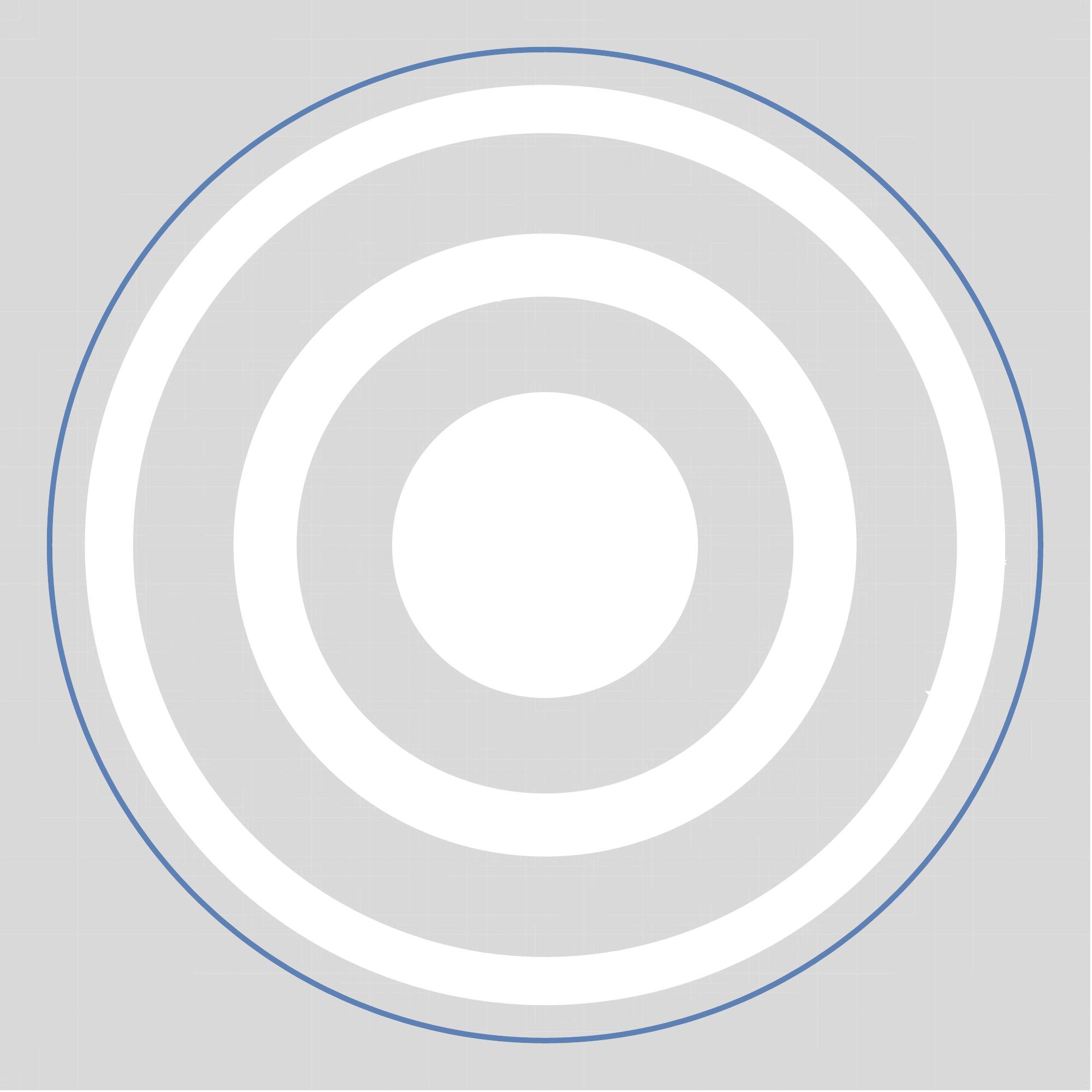}};
\node at (0,0.25) {\footnotesize gap};
\node at (0,1.27) {\footnotesize gap};
\node at (0,1.975) {\footnotesize gap};
\draw[-<-=0,->-=1] (0,0)--(2.28,0);
\node at (1.2,0.2) {\footnotesize $b^{-\frac{1}{2b}}$};
\end{tikzpicture}
\end{center}
\caption{This situation is covered by Theorem \ref{thm:main thm 3} with $g=3$.}
\end{figure}
We now state our next theorem.
\begin{theorem}\label{thm:main thm 3}
($g-1$ annuli in the bulk and one disk)

Let 
\begin{align*}
g \in \{1,2,\ldots\}, \qquad \alpha > -1, \qquad b>0, \qquad 0 = r_{1} < r_{2} < \ldots < r_{2g} < b^{-\frac{1}{2b}} 
\end{align*}
be fixed parameters. As $n \to + \infty$, we have
\begin{align}\label{asymp in main thm 3}
\mathcal{P}_{n} = \exp \bigg( C_{1} n^{2} + C_{2} n \log n + C_{3} n +  C_{4} \sqrt{n} + C_{5} \log n + C_{6} + \mathcal{F}_{n} + \bigO\big( n^{-\frac{1}{12}}\big)\bigg), 
\end{align}
where
\begin{align*}
& C_{1} = \sum_{k=2}^{g} \bigg\{ \frac{(r_{2k}^{2b}-r_{2k-1}^{2b})^{2}}{4 \log(\frac{r_{2k}}{r_{2k-1}})} - \frac{b}{4}(r_{2k}^{4b}-r_{2k-1}^{4b}) \bigg\} - \frac{br_{2}^{4b}}{4}, \\
& C_{2} = - \sum_{k=2}^{g} \frac{b(r_{2k}^{2b}-r_{2k-1}^{2b})}{2} -\frac{br_{2}^{2b}}{2}, \\
& C_{3} = \sum_{k=2}^{g} \bigg\{ b(r_{2k}^{2b}-r_{2k-1}^{2b}) \bigg( \frac{1}{2}+\log \frac{b}{\sqrt{2\pi}} \bigg) + b^{2} \Big( r_{2k}^{2b}\log (r_{2k}) - r_{2k-1}^{2b}\log (r_{2k-1}) \Big) \\
&  -(t_{2k}-br_{2k-1}^{2b})\log(t_{2k}-br_{2k-1}^{2b})-(br_{2k}^{2b}-t_{2k})\log(br_{2k}^{2b}-t_{2k}) \bigg\} \\
&  +r_{2}^{2b}\bigg( \alpha + \frac{1}{2} + \frac{b}{2}\Big( 1-2\log \big( r_{2}^{b}\sqrt{2\pi} \big) \Big) \bigg), \\
& C_{4} = \sqrt{2}b \bigg\{ \int_{-\infty}^{0}\log \bigg( \frac{1}{2}\mathrm{erfc}(y) \bigg)dy + \int_{0}^{+\infty} \bigg[\log \bigg( \frac{1}{2}\mathrm{erfc}(y) \bigg) +y^{2} +\log y + \log(2\sqrt{\pi})\bigg]dy \bigg\} \sum_{k=2}^{2g}r_{k}^{b}, \\
& C_{5} = - \frac{1-6b + b^{2}+6\alpha + 6\alpha^{2}-12\alpha b}{12b}, \\
& C_{6} = \frac{g-1}{2}\log (\pi)+ \sum_{k=2}^{g} \bigg\{ \frac{1-2b^{2}}{12}\log \bigg(\frac{r_{2k}}{r_{2k-1}}\bigg) + \frac{b^{2}r_{2k}^{2b}}{br_{2k}^{2b}-t_{2k}}+ \frac{b^{2}r_{2k-1}^{2b}}{t_{2k}-br_{2k-1}^{2b}} - \frac{1}{2}\log \log \bigg( \frac{r_{2k}}{r_{2k-1}} \bigg) \\
& + \frac{\big[ \log \big(\frac{br_{2k}^{2b}-t_{2k}}{t_{2k}-br_{2k-1}^{2b}} \big) \big]^{2}}{4 \log \big( \frac{r_{2k}}{r_{2k-1}} \big)} - \sum_{j=1}^{+\infty} \log \bigg( 1- \bigg( \frac{r_{2k-1}}{r_{2k}} \bigg)^{2j} \bigg) \bigg\} \\
& + \frac{2\alpha +1}{4}\log(2\pi) + \bigg( b + 2\alpha b - \alpha -\alpha^{2} - \frac{1+b^{2}}{6} \bigg)\log r_{2} - \frac{b^{2}-6b\alpha + 6\alpha^{2} + 6\alpha -3b+1}{12b}\log(b) \\
& - \mathcal{G}(b,\alpha) -2b \int_{-\infty}^{0} \bigg\{ 2y\log \bigg( \frac{1}{2}\mathrm{erfc}(y)\bigg) + \frac{e^{-y^{2}}(1-5y^{2})}{3\sqrt{\pi}\mathrm{erfc}(y)} \bigg\}dy \\
& -2b \int_{0}^{+\infty} \bigg\{ 2y\log \bigg( \frac{1}{2}\mathrm{erfc}(y)\bigg) + \frac{e^{-y^{2}}(1-5y^{2})}{3\sqrt{\pi}\mathrm{erfc}(y)} + \frac{11}{3}y^{3} + 2y \log y + \bigg( \frac{1}{2} + 2 \log(2\sqrt{\pi}) \bigg)y \bigg\}dy, \\
& \mathcal{F}_{n} = \sum_{k=2}^{g} \log \theta \Bigg(t_{2k}n + \frac{1}{2} - \alpha + \frac{\log \big(\frac{br_{2k}^{2b}-t_{2k}}{t_{2k}-br_{2k-1}^{2b}} \big)}{2 \log \big( \frac{r_{2k}}{r_{2k-1}} \big)} \Bigg| \frac{\pi i}{\log(\frac{r_{2k}}{r_{2k-1}})} \Bigg),
\end{align*}
$\theta$ is given by \eqref{def of Jacobi theta}, $t_{2k}$ is given by \eqref{def of t2kstar in thm} for $k \in \{2,\ldots,g\}$, and $\mathcal{G}(b,\alpha)$ is given by \eqref{lol17}.
\end{theorem}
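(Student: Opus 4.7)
The plan is to exploit the rotational invariance of \eqref{def of point process} to reduce $\mathcal{P}_n$ to a product formula, then analyse that product by a case-by-case saddle point / incomplete Gamma analysis and Euler--Maclaurin summation. Concretely, the radial symmetry of the weight gives, after the change of variable $s=r^{2b}$,
$$
\mathcal{P}_n \;=\; \prod_{j=0}^{n-1}\bigl(1-p_{j,n}\bigr),\qquad p_{j,n}:=\frac{\int_{\widetilde H} s^{(j+\alpha)/b}e^{-ns}\,ds}{\Gamma\!\bigl(\tfrac{j+\alpha}{b}+1\bigr)\,n^{-\frac{j+\alpha}{b}-1}},
$$
with $\widetilde H = [0,r_2^{2b}]\cup\bigcup_{k=2}^{g}[r_{2k-1}^{2b},r_{2k}^{2b}]$. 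The integrand $s^{(j+\alpha)/b}e^{-ns}$ has its maximum at $s_*(j)=(j+\alpha)/(bn)$, which sweeps through $(0,1/b)$ as $j$ runs from $0$ to $n-1$, so the analysis reduces to tracking $s_*(j)$ against the boundary points $r_k^{2b}$.

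I then classify the indices $j\in\{0,\dots,n-1\}$ according to where $s_*(j)$ sits. When $s_*(j)$ lies in the interior of one of the bulk annuli $[r_{2k-1}^{2b},r_{2k}^{2b}]$ with $k\ge 2$, Laplace's method on the complement gives $p_{j,n}=1-\bigO(e^{-cn})$, and Euler--Maclaurin summation of the explicit expansion of $\log(1-p_{j,n})$ over this window reproduces the same contributions as in Theorem \ref{thm:main thm}, yielding the $k\ge 2$ pieces of $C_1 n^2+C_2 n\log n+C_3 n$ and the corresponding pieces of $C_6$. When $s_*(j)$ lies in a compact subset of $(0,1/b)\setminus\overline{\widetilde H}$, the contribution is $\bigO(e^{-cn})$. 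Around each internal boundary $r_k$ with $k=2,\dots,2g$ one rescales so that $p_{j,n}=\tfrac12\mathrm{erfc}(y_{j,n})+o(1)$ with $y_{j,n}$ linear in $j-nt_{2k}$ (for boundaries adjacent to a bulk annulus), and the resulting Riemann sum of $\log \tfrac12\mathrm{erfc}$ produces the $\sqrt n$ coefficient $C_4$ and, via Poisson/Jacobi resummation in the $j$-periodic phase at $t_{2k}n$, the oscillatory theta contribution $\mathcal{F}_n$ inherited verbatim from the interior-annuli case.

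The novelty of Theorem \ref{thm:main thm 3} is the \emph{hard-edge regime} near $j=0$ produced by $r_1=0$. For indices in the range $j\in [0,\,nbr_2^{2b}-n^{1/2+\varepsilon}]$, the saddle $s_*(j)$ sits uniformly to the left of $r_2^{2b}$; the annular pieces $\int_{r_{2k-1}^{2b}}^{r_{2k}^{2b}}$ ($k\ge 2$) are exponentially subdominant with respect to $\int_{r_2^{2b}}^{\infty}s^{(j+\alpha)/b}e^{-ns}\,ds$, and
$$
1-p_{j,n} \;=\; \frac{\Gamma\!\bigl(\tfrac{j+\alpha}{b}+1,\,nr_2^{2b}\bigr)}{\Gamma\!\bigl(\tfrac{j+\alpha}{b}+1\bigr)}\bigl(1+\bigO(e^{-cn})\bigr).
$$
Using Temme's uniform expansion of the upper incomplete Gamma function, $\log(1-p_{j,n})$ expands into a sum of an explicit polynomial-plus-Stirling main part of the shape $\log \Gamma(\tfrac{j+\alpha}{b}+1)-(\tfrac{j+\alpha}{b}+1)\log(nr_2^{2b})+nr_2^{2b}+\cdots$ plus controlled lower-order terms. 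Summing over $j$ in the hard-edge range and subtracting the prescribed polynomial asymptotics of \eqref{lol17} yields precisely the constant $\mathcal{G}(b,\alpha)$; the remaining Stirling debris combines with the transition analysis at $r_2$ to give the modified $C_5$ and the $\mathcal{G}(b,\alpha)$-dependent part of $C_6$.

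The main obstacle is the clean matching of the three windows near $r_2$: the deep hard-edge window $0\le j \lesssim nbr_2^{2b}-n^{1/2+\varepsilon}$ just described, the erfc transition window of width $\bigO(\sqrt n)$ centred at $j=nbr_2^{2b}$, and the exponentially small window to its right. Each of these admits a two-parameter asymptotic expansion that is not individually uniform in $j$; only the sum is. This forces a careful choice of intermediate cut-offs and a verification that the overlap errors telescope to $\bigO(n^{-1/12})$, and it is here that the divergent pieces of Stirling and of the incomplete Gamma expansion must cancel so that only the finite remainder $\mathcal{G}(b,\alpha)$ survives. Once this matching is in place, the interior-annuli windows contribute as in Theorem \ref{thm:main thm} (so that the theta structure of $\mathcal{F}_n$ is inherited directly), and algebraic simplification using $t_{2k}=\tfrac12(r_{2k}^{2b}-r_{2k-1}^{2b})/\log(r_{2k}/r_{2k-1})$ puts the final constants in the form stated.
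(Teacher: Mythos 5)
Your proposal follows essentially the same route as the paper: rotation invariance reduces $\mathcal{P}_n$ to a product over $j$ of radial gap probabilities, the indices are partitioned according to where the Laplace saddle $(j+\alpha)/(bn)$ sits relative to the boundaries $br_k^{2b}$, Temme's uniform incomplete-Gamma expansion plus Euler--Maclaurin/Riemann-sum estimates handle each window, and the constant $\mathcal{G}(b,\alpha)$ arises exactly as you describe from the $\sum_j \log\Gamma\bigl(\tfrac{j+\alpha}{b}\bigr)$ term in the hard-edge range, after subtracting the polynomial Stirling asymptotics prescribed in \eqref{lol17}. One small imprecision worth noting: the $\mathrm{erfc}$ transition window around each boundary $r_k$ is centred at $j\approx nbr_k^{2b}$ with rescaled variable of order $(j-nbr_k^{2b})/\sqrt{n}$, not at $j\approx nt_{2k}$; the balance point $t_{2k}$ sits in the interior of the hole annulus $(br_{2k-1}^{2b},br_{2k}^{2b})$ and is where the two competing exponential tails produce the theta oscillation $\mathcal{F}_n$, which — as you correctly observe — is inherited unchanged from the interior-annuli analysis of Theorem \ref{thm:main thm}.
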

\begin{remark}
It is easy to check that the constants $C_{1}, C_{2}, C_{3}, C_{4}$ of Theorem \ref{thm:main thm 3}, when specialized to $b=1$, $\alpha=0$ and $g=1$, are the same as Forrester's constants in \eqref{JLM result for finite Ginibre}.
\end{remark}
\begin{figure}[h!]
\begin{center}
\begin{tikzpicture}
\node at (0,0) {\includegraphics[width=5cm]{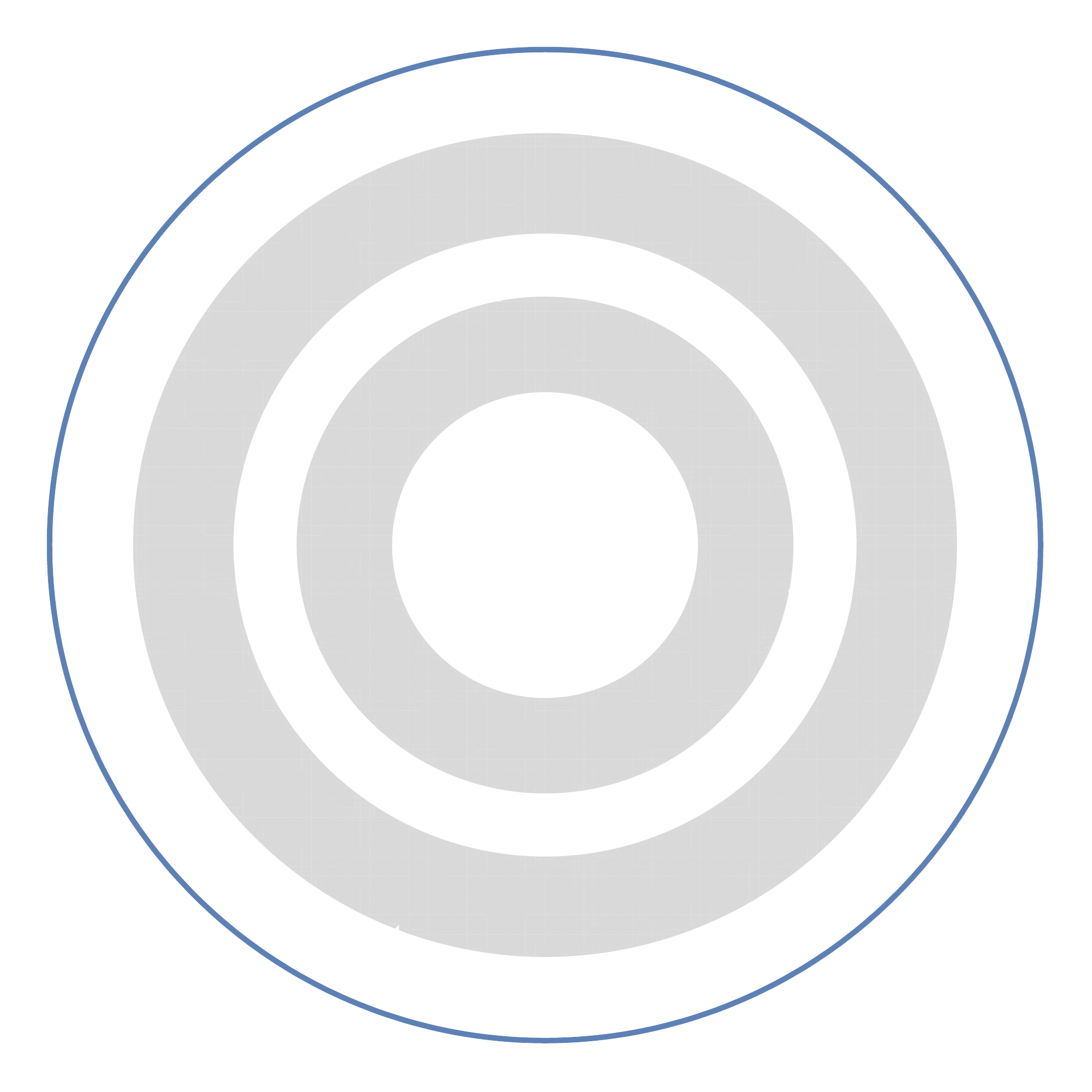}};
\node at (0,0.25) {\footnotesize gap};
\node at (0,1.27) {\footnotesize gap};
\node at (0,2.05) {\footnotesize gap};
\draw[-<-=0,->-=1] (0,0)--(2.28,0);
\node at (1.2,0.2) {\footnotesize $b^{-\frac{1}{2b}}$};
\end{tikzpicture}
\end{center}
\caption{This situation is covered by Theorem \ref{thm:main thm 4} with $g=3$.}
\end{figure}
\begin{theorem}\label{thm:main thm 4}
($g-2$ annuli in the bulk, one unbounded annulus, and one disk)

Let 
\begin{align*}
g \in \{2,3,\ldots\}, \qquad \alpha > -1, \qquad b>0, \qquad 0 = r_{1} < r_{2} < \ldots < r_{2g-1} < b^{-\frac{1}{2b}} < r_{2g}=+\infty
\end{align*}
be fixed parameters. As $n \to + \infty$, we have
\begin{align}\label{asymp in main thm 4}
\mathcal{P}_{n} = \exp \bigg( C_{1} n^{2} + C_{2} n \log n + C_{3} n +  C_{4} \sqrt{n} + C_{5} \log n + C_{6} + \mathcal{F}_{n} + \bigO\big( n^{-\frac{1}{12}}\big)\bigg), 
\end{align}
where
\begin{align*}
& C_{1} = \sum_{k=2}^{g-1} \bigg\{ \frac{(r_{2k}^{2b}-r_{2k-1}^{2b})^{2}}{4 \log(\frac{r_{2k}}{r_{2k-1}})} - \frac{b}{4}(r_{2k}^{4b}-r_{2k-1}^{4b}) \bigg\} + \frac{br_{2g-1}^{4b}}{4}-r_{2g-1}^{2b} + \frac{1}{2b}\log \big( br_{2g-1}^{2b} \big) + \frac{3}{4b} - \frac{br_{2}^{4b}}{4}, \\
& C_{2} = - \sum_{k=2}^{g-1} \frac{b(r_{2k}^{2b}-r_{2k-1}^{2b})}{2} + \frac{br_{2g-1}^{2b}}{2}-\frac{1}{2}-\frac{br_{2}^{2b}}{2}, \\
& C_{3} = \sum_{k=2}^{g-1} \bigg\{ b(r_{2k}^{2b}-r_{2k-1}^{2b}) \bigg( \frac{1}{2}+\log \frac{b}{\sqrt{2\pi}} \bigg) + b^{2} \Big( r_{2k}^{2b}\log (r_{2k}) - r_{2k-1}^{2b}\log (r_{2k-1}) \Big) \\
&  -(t_{2k}-br_{2k-1}^{2b})\log(t_{2k}-br_{2k-1}^{2b})-(br_{2k}^{2b}-t_{2k})\log(br_{2k}^{2b}-t_{2k}) \bigg\} \\
&  -r_{2g-1}^{2b} \bigg( \alpha + \frac{b+1}{2}+ b \log \bigg( \frac{br_{2g-1}^{b}}{\sqrt{2\pi}} \bigg) \bigg) - (1-br_{2g-1}^{2b})\log \big( 1-br_{2g-1}^{2b} \big) + \frac{1+2\alpha}{2b}\log \big( br_{2g-1}^{2b}\big) \\
& + \frac{b+2\alpha+1}{2b}+\frac{1}{2}\log \bigg( \frac{b}{2\pi} \bigg) +r_{2}^{2b}\bigg( \alpha + \frac{1}{2} + \frac{b}{2}\Big( 1-2\log \big( r_{2}^{b}\sqrt{2\pi} \big) \Big) \bigg), \\
& C_{4} = \sqrt{2}b \bigg\{ \int_{-\infty}^{0}\log \bigg( \frac{1}{2}\mathrm{erfc}(y) \bigg)dy + \int_{0}^{+\infty} \bigg[\log \bigg( \frac{1}{2}\mathrm{erfc}(y) \bigg) +y^{2} +\log y + \log(2\sqrt{\pi})\bigg]dy \bigg\} \sum_{k=2}^{2g-1}r_{k}^{b}, \\
& C_{5} = - \frac{1-3b + b^{2}+6\alpha + 6\alpha^{2}-6\alpha b}{12b}, \\
& C_{6} = \frac{g-2}{2}\log (\pi)+ \sum_{k=2}^{g-1} \bigg\{ \frac{1-2b^{2}}{12}\log \bigg(\frac{r_{2k}}{r_{2k-1}}\bigg) + \frac{b^{2}r_{2k}^{2b}}{br_{2k}^{2b}-t_{2k}}+ \frac{b^{2}r_{2k-1}^{2b}}{t_{2k}-br_{2k-1}^{2b}} - \frac{1}{2}\log \log \bigg( \frac{r_{2k}}{r_{2k-1}} \bigg) \\
& + \frac{\big[ \log \big(\frac{br_{2k}^{2b}-t_{2k}}{t_{2k}-br_{2k-1}^{2b}} \big) \big]^{2}}{4 \log \big( \frac{r_{2k}}{r_{2k-1}} \big)} - \sum_{j=1}^{+\infty} \log \bigg( 1- \bigg( \frac{r_{2k-1}}{r_{2k}} \bigg)^{2j} \bigg) \bigg\} - \frac{1+2\alpha}{2}\log(1-br_{2g-1}^{2b}) + \frac{b^{2}r_{2g-1}^{2b}}{1-br_{2g-1}^{2b}} \\
& +b + \frac{1+2\alpha}{2}\log(br_{2}^{2b}) + \frac{b^{2}+6\alpha^{2}+6\alpha+1}{6}\log\bigg(\frac{r_{2g-1}}{r_{2}}\bigg) - \mathcal{G}(b,\alpha), \\
& \mathcal{F}_{n} = \sum_{k=2}^{g-1} \log \theta \Bigg(t_{2k}n + \frac{1}{2} - \alpha + \frac{\log \big(\frac{br_{2k}^{2b}-t_{2k}}{t_{2k}-br_{2k-1}^{2b}} \big)}{2 \log \big( \frac{r_{2k}}{r_{2k-1}} \big)} \Bigg| \frac{\pi i}{\log(\frac{r_{2k}}{r_{2k-1}})} \Bigg),
\end{align*}
$\theta$ is given by \eqref{def of Jacobi theta}, $t_{2k}$ is given by \eqref{def of t2kstar in thm} for $k \in \{2,\ldots,g-1\}$, and $\mathcal{G}(b,\alpha)$ is given by \eqref{lol17}.
\end{theorem}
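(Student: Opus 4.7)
The starting point is the same for all four main theorems of this paper. Radial symmetry of the weight $|z|^{2\alpha}e^{-n|z|^{2b}}$ makes the monomials $\{z^{k}\}_{k=0}^{n-1}$ orthogonal on any rotationally invariant set. A standard calculation and the change of variables $u = n|z|^{2b}$ yield
\begin{align*}
\log \mathcal{P}_{n} = \sum_{k=0}^{n-1}\log \frac{\Gamma(\sigma_{k};\mathcal{U}_{n})}{\Gamma(\sigma_{k})}, \qquad \sigma_{k} := \frac{k+\alpha+1}{b}, \qquad \Gamma(\sigma;A) := \int_{A} u^{\sigma-1}e^{-u}du,
\end{align*}
where $\mathcal{U}_{n} = \bigcup_{j=1}^{g-1}[nr_{2j}^{2b}, nr_{2j+1}^{2b}]$ is the image in $u$-space of the allowed (non-hole) region, a union of $g-1$ bounded intervals.

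\textbf{Partitioning the sum.} Since $u^{\sigma_{k}-1}e^{-u}$ is unimodal with maximum near $u = k/b$, writing $t = k/n$ the peak lies in the $j$-th allowed interval iff $br_{2j}^{2b} < t < br_{2j+1}^{2b}$. I would split $\{0, \ldots, n-1\}$ into four groups according to where this peak falls: (i) well inside one of the $g-1$ allowed intervals of $\mathcal{U}_{n}$; (ii) near an interior hole $[nr_{2j-1}^{2b}, nr_{2j}^{2b}]$ with $2 \leq j \leq g-1$; (iii) inside the disk hole $[0, nr_{2}^{2b}]$; (iv) inside the unbounded-annulus hole $[nr_{2g-1}^{2b}, +\infty)$.

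\textbf{Regime analysis.} For group (i), Laplace's method gives $\log(\Gamma(\sigma_{k};\mathcal{U}_{n})/\Gamma(\sigma_{k})) = \bigO(e^{-cn})$ uniformly in $k$. For group (ii), the analysis is exactly the one carried out for Theorem \ref{thm:main thm}: one expands $\Gamma(\sigma_{k};\mathcal{U}_{n})$ in complementary error functions at the four endpoints of the two adjacent allowed intervals, then sums the resulting $k$-dependent subleading pieces over a window of width $\bigO(\sqrt{n})$ around each transition $k^{\star} \sim n t_{2j}$; a Poisson-type summation reduces the sum to the bulk-annulus contributions to $C_{1},\ldots,C_{6}$ indexed by $k=2,\ldots,g-1$ together with the $g-2$ Jacobi-theta factors in $\mathcal{F}_{n}$. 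Group (iii) is treated as in the proof of Theorem \ref{thm:main thm 3}: for these small $k$, the full allowed region may be replaced (up to exponentially small error) by its leftmost component, and then extended to $[nr_{2}^{2b}, +\infty)$; the asymptotics of the resulting ratios together with the definition \eqref{lol17} yield the $r_{2}$-dependent terms in $C_{1}, C_{2}, C_{3}$, the $-\mathcal{G}(b,\alpha)$ piece of $C_{6}$, the $r_{2}$-side contribution to $C_{5}$, and one copy of the erfc-integrals. Group (iv) is treated as in the proof of Theorem \ref{thm:main thm 2}: the allowed region is replaced by its rightmost component and then by $(0, nr_{2g-1}^{2b}]$, producing the $r_{2g-1}$-dependent terms in $C_{1}, C_{2}, C_{3}$, the complementary contribution to $C_{5}$, and the opposite-sign copy of the erfc-integrals.

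\textbf{Assembly and main obstacle.} Summing the four contributions, the two erfc-integrals cancel (which is precisely why they are absent from the formula for $C_{6}$), the two $\log b$ coefficients add to $\frac{1+2\alpha}{2}$ (appearing inside the $\frac{1+2\alpha}{2}\log(br_{2}^{2b})$ term of $C_{6}$), and the $\log r_{2}, \log r_{2g-1}$ coefficients combine to $\frac{b^{2}+6\alpha^{2}+6\alpha+1}{6}\log(r_{2g-1}/r_{2})$; analogously, the two $C_{5}$ pieces combine to the stated $-\frac{1-3b+b^{2}+6\alpha+6\alpha^{2}-6\alpha b}{12b}$, and the remaining terms assemble directly into the stated formulas for $C_{1},\ldots,C_{6}$ and $\mathcal{F}_{n}$. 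The main obstacle is not any single new estimate but rather the \emph{gluing}: one must verify that the four groups (i)--(iv) cover $\{0,\ldots,n-1\}$ with overlaps of width $\gg \sqrt{n}$ on which the uniform-in-$k$ error estimates from the proofs of Theorems \ref{thm:main thm}, \ref{thm:main thm 2} and \ref{thm:main thm 3} simultaneously hold, so that summing them on suitable windows avoids double-counting and loses no order-one information at the transitions. Once this gluing is in place, the formulas in Theorem \ref{thm:main thm 4} follow by straightforward addition of the contributions already isolated in the three preceding theorems.
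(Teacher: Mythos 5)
Your proposal is correct and matches the paper's proof, which re-uses the decomposition of $\log\mathcal{P}_n$ into sums $S_3,S_4,\ldots,S_{4g-1}$ — with $S_3$ analyzed as in Section~\ref{section:proof 3} (disk hole), $S_{4g-1}$ as in Section~\ref{section:proof 2} (unbounded annulus), and the intermediate sums exactly as in Section~\ref{section:proof} (bulk) — and then substitutes the lemmas already proved there and simplifies. The only imprecision is that the $\bigO(\sqrt{n})$ transition windows responsible for the $\sqrt{n}$-term and the erfc integrals sit near the hole endpoints $j/n\approx br_k^{2b}$, not near the oscillation point $j_{k,\star}\sim nt_{2k}$ as you write, but since you delegate to the Theorem~\ref{thm:main thm} machinery which handles this correctly, your plan is unaffected.
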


\paragraph{Method of proof.} The problem of determining large gap asymptotics is a notoriously difficult problem in random matrix theory with a long history \cite{K2009, F2014, GN2018}. There have been several methods that have proven successful to solve large gap problems of one-dimensional point processes, among which: the Deift--Zhou \cite{DeiftZhou} steepest descent method for Riemann--Hilbert problems \cite{K2003, DIKZ2007, DIK, BBD2008, DeiftKrasVasi, CGS2019, CLM1, CLM2, DXZ2020}, operator theoretical methods \cite{W1971, EhrSine, Ehr2010}, the ``loop equations" \cite{BG1, BG2, Marchal1, Marchal2}, and the Brownian carousel \cite{VV2009, VV2010, RRZ2011, DVAiry2013}. 

Our method of proof shows similarities with the method of Forrester in \cite{ForresterHoleProba}. It relies on the fact that \eqref{def of point process} is determinantal, rotation-invariant, and combines the uniform asymptotics of the incomplete gamma function with some precise Riemann sum approximations. Our method is less robust with respect to the shape of the hole region than the one of Adhikari and Reddy \cite{Adhikari, AR Infinite Ginibre}, but allows to give precise asymptotics. We also recently used this method of Riemann sum approximations in \cite{Charlier 2d jumps} to obtain precise asymptotics for the moment generating function of the disk counting statistics of \eqref{def of point process}. However, the problem considered here is more challenging and of a completely different nature than the one considered in \cite{Charlier 2d jumps}; most of the difficulties that we have to overcome here are not present in \cite{Charlier 2d jumps}. These differences will be discussed in more detail in Section \ref{section:proof}. 

\section{Preliminaries}
By definition of $Z_{n}$ and $\mathcal{P}_{n}$ (see \eqref{def of point process} and \eqref{def of Pn}), we have
\begin{align}
Z_{n} & = \frac{1}{n!} \int_{\mathbb{C}}\ldots \int_{\mathbb{C}} \prod_{1 \leq j < k \leq n} |z_{k} -z_{j}|^{2} \prod_{j=1}^{n} |z_{j}|^{2\alpha}e^{-n |z_{j}|^{2b}} d^{2}z_{j}, \label{def of Zn as n fold integral} \\
\mathcal{P}_{n} & = \frac{1}{n!Z_{n}} \int_{\mathbb{C}}\ldots \int_{\mathbb{C}} \prod_{1 \leq j < k \leq n} |z_{k} -z_{j}|^{2} \prod_{j=1}^{n} w(z_{j}) d^{2}z_{j}, \label{def of P1 as n fold integral}
\end{align}
where the weight $w$ is defined by
\begin{align*}
& w(z) = |z|^{2\alpha} e^{-n |z|^{2b}}\begin{cases}
0, & \mbox{if } |z| \in [r_{1},r_{2}]\cup [r_{3},r_{4}]\cup ... \cup [r_{2g-1},r_{2g}], \\
1, & \mbox{otherwise.}
\end{cases}
\end{align*}
We will use the following well-known formula to rewrite $Z_{n}$ and $\mathcal{P}_{n}$ in terms of one-fold integrals. 
\begin{lemma}\label{lemma: exact identity}
If $\mathsf{w}:\mathbb{C}\to [0,+\infty)$ is rotation invariant (i.e. $\mathsf{w}(z)=\mathsf{w}(|z|)$) and satisfies
\begin{align*}
\int_{\mathbb{C}} u^{j} \mathsf{w}(u)d u <+\infty, \qquad \mbox{for all } j \geq 0,
\end{align*}
then
\begin{align*}
\frac{1}{n!} \int_{\mathbb{C}}\ldots \int_{\mathbb{C}} \prod_{1 \leq j < k \leq n} |z_{k} -z_{j}|^{2} \prod_{j=1}^{n}\mathsf{w}(z_{j}) d^{2}z_{j} = (2\pi)^{n} \prod_{j=0}^{n-1} \int_{0}^{+\infty} u^{2j+1}\mathsf{w}(u)du.
\end{align*}
\end{lemma}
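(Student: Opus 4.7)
This is the classical polar-coordinate identity for rotation-invariant determinantal ensembles, and I would establish it by explicit symmetry reduction. The plan is to pass from the $n$-fold complex integral to polar coordinates, and to exploit the orthogonality of the characters $e^{i k \theta}$ to collapse a double sum over permutations to a single sum.

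First I would write the squared Vandermonde as a product of two Leibniz expansions:
\begin{align*}
\prod_{1 \leq j < k \leq n}|z_k-z_j|^2 \;=\; \bigl|\det(z_j^{k-1})_{j,k=1}^n\bigr|^2 \;=\; \sum_{\sigma,\tau \in S_n} \mathrm{sgn}(\sigma)\,\mathrm{sgn}(\tau) \prod_{j=1}^n z_j^{\sigma(j)-1}\,\overline{z_j}^{\,\tau(j)-1}.
\end{align*}
Substituting $z_j = u_j e^{i\theta_j}$, so that $d^2 z_j = u_j\,du_j\,d\theta_j$, and using that $\mathsf{w}(z_j)=\mathsf{w}(u_j)$ is independent of the angles, the integrand becomes a product of factors in each variable. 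The angular integrals evaluate as
\begin{align*}
\int_0^{2\pi} e^{i(\sigma(j)-\tau(j))\theta_j}\,d\theta_j \;=\; 2\pi\,\delta_{\sigma(j),\tau(j)},
\end{align*}
so only the terms with $\sigma=\tau$ survive, killing the sign factors. The hypothesis that all moments of $\mathsf{w}$ are finite justifies Fubini and legitimizes interchanging the finite sums with the integrals.

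Once the double sum is collapsed to a single sum over $\sigma \in S_n$, what remains is
\begin{align*}
\frac{(2\pi)^n}{n!} \sum_{\sigma \in S_n} \prod_{j=1}^n \int_0^{+\infty} u^{2(\sigma(j)-1)+1}\,\mathsf{w}(u)\,du.
\end{align*}
The key observation is that, for each fixed $\sigma$, the product $\prod_{j=1}^n$ ranges over the same multiset $\{\int_0^\infty u^{2k+1}\mathsf{w}(u)du\}_{k=0}^{n-1}$ regardless of $\sigma$, since $\sigma$ is a bijection of $\{1,\ldots,n\}$. Hence every one of the $n!$ summands equals $\prod_{k=0}^{n-1}\int_0^\infty u^{2k+1}\mathsf{w}(u)\,du$, the $n!$ cancels the $1/n!$, and the claimed identity follows.

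The argument has no real obstacle; the only point requiring minor care is the interchange of summation and integration, which is justified by the absolute convergence guaranteed by the finiteness of all polynomial moments of $\mathsf{w}$ in the hypothesis.
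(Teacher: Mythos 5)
Your proposal is correct, and it is exactly the standard argument the paper has in mind: the paper omits the proof as "standard," noting only that it relies on the determinantal (Vandermonde) structure together with rotation invariance, which is precisely what you use. The Leibniz expansion of $|\det(z_j^{k-1})|^2$, the angular integration giving $2\pi\,\delta_{\sigma(j),\tau(j)}$, the collapse to $\sigma=\tau$ (killing the signs), and the observation that the resulting product over $j$ is $\sigma$-independent are all in order, as is the final cancellation of $n!$. An equivalent packaging would be to invoke Andr\'eief's identity to reduce the $n$-fold integral to $\det\bigl(\int_{\mathbb{C}} z^{j-1}\bar z^{\,k-1}\mathsf{w}(z)\,d^2z\bigr)_{j,k=1}^n$, after which rotation invariance makes the Gram matrix diagonal; your version unrolls that determinant identity explicitly, which is if anything more self-contained.
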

The proof of Lemma \ref{lemma: exact identity} is standard and we omit it, see e.g. \cite{WebbWong}, \cite[Lemma 1.9]{Charlier 2d jumps} and the references therein. The argument relies on the fact that the point process on $z_{1},\ldots,z_{n}\in \mathbb{C}$ with density proportional to $\prod_{1 \leq j < k \leq n} |z_{k} -z_{j}|^{2} \prod_{j=1}^{n}\mathsf{w}(z_{j})$ is determinantal and rotation-invariant. 

\medskip Using twice Lemma \ref{lemma: exact identity}, with $\mathsf{w}(z) = |z|^{2\alpha}e^{-n|z|^{2b}}$ and $\mathsf{w}(x)=w(x)$, we obtain
\begin{align}\label{explicit formula for Zn}
Z_{n} & = n^{-\frac{n^{2}}{2b}}n^{-\frac{1+2\alpha}{2b}n} \frac{\pi^{n}}{b^{n}} \prod_{j=1}^{n} \Gamma(\tfrac{j+\alpha}{b}), \\
Z_{n}\mathcal{P}_{n} & = (2\pi)^{n} \prod_{j=0}^{n-1} \sum_{\ell=0}^{g} \int_{r_{2\ell}}^{r_{2\ell+1}} u^{2j+1+2\alpha}e^{-n u^{2b}}du \nonumber \\
& = n^{-\frac{n^{2}}{2b}}n^{-\frac{1+2\alpha}{2b}n} \frac{\pi^{n}}{b^{n}} \prod_{j=1}^{n} \sum_{\ell=0}^{g} \bigg(\gamma(\tfrac{j+\alpha}{b},nr_{2\ell+1}^{2b})-\gamma(\tfrac{j+\alpha}{b},nr_{2\ell}^{2b}) \bigg), \label{main exact formula}
\end{align}
where $r_{0}:=0$, $r_{2g+1}:=+\infty$, we recall that $\Gamma(a)=\int_{0}^{\infty} t^{a-1}e^{-t}dt$ is the Gamma function, and $\gamma(a,z)$ is the incomplete gamma function
\begin{align*}
\gamma(a,z) = \int_{0}^{z}t^{a-1}e^{-t}dt.
\end{align*}
By combining \eqref{explicit formula for Zn} with \eqref{main exact formula}, we obtain
\begin{align}\label{exact formula for log Pn}
\log \mathcal{P}_{n} =  \sum_{j=1}^{n} \log \bigg(\sum_{\ell=1}^{2g+1} (-1)^{\ell+1}\frac{\gamma(\tfrac{j+\alpha}{b},nr_{\ell}^{2b})}{\Gamma(\tfrac{j+\alpha}{b})}\bigg).
\end{align}
This exact formula is the starting point of the proofs of our four theorems. To analyze the large $n$ behavior of $\log \mathcal{P}_{n}$, we will use the asymptotics of $\gamma(a,z)$ in various regimes of the parameters $a$ and $z$. These asymptotics are available in the literature and are summarized in the following lemmas. 
\begin{lemma}\label{lemma:various regime of gamma}(\cite[formula 8.11.2]{NIST}).
Let $a>0$ be fixed. As $z \to +\infty$,
\begin{align*}
\gamma(a,z) = \Gamma(a) + \bigO(e^{-\frac{z}{2}}).
\end{align*}
\end{lemma}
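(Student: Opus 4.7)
The statement is the citation of \cite[formula 8.11.2]{NIST}, but it is elementary enough that I will sketch a self-contained argument. The key observation is the exact identity
\begin{align*}
\Gamma(a) - \gamma(a,z) = \int_{z}^{+\infty} t^{a-1} e^{-t}\, dt,
\end{align*}
which follows directly from the definitions $\Gamma(a) = \int_{0}^{\infty} t^{a-1}e^{-t}dt$ and $\gamma(a,z) = \int_{0}^{z} t^{a-1}e^{-t}dt$. Thus the entire problem reduces to estimating the tail integral on the right-hand side.

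The plan is to split the exponential factor $e^{-t} = e^{-t/2}\cdot e^{-t/2}$ and pull one half out using the monotonicity $e^{-t/2} \le e^{-z/2}$ on $[z,+\infty)$. Concretely, for $z$ large enough that $t\mapsto t^{a-1}e^{-t/2}$ is decreasing on $[z,+\infty)$ (which is automatic once $z > 2(a-1)$ when $a>1$, and trivial when $0<a\le 1$), one writes
\begin{align*}
\int_{z}^{+\infty} t^{a-1}e^{-t}\, dt
\;=\; \int_{z}^{+\infty} t^{a-1}e^{-t/2}\, e^{-t/2}\, dt
\;\le\; e^{-z/2}\int_{z}^{+\infty} t^{a-1}e^{-t/2}\, dt
\;\le\; e^{-z/2}\int_{0}^{+\infty} t^{a-1}e^{-t/2}\, dt
\;=\; 2^{a}\Gamma(a)\, e^{-z/2}.
\end{align*}
Since $a>0$ is fixed, the constant $2^{a}\Gamma(a)$ depends only on $a$, which yields the stated estimate $\gamma(a,z) = \Gamma(a) + \bigO(e^{-z/2})$ as $z \to +\infty$.

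There is essentially no obstacle here: the only small subtlety is handling the case where $t^{a-1}$ is initially increasing (i.e.\ $a>1$), but this is absorbed into the implicit constant because the full integral $\int_{0}^{\infty}t^{a-1}e^{-t/2}dt = 2^{a}\Gamma(a)$ is finite for every fixed $a>0$. An alternative, equally short route would be to invoke the asymptotic expansion $\Gamma(a) - \gamma(a,z) \sim z^{a-1}e^{-z}\bigl(1 + \tfrac{a-1}{z} + \cdots\bigr)$ as $z\to+\infty$ and observe that $z^{a-1}e^{-z} = o(e^{-z/2})$, but the direct tail bound above is the most transparent proof.
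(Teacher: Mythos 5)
Your proof is correct and self-contained, whereas the paper simply cites \cite[formula 8.11.2]{NIST} (the asymptotic expansion of the upper incomplete gamma function $\Gamma(a,z)=\Gamma(a)-\gamma(a,z)\sim z^{a-1}e^{-z}(1+\cdots)$) and gives no argument. Your direct tail bound $0\le \Gamma(a)-\gamma(a,z)\le 2^{a}\Gamma(a)\,e^{-z/2}$ is more elementary, gives an explicit constant valid for all $z>0$ rather than just asymptotically, and avoids any appeal to divergent asymptotic series; the route you mention as an alternative at the end is essentially the one the paper relies on. One small remark: the preamble about choosing $z$ so large that $t\mapsto t^{a-1}e^{-t/2}$ is decreasing on $[z,+\infty)$ is not actually used anywhere in the displayed chain of inequalities. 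The first inequality only needs $e^{-t/2}\le e^{-z/2}$ for $t\ge z$, and the second only needs nonnegativity of $t^{a-1}e^{-t/2}$ to extend the integration range down to $0$; neither requires monotonicity of $t^{a-1}e^{-t/2}$ or any restriction on $z$. You could delete that sentence and the proof would be cleaner and valid verbatim for every $z>0$.
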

\begin{lemma}\label{lemma: uniform}(\cite[Section 11.2.4]{Temme}).
The following hold:
\begin{align}\label{Temme exact formula}
& \frac{\gamma(a,z)}{\Gamma(a)} = \frac{1}{2}\mathrm{erfc}(-\eta \sqrt{a/2}) - R_{a}(\eta), \qquad R_{a}(\eta) := \frac{e^{-\frac{1}{2}a \eta^{2}}}{2\pi i}\int_{-\infty}^{\infty}e^{-\frac{1}{2}a u^{2}}g(u)du,
\end{align}
where $\mathrm{erfc}$ is given by \eqref{def of erfc}, $g(u) := \frac{dt}{du}\frac{1}{\lambda-t}+\frac{1}{u+i\eta}$,
\begin{align}\label{lol8}
& \lambda = \frac{z}{a}, \quad \eta = (\lambda-1)\sqrt{\frac{2 (\lambda-1-\ln \lambda)}{(\lambda-1)^{2}}}, \quad  u = -i(t-1)\sqrt{\frac{2 (t-1-\ln t)}{(t-1)^{2}}},
\end{align}
and the principal branch is used for the roots. In particular, $\eta \in \mathbb{R}$ for $\lambda >0$, while $t \in \mathcal{L}:=\{\frac{\theta}{\sin \theta}e^{i\theta}:-\pi<\theta<\pi \}$ for $u\in \mathbb{R}$. Moreover, as $a \to + \infty$, uniformly for $z \in [0,\infty)$, 
\begin{align}\label{asymp of Ra}
& R_{a}(\eta) \sim \frac{e^{-\frac{1}{2}a \eta^{2}}}{\sqrt{2\pi a}}\sum_{j=0}^{\infty} \frac{c_{j}(\eta)}{a^{j}}.
\end{align}
All coefficients $c_{j}(\eta)$ are bounded functions of $\eta \in \mathbb{R}$ (i.e. bounded for $\lambda \in (0,\infty)$), and 
\begin{align}\label{def of c0 and c1}
c_{0}(\eta) = \frac{1}{\lambda-1}-\frac{1}{\eta}, \qquad c_{1}(\eta) = \frac{1}{\eta^{3}}-\frac{1}{(\lambda-1)^{3}}-\frac{1}{(\lambda-1)^{2}}-\frac{1}{12(\lambda-1)}.
\end{align}

\end{lemma}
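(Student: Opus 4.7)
The plan is to follow the Temme--Olver strategy for uniform saddle-point asymptotics: rewrite $\gamma(a,z)/\Gamma(a)$ as an integral with a quadratic phase via a saddle-adapted substitution, isolate the contour piece that reproduces $\tfrac{1}{2}\mathrm{erfc}(-\eta\sqrt{a/2})$ exactly, and expand the remainder by a Watson-type argument.

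First I would start from a suitable contour representation of the form
\begin{align*}
\frac{\gamma(a,z)}{\Gamma(a)} = \frac{e^{-a}a^{a}}{2\pi i\,\Gamma(a)}\int_{\mathcal{L}}\frac{e^{a(\log t - t + 1)}}{\lambda - t}\,dt,
\end{align*}
where $\lambda = z/a$ and $\mathcal{L}$ is the Temme contour of the lemma passing through the saddle $t=1$ of $\log t - t + 1$. Introducing $u$ by $\tfrac{1}{2}u^{2} = t - 1 - \log t$ with the sign convention of \eqref{lol8} gives a biholomorphism from $\mathcal{L}$ onto $\mathbb{R}$ that sends the saddle to $u=0$, converts the phase into $e^{-au^{2}/2}$, and maps the pole $t=\lambda$ to $u=-i\eta$. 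After absorbing the prefactor via Stirling, the rational decomposition
\begin{align*}
\frac{1}{\lambda - t}\frac{dt}{du} = -\frac{1}{u+i\eta} + g(u),
\end{align*}
which is precisely the $g$ of the lemma, splits the integrand into a singular piece plus a regular correction. The pole contribution integrates (by a residue computation after translating $u \mapsto u - i\eta$ and rescaling) to exactly $\tfrac{1}{2}\mathrm{erfc}(-\eta\sqrt{a/2})$, while the remaining integral of $e^{-au^{2}/2}g(u)$ is $R_{a}(\eta)$ once the overall factor $e^{-a\eta^{2}/2}$ is displayed as in \eqref{Temme exact formula}.

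For the expansion \eqref{asymp of Ra} I would Taylor expand $g(u)$ about $u=0$, retain only even powers by parity, and integrate termwise against the Gaussian $e^{-au^{2}/2}$. This produces the asymptotic series in $a^{-1}$; direct computation of $dt/du$ from $\tfrac{1}{2}u^{2} = t - 1 - \log t$ at $u = 0$, together with the geometric expansion of $1/(u+i\eta)$, yields the explicit forms of $c_{0}(\eta)$ and $c_{1}(\eta)$ in \eqref{def of c0 and c1}.

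The hard part will be uniformity, in particular boundedness of every $c_{j}(\eta)$ on all of $\mathbb{R}$ and most delicately at $\eta = 0$ (equivalently $\lambda = 1$), where endpoint and saddle collide and each of the two pieces of $g$ is individually singular. The essential point is that the implicit relation defining $u$ in \eqref{lol8} forces the two singularities to cancel, so that $g$ is in fact analytic at $u = 0$ and each $c_{j}(\eta)$ extends analytically across $\eta = 0$. To prove this cleanly I would invert \eqref{lol8} to obtain $\eta$ as a convergent power series in $\lambda - 1$ near $\lambda = 1$, substitute into the Taylor coefficients of $g$, and verify order-by-order cancellation of the spurious poles. Boundedness for large $|\eta|$ then follows from the elementary asymptotics $\eta \sim \pm\sqrt{2(\lambda - 1 - \log\lambda)}$ as $\lambda \to 0^{+}$ or $\lambda \to +\infty$, which makes the inverse map and all its derivatives well controlled.
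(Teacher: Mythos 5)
The paper does not prove this lemma: the bracketed citation after the label indicates that it is quoted from Temme's book, with the uniformity statement later supported by a reference to Paris. You are therefore reconstructing a standard derivation, not competing with an argument in the paper.

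Your outline is recognizably the right strategy (saddle-adapted substitution $t \mapsto u$, isolation of the pole at $t=\lambda$, exact residue giving the $\mathrm{erfc}$ term, Watson expansion of the regular remainder $g$, and a cancellation argument at the coalescence point $\eta = 0$). However, the starting representation you write down is not correct, and ``absorbing the prefactor via Stirling'' is a genuine gap. First, the sign of the exponent is inconsistent with \eqref{lol8}: from the definition of $u$ there, $u^{2} = -2(t-1-\log t)$, so the decaying factor $e^{-\frac{1}{2}au^{2}}$ corresponds to $e^{a(t-1-\log t)}$; the factor $e^{a(\log t - t + 1)}$ you wrote equals $e^{+\frac{1}{2}au^{2}}$, which grows along $\mathcal{L}$ and makes the integral divergent. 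Second, even after fixing the sign, the prefactor $\frac{e^{-a}a^{a}}{\Gamma(a)}$ is wrong and the claimed identity fails already at $a=1$: for $z = \lambda \in (0,1)$ the right-hand side evaluates by residues at $t=0$ and $t=z$ to $e^{-2}(1-e^{z})/z$, which is not $\gamma(1,z)/\Gamma(1) = 1 - e^{-z}$ and does not even vanish as $z\to 0^{+}$. The correct starting point involves no $\Gamma(a)$ at all; one may take, for $\lambda > 1$,
\begin{align*}
\frac{\Gamma(a,z)}{\Gamma(a)} = \frac{e^{-z}z^{a}}{a^{a}}\cdot\frac{1}{2\pi i}\int_{\mathcal{L}}\frac{e^{at}\,t^{-a}}{\lambda - t}\,dt,
\end{align*}
with a residue correction for $\lambda < 1$ accounting for the pole entering $\mathcal{L}$. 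Since $\frac{e^{-z}z^{a}}{a^{a}} = e^{a(\log\lambda - \lambda)}$, after the $u$-substitution the prefactor combines with the Jacobian to produce exactly the overall factor $e^{-\frac{1}{2}a\eta^{2}}$; the pole piece then integrates to exactly $\tfrac{1}{2}\mathrm{erfc}(\eta\sqrt{a/2})$ and the remainder is exactly $R_{a}(\eta)$, with no approximation anywhere. In particular Stirling must \emph{not} be invoked: \eqref{Temme exact formula} is an identity, and Stirling would only yield an asymptotic relation, destroying the exact split on which the Watson expansion and the literal formulas \eqref{def of c0 and c1} rest. Your remarks on the analyticity of $g$ across $u=0$ and the boundedness of the $c_{j}$ are correct in spirit; but for uniform validity of \eqref{asymp of Ra} on all of $\lambda\in(0,\infty)$ one also needs uniform control of the truncated Watson remainder, which is precisely what the citation to Paris supplies.
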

By combining Lemma \ref{lemma: uniform} with the large $z$ asymptotics of $\mathrm{erfc}(z)$ given in \eqref{large y asymp of erfc}, we get the following.
\begin{lemma}\label{lemma: asymp of gamma for lambda bounded away from 1}
\item[(i)] Let $\delta>0$ be fixed. As $a \to +\infty$, uniformly for $\lambda \geq 1+\delta$,
\begin{align*}
\frac{\gamma(a,\lambda a)}{\Gamma(a)} = 1 + \frac{e^{-\frac{a\eta^{2}}{2}}}{\sqrt{2\pi}} \bigg( \frac{-1}{\lambda-1}\frac{1}{\sqrt{a}}+\frac{1+10\lambda+\lambda^{2}}{12(\lambda-1)^{3}} \frac{1}{a^{3/2}} + \bigO(a^{-5/2}) \bigg),
\end{align*}
where $\eta$ is as in \eqref{lol8} (in particular $e^{-\frac{a\eta^{2}}{2}} = e^{a-z}\frac{z^{a}}{a^{a}}$).  
\item[(ii)] As $a \to +\infty$, uniformly for $\lambda$ in compact subsets of $(0,1)$,
\begin{align*}
\frac{\gamma(a,\lambda a)}{\Gamma(a)} = \frac{e^{-\frac{a\eta^{2}}{2}}}{\sqrt{2\pi}} \bigg( \frac{-1}{\lambda-1}\frac{1}{\sqrt{a}}+\frac{1+10\lambda+\lambda^{2}}{12(\lambda-1)^{3}} \frac{1}{a^{3/2}} + \bigO(a^{-5/2}) \bigg),
\end{align*}
where $\eta$ is as in \eqref{lol8} (in particular $e^{-\frac{a\eta^{2}}{2}} = e^{a-z}\frac{z^{a}}{a^{a}}$).
\end{lemma}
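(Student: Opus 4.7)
The plan is to start from the exact identity in Lemma \ref{lemma: uniform} with $z = \lambda a$,
\begin{align*}
\frac{\gamma(a,\lambda a)}{\Gamma(a)} = \tfrac{1}{2}\mathrm{erfc}(-\eta\sqrt{a/2}) - R_{a}(\eta),
\end{align*}
and to combine the asymptotic expansion \eqref{asymp of Ra} of $R_{a}(\eta)$ with the large-argument expansion \eqref{large y asymp of erfc} of $\mathrm{erfc}$. The key observation is that whenever $\lambda$ is bounded away from $1$, the variable $\eta$ is bounded away from $0$ (this follows from the definition in \eqref{lol8}, since $\eta$ is a smooth function of $\lambda$ on $(0,\infty)$ that vanishes only at $\lambda=1$ and has the same sign as $\lambda-1$), so the argument $\pm\eta\sqrt{a/2}$ of $\mathrm{erfc}$ tends to $+\infty$ and \eqref{large y asymp of erfc} is applicable.

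For part (i), I would first use $\mathrm{erfc}(-y)=2-\mathrm{erfc}(y)$ to write $\tfrac{1}{2}\mathrm{erfc}(-\eta\sqrt{a/2}) = 1 - \tfrac{1}{2}\mathrm{erfc}(\eta\sqrt{a/2})$. For part (ii), since $\eta<0$ and bounded away from $0$, I would directly write $\tfrac{1}{2}\mathrm{erfc}(-\eta\sqrt{a/2}) = \tfrac{1}{2}\mathrm{erfc}(|\eta|\sqrt{a/2})$. In both cases, setting $y=|\eta|\sqrt{a/2}$ so that $y^{2}=a\eta^{2}/2$, the expansion \eqref{large y asymp of erfc} yields
\begin{align*}
\tfrac{1}{2}\mathrm{erfc}(|\eta|\sqrt{a/2}) = \frac{e^{-a\eta^{2}/2}}{\sqrt{2\pi}}\bigg( \frac{1}{|\eta|\sqrt{a}} - \frac{1}{|\eta|^{3}a^{3/2}} + \bigO(a^{-5/2}) \bigg),
\end{align*}
uniformly in the relevant ranges of $\lambda$. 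Replacing $|\eta|$ by $\pm\eta$ according to the sign, and subtracting $R_{a}(\eta)$ using \eqref{asymp of Ra}--\eqref{def of c0 and c1}, I would collect the terms of order $a^{-1/2}e^{-a\eta^{2}/2}$ and $a^{-3/2}e^{-a\eta^{2}/2}$.

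The decisive miracle is a double cancellation: the $1/\eta$ term from the $\mathrm{erfc}$ expansion exactly cancels the $-1/\eta$ in $c_{0}(\eta)=\frac{1}{\lambda-1}-\frac{1}{\eta}$, leaving the coefficient $\pm\frac{1}{\lambda-1}$; similarly, the $1/\eta^{3}$ terms cancel against the $1/\eta^{3}$ in $c_{1}(\eta)$, leaving $-\frac{1}{(\lambda-1)^{3}}-\frac{1}{(\lambda-1)^{2}}-\frac{1}{12(\lambda-1)}$, which a one-line algebraic manipulation rewrites as $-\frac{1+10\lambda+\lambda^{2}}{12(\lambda-1)^{3}}$. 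After accounting for signs this reproduces exactly the coefficients stated in (i) and (ii). The identity $e^{-a\eta^{2}/2}=e^{a-z}(z/a)^{a}$ is a direct consequence of the definition \eqref{lol8} of $\eta$ (this is \cite[Section 11.2.4]{Temme} or can be verified by computing $\eta^{2}/2 = \lambda-1-\log\lambda$).

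The main (and essentially only) technical point is to ensure the uniformity claimed in the statement. In part (i), this is automatic: $\eta$ and $\lambda-1$ are both bounded below by a positive constant depending only on $\delta$, so the error bounds in \eqref{large y asymp of erfc} and \eqref{asymp of Ra} are controlled uniformly. In part (ii), $\lambda$ ranges in a compact subset of $(0,1)$, which again makes $\eta$ stay in a compact subset of $(-\infty,0)$ and keeps $\lambda-1$ bounded away from $0$; the same uniformity holds. The remainder $\bigO(a^{-5/2})$ is obtained by truncating both expansions after $c_{1}$ and the $y^{-3}$ term in \eqref{large y asymp of erfc}, and using that the next corrections produce terms of the stated order.
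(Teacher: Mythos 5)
Your proposal is correct and follows precisely the route the paper indicates (the paper's "proof" is the one-line remark immediately preceding the lemma: combine Lemma~\ref{lemma: uniform} with the $\mathrm{erfc}$ expansion \eqref{large y asymp of erfc}). Your algebra checks out — the $1/\eta$ and $1/\eta^{3}$ cancellations against $c_{0}$ and $c_{1}$ produce exactly $-\tfrac{1}{\lambda-1}$ and $\tfrac{1+10\lambda+\lambda^{2}}{12(\lambda-1)^{3}}$, and the uniformity discussion is handled correctly.
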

\section{Proof of Theorem \ref{thm:main thm}: the case $r_{1}>0$ and $r_{2g}<b^{-\frac{1}{2b}}$}\label{section:proof}
In this paper, $\log$ always denotes the principal branch of the logarithm. Recall from \eqref{exact formula for log Pn} that 
\begin{align}\label{exact formula for Pn thm1}
\log \mathcal{P}_{n} =  \sum_{j=1}^{n} \log \bigg(\sum_{\ell=1}^{2g+1} (-1)^{\ell+1}\frac{\gamma(\tfrac{j+\alpha}{b},nr_{\ell}^{2b})}{\Gamma(\tfrac{j+\alpha}{b})}\bigg).
\end{align}
To analyze asymptotically as $n \to + \infty$ the sum on the right-hand side, we will split it into several smaller sums, which need to be handled in different ways. 

For $j=1,\ldots,n$ and $\ell=1,\ldots,2g$, we define
\begin{align}\label{def of aj lambdajl etajl}
a_{j} := \frac{j+\alpha}{b}, \qquad \lambda_{j,\ell} := \frac{bnr_{\ell}^{2b}}{j+\alpha}, \qquad \eta_{j,\ell} := (\lambda_{j,\ell}-1)\sqrt{\frac{2 (\lambda_{j,\ell}-1-\ln \lambda_{j,\ell})}{(\lambda_{j,\ell}-1)^{2}}}.
\end{align}
Let $M'$ be a large integer independent of $n$, and let $\epsilon > 0$ be a small constant independent of $n$. Define
\begin{align}
& j_{\ell,-}:=\lceil \tfrac{bnr_{\ell}^{2b}}{1+\epsilon} - \alpha \rceil, & & j_{\ell,+} := \lfloor  \tfrac{bnr_{\ell}^{2b}}{1-\epsilon} - \alpha \rfloor, & & \ell=1,\ldots,2g, \label{def of jk plus and minus} \\
& j_{0,-}:=1, & & j_{0,+}:=M', & & j_{2g+1,-}:=n+1, \nonumber
\end{align}
where $\lceil x \rceil$ denotes the smallest integer $\geq x$, and $\lfloor  x \rfloor$ denotes the largest integer $\leq x$.
We take $\epsilon$ sufficiently small such that
\begin{align}\label{cond on epsilon 1}
\frac{br_{\ell}^{2b}}{1-\epsilon} < \frac{br_{\ell+1}^{2b}}{1+\epsilon}, \qquad \mbox{for all } \ell \in \{1,\ldots,2g-1\}, \qquad \mbox{and} \qquad \frac{br_{2g}^{2b}}{1-\epsilon} < 1.
\end{align}
A natural quantity that will appear in our analysis is 
\begin{align}\label{def of t2k in proof}
t_{2k} & := \frac{1}{2} \frac{r_{2k}^{2b}-r_{2k-1}^{2b}}{\log \big( \frac{r_{2k}}{r_{2k-1}} \big)} = \frac{br_{2k}^{2b}-br_{2k-1}^{2b}}{\log(r_{2k}^{2b})-\log(r_{2k-1}^{2b})}, \qquad k=1,\ldots,g.
\end{align}
It is easy to check that for each $k\in \{1,\ldots,g\}$, $t_{2k}$  lies in the interval $(br_{2k-1}^{2b},br_{2k}^{2b})$. For reasons that will be apparent below, we also choose $\epsilon>0$ sufficiently small such that
\begin{align}\label{cond on epsilon 2}
\frac{br_{2k-1}^{2b}}{1-\epsilon} < t_{2k} < \frac{br_{2k}^{2b}}{1+\epsilon}, \qquad k=1,\ldots,g.
\end{align}
Using \eqref{main exact formula} and \eqref{cond on epsilon 1}, we split the $j$-sum in \eqref{exact formula for Pn thm1} into $4g+2$ sums
\begin{align}\label{log Dn as a sum of sums}
\log \mathcal{P}_{n} = S_{0} + \sum_{k=1}^{2g}(S_{2k-1}+S_{2k}) + S_{4g+1},
\end{align}
with 
\begin{align}
& S_{0} = \sum_{j=1}^{M'} \log \bigg( \sum_{\ell=1}^{2g+1} (-1)^{\ell+1}\frac{\gamma(\tfrac{j+\alpha}{b},nr_{\ell}^{2b})}{\Gamma(\tfrac{j+\alpha}{b})} \bigg), \label{def of S0}  \\
& S_{2k-1} = \sum_{j=j_{k-1,+}+1}^{j_{k,-}-1} \hspace{-0.3cm} \log \bigg( \sum_{\ell=1}^{2g+1} (-1)^{\ell+1}\frac{\gamma(\tfrac{j+\alpha}{b},nr_{\ell}^{2b})}{\Gamma(\tfrac{j+\alpha}{b})} \bigg), & & k=1,\ldots,2g+1, \label{def of S2kp1}  \\
& S_{2k} = \sum_{j=j_{k,-}}^{j_{k,+}} \log \bigg( \sum_{\ell=1}^{2g+1} (-1)^{\ell+1}\frac{\gamma(\tfrac{j+\alpha}{b},nr_{\ell}^{2b})}{\Gamma(\tfrac{j+\alpha}{b})} \bigg), & & k=1,\ldots,2g. \label{def of S2k} 
\end{align}
We first show that the sums $S_{0}$ and $S_{1},S_{5},S_{9},\ldots,S_{4g+1}$ are exponentially small as $n \to + \infty$. 
\begin{lemma}\label{lemma: S0}
There exists $c>0$ such that $S_{0} = \bigO(e^{-cn})$ as $n \to + \infty$.
\end{lemma}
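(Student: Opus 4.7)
The plan is elementary: since $M'$ is a fixed integer independent of $n$, and since every $r_\ell$ for $\ell \in \{1,\ldots,2g\}$ is bounded away from zero (as Theorem \ref{thm:main thm} assumes $r_1 > 0$), each term in $S_0$ involves $\gamma(a_j, nr_\ell^{2b})$ with $a_j = (j+\alpha)/b$ taking only finitely many fixed positive values, and with $nr_\ell^{2b} \to +\infty$. I would thus apply Lemma \ref{lemma:various regime of gamma} to obtain, uniformly for $j \in \{1,\ldots,M'\}$ and $\ell \in \{1,\ldots,2g\}$,
$$\frac{\gamma(a_j, nr_\ell^{2b})}{\Gamma(a_j)} = 1 + \mathcal{O}\bigl(e^{-nr_\ell^{2b}/2}\bigr).$$
For $\ell = 2g+1$ the convention $r_{2g+1} = +\infty$ gives $\gamma(a_j, nr_{2g+1}^{2b}) = \Gamma(a_j)$, so that ratio is exactly $1$.

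Next I would exploit the alternating signs. Since there are $2g+1$ terms with signs $+,-,+,\ldots,+$, the constant contributions sum to $\sum_{\ell=1}^{2g+1}(-1)^{\ell+1} = 1$, so the argument of the logarithm in \eqref{def of S0} equals $1 + \mathcal{O}(e^{-cn})$ for any $c < r_1^{2b}/2$ (the smallest $r_\ell$ governs the dominant exponential error). Using $\log(1+x) = \mathcal{O}(x)$ for small $x$, each of the $M'$ summands of $S_0$ is $\mathcal{O}(e^{-cn})$, and adding them preserves the bound.

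The only care needed is uniformity in $j$, which is automatic since $j$ ranges over a set of size $M'$ independent of $n$; one simply takes the worst of the $M'$ implicit constants from Lemma \ref{lemma:various regime of gamma}. There is thus no real obstacle: this lemma is a soft preliminary whose role is to dispose of the small-$j$ part of the sum in \eqref{exact formula for Pn thm1}, where the uniform asymptotics of Lemma \ref{lemma: uniform} (which require $a \to \infty$) do not yet apply.
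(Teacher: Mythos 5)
Your proposal is correct and follows essentially the same route as the paper: apply Lemma \ref{lemma:various regime of gamma} to each ratio $\gamma(a_j,nr_\ell^{2b})/\Gamma(a_j)$ for fixed $a_j$, observe that the alternating signs sum to $1$ and that $r_1$ governs the slowest decay, and then conclude via $\log(1+\bigO(e^{-cn}))=\bigO(e^{-cn})$ over the fixed range $j\le M'$. The only (harmless) difference is that you spell out the $\ell=2g+1$ term and the uniformity in $j$ explicitly, which the paper leaves implicit.
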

\begin{proof}
Since $M'$ is fixed, by \eqref{def of S0} and Lemma \ref{lemma:various regime of gamma}, as $n \to +\infty$ we have
\begin{align*}
S_{0} & = \sum_{j=1}^{M'} \log \bigg( \sum_{\ell=1}^{2g+1} (-1)^{\ell+1} \big[1 + \bigO(e^{-\frac{1}{2}r_{\ell}^{2b}n}) \big] \bigg) = \bigO(e^{-\frac{1}{2}r_{1}^{2b}n}).
\end{align*}
\end{proof}

\begin{lemma}\label{lemma: S2km1 k odd}
Let $k \in \{1,3,5,\ldots,2g+1\}$. There exists $c>0$ such that $S_{2k-1} = \bigO(e^{-cn})$ as $n \to + \infty$.
\end{lemma}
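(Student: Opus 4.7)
The plan is to show that each summand in \eqref{def of S2kp1} is $\bigO(e^{-Cn})$ for some $C>0$, so that $S_{2k-1}=\bigO(e^{-cn})$ for any $c<C$ since the sum contains at most $n$ terms.

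First I would localize $\lambda_{j,\ell}$. For $j_{k-1,+}+1\le j\le j_{k,-}-1$, the definitions in \eqref{def of jk plus and minus} give
\begin{align*}
\lambda_{j,\ell} < \Big(\tfrac{r_\ell}{r_{k-1}}\Big)^{2b}(1-\epsilon) \le 1-\epsilon \quad (\ell \le k-1),\qquad \lambda_{j,\ell} \ge \Big(\tfrac{r_\ell}{r_k}\Big)^{2b}(1+\epsilon) \ge 1+\epsilon \quad (k \le \ell \le 2g),
\end{align*}
the first range being vacuous when $k=1$; moreover each $\lambda_{j,\ell}$ stays in a fixed compact subset of $(0,1)$ or of $[1+\epsilon,+\infty)$, respectively. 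The term $\ell=2g+1$ contributes exactly $1$ since $\gamma(a_j,+\infty)=\Gamma(a_j)$. Applying part (ii) of Lemma \ref{lemma: asymp of gamma for lambda bounded away from 1} for $\ell<k$ and part (i) for $k\le\ell\le 2g$ then yields, uniformly in $j$,
\begin{align*}
\sum_{\ell=1}^{2g+1}(-1)^{\ell+1}\frac{\gamma(a_j,nr_\ell^{2b})}{\Gamma(a_j)} = \sum_{\ell=k}^{2g+1}(-1)^{\ell+1} + \bigO\!\left(\max_{1\le\ell\le 2g}\frac{e^{-a_j\eta_{j,\ell}^2/2}}{\sqrt{a_j}}\right).
\end{align*}
The remaining alternating sum has $2g+2-k$ terms starting with $+1$; since $k$ is odd this count is odd, so the sum equals $1$.

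The crucial quantitative input is the uniform lower bound $a_j\eta_{j,\ell}^2/2 \ge Cn$, with $C>0$ independent of $j$, $\ell$, and $n$. From \eqref{lol8} one obtains the identity $a\eta^2/2 = z - a - a\log(z/a)$ with $z=nr_\ell^{2b}$ and $a=a_j$; a direct computation shows that, as a function of $a$, this expression is monotone on each relevant interval (decreasing when $z/a>1$, increasing when $z/a<1$) and that its extremal value at $\lambda=1\pm\epsilon$ equals $nr_\ell^{2b}$ times a positive $\epsilon$-dependent constant. Since in addition $a_j \ge (M'+1+\alpha)/b$ throughout, the factor $1/\sqrt{a_j}$ is bounded, so the logarithm of the inner sum is $\bigO(e^{-Cn})$ uniformly in $j$, proving the claim. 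The main obstacle is the uniform control of $a_j\eta_{j,\ell}^2/2$ across both possible regimes $a_j=\Theta(1)$ (which can occur when $k=1$ and $j$ is near $M'+1$) and $a_j=\Theta(n)$, but the single identity above handles both via the same monotonicity argument.
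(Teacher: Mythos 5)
Your overall strategy matches the paper's: localize each $\lambda_{j,\ell}$ away from $1$ by $\epsilon$, invoke Lemma \ref{lemma: asymp of gamma for lambda bounded away from 1}, observe that the alternating sum of leading contributions equals $1$, and bound $a_j\eta_{j,\ell}^2/2$ from below by $cn$ via the monotonicity of $a\mapsto z-a-a\log(z/a)$. This last observation is exactly what the paper uses in its $k=1$ subcase, so your unified treatment is a modest streamlining rather than a different route.

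There is one gap you should address, and it is precisely the point that forces the paper to treat $k=1$ separately. Lemma \ref{lemma: asymp of gamma for lambda bounded away from 1} provides an \emph{asymptotic} as $a\to+\infty$; the implied constant in the error term is only guaranteed uniform for $a$ above some threshold $A$. For $k\ge 3$ every $a_j$ is of order $n$, so uniformity over $j$ is immediate. For $k=1$, however, the range starts at $j=M'+1$, where $a_j=(M'+1+\alpha)/b$ is a \emph{fixed} constant. Your observation that ``the factor $1/\sqrt{a_j}$ is bounded'' does not resolve this: boundedness of the prefactor says nothing about whether the expansion itself is valid for $a_j$ this small. The paper's fix is to convert the asymptotic into a quantitative bound of the form $|\gamma(a,\lambda a)/\Gamma(a)-1|\le Ce^{-a\eta^2/2}$ valid for all $a\ge A(\epsilon/2)$ and all $\lambda\ge 1+\epsilon/2$, and then to \emph{choose $M'$ large enough} so that $a_j\ge A(\epsilon/2)$ throughout the sum. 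Without this step the claimed uniformity in $j$ in your displayed estimate is not justified. The fix is short, but it is a genuine missing ingredient, not just an expository choice.
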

\begin{proof}
The proof is similar to \cite[Lemma 2.2]{Charlier 2d jumps}. Let us consider first the case $k \in \{3,5,\ldots,2g+1\}$. By \eqref{def of aj lambdajl etajl} and \eqref{def of jk plus and minus}, for $j \in \{j_{k-1,+}+1,\ldots,j_{k,-}-1\}$ and $\ell\in \{1,\ldots,2g\}$ we have
\begin{align}\label{lol18}
& (1+\epsilon) \frac{r_{\ell}^{2b}}{r_{k}^{2b}+\frac{1+\epsilon}{b n}} \leq \lambda_{j,\ell} \leq (1-\epsilon) \frac{r_{\ell}^{2b}}{r_{k-1}^{2b} - \frac{1-\epsilon}{bn}}.
\end{align}
For $k=2g+1$, the left-hand side in \eqref{lol18} must be replaced by $\frac{r_{\ell}^{2b}}{b^{-1}+\frac{\alpha}{bn}}$. Since $\epsilon>0$ is fixed, $\lambda_{j,\ell}$ remains in a compact  subset of $(0,1)$ as $n \to + \infty$ with $j \in \{j_{k-1,+}+1,\ldots,j_{k,-}-1\}$ and $\ell\in \{1,\ldots,k-1\}$, while $\lambda_{j,\ell}$ remains in a compact  subset of $(1,\infty)$ as $n \to + \infty$ with $j \in \{j_{k-1,+}+1,\ldots,j_{k,-}-1\}$ and $\ell\in \{k,\ldots,2g\}$. Thus we can use Lemma \ref{lemma: asymp of gamma for lambda bounded away from 1} (i)--(ii) with $a$ and $\lambda$ replaced by $a_{j}$ and $\lambda_{j,\ell}$ respectively, where $j \in \{j_{k-1,+}+1,\ldots,j_{k,-}-1\}$ and $\ell\in \{1,\ldots,2g\}$. This yields
\begin{align}\label{lol9}
S_{2k-1} & = \sum_{j=j_{k-1,+}+1}^{j_{k,-}-1} \hspace{-0.3cm}
\log \bigg( \sum_{\ell=1}^{k-1} (-1)^{\ell+1}  \bigO(e^{-\frac{a_{j}\eta_{j,\ell}^{2}}{2}}) + \sum_{\ell=k}^{2g} (-1)^{\ell+1}  \big(1+\bigO(e^{-\frac{a_{j}\eta_{j,\ell}^{2}}{2}})\big) + 1 \bigg), 
\end{align}
as $n \to + \infty$. By \eqref{def of aj lambdajl etajl} and \eqref{lol18}, there exist constants $\{c_{j},c_{j}'\}_{j=1}^{3}$ such that $c_{1} n \leq  a_{j} \leq c_{1}'n$, $0<c_{1}$, $0<c_{2} \leq |\lambda_{j,\ell}-1| \leq c_{2}'$ and $0<c_{3} \leq \eta_{j,\ell}^{2} \leq c_{3}'$ hold for all large enough $n$, all $j \in \{j_{k-1,+}+1,\ldots,j_{k,-}-1\}$ and all $\ell\in \{1,\ldots,2g\}$. Thus $S_{2k-1}=\bigO(e^{-\frac{c_{1}c_{3}}{4}n})$ as $n \to + \infty$, which finishes the proof for $k=3,5,\ldots,2g+1$. Let us now consider the case $k=1$, which requires a slightly different argument. We infer from Lemma \ref{lemma: asymp of gamma for lambda bounded away from 1} (i) that for any $\epsilon'>0$ there exist $A=A(\epsilon'),C=C(\epsilon')>0$ such that $|\frac{\gamma(a,\lambda a)}{\Gamma(a)}-1| \leq Ce^{-\frac{a\eta^{2}}{2}}$ for all $a \geq A$ and all $\lambda \in [1+\epsilon',+\infty]$, where $\eta$ is given by \eqref{lol8}. Let us choose $\epsilon'=\frac{\epsilon}{2}$ and $M'$ sufficiently large such that $a_{j} = \frac{j+\alpha}{b} \geq A(\frac{\epsilon}{2})$ holds for all $j \in \{M'+1,\ldots,j_{1,-}-1\}$. In a similar way as in \eqref{lol9}, we obtain
\begin{align*}
S_{1} & = \sum_{j=M'+1}^{j_{1,-}-1} \hspace{-0.3cm}
\log \bigg( \sum_{\ell=1}^{2g} (-1)^{\ell+1} \big(1+\bigO(e^{-\frac{a_{j}\eta_{j,\ell}^{2}}{2}})\big) + 1 \bigg), \qquad \mbox{as } n \to + \infty.
\end{align*}
For each $\ell \in \{1,2,\ldots,2g\}$, $a_{j}\eta_{j,\ell}^{2}$ is decreasing as $j$ increases from $M'+1$ to $j_{1,-}-1$, and therefore
\begin{align*}
\frac{a_{j}\eta_{j,\ell}^{2}}{2} \geq \frac{a_{j_{1,-}-1}\eta_{j_{1,-}-1,\ell}^{2}}{2} \geq cn, \qquad \mbox{for all } j \in \{M'+1,\ldots,j_{1,-}-1\}, \; \ell \in \{1,\ldots,2g\},
\end{align*}
for a small enough constant $c>0$. It follows that $S_{1} = \bigO(e^{-cn})$ as $n \to + \infty$, which finishes the proof for $k=1$.
\end{proof}
Now, we analyze $S_{3},S_{7},\ldots,S_{4g-1}$. As it turns out, these are the sums responsible for the oscillations in the large $n$ asymptotics of $\log \mathcal{P}_{n}$. There is no such sums in \cite{Charlier 2d jumps}, so the analysis done here for $S_{3},S_{7},\ldots,S_{4g-1}$ is new. 

The next lemma makes apparent the terms that are not exponentially small.
\begin{lemma}\label{lemma: S2km1 k even splitting}
Let $k \in \{2,4,\ldots,2g\}$. There exists $c>0$ such that 
\begin{align}\label{lol30}
S_{2k-1} = S_{2k-1}^{(1)}+S_{2k-1}^{(2)}+\bigO(e^{-cn}), \qquad \mbox{as } n \to + \infty,
\end{align}
where 
\begin{align*}
& S_{2k-1}^{(1)} = \sum_{j=j_{k-1,+}+1}^{\lfloor j_{k,\star} \rfloor} \log \bigg(1+ \frac{\gamma(\tfrac{j+\alpha}{b},nr_{k-1}^{2b})}{\Gamma(\tfrac{j+\alpha}{b})} - \frac{\gamma(\tfrac{j+\alpha}{b},nr_{k}^{2b})}{\Gamma(\tfrac{j+\alpha}{b})} \bigg), \\
& S_{2k-1}^{(2)} = \sum_{j=\lfloor j_{k,\star} \rfloor+1}^{j_{k,-}-1} \log \bigg(1+ \frac{\gamma(\tfrac{j+\alpha}{b},nr_{k-1}^{2b})}{\Gamma(\tfrac{j+\alpha}{b})} - \frac{\gamma(\tfrac{j+\alpha}{b},nr_{k}^{2b})}{\Gamma(\tfrac{j+\alpha}{b})} \bigg),
\end{align*}
and 
\begin{align}\label{def of jkstar and tkstar}
j_{k,\star} := n t_{k} -\alpha,
\end{align}
where $t_{k}$ is defined in \eqref{def of t2k in proof}.
\end{lemma}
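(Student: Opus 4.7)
My plan is to apply Lemma~\ref{lemma: asymp of gamma for lambda bounded away from 1} to each term $x_\ell:=\gamma(\tfrac{j+\alpha}{b},nr_\ell^{2b})/\Gamma(\tfrac{j+\alpha}{b})$ appearing in \eqref{def of S2kp1}, and to use the alternating signs to single out the two terms $\ell=k-1$ and $\ell=k$ which carry essentially all the content of the logarithm; the other terms decay exponentially faster and will be absorbed into the $\bigO(e^{-cn})$ error. For every $j\in\{j_{k-1,+}+1,\ldots,j_{k,-}-1\}$, the definitions \eqref{def of jk plus and minus} together with \eqref{cond on epsilon 1} force $\lambda_{j,\ell}=\tfrac{bnr_\ell^{2b}}{j+\alpha}$ to lie in a compact subset of $(0,1-\epsilon]$ when $\ell\leq k-1$ and in a compact subset of $[1+\epsilon,+\infty)$ when $\ell\geq k$; the $\ell=2g+1$ term is exactly $1$ since $r_{2g+1}=+\infty$. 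Lemma~\ref{lemma: asymp of gamma for lambda bounded away from 1}(i)--(ii) then gives $x_\ell=\mathbf{1}_{\ell\geq k}+\bigO\!\bigl(e^{-a_j\eta_{j,\ell}^2/2}\bigr)$ uniformly. Since $k$ is even, a direct count yields $\sum_{\ell=k}^{2g+1}(-1)^{\ell+1}=0$, and this lets us rewrite the alternating sum as
\begin{align*}
\sum_{\ell=1}^{2g+1}(-1)^{\ell+1}x_\ell = \underbrace{1+x_{k-1}-x_k}_{=:A_j}+\underbrace{\sum_{\ell=1}^{k-2}(-1)^{\ell+1}x_\ell+\sum_{\ell=k+1}^{2g}(-1)^{\ell+1}(x_\ell-1)}_{=:E_j}.
\end{align*}

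The decisive step is to compare the exponential sizes of $A_j$ and $E_j$. The main term $A_j=x_{k-1}+(1-x_k)$ is a sum of two nonnegative exponentially small quantities, so $A_j\geq\max\{x_{k-1},1-x_k\}$. For every $\ell\notin\{k-1,k\}$ we have a fixed ratio $\lambda_{j,\ell}/\lambda_{j,k-1}=(r_\ell/r_{k-1})^{2b}<1$ when $\ell<k-1$, and $\lambda_{j,\ell}/\lambda_{j,k}=(r_\ell/r_k)^{2b}>1$ when $\ell>k$. Combined with the strict convexity of $\lambda\mapsto\lambda-1-\log\lambda$ and its minimum at $\lambda=1$, an elementary computation yields a uniform gap $\eta_{j,\ell}^2-\eta_{j,k-1}^2\geq c_0>0$ (respectively $\eta_{j,\ell}^2-\eta_{j,k}^2\geq c_0>0$), where $c_0$ depends only on the $r_m$'s (and not on $\epsilon$ or $n$), provided $\epsilon$ is chosen small enough. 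Since $a_j\sim n/b$, this forces $E_j/A_j=\bigO(e^{-cn})$ uniformly in $j$ for some $c>0$, and hence $\log(A_j+E_j)=\log A_j+\bigO(e^{-cn})$. Summing over the $\bigO(n)$ values of $j$ produces an aggregate error of $\bigO(n\,e^{-cn})=\bigO(e^{-c'n})$, and since $\sum_{j}\log A_j$ is exactly $S_{2k-1}^{(1)}+S_{2k-1}^{(2)}$ once the index set is split at $\lfloor j_{k,\star}\rfloor$, the decomposition \eqref{lol30} follows.

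The hard part will be the uniform bound $E_j/A_j=\bigO(e^{-cn})$. The subtlety is that the near-critical rate functions $\eta_{j,k-1}^2$ and $\eta_{j,k}^2$ can themselves be as small as $\bigO(\epsilon^2)$ near the endpoints of the $j$-range, so one cannot compare $E_j$ to $A_j$ by sheer absolute size; the comparison has to be carried out rate against rate, through the $\epsilon$-independent gap $c_0$ coming from the strict separations $r_\ell<r_{k-1}$ and $r_\ell>r_k$. The saving feature is that $A_j$ combines the two near-critical contributions \emph{additively} with the same sign (thanks to the cancellation $\sum_{\ell=k}^{2g+1}(-1)^{\ell+1}=0$ for $k$ even), so $A_j$ is always bounded below by the larger of the two near-critical exponentials, and the rate comparison $c_0>0$ then closes the estimate uniformly in $j$.
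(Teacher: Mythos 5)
Your proposal is correct and follows essentially the same route as the paper's proof: both apply Lemma~\ref{lemma: asymp of gamma for lambda bounded away from 1} to each term, identify $1 + x_{k-1} - x_k$ as the dominant factor using the alternating-sign cancellation, and establish the exponential smallness of the remainder via a uniform gap between the relevant exponential rates (the paper phrases this through separations of the $\eta_{j,\ell}$'s and $\eta_{j,\ell}$ from $0$, you phrase it through an $\epsilon$-independent gap $\eta_{j,\ell}^2-\eta_{j,k-1}^2\geq c_0$, but these are the same estimate). The extra detail you give — the lower bound $A_j\geq\max\{x_{k-1},1-x_k\}$ and the explicit rate-against-rate comparison — makes the factorization step more transparent than the paper's terse statement, but it is not a different argument.
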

\begin{proof}
Note that \eqref{lol18} also holds for $k \in \{2,4,\ldots,2g\}$, which implies in particular that for each $\ell\in \{1,\ldots,2g\}$, $|\lambda_{j,\ell}-1|$ remains bounded away from $0$ as $n \to + \infty$ and simultaneously $j \in \{j_{k-1,+}+1,\ldots,j_{k,-}-1\}$. Thus we can use Lemma \ref{lemma: asymp of gamma for lambda bounded away from 1} (i)--(ii) with $a$ and $\lambda$ replaced by $a_{j}$ and $\lambda_{j,\ell}$ respectively, where $j \in \{j_{k-1,+}+1,\ldots,j_{k,-}-1\}$ and $\ell\in \{1,\ldots,2g\}$, and this gives
\begin{align*}
& \frac{\gamma(\tfrac{j+\alpha}{b},nr_{\ell}^{2b})}{\Gamma(\tfrac{j+\alpha}{b})} = \frac{e^{-\frac{a_{j}\eta_{j,\ell}^{2}}{2}}}{\sqrt{2\pi}}\bigg( \frac{1}{1-\lambda_{j,\ell}}\frac{1}{\sqrt{a_{j}}} + \bigO(n^{-3/2}) \bigg), & & \ell\in \{1,\ldots,k-1\}, \\
& \frac{\gamma(\tfrac{j+\alpha}{b},nr_{\ell}^{2b})}{\Gamma(\tfrac{j+\alpha}{b})} = 1+\frac{e^{-\frac{a_{j}\eta_{j,\ell}^{2}}{2}}}{\sqrt{2\pi}}\bigg( \frac{1}{1-\lambda_{j,\ell}}\frac{1}{\sqrt{a_{j}}} + \bigO(n^{-3/2}) \bigg), & & \ell\in \{k,\ldots,2g\},
\end{align*}
as $n \to + \infty$ uniformly for $j \in \{j_{k-1,+}+1,\ldots,j_{k,-}-1\}$. In a similar way as \eqref{lol18}, we derive 
\begin{align*}
& (1+\epsilon) \frac{r_{\ell}^{2b}-r_{\ell-1}^{2b}}{r_{k}^{2b}+\frac{1+\epsilon}{b n}} \leq \lambda_{j,\ell}-\lambda_{j,\ell-1} \leq (1-\epsilon) \frac{r_{\ell}^{2b}-r_{\ell-1}^{2b}}{r_{k-1}^{2b} - \frac{1-\epsilon}{bn}},
\end{align*}
for all $j \in \{j_{k-1,+}+1,\ldots,j_{k,-}-1\}$ and $\ell \in \{2,\ldots,2g\}$, which implies by \eqref{def of aj lambdajl etajl} that
\begin{align*}
& \min\Big\{\eta_{j,2}-\eta_{j,1}, \eta_{j,3}-\eta_{j,2}, \ldots, \eta_{j,k-1}-\eta_{j,k-2}, 0-\eta_{j,k-1}, \eta_{j,k}-0, \eta_{j,k+1}-\eta_{j,k}, \ldots, \eta_{j,2g}-\eta_{j,2g-1}\Big\}
\end{align*}
is positive and remains bounded away from $0$ for all $n$ sufficiently large and for all $j \in \{j_{k-1,+}+1,\ldots,j_{k,-}-1\}$. In particular,
\begin{align*}
& 1+\sum_{\ell=1}^{2g} (-1)^{\ell+1}\frac{\gamma(\tfrac{j+\alpha}{b},nr_{\ell}^{2b})}{\Gamma(\tfrac{j+\alpha}{b})} = \bigg(1+ \frac{\gamma(\tfrac{j+\alpha}{b},nr_{k-1}^{2b})}{\Gamma(\tfrac{j+\alpha}{b})} - \frac{\gamma(\tfrac{j+\alpha}{b},nr_{k}^{2b})}{\Gamma(\tfrac{j+\alpha}{b})} \bigg) (1+ \bigO(e^{-cn}))
\end{align*}
as $n \to + \infty$ uniformly for $j \in \{j_{k-1,+}+1,\ldots,j_{k,-}-1\}$, which implies
\begin{align}\label{lol19}
S_{2k-1} = \sum_{j=j_{k-1,+}+1}^{j_{k,-}-1} \log \bigg(1+ \frac{\gamma(\tfrac{j+\alpha}{b},nr_{k-1}^{2b})}{\Gamma(\tfrac{j+\alpha}{b})} - \frac{\gamma(\tfrac{j+\alpha}{b},nr_{k}^{2b})}{\Gamma(\tfrac{j+\alpha}{b})} \bigg)+\bigO(e^{-cn}), \qquad \mbox{as } n \to + \infty,
\end{align}
and the claim follows after splitting the above sum into two parts.
\end{proof}
The reason why we have split the sum in \eqref{lol19} into two parts (denoted $S_{2k-1}^{(1)}$ and $S_{2k-1}^{(2)}$) around the value $j=\lfloor j_{k,\star} \rfloor$ is the following. As can be seen from the proof of Lemma \ref{lemma: S2km1 k even splitting}, we have
\begin{align}
& \frac{\gamma(\tfrac{j+\alpha}{b},nr_{k-1}^{2b})}{\Gamma(\tfrac{j+\alpha}{b})} = \frac{e^{-\frac{a_{j}\eta_{j,k-1}^{2}}{2}}}{\sqrt{2\pi}}\bigg( \frac{1}{1-\lambda_{j,k-1}}\frac{1}{\sqrt{a_{j}}} + \bigO(n^{-3/2})\bigg), \label{lol20} \\
& 1-\frac{\gamma(\tfrac{j+\alpha}{b},nr_{k}^{2b})}{\Gamma(\tfrac{j+\alpha}{b})} = \frac{e^{-\frac{a_{j}\eta_{j,k}^{2}}{2}}}{\sqrt{2\pi}}\bigg( \frac{1}{\lambda_{j,k}-1}\frac{1}{\sqrt{a_{j}}} + \bigO(n^{-3/2})\bigg), \label{lol21}
\end{align}
as $n \to + \infty$ uniformly for $j \in \{j_{k-1,+}+1,\ldots,j_{k,-}-1\}$. The two above right-hand sides are exponentially small. To analyze their sum, it is relevant to know whether $\eta_{j,k-1}^{2}\geq \eta_{j,k}^{2}$ or $\eta_{j,k-1}^{2}< \eta_{j,k}^{2}$ holds. It is easy to check that the function $j \mapsto \eta_{j,k}^{2}-\eta_{j,k-1}^{2}$, when viewed as an analytic function of $j \in [j_{k-1,+}+1,j_{k,-}-1]$, has a simple zero at $j=j_{k,\star}$. In fact, we have
\begin{align}\label{asymp etajk-etajkm1}
\frac{a_{j}(\eta_{j,k}^{2}-\eta_{j,k-1}^{2})}{2} = 2(j_{k,\star}-j) \log\bigg( \frac{r_{k}}{r_{k-1}} \bigg),
\end{align}
which implies in particular that $\eta_{j,k}^{2}-\eta_{j,k-1}^{2}$ is positive for $j\in\{j_{k-1,+}+1,\ldots,\lfloor j_{k,\star} \rfloor\}$ and negative for $j \in \{\lfloor j_{k,\star} \rfloor+1,\ldots,j_{k,-}-1\}$. Note that $j_{k,\star}$ lies well within the interval $[j_{k-1,+}+1,j_{k,-}-1]$ for all sufficiently large $n$ by \eqref{def of jk plus and minus}, \eqref{def of t2k in proof} and \eqref{cond on epsilon 2}, which implies that the number of terms in each of the sums $S_{2k-1}^{(1)}$ and $S_{2k-1}^{(2)}$ is of order $n$. When $j$ is close to $\lfloor j_{k,\star} \rfloor$, the two terms \eqref{lol20} and \eqref{lol21} are of the same order, and this will produce the oscillations in the asymptotics of $\log \mathcal{P}_{n}$. We will evaluate $S_{2k-1}^{(1)}$ and $S_{2k-1}^{(2)}$ separately using some precise Riemann sum approximations. We first state a general lemma.
\begin{lemma}\label{lemma:Riemann sum NEW}
Let $A,a_{0}$, $B,b_{0}$ be bounded function of $n \in \{1,2,\ldots\}$, such that 
\begin{align*}
& a_{n} := An + a_{0} \qquad \mbox{ and } \qquad b_{n} := Bn + b_{0}
\end{align*}
are integers. Assume also that $B-A$ is positive and remains bounded away from $0$. Let $f$ be a function independent of $n$, and which is $C^{4}([\min\{\frac{a_{n}}{n},A\},\max\{\frac{b_{n}}{n},B\}])$ for all $n\in \{1,2,\ldots\}$. Then as $n \to + \infty$, we have
\begin{align}
&  \sum_{j=a_{n}}^{b_{n}}f(\tfrac{j}{n}) = n \int_{A}^{B}f(x)dx + \frac{(1-2a_{0})f(A)+(1+2b_{0})f(B)}{2}  \nonumber \\
& + \frac{(-1+6a_{0}-6a_{0}^{2})f'(A)+(1+6b_{0}+6b_{0}^{2})f'(B)}{12n}+ \frac{(-a_{0}+3a_{0}^{2}-2a_{0}^{3})f''(A)+(b_{0}+3b_{0}^{2}+2b_{0}^{3})f''(B)}{12n^{2}} \nonumber \\
& + \bigO \bigg( \frac{\mathfrak{m}_{A,n}(f''')+\mathfrak{m}_{B,n}(f''')}{n^{3}} + \sum_{j=a_{n}}^{b_{n}-1} \frac{\mathfrak{m}_{j,n}(f'''')}{n^{4}} \bigg), \label{sum f asymp gap NEW}
\end{align}
where, for a given function $g$ continuous on $[\min\{\frac{a_{n}}{n},A\},\max\{\frac{b_{n}}{n},B\}]$,
\begin{align*}
\mathfrak{m}_{A,n}(g) := \max_{x \in [\min\{\frac{a_{n}}{n},A\},\max\{\frac{a_{n}}{n},A\}]}|g(x)|, \quad \mathfrak{m}_{B,n}(g) := \max_{x \in [\min\{\frac{b_{n}}{n},B\},\max\{\frac{b_{n}}{n},B\}]}|g(x)|,
\end{align*}
and for $j \in \{a_{n},\ldots,b_{n}-1\}$, $\mathfrak{m}_{j,n}(g) := \max_{x \in [\frac{j}{n},\frac{j+1}{n}]}|g(x)|$.
\end{lemma}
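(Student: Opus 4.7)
The plan is to derive the expansion from the composite trapezoidal rule on the uniform grid $\{j/n : j=a_n,\ldots,b_n\}$, combined with Taylor expansions shifting the endpoints $a_n/n$, $b_n/n$ to the target points $A$, $B$. The point is that all three sources of $n$-dependence (the boundary values, the Euler--Maclaurin correction, and the difference between $\int_{a_n/n}^{b_n/n}$ and $\int_A^B$) contribute to the coefficients of $f(A), f(B), f'(A), f'(B), f''(A), f''(B)$ in the answer, so the computation is essentially one of careful coefficient bookkeeping.

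First I would use
\begin{align*}
\sum_{j=a_n}^{b_n} f(j/n) = \frac{f(a_n/n)+f(b_n/n)}{2} + \sum_{j=a_n}^{b_n-1}\frac{f(j/n)+f((j+1)/n)}{2},
\end{align*}
together with the single-interval identity (obtained by a Taylor expansion at the midpoint $\frac{2j+1}{2n}$, or equivalently Euler--Maclaurin on $[j/n,(j+1)/n]$)
\begin{align*}
\frac{f(j/n)+f((j+1)/n)}{2} = n\int_{j/n}^{(j+1)/n}\! f(y)\,dy + \frac{f'((j+1)/n)-f'(j/n)}{12n} + \bigO\!\left(\frac{\mathfrak{m}_{j,n}(f'''')}{n^4}\right).
\end{align*}
Summing over $j=a_n,\ldots,b_n-1$ telescopes the derivative terms and produces
\begin{align*}
\sum_{j=a_n}^{b_n} f(j/n) = n\!\int_{a_n/n}^{b_n/n}\! f + \frac{f(a_n/n)+f(b_n/n)}{2} + \frac{f'(b_n/n)-f'(a_n/n)}{12n} + \bigO\!\left(\sum_{j=a_n}^{b_n-1}\frac{\mathfrak{m}_{j,n}(f'''')}{n^4}\right).
\end{align*}

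Next, since $a_n/n = A + a_0/n$ and $b_n/n = B + b_0/n$ with $a_0, b_0$ bounded, I would Taylor expand each of the three $n$-dependent objects above around $A$ and $B$ up to order $1/n^2$, so that the Taylor remainder is $\bigO((\mathfrak{m}_{A,n}(f''')+\mathfrak{m}_{B,n}(f'''))/n^3)$. Writing $\mathcal{E}$ for an arbitrary error of this size, this yields
\begin{align*}
\frac{f(a_n/n)+f(b_n/n)}{2} &= \tfrac{1}{2}(f(A)+f(B)) + \tfrac{a_0 f'(A) + b_0 f'(B)}{2n} + \tfrac{a_0^2 f''(A) + b_0^2 f''(B)}{4n^2} + \mathcal{E}, \\
\frac{f'(b_n/n)-f'(a_n/n)}{12n} &= \frac{f'(B)-f'(A)}{12n} + \frac{b_0 f''(B) - a_0 f''(A)}{12n^2} + \mathcal{E}, \\
n\!\int_{a_n/n}^{b_n/n}\! f - n\!\int_A^B\! f &= b_0 f(B) - a_0 f(A) + \frac{b_0^2 f'(B) - a_0^2 f'(A)}{2n} + \frac{b_0^3 f''(B) - a_0^3 f''(A)}{6n^2} + \mathcal{E}.
\end{align*}

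Adding these three expansions and collecting coefficients is then a direct bookkeeping exercise: the coefficients of $f(A), f(B)$ are $\tfrac{1-2a_0}{2}, \tfrac{1+2b_0}{2}$; those of $f'(A)/n, f'(B)/n$ are $\tfrac{-1+6a_0-6a_0^2}{12}, \tfrac{1+6b_0+6b_0^2}{12}$; and those of $f''(A)/n^2, f''(B)/n^2$ are $\tfrac{-a_0+3a_0^2-2a_0^3}{12}, \tfrac{b_0+3b_0^2+2b_0^3}{12}$, matching \eqref{sum f asymp gap NEW} exactly. The only real obstacle is checking these six coefficient combinations carefully; the analytic ingredients (single-interval Euler--Maclaurin and Taylor's theorem with integral remainder) are routine, and the two-part form of the error term follows automatically from the $C^4$ hypothesis, with the boundary contribution $\mathfrak{m}_{A,n}(f''')+\mathfrak{m}_{B,n}(f''')$ coming from the Taylor shifts and the interior contribution $\sum_j \mathfrak{m}_{j,n}(f'''')/n^4$ from the single-interval trapezoidal remainders.
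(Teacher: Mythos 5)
Your proof is correct, and it reaches the paper's key intermediate identity
\begin{align*}
\sum_{j=a_{n}}^{b_{n}}f(\tfrac{j}{n}) = \frac{f(\frac{a_n}{n})+f(\frac{b_n}{n})}{2} + n\int_{\frac{a_n}{n}}^{\frac{b_n}{n}}f + \frac{f'(\frac{b_n}{n})-f'(\frac{a_n}{n})}{12n} + \bigO\bigg(\sum_{j=a_{n}}^{b_{n}-1}\frac{\mathfrak{m}_{j,n}(f'''')}{n^{4}}\bigg)
\end{align*}
(which is \eqref{lol27} in the paper, rewritten) by a route that is the same in spirit but more streamlined in execution. The paper Taylor-expands each $\int_{j/n}^{(j+1)/n}f$ at the left endpoint $j/n$, isolates $\sum_j f(j/n)$, and then has to repeat the same maneuver on $\sum_j f'(j/n)$, $\sum_j f''(j/n)$ and $\sum_j f'''(j/n)$ and back-substitute, since the one-sided expansion does not telescope. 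Your use of the symmetric single-interval trapezoidal/Euler--Maclaurin identity, centered at the midpoint $\frac{2j+1}{2n}$, makes the $f'$-correction term telescope immediately and avoids the iteration entirely; the cost is that you must verify (as you implicitly do) that replacing $\frac{f''(\xi)}{12n^2}$ by $\frac{f'((j+1)/n)-f'(j/n)}{12n}$ in the trapezoidal remainder only introduces an error of size $\bigO(\mathfrak{m}_{j,n}(f'''')/n^4)$. The second stage of both proofs — Taylor-expanding the three $n$-dependent quantities about $A$ and $B$ and collecting the six coefficients — is identical, and your coefficient bookkeeping checks out against \eqref{sum f asymp gap NEW}, with the error split as $\bigO((\mathfrak{m}_{A,n}(f''')+\mathfrak{m}_{B,n}(f'''))/n^3)$ from the endpoint shifts plus $\bigO(\sum_j \mathfrak{m}_{j,n}(f'''')/n^4)$ from the interior.
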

\begin{remark}\label{remark:we can apply the lemma for functions with a pole}
To analyze the sums $S_{2k-1}^{(1)}$ and $S_{2k-1}^{(2)}$, we will use Lemma \ref{lemma:Riemann sum NEW} only with $A$ and $B$ fixed. However, we will also deal with other sums (denoted $\smash{S_{2k}^{(1)}}$ and $\smash{S_{2k}^{(3)}}$ in Lemma \ref{lemma: exact decomposition for the interpolating sums} below) that require the use of Lemma \ref{lemma:Riemann sum NEW} with varying $A$ and $B$. So it is worth to emphasize already here that the condition ``$f \in C^{4}([\min\{\frac{a_{n}}{n},A\},\max\{\frac{b_{n}}{n},B\}])$ for all $n\in \{1,2,\ldots\}$" allows to handle the situation where, for example, $A \searrow 0$ as $n \to + \infty$ and $f \in C^{4}((0,\max\{\frac{b_{n}}{n},B\}])$ but $f \notin C^{4}([0,\max\{\frac{b_{n}}{n},B\}])$.
\end{remark}
\begin{proof}
By Taylor's theorem,
\begin{align}
\int_{\frac{a_{n}}{n}}^{\frac{b_{n}}{n}}f(x)dx & = \sum_{j=a_{n}}^{b_{n}-1}\int_{\frac{j}{n}}^{\frac{j+1}{n}}f(x)dx  \nonumber \\
& = \sum_{j=a_{n}}^{b_{n}-1}\bigg\{ \frac{f(\tfrac{j}{n})}{n} + \frac{f'(\tfrac{j}{n})}{2n^{2}} + \frac{f''(\tfrac{j}{n})}{6n^{3}} + \frac{f'''(\tfrac{j}{n})}{24n^{4}} + \int_{\frac{j}{n}}^{\frac{j+1}{n}}  \frac{(x-\frac{j}{n})^{4}}{24}f''''(\xi_{j,n}(x)) dx\bigg\}, \label{lol22}
\end{align}
for some $\xi_{j,n}(x) \in [\frac{j}{n},x]$. Clearly,
\begin{align*}
\bigg| \int_{\frac{j}{n}}^{\frac{j+1}{n}}  \frac{(x-\frac{j}{n})^{4}}{24}f''''(\xi_{j,n}(x)) dx \bigg| \leq \frac{\mathfrak{m}_{j,n}(f'''')}{120n^{5}}.
\end{align*}
Therefore, by isolating the sum $\sum_{j=a_{n}}^{b_{n}-1}f(\tfrac{j}{n})$ in \eqref{lol22}, we get
\begin{align}\label{lol23}
\sum_{j=a_{n}}^{b_{n}-1}f(\tfrac{j}{n}) = n \int_{\frac{a_{n}}{n}}^{\frac{b_{n}}{n}}f(x)dx - \sum_{j=a_{n}}^{b_{n}-1}\bigg\{ \frac{f'(\tfrac{j}{n})}{2n} + \frac{f''(\tfrac{j}{n})}{6n^{2}} + \frac{f'''(\tfrac{j}{n})}{24n^{3}} \bigg\} + \bigO \bigg( \sum_{j=a_{n}}^{b_{n}-1} \frac{\mathfrak{m}_{j,n}(f'''')}{n^{4}} \bigg),
\end{align}
as $n \to + \infty$. In the same way as \eqref{lol23}, by replacing $f$ successively by $f'$, $f''$ and $f'''$, we also obtain
\begin{align}
& \sum_{j=a_{n}}^{b_{n}-1}f'(\tfrac{j}{n}) = n \int_{\frac{a_{n}}{n}}^{\frac{b_{n}}{n}}f'(x)dx - \sum_{j=a_{n}}^{b_{n}-1}\bigg\{ \frac{f''(\tfrac{j}{n})}{2n} + \frac{f'''(\tfrac{j}{n})}{6n^{2}} \bigg\} + \bigO \bigg( \sum_{j=a_{n}}^{b_{n}-1} \frac{\mathfrak{m}_{j,n}(f'''')}{n^{3}} \bigg), \label{lol24} \\
& \sum_{j=a_{n}}^{b_{n}-1}f''(\tfrac{j}{n}) = n \int_{\frac{a_{n}}{n}}^{\frac{b_{n}}{n}}f''(x)dx - \sum_{j=a_{n}}^{b_{n}-1} \frac{f'''(\tfrac{j}{n})}{2n} + \bigO \bigg( \sum_{j=a_{n}}^{b_{n}-1} \frac{\mathfrak{m}_{j,n}(f'''')}{n^{2}} \bigg), \label{lol25} \\
& \sum_{j=a_{n}}^{b_{n}-1}f'''(\tfrac{j}{n}) = n \int_{\frac{a_{n}}{n}}^{\frac{b_{n}}{n}}f'''(x)dx + \bigO \bigg( \sum_{j=a_{n}}^{b_{n}-1} \frac{\mathfrak{m}_{j,n}(f'''')}{n} \bigg), \label{lol26}
\end{align}
as $n \to + \infty$. After substituting \eqref{lol24}--\eqref{lol26} in \eqref{lol23}, we get
\begin{align}\label{lol27}
\sum_{j=a_{n}}^{b_{n}}f(\tfrac{j}{n}) = f(\tfrac{b_{n}}{n}) + \int_{\frac{a_{n}}{n}}^{\frac{b_{n}}{n}}\bigg\{ n f(x) - \frac{f'(x)}{2}+\frac{f''(x)}{12n} \bigg\}dx + \bigO \bigg( \sum_{j=a_{n}}^{b_{n}-1} \frac{\mathfrak{m}_{j,n}(f'''')}{n^{4}} \bigg),
\end{align}
as $n \to + \infty$. The integral on the right-hand side of \eqref{lol23} can be expanded using again Taylor's theorem; this gives
\begin{align*}
\int_{\frac{a_{n}}{n}}^{\frac{b_{n}}{n}}  f(x)dx = \int_{A}^{B}  f(x)dx -  \frac{a_{0}f(A)}{n} - \frac{a_{0}^{2}f'(A)}{2n^{2}} - \frac{a_{0}^{3}f''(A)}{6n^{3}} + \frac{b_{0}f(B)}{n} + \frac{b_{0}^{2}f'(B)}{2n^{2}} + \frac{b_{0}^{3}f''(B)}{6n^{3}} +\mathcal{E}_{n},
\end{align*}
for some $\mathcal{E}_{n}$ satisfying $|\mathcal{E}_{n}| \leq \frac{\mathfrak{m}_{A,n}(f''')+\mathfrak{m}_{B,n}(f''')}{n^{4}}$. 
The quantities $f(\tfrac{b_{n}}{n})$, $\int_{\frac{a_{n}}{n}}^{\frac{b_{n}}{n}}f'(x)dx$, $\int_{\frac{a_{n}}{n}}^{\frac{b_{n}}{n}}f''(x)dx$ can be expanded in a similar way using Taylor's Theorem. After substituting these expressions in \eqref{lol27} and using some elementary primitives, we find the claim.
\end{proof}
We introduce here a number of quantities that will appear in the large $n$ asymptotics of $S_{2k-1}^{(1)}$ and $S_{2k-1}^{(2)}$. For $k=2,4,\ldots,2g$, define
\begin{align}\label{lol10 start}
\theta_{k} = j_{k,\star}-\lfloor j_{k,\star} \rfloor, \qquad A_{k} = \frac{br_{k-1}^{2b}}{1-\epsilon}, \qquad B_{k} = \frac{b r_{k}^{2b}}{1+\epsilon},
\end{align}
and for $k=1,2,\ldots,2g$, define
\begin{align}
& f_{1,k}(x) = \frac{x}{b}\bigg( 1+\log \frac{br_{k}^{2b}}{x} \bigg) -r_{k}^{2b}, \nonumber \\
& f_{2,k}(x) = \bigg( \frac{1}{2}-\frac{\alpha}{b} \bigg) \log x + \frac{1}{2}\log b - \log \sqrt{2\pi} + \frac{\alpha}{b}\log(br_{k}^{2b})-\log|br_{k}^{2b}-x|, \\
& f_{3,k}(x) = - \bigg( \frac{b^{2}-6b\alpha + 6\alpha^{2}}{12b x} + \frac{b x}{(x-br_{k}^{2b})^{2}} + \frac{b-\alpha}{br_{k}^{2b}-x} \bigg), \\
& \theta_{k,+}^{(n,\epsilon)} = \bigg( \frac{b n r_{k}^{2b}}{1-\epsilon}-\alpha \bigg)-\bigg\lfloor \frac{b n r_{k}^{2b}}{1-\epsilon}-\alpha \bigg\rfloor, \qquad \theta_{k,-}^{(n,\epsilon)} = \bigg\lceil \frac{b n r_{k}^{2b}}{1+\epsilon}-\alpha \bigg\rceil-\bigg( \frac{b n r_{k}^{2b}}{1+\epsilon}-\alpha \bigg). \label{lol10 end}
\end{align}
\begin{lemma}\label{lemma:S2km1 part 1}
Let $k \in \{2,4,\ldots,2g\}$. As $n \to + \infty$, we have
\begin{align*}
& S_{2k-1}^{(1)} = n^{2} \int_{A_{k}}^{t_{k}} f_{1,k-1}(x)dx - \frac{t_{k}-A_{k}}{2}n \log n + n \bigg( (\alpha-1+\theta_{k-1,+}^{(n,\epsilon)})f_{1,k-1}(A_{k}) \\
& -(\alpha+\theta_{k})f_{1,k-1}(t_{k}) + \frac{f_{1,k-1}(t_{k})+f_{1,k-1}(A_{k})}{2} + \int_{A_{k}}^{t_{k}}f_{2,k-1}(x)dx \bigg) - \frac{\log n}{2}(\theta_{k-1,+}^{(n,\epsilon)}-\theta_{k})  \\
& + \frac{1-6(\alpha + \theta_{k})+6(\alpha + \theta_{k})^{2}}{12}(f_{1,k-1})'(t_{k}) - \frac{1+6(\alpha -1 + \theta_{k-1,+}^{(n,\epsilon)})+6(\alpha -1 + \theta_{k-1,+}^{(n,\epsilon)})^{2}}{12}(f_{1,k-1})'(A_{k}) \\
& -(\alpha + \theta_{k})f_{2,k-1}(t_{k}) + (\alpha-1+\theta_{k-1,+}^{(n,\epsilon)})f_{2,k-1}(A_{k}) + \frac{f_{2,k-1}(t_{k})+f_{2,k-1}(A_{k})}{2} \\
& + \int_{A_{k}}^{t_{k}}f_{3,k-1}(x)dx + \sum_{j=0}^{+\infty} \log \bigg\{ 1+\bigg( \frac{r_{k-1}}{r_{k}} \bigg)^{2(j+\theta_{k})} \frac{t_{k}-br_{k-1}^{2b}}{br_{k}^{2b}-t_{k}} \bigg\} + \bigO\bigg( \frac{(\log n)^{2}}{n} \bigg),
\end{align*}
where $t_{k}$ is given in \eqref{def of t2k in proof} and $f_{1,k-1},f_{2,k-1},f_{3,k-1}, A_{k}, \theta_{k}, \theta_{k-1,+}^{(n,\epsilon)}$ are given in \eqref{lol10 start}--\eqref{lol10 end}.
\end{lemma}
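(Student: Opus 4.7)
The plan is to decompose the logarithm in $S_{2k-1}^{(1)}$ into a ``dominant'' factor whose asymptotics are amenable to the Riemann-sum approximation of Lemma \ref{lemma:Riemann sum NEW}, and a small ``ratio'' factor that collapses into the geometric series appearing in the lemma statement.

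\textbf{Step 1: decomposition and pointwise expansion.} Set $\beta_{k-1}(j):=\gamma(\tfrac{j+\alpha}{b},nr_{k-1}^{2b})/\Gamma(\tfrac{j+\alpha}{b})$ and $\beta_k(j):=1-\gamma(\tfrac{j+\alpha}{b},nr_k^{2b})/\Gamma(\tfrac{j+\alpha}{b})$. The summand equals $\log\bigl(\beta_{k-1}(j)+\beta_k(j)\bigr) = \log\beta_{k-1}(j) + \log\bigl(1+\beta_k(j)/\beta_{k-1}(j)\bigr)$. Since $\lambda_{j,k-1}$ remains in a compact subset of $(0,1)$ on our range, Lemma \ref{lemma: uniform} applies with the two leading coefficients $c_0,c_1$ from \eqref{def of c0 and c1}. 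Using the identity $\tfrac{a\eta^2}{2}=z-a+a\log(a/z)$ and expanding with $x:=j/n$ held fixed, one obtains, uniformly on the range of $S_{2k-1}^{(1)}$,
$$\log\beta_{k-1}(j) = n\,f_{1,k-1}(x) - \tfrac{1}{2}\log n + f_{2,k-1}(x) + \tfrac{1}{n}f_{3,k-1}(x) + \bigO(n^{-2}),$$
where the three functions arise by collecting the $n^1,n^0,n^{-1}$ contributions from $-\tfrac{a_j\eta_{j,k-1}^2}{2}$, $-\tfrac{1}{2}\log(2\pi a_j)$, $-\log(1-\lambda_{j,k-1})$, and the next-order correction from \eqref{asymp of Ra}.

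\textbf{Step 2: Riemann sum.} Identifying in Lemma \ref{lemma:Riemann sum NEW} the parameters $A=A_k$, $a_0=1-\alpha-\theta_{k-1,+}^{(n,\epsilon)}$, $B=t_k$, $b_0=-\alpha-\theta_k$ (verified from $j_{k-1,+}+1=nA_k+(1-\alpha-\theta_{k-1,+}^{(n,\epsilon)})$ and $\lfloor j_{k,\star}\rfloor=nt_k+(-\alpha-\theta_k)$), and applying the lemma separately to $nf_{1,k-1}$, $f_{2,k-1}$, and $n^{-1}f_{3,k-1}$, produces all non-oscillatory terms in the statement. The prefactors $(1-2a_0)/2$ and $(1+2b_0)/2$ collapse precisely to $\alpha-\tfrac{1}{2}+\theta_{k-1,+}^{(n,\epsilon)}$ and $\tfrac{1}{2}-\alpha-\theta_k$, matching the coefficients of $f_{1,k-1}(A_k)$ and $f_{1,k-1}(t_k)$ in the statement. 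The constant term $-\tfrac{1}{2}\log n$ summed over the range contributes both the $-\tfrac{t_k-A_k}{2}\,n\log n$ piece and the $-\tfrac{\log n}{2}(\theta_{k-1,+}^{(n,\epsilon)}-\theta_k)$ piece simultaneously.

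\textbf{Step 3: geometric (theta) tail.} Lemma \ref{lemma: uniform} gives
$$\frac{\beta_k(j)}{\beta_{k-1}(j)} = \frac{1-\lambda_{j,k-1}}{\lambda_{j,k}-1}\,e^{-a_j(\eta_{j,k}^2-\eta_{j,k-1}^2)/2}\bigl(1+\bigO(n^{-1})\bigr),$$
and \eqref{asymp etajk-etajkm1} identifies the exponential as $(r_{k-1}/r_k)^{2(\theta_k+m)}$ with $m:=\lfloor j_{k,\star}\rfloor-j\geq 0$. For each fixed $m$ the rational prefactor tends to $(t_k-br_{k-1}^{2b})/(br_k^{2b}-t_k)$, and the $m$-tail is controlled by $(r_{k-1}/r_k)^{2m}$. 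Changing variables from $j$ to $m$ and passing to the limit therefore yields the displayed series $\sum_{m\geq 0}\log\{1+(r_{k-1}/r_k)^{2(m+\theta_k)}(t_k-br_{k-1}^{2b})/(br_k^{2b}-t_k)\}$, modulo an error to be quantified.

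\textbf{Main obstacle.} The delicate point is the error bookkeeping in Step 3. Near the upper endpoint $j\approx\lfloor j_{k,\star}\rfloor$ the ratio $\beta_k/\beta_{k-1}$ is of order one, while farther down the $n^{-1}$ correction to $(1-\lambda_{j,k-1})/(\lambda_{j,k}-1)$ has to be tracked against a geometrically decaying factor. One will split the sum according to $m\leq C\log n$ (where a Taylor expansion of the rational factor produces $\bigO(n^{-1})$ per term, summed to $\bigO((\log n)^2/n)$) and $m>C\log n$ (where the geometric factor dominates and the total contribution is $\bigO(n^{-N})$ for any $N$). This is what yields the stated $\bigO((\log n)^2/n)$ remainder.
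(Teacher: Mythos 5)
Your proposal is correct and follows essentially the same strategy as the paper: factor the summand as $\log\beta_{k-1}(j)+\log(1+\beta_k(j)/\beta_{k-1}(j))$, expand $\log\beta_{k-1}$ via Lemma \ref{lemma: asymp of gamma for lambda bounded away from 1} into $n f_{1,k-1}(j/n)-\tfrac12\log n+f_{2,k-1}(j/n)+\tfrac1n f_{3,k-1}(j/n)+\bigO(n^{-2})$, sum these with Lemma \ref{lemma:Riemann sum NEW} using exactly the endpoints and offsets $a_0=1-\alpha-\theta_{k-1,+}^{(n,\epsilon)}$, $b_0=-\alpha-\theta_k$, and treat the ratio term by the substitution $m=\lfloor j_{k,\star}\rfloor-j$ together with the identity \eqref{asymp etajk-etajkm1}, splitting at $m\asymp\log n$ to control the remainder. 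This matches the paper's handling of $\mathsf{S}_n^{(1)},\ldots,\mathsf{S}_n^{(4)}$ and $\widetilde{\mathsf{S}}_n$, including the $\bigO((\log n)^2/n)$ error bookkeeping.
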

\begin{proof}
Recall from \eqref{lol18} that $\lambda_{j,k-1}$ remains in a compact subset of $(0,1)$ as $n \to + \infty$ uniformly for $j\in\{j_{k-1,+}+1,\ldots,j_{k,-}-1\}$, and that $\lambda_{j,k}$ remains in a compact subset of $(1,+\infty)$ as $n \to + \infty$ uniformly for $j\in\{j_{k-1,+}+1,\ldots,j_{k,-}-1\}$. Hence, by Lemma \ref{lemma: asymp of gamma for lambda bounded away from 1} (i)--(ii), as $n \to + \infty$ we have
\begin{align*}
S_{2k-1}^{(1)} &  = \sum_{j=j_{k-1,+}+1}^{\lfloor j_{k,\star} \rfloor} \log \Bigg\{ \frac{e^{-\frac{a_{j}\eta_{j,k-1}^{2}}{2}}}{\sqrt{2\pi}}\bigg( \frac{1}{1-\lambda_{j,k-1}}\frac{1}{\sqrt{a_{j}}} + \frac{1+10\lambda_{j,k-1}+\lambda_{j,k-1}^{2}}{12(\lambda_{j,k-1}-1)^{3}}\frac{1}{a_{j}^{3/2}} + \bigO(n^{-5/2})\bigg)  \\
& + \frac{e^{-\frac{a_{j}\eta_{j,k}^{2}}{2}}}{\sqrt{2\pi}}\bigg( \frac{1}{\lambda_{j,k}-1}\frac{1}{\sqrt{a_{j}}} - \frac{1+10\lambda_{j,k}+\lambda_{j,k}^{2}}{12(\lambda_{j,k}-1)^{3}}\frac{1}{a_{j}^{3/2}} + \bigO(n^{-5/2})\bigg)  \Bigg\}.
\end{align*}
Since the number of terms in $S_{2k-1}^{(1)}$, namely $\#\{j_{k-1,+}+1,\ldots,\lfloor j_{k,\star} \rfloor\}$, is of order $n$ as $n \to + \infty$, the above asymptotics can be rewritten as
\begin{align}
S_{2k-1}^{(1)} &  = \sum_{j=j_{k-1,+}+1}^{\lfloor j_{k,\star} \rfloor} \log \Bigg\{ \frac{e^{-\frac{a_{j}\eta_{j,k-1}^{2}}{2}}}{\sqrt{2\pi}}\bigg( \frac{1}{1-\lambda_{j,k-1}}\frac{1}{\sqrt{a_{j}}} + \frac{1+10\lambda_{j,k-1}+\lambda_{j,k-1}^{2}}{12(\lambda_{j,k-1}-1)^{3}}\frac{1}{a_{j}^{3/2}}\bigg) \nonumber \\
& + \frac{e^{-\frac{a_{j}\eta_{j,k}^{2}}{2}}}{\sqrt{2\pi}}\bigg( \frac{1}{\lambda_{j,k}-1}\frac{1}{\sqrt{a_{j}}} - \frac{1+10\lambda_{j,k}+\lambda_{j,k}^{2}}{12(\lambda_{j,k}-1)^{3}}\frac{1}{a_{j}^{3/2}}\bigg) \Bigg\} + \bigO(n^{-1}) \nonumber \\
& = \mathsf{S}_{n}^{(1)} n + \mathsf{S}_{n}^{(2)} \log n  + \mathsf{S}_{n}^{(3)} + \mathsf{S}_{n}^{(4)}\frac{1}{n} + \widetilde{\mathsf{S}}_{n} + \bigO(n^{-1}), \label{lol29}
\end{align}
where
\begin{align*}
& \mathsf{S}_{n}^{(1)} = \sum_{j=j_{k-1,+}+1}^{\lfloor j_{k,\star}\rfloor} \bigg\{ \frac{j/n}{b}\bigg( 1 + \log \frac{br_{k-1}^{2b}}{j/n} \bigg) - r_{k-1}^{2b} \bigg\}, \qquad \mathsf{S}_{n}^{(2)} = -\frac{1}{2}\sum_{j=j_{k-1,+}+1}^{\lfloor j_{k,\star}\rfloor} 1, \\
& \mathsf{S}_{n}^{(3)} = \sum_{j=j_{k-1,+}+1}^{\lfloor j_{k,\star} \rfloor} \bigg\{ \bigg( \frac{1}{2}-\frac{\alpha}{b} \bigg)\log(j/n) + \frac{1}{2}\log b - \log \sqrt{2\pi} +\frac{\alpha}{b}\log(br_{k-1}^{2b}) - \log \Big( j/n - br_{k-1}^{2b} \Big) \bigg\}, \\
& \mathsf{S}_{n}^{(4)} = -\sum_{j=j_{k-1,+}+1}^{\lfloor j_{k,\star} \rfloor} \bigg\{ \frac{b^{2}-6b\alpha + 6\alpha^{2}}{12 b j/n} + \frac{b j/n}{(j/n - br_{k-1}^{2b})^{2}} + \frac{\alpha-b}{j/n - br_{k-1}^{2b}} \bigg\}, \\
& \widetilde{\mathsf{S}}_{n} = \sum_{j=j_{k-1,+}+1}^{\lfloor j_{k,\star} \rfloor} \log \bigg\{ 1+e^{-\frac{a_{j}(\eta_{j,k}^{2}-\eta_{j,k-1}^{2})}{2}} \bigg( \frac{j/n-b r_{k-1}^{2b}}{br_{k}^{2b}-j/n} + \widetilde{\mathcal{E}}_{n} \bigg) \bigg\} \\
& = \sum_{j=j_{k-1,+}+1}^{\lfloor j_{k,\star} \rfloor} \log \bigg( 1+e^{-\frac{a_{j}(\eta_{j,k}^{2}-\eta_{j,k-1}^{2})}{2}}  \frac{j/n-b r_{k-1}^{2b}}{br_{k}^{2b}-j/n} \bigg) + \sum_{j=j_{k-1,+}+1}^{\lfloor j_{k,\star} \rfloor} \log \bigg( 1+e^{-\frac{a_{j}(\eta_{j,k}^{2}-\eta_{j,k-1}^{2})}{2}}  \mathcal{E}_{n} \bigg),
\end{align*}
where $\widetilde{\mathcal{E}}_{n}=\bigO(n^{-1})$ and $\mathcal{E}_{n}=\bigO(n^{-1})$ as $n \to + \infty$ uniformly for $j \in \{j_{k-1,+}+1,\ldots,\lfloor j_{k,\star} \rfloor\}$. The large $n$ asymptotics of $\mathsf{S}_{n}^{(1)}$, $\mathsf{S}_{n}^{(2)}$, $\mathsf{S}_{n}^{(3)}$ and $\mathsf{S}_{n}^{(4)}$ can be obtained using Lemma \ref{lemma:Riemann sum NEW} with
\begin{align*}
a_{n} = j_{k-1,+}+1, \quad b_{n} = \lfloor j_{k,\star} \rfloor, \quad A = \frac{br_{k-1}^{2b}}{1-\epsilon}, \quad a_{0} = 1-\alpha-\theta_{k-1,+}^{(n,\epsilon)}, \quad B = t_{k}, \quad b_{0} = -\alpha-\theta_{k},
\end{align*}
and with $f$ replaced by $f_{1,k-1}$, $-\frac{1}{2}$, $f_{2,k-1}$ and $f_{3,k-1}$ respectively. Thus it only remains to obtain the asymptotics of $\widetilde{\mathsf{S}}_{n}$. We can estimate the $\mathcal{E}_{n}$-part of $\widetilde{\mathsf{S}}_{n}$ using \eqref{asymp etajk-etajkm1} as follows:
\begin{align}
& \sum_{j=j_{k-1,+}+1}^{\lfloor j_{k,\star} \rfloor} \log \bigg( 1+e^{-\frac{a_{j}(\eta_{j,k}^{2}-\eta_{j,k-1}^{2})}{2}}  \mathcal{E}_{n} \bigg) \nonumber \\
& = \sum_{j=j_{k-1,+}+1}^{\lfloor j_{k,\star} \rfloor-\lfloor M'\log n\rfloor} \log  \bigg( 1+e^{-\frac{a_{j}(\eta_{j,k}^{2}-\eta_{j,k-1}^{2})}{2}}  \mathcal{E}_{n} \bigg) + \sum_{j=\lfloor j_{k,\star} \rfloor-\lfloor M'\log n\rfloor +1}^{\lfloor j_{k,\star} \rfloor} \log \bigg( 1+e^{-\frac{a_{j}(\eta_{j,k}^{2}-\eta_{j,k-1}^{2})}{2}}  \mathcal{E}_{n} \bigg) \nonumber \\
& = \bigO(n^{-10}) + \bigO\bigg( \frac{\log n}{n} \bigg)  = \bigO\bigg( \frac{\log n}{n} \bigg), \qquad \mbox{as } n \to + \infty, \label{basic estimate}
\end{align} 
where we recall that $M'$ is a large but fixed constant (independent of $n$). Thus we have $\widetilde{\mathsf{S}}_{n} = \mathcal{S}_{0} + \bigO ( \frac{\log n}{n})$ as $n  \to + \infty$, where
\begin{align}\label{def of Scal0}
\mathcal{S}_{0} = \sum_{j=j_{k-1,+}+1}^{\lfloor j_{k,\star} \rfloor} \log \Bigg\{ 1+e^{-\frac{a_{j}(\eta_{j,k}^{2}-\eta_{j,k-1}^{2})}{2}}  \frac{j/n-b r_{k-1}^{2b}}{br_{k}^{2b}-j/n}  \Bigg\}.
\end{align}
By changing the index of summation in \eqref{def of Scal0}, and using again \eqref{asymp etajk-etajkm1}, we get
\begin{align}
& \mathcal{S}_{0} = \sum_{j=0}^{\lfloor j_{k,\star} \rfloor - j_{k-1,+}-1} \log \Bigg\{ 1+\bigg( \frac{r_{k-1}}{r_{k}} \bigg)^{2(j+\theta_{k})} \frac{-j/n - \frac{\theta_{k}}{n}+\frac{j_{k,\star}}{n}-br_{k-1}^{2b}}{br_{k}^{2b}+j/n+\frac{\theta_{k}}{n}-\frac{j_{k,\star}}{n}}  \Bigg\} \nonumber \\
& = \sum_{j=0}^{\lfloor j_{k,\star} \rfloor - j_{k-1,+}-1} \log \Bigg\{ 1+\bigg( \frac{r_{k-1}}{r_{k}} \bigg)^{2(j+\theta_{k})} f_{0}(j/n)  \Bigg\} + \bigO \bigg( \frac{\log n}{n} \bigg), \qquad \mbox{as } n \to + \infty, \label{lol3}
\end{align}
where the error term has been estimated in a similar way as in \eqref{basic estimate}, and
\begin{align*}
f_{0}(x) := \frac{-x +t_{k}-br_{k-1}^{2b}}{br_{k}^{2b}+x-t_{k}}.
\end{align*}
To estimate the remaining sum in \eqref{lol3}, we split it into two parts as follows
\begin{align*}
& \sum_{j=0}^{\lfloor j_{k,\star} \rfloor - j_{k-1,+}-1} \log \Bigg\{ 1+\bigg( \frac{r_{k-1}}{r_{k}} \bigg)^{2(j+\theta_{k})} f_{0}(j/n)  \Bigg\} \\
& = \sum_{j=0}^{\lfloor M'\log n\rfloor} \log \Bigg\{ 1+\bigg( \frac{r_{k-1}}{r_{k}} \bigg)^{2(j+\theta_{k})} f_{0}(j/n)  \Bigg\} + \sum_{j=\lfloor M'\log n\rfloor+1}^{\lfloor j_{k,\star} \rfloor - j_{k-1,+}-1} \log \Bigg\{ 1+\bigg( \frac{r_{k-1}}{r_{k}} \bigg)^{2(j+\theta_{k})} f_{0}(j/n)  \Bigg\}.
\end{align*}
For the second part, we have
\begin{align*}
\sum_{j=\lfloor M'\log n\rfloor +1}^{\lfloor j_{k,\star} \rfloor - j_{k-1,+}-1} \log \Bigg\{ 1+\bigg( \frac{r_{k-1}}{r_{k}} \bigg)^{2(j+\theta_{k})} f_{0}(j/n)  \Bigg\} = \bigO (n^{-10}), \qquad \mbox{as } n \to + \infty,
\end{align*}
provided $M'$ is chosen large enough. For the first part, since $f_{0}$ is analytic in a neighborhood of $0$, as $n \to + \infty$ we have
\begin{align*}
& \sum_{j=0}^{\lfloor M'\log n\rfloor} \log \Bigg\{ 1+\bigg( \frac{r_{k-1}}{r_{k}} \bigg)^{2(j+\theta_{k})} f_{0}(j/n)  \Bigg\} = \sum_{j=0}^{\lfloor M'\log n\rfloor} \log \Bigg\{ 1+\bigg( \frac{r_{k-1}}{r_{k}} \bigg)^{2(j+\theta_{k})} (f_{0}(0) + \bigO(j/n)) \Bigg\} \\
& = \sum_{j=0}^{\lfloor M'\log n\rfloor} \log \Bigg\{ 1+\bigg( \frac{r_{k-1}}{r_{k}} \bigg)^{2(j+\theta_{k})} f_{0}(0) \Bigg\} + \bigO\bigg( \frac{(\log n)^{2}}{n} \bigg) \\
& = \sum_{j=0}^{+\infty} \log \Bigg\{ 1+\bigg( \frac{r_{k-1}}{r_{k}} \bigg)^{2(j+\theta_{k})} f_{0}(0) \Bigg\} + \bigO\bigg( \frac{(\log n)^{2}}{n} \bigg).
\end{align*}
Hence, we have just shown that
\begin{align}\label{lol28}
\widetilde{\mathsf{S}}_{n} = \sum_{j=0}^{+\infty} \log \Bigg\{ 1+\bigg( \frac{r_{k-1}}{r_{k}} \bigg)^{2(j+\theta_{k})} f_{0}(0) \Bigg\} + \bigO\bigg( \frac{(\log n)^{2}}{n} \bigg), \qquad \mbox{as } n \to + \infty.
\end{align}
By substituting \eqref{lol28} and the large $n$ asymptotics of $\mathsf{S}_{n}^{(1)}$, $\mathsf{S}_{n}^{(2)}$, $\mathsf{S}_{n}^{(3)}$ and $\mathsf{S}_{n}^{(4)}$ in \eqref{lol29}, we obtain the claim.
\end{proof}
The asymptotic analysis of the sums $S_{2k-1}^{(2)}$, $k=2,4,\ldots,2g$ is similar to that of the sums $S_{2k-1}^{(1)}$, $k=2,4,\ldots,2g$, so we omit the proof of the following lemma.
\begin{lemma}\label{lemma:S2km1 part 2}
Let $k \in \{2,4,\ldots,2g\}$. As $n \to + \infty$, we have
\begin{align*}
& S_{2k-1}^{(2)} = n^{2} \int_{t_{k}}^{B_{k}} f_{1,k}(x)dx - \frac{B_{k}-t_{k}}{2}n \log n + n \bigg( (\alpha-1+\theta_{k})f_{1,k}(t_{k}) \\
& -(\alpha+1-\theta_{k,-}^{(n,\epsilon)})f_{1,k}(B_{k}) + \frac{f_{1,k}(B_{k})+f_{1,k}(t_{k})}{2} + \int_{t_{k}}^{B_{k}}f_{2,k}(x)dx \bigg) - \frac{\log n}{2}(\theta_{k,-}^{(n,\epsilon)}-1+\theta_{k}) \\
& + \frac{1-6(\alpha + 1 - \theta_{k,-}^{(n,\epsilon)})+6(\alpha +1 - \theta_{k,-}^{(n,\epsilon)})^{2}}{12}(f_{1,k})'(B_{k}) - \frac{1+6(\alpha -1 + \theta_{k})+6(\alpha -1 + \theta_{k})^{2}}{12}(f_{1,k})'(t_{k}) \\
& -(\alpha + 1 - \theta_{k,-}^{(n,\epsilon)})f_{2,k}(B_{k}) + (\alpha-1+\theta_{k})f_{2,k}(t_{k}) + \frac{f_{2,k}(B_{k})+f_{2,k}(t_{k})}{2} \\
& + \int_{t_{k}}^{B_{k}}f_{3,k}(x)dx  + \sum_{j=0}^{+\infty} \log \bigg\{ 1 + \bigg( \frac{r_{k-1}}{r_{k}} \bigg)^{2(j+1-\theta_{k})} \frac{br_{k}^{2b}-t_{k}}{t_{k}-br_{k-1}^{2b}} \bigg\} +  \bigO\bigg( \frac{(\log n)^{2}}{n} \bigg),
\end{align*}
where $t_{k}$ is given in \eqref{def of t2k in proof} and $f_{1,k},f_{2,k},f_{3,k}, B_{k}, \theta_{k}, \theta_{k,-}^{(n,\epsilon)}$ are given in \eqref{lol10 start}--\eqref{lol10 end}.
\end{lemma}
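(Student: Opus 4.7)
The strategy is to mirror the proof of Lemma \ref{lemma:S2km1 part 1}, with the crucial modification that for $j\in\{\lfloor j_{k,\star}\rfloor+1,\ldots,j_{k,-}-1\}$ the identity \eqref{asymp etajk-etajkm1} gives $\eta_{j,k}^{2}-\eta_{j,k-1}^{2}<0$, so the \emph{dominant} exponential in the summand comes from $1-\gamma(\tfrac{j+\alpha}{b},nr_{k}^{2b})/\Gamma(\tfrac{j+\alpha}{b})$ rather than from $\gamma(\tfrac{j+\alpha}{b},nr_{k-1}^{2b})/\Gamma(\tfrac{j+\alpha}{b})$. Accordingly, I will apply Lemma \ref{lemma: asymp of gamma for lambda bounded away from 1} (i) to the $\lambda_{j,k}\in(1,\infty)$ factor and Lemma \ref{lemma: asymp of gamma for lambda bounded away from 1} (ii) to the $\lambda_{j,k-1}\in(0,1)$ factor, and then pull out $e^{-a_{j}\eta_{j,k}^{2}/2}/\sqrt{2\pi}$ as the leading scale.

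More precisely, writing the summand of $S_{2k-1}^{(2)}$ as
\[
\frac{\gamma(\tfrac{j+\alpha}{b},nr_{k-1}^{2b})}{\Gamma(\tfrac{j+\alpha}{b})}+\bigg(1-\frac{\gamma(\tfrac{j+\alpha}{b},nr_{k}^{2b})}{\Gamma(\tfrac{j+\alpha}{b})}\bigg),
\]
I will factor the second term as $\frac{e^{-a_{j}\eta_{j,k}^{2}/2}}{\sqrt{2\pi}}\bigl(\frac{1}{\lambda_{j,k}-1}a_{j}^{-1/2}+\ldots\bigr)$ and the first as $\frac{e^{-a_{j}\eta_{j,k-1}^{2}/2}}{\sqrt{2\pi}}\bigl(\frac{1}{1-\lambda_{j,k-1}}a_{j}^{-1/2}+\ldots\bigr)$, and then split $\log(\cdot)$ into a ``smooth'' part, coming from the logarithm of the dominant exponential prefactor, and an ``oscillatory'' part, given by $\log\bigl(1+e^{-a_{j}(\eta_{j,k-1}^{2}-\eta_{j,k}^{2})/2}\,\rho_{j,n}\bigr)$, where $\rho_{j,n}$ is a ratio that converges to $(br_{k}^{2b}-t_{k})/(t_{k}-br_{k-1}^{2b})$ as $j/n\to t_{k}$.

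The smooth part, after the same algebraic manipulations as in the derivation of $\mathsf{S}^{(1)}_{n},\ldots,\mathsf{S}^{(4)}_{n}$ in the proof of Lemma \ref{lemma:S2km1 part 1}, produces four Riemann sums whose summands are respectively $f_{1,k}(j/n)$, $-\tfrac{1}{2}$, $f_{2,k}(j/n)$ and $f_{3,k}(j/n)$ (note the index $k$ in place of $k-1$, since the factored exponential is $e^{-a_{j}\eta_{j,k}^{2}/2}$). To each of these I apply Lemma \ref{lemma:Riemann sum NEW} with the parameters
\[
a_{n}=\lfloor j_{k,\star}\rfloor+1,\qquad b_{n}=j_{k,-}-1,\qquad A=t_{k},\qquad B=B_{k}=\tfrac{br_{k}^{2b}}{1+\epsilon},
\]
\[
a_{0}=1-\alpha-\theta_{k},\qquad b_{0}=-1-\alpha+\theta_{k,-}^{(n,\epsilon)},
\]
which yields exactly the $n^{2}$, $n\log n$, $n$, $\log n$ and order $1$ boundary contributions written in the statement (the coefficients $(1-2a_{0})/2$ and $(1+2b_{0})/2$ of Lemma \ref{lemma:Riemann sum NEW} at $A$ and $B$ producing respectively the factors $\alpha-\tfrac{1}{2}+\theta_{k}$ and $-\alpha-\tfrac{1}{2}+\theta_{k,-}^{(n,\epsilon)}$ of $f_{1,k}(t_{k})$ and $f_{1,k}(B_{k})$, and similarly for the derivative terms in $(f_{1,k})'$).

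For the oscillatory part, I reindex $j=\lfloor j_{k,\star}\rfloor+m$ with $m\geq 1$, use the exact identity \eqref{asymp etajk-etajkm1} (which gives $e^{-a_{j}(\eta_{j,k-1}^{2}-\eta_{j,k}^{2})/2}=(r_{k-1}/r_{k})^{2(m-\theta_{k})}$), truncate the sum at $m\leq M'\log n$ with exponentially small tail (as in \eqref{basic estimate}), Taylor-expand $\rho_{j,n}$ around $j/n=t_{k}$ to replace it by its limiting value $(br_{k}^{2b}-t_{k})/(t_{k}-br_{k-1}^{2b})$ up to an error of size $O((\log n)^{2}/n)$, and finally extend the sum back to $\infty$ at the cost of another exponentially small error. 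This produces the series $\sum_{j=0}^{+\infty}\log\{1+(r_{k-1}/r_{k})^{2(j+1-\theta_{k})}(br_{k}^{2b}-t_{k})/(t_{k}-br_{k-1}^{2b})\}$ in the statement. The main technical point (and the only place where care is needed beyond mimicking Lemma \ref{lemma:S2km1 part 1}) is to verify that the $a_{0}$, $b_{0}$ shifts match so that the boundary terms at the ``inner'' endpoint $t_{k}$ in Lemmas \ref{lemma:S2km1 part 1} and \ref{lemma:S2km1 part 2} will combine cleanly in the sum $S_{2k-1}^{(1)}+S_{2k-1}^{(2)}$; this is why $a_{0}=1-\alpha-\theta_{k}$ here must be paired with $b_{0}=-\alpha-\theta_{k}$ in the previous lemma.
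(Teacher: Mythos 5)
The paper explicitly omits the proof of this lemma, stating only that it is "similar to that of the sums $S_{2k-1}^{(1)}$"; your proof fills in exactly that gap, and does so correctly. You mimic the structure of the proof of Lemma \ref{lemma:S2km1 part 1} while correctly identifying the two essential modifications: since $j>j_{k,\star}$ makes $\eta_{j,k}^{2}<\eta_{j,k-1}^{2}$ (by \eqref{asymp etajk-etajkm1}), the dominant exponential is now $e^{-a_{j}\eta_{j,k}^{2}/2}$, so the smooth Riemann-sum integrands become $f_{1,k},f_{2,k},f_{3,k}$ (index $k$ instead of $k-1$), and the limiting ratio in the oscillatory factor becomes $\frac{\lambda_{j,k}-1}{1-\lambda_{j,k-1}}\to\frac{br_{k}^{2b}-t_{k}}{t_{k}-br_{k-1}^{2b}}$. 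Your choices $a_{n}=\lfloor j_{k,\star}\rfloor+1$, $b_{n}=j_{k,-}-1$, $a_{0}=1-\alpha-\theta_{k}$, $b_{0}=-1-\alpha+\theta_{k,-}^{(n,\epsilon)}$ all check out against the definitions $\theta_{k}=j_{k,\star}-\lfloor j_{k,\star}\rfloor$ and $\theta_{k,-}^{(n,\epsilon)}=\lceil\cdot\rceil-(\cdot)$, and substitution into Lemma \ref{lemma:Riemann sum NEW} reproduces every coefficient in the stated asymptotics, including the boundary factors $\alpha-\tfrac12+\theta_{k}$, $-\alpha-\tfrac12+\theta_{k,-}^{(n,\epsilon)}$ and the derivative coefficients $-\tfrac{1+6u+6u^{2}}{12}$ with $u=\alpha-1+\theta_{k}$ at $t_{k}$ and $\tfrac{1-6v+6v^{2}}{12}$ with $v=\alpha+1-\theta_{k,-}^{(n,\epsilon)}$ at $B_{k}$. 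The reindexing $j=\lfloor j_{k,\star}\rfloor+m$ with $m\geq 1$, the identity $e^{-a_{j}(\eta_{j,k-1}^{2}-\eta_{j,k}^{2})/2}=(r_{k-1}/r_{k})^{2(m-\theta_{k})}$, the truncation at $m\leq M'\log n$ with exponentially small tail, and the extension back to $\infty$ all mirror the argument of Lemma \ref{lemma:S2km1 part 1} correctly and yield the series with exponent $2(j+1-\theta_{k})$ after the substitution $j=m-1$. Your closing remark about the matching of $a_{0}$ and $b_{0}$ across the two lemmas at $t_{k}$ is exactly the consistency check that guarantees the boundary contributions combine cleanly in Lemma \ref{lemma: S2km1 k even}.
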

Substituting the asymptotics of Lemmas \ref{lemma:S2km1 part 1} and \ref{lemma:S2km1 part 2} in \eqref{lol30}, we directly get the following.
\begin{lemma}\label{lemma: S2km1 k even}
Let $k \in \{2,4,\ldots,2g\}$. As $n \to + \infty$, we have
\begin{align*}
& S_{2k-1} = F_{1,k}^{(\epsilon)} n^{2}  + F_{2,k}^{(\epsilon)} n \log n + F_{3,k}^{(n,\epsilon)} n + F_{5,k}^{(n,\epsilon)} \log n + F_{6,k}^{(n,\epsilon)} + \widetilde{\Theta}_{k,n} + \bigO\bigg( \frac{(\log n)^{2}}{n} \bigg)
\end{align*}
where
\begin{align*}
& F_{1,k}^{(\epsilon)} = \int_{A_{k}}^{t_{k}} f_{1,k-1}(x)dx + \int_{t_{k}}^{B_{k}} f_{1,k}(x)dx, \\
& F_{2,k}^{(\epsilon)} = -\frac{B_{k}-A_{k}}{2}, \\
& F_{3,k}^{(n,\epsilon)} = (\alpha-1+\theta_{k-1,+}^{(n,\epsilon)})f_{1,k-1}(A_{k}) -(\alpha+\theta_{k})f_{1,k-1}(t_{k}) +(\alpha-1+\theta_{k})f_{1,k}(t_{k}) - (\alpha+1-\theta_{k,-}^{(n,\epsilon)})f_{1,k}(B_{k}) \\
& + \frac{f_{1,k-1}(A_{k})+f_{1,k-1}(t_{k})}{2} + \frac{f_{1,k}(t_{k})+f_{1,k}(B_{k})}{2} + \int_{A_{k}}^{t_{k}} f_{2,k-1}(x)dx + \int_{t_{k}}^{B_{k}}f_{2,k}(x)dx, \\
& F_{5,k}^{(n,\epsilon)} = \frac{1-\theta_{k-1,+}^{(n,\epsilon)}-\theta_{k,-}^{(n,\epsilon)}}{2}, \\
& F_{6,k}^{(n,\epsilon)} = - \frac{1+6(\alpha-1+\theta_{k-1,+}^{(n,\epsilon)})+6(\alpha-1+\theta_{k-1,+}^{(n,\epsilon)})^{2}}{12}f_{1,k-1}'(A_{k}) + \frac{1-6(\alpha+\theta_{k})+6(\alpha+\theta_{k})^{2}}{12}f_{1,k-1}'(t_{k}) \\
& - \frac{1+6(\alpha-1+\theta_{k})+6(\alpha-1+\theta_{k})^{2}}{12}f_{1,k}'(t_{k}) + \frac{1-6(\alpha+1-\theta_{k,-}^{(n,\epsilon)})+6(\alpha+1-\theta_{k,-}^{(n,\epsilon)})^{2}}{12}f_{1,k}'(B_{k}) \\
& +(\alpha-1+\theta_{k-1,+}^{(n,\epsilon)})f_{2,k-1}(A_{k}) - (\alpha+\theta_{k})f_{2,k-1}(t_{k}) + (\alpha-1+\theta_{k})f_{2,k}(t_{k}) - (\alpha+1-\theta_{k,-}^{(n,\epsilon)}))f_{2,k}(B_{k}) \\
& + \frac{f_{2,k-1}(A_{k})+f_{2,k-1}(t_{k})}{2} + \frac{f_{2,k}(t_{k})+f_{2,k}(B_{k})}{2}  + \int_{A_{k}}^{t_{k}}f_{3,k-1}(x)dx + \int_{t_{k}}^{B_{k}}f_{3,k}(x)dx, \\
& \widetilde{\Theta}_{k,n} = \sum_{j=0}^{+\infty} \log \bigg\{ 1 + \bigg( \frac{r_{k-1}}{r_{k}} \bigg)^{2(j+\theta_{k})} \frac{t_{k}-br_{k-1}^{2b}}{br_{k}^{2b}-t_{k}} \bigg\} + \sum_{j=0}^{+\infty} \log \bigg\{ 1 + \bigg( \frac{r_{k-1}}{r_{k}} \bigg)^{2(j+1-\theta_{k})} \frac{br_{k}^{2b}-t_{k}}{t_{k}-br_{k-1}^{2b}} \bigg\},
\end{align*}
and where $t_{k}$ is given in \eqref{def of t2k in proof} and $f_{1,k},f_{2,k},f_{3,k}, A_{k}, B_{k}, \theta_{k}, \theta_{k,+}^{(n,\epsilon)}, \theta_{k,-}^{(n,\epsilon)}$ are given in \eqref{lol10 start}--\eqref{lol10 end}.
\end{lemma}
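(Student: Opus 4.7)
The plan is to derive this lemma as an immediate consequence of the decomposition $S_{2k-1} = S_{2k-1}^{(1)} + S_{2k-1}^{(2)} + \bigO(e^{-cn})$ established in Lemma \ref{lemma: S2km1 k even splitting}, together with the asymptotic expansions of $S_{2k-1}^{(1)}$ and $S_{2k-1}^{(2)}$ obtained in Lemmas \ref{lemma:S2km1 part 1} and \ref{lemma:S2km1 part 2}. Since both expansions have the same error estimate $\bigO((\log n)^{2}/n)$, which dominates the exponentially small remainder from Lemma \ref{lemma: S2km1 k even splitting}, the resulting proof is purely a bookkeeping exercise: collect the terms of each order in $n$ and verify that they match the $F_{i,k}$ of the statement.

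The key step is the grouping by order. For the $n^{2}$ coefficient I would add $\int_{A_{k}}^{t_{k}} f_{1,k-1}(x)dx + \int_{t_{k}}^{B_{k}} f_{1,k}(x)dx$ to recover $F_{1,k}^{(\epsilon)}$. For $n \log n$, the two contributions $-\tfrac{t_{k}-A_{k}}{2}$ and $-\tfrac{B_{k}-t_{k}}{2}$ telescope cleanly to $-\tfrac{B_{k}-A_{k}}{2} = F_{2,k}^{(\epsilon)}$. For $\log n$, the two contributions $-\tfrac{1}{2}(\theta_{k-1,+}^{(n,\epsilon)}-\theta_{k})$ and $-\tfrac{1}{2}(\theta_{k,-}^{(n,\epsilon)}-1+\theta_{k})$ recombine into $\tfrac{1-\theta_{k-1,+}^{(n,\epsilon)}-\theta_{k,-}^{(n,\epsilon)}}{2} = F_{5,k}^{(n,\epsilon)}$, with the $\pm\theta_{k}$ pieces cancelling. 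The $n$ and constant coefficients are assembled in the same way: the boundary contributions at $x=t_{k}$ coming from the right endpoint of $S_{2k-1}^{(1)}$ and the left endpoint of $S_{2k-1}^{(2)}$ are kept separately (they involve $f_{1,k-1}(t_{k})$ and $f_{1,k}(t_{k})$ respectively, which are generally different), while the boundary contributions at $A_{k}$ and $B_{k}$ appear only once each. The two convergent infinite sums that arose from the oscillatory sub-sums $\widetilde{\mathsf{S}}_{n}$ in the two previous lemmas combine to define $\widetilde{\Theta}_{k,n}$.

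I anticipate no genuine obstacle here: all the hard analytic work (the uniform asymptotics of $\gamma(a,z)$ from Lemmas \ref{lemma: uniform}--\ref{lemma: asymp of gamma for lambda bounded away from 1}, the Riemann-sum expansion of Lemma \ref{lemma:Riemann sum NEW}, the splitting at $j=\lfloor j_{k,\star}\rfloor$ dictated by \eqref{asymp etajk-etajkm1}, and the extraction of the theta-like sums) has already been carried out in Lemmas \ref{lemma:S2km1 part 1} and \ref{lemma:S2km1 part 2}. The only point requiring some care is to confirm that the coefficients written in the statement, particularly for $F_{3,k}^{(n,\epsilon)}$ and $F_{6,k}^{(n,\epsilon)}$, agree term-by-term with the sums of the coefficients coming from the two lemmas; this is a direct matching once one aligns the boundary values and the derivatives of $f_{1,k-1}, f_{2,k-1}, f_{3,k-1}$ at $\{A_{k},t_{k}\}$ with those of $f_{1,k}, f_{2,k}, f_{3,k}$ at $\{t_{k}, B_{k}\}$.
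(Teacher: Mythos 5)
Your proposal is correct and follows exactly the paper's route: the paper dispenses with this lemma in a single line, noting that it follows directly by substituting the expansions of Lemmas \ref{lemma:S2km1 part 1} and \ref{lemma:S2km1 part 2} into \eqref{lol30}. Your term-by-term bookkeeping (in particular the cancellation of $\pm\theta_k$ in the $\log n$ coefficient and the way the two infinite sums combine into $\widetilde\Theta_{k,n}$) is an accurate, slightly more detailed account of the same argument.
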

\begin{lemma}\label{lemma:Explicit expression for the F coeff}
Let $k \in \{2,4,\ldots,2g\}$. The quantities $F_{1,k}^{(\epsilon)},F_{2,k}^{(\epsilon)},F_{3,k}^{(n,\epsilon)},F_{5,k}^{(n,\epsilon)},F_{6,k}^{(n,\epsilon)}$ appearing in Lemma \ref{lemma: S2km1 k even} can be rewritten more explicitly as follows:
\begin{align*}
& F_{1,k}^{(\epsilon)} = \frac{(r_{k}^{2b}-r_{k-1}^{2b})^{2}}{4\log(\frac{r_{k}}{r_{k-1}})} + \frac{br_{k-1}^{4b}}{(1-\epsilon)^{2}}\frac{1-4\epsilon - 2 \log(1-\epsilon)}{4} - \frac{br_{k}^{4b}}{(1+\epsilon)^{2}}\frac{1+4\epsilon - 2 \log(1+\epsilon)}{4}, \\
& F_{2,k}^{(\epsilon)} = -\frac{br_{k}^{2b}}{2(1+\epsilon)}+\frac{br_{k-1}^{2b}}{2(1-\epsilon)}, \\
& F_{3,k}^{(n,\epsilon)} = \frac{r_{k-1}^{2b}}{1-\epsilon}\bigg\{ \frac{2\alpha-1+2\theta_{k-1,+}^{(n,\epsilon)}}{2}(\epsilon + \log(1-\epsilon)) - \frac{b+2\alpha}{2} - b \log b + \frac{b}{2}\log(2\pi) - b^{2}\log(r_{k-1}) \\
&  -\frac{2\alpha-b}{2}\log(1-\epsilon) + b \epsilon \log \bigg( \frac{\epsilon b r_{k-1}^{2b}}{1-\epsilon} \bigg) \bigg\} + \frac{r_{k}^{2b}}{1+\epsilon}\bigg\{ \frac{2\alpha+1-2\theta_{k,-}^{(n,\epsilon)}}{2}(\epsilon - \log(1+\epsilon)) + \frac{b+2\alpha}{2} + b \log b  \\
& - \frac{b}{2}\log(2\pi) + b^{2}\log(r_{k}) +\frac{2\alpha-b}{2}\log(1+\epsilon) + b \epsilon \log \bigg( \frac{\epsilon b r_{k}^{2b}}{1+\epsilon} \bigg) \bigg\} + 2\alpha t_{k} \log \frac{r_{k-1}}{r_{k}} \\
& - \big( t_{k}-br_{k-1}^{2b} \big)\log(t_{k}-br_{k-1}^{2b}) - (br_{k}^{2b}-t_{k})\log(br_{k}^{2b}-t_{k}), \\
& F_{5,k}^{(n,\epsilon)} = \frac{1-\theta_{k-1,+}^{(n,\epsilon)}-\theta_{k,-}^{(n,\epsilon)}}{2}, \\
& F_{6,k}^{(n,\epsilon)} = \frac{1-3b+b^{2}+6(b-1)\theta_{k,-}^{(n,\epsilon)}+6(\theta_{k,-}^{(n,\epsilon)})^{2}}{12b} \log(1+\epsilon) -\frac{2b}{\epsilon} + \big(1-\theta_{k-1,+}^{(n,\epsilon)}-\theta_{k,-}^{(n,\epsilon)}\big)\log \epsilon \\
& -\frac{1+3b+b^{2}-6(1+b)\theta_{k-1,+}^{(n,\epsilon)}+6(\theta_{k-1,+}^{(n,\epsilon)})^{2}}{12b} \log(1-\epsilon) + \bigg( \frac{1}{2}-\alpha-\theta_{k-1,+}^{(n,\epsilon)} \bigg) \log \big( r_{k-1}^{b}\sqrt{2\pi} \big) \\
& + \bigg( \frac{1}{2}+\alpha-\theta_{k,-}^{(n,\epsilon)} \bigg) \log \big( r_{k}^{b}\sqrt{2\pi} \big) + \bigg( \frac{1+b^{2}+6b\alpha}{6}-\theta_{k}+\theta_{k}^{2} \bigg) \log \frac{r_{k-1}}{r_{k}} \\
& + \bigg( \theta_{k}-\frac{1}{2} \bigg)\log \bigg( \frac{t_{k}-br_{k-1}^{2b}}{br_{k}^{2b}-t_{k}} \bigg) + \frac{b^{2}r_{k}^{2b}}{br_{k}^{2b}-t_{k}} + \frac{b^{2}r_{k-1}^{2b}}{t_{k}-br_{k-1}^{2b}}.
\end{align*}
\end{lemma}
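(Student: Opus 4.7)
The lemma is purely a matter of evaluating the integrals and boundary terms appearing in Lemma \ref{lemma: S2km1 k even} using the explicit formulas \eqref{lol10 start}--\eqref{lol10 end} for $f_{1,k}, f_{2,k}, f_{3,k}, A_{k}, B_{k}$ and the definition \eqref{def of t2k in proof} of $t_{k}$. No new estimate is required, so the plan is essentially to collect antiderivatives and substitute. The identity that drives every simplification is the pair of equivalent forms
\begin{align*}
2 t_{k}\log\!\bigg(\frac{r_{k}}{r_{k-1}}\bigg) = b\big(r_{k}^{2b}-r_{k-1}^{2b}\big), \qquad \log\!\bigg(\frac{r_{k}^{2b}}{r_{k-1}^{2b}}\bigg) = \frac{b r_{k}^{2b}-b r_{k-1}^{2b}}{t_{k}},
\end{align*}
coming from \eqref{def of t2k in proof}, which will repeatedly convert logarithmic terms into algebraic ones at $x=t_{k}$.

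First I would record the three relevant antiderivatives:
\begin{align*}
\int f_{1,k}(x)\,dx &= \frac{3x^{2}}{4b} + \frac{x^{2}}{2b}\log\!\bigg(\frac{br_{k}^{2b}}{x}\bigg) - r_{k}^{2b}\,x, \\
\int f_{2,k}(x)\,dx &= \Big(\tfrac{1}{2}-\tfrac{\alpha}{b}\Big)(x\log x - x) + \Big(\tfrac{1}{2}\log b - \log\sqrt{2\pi}+\tfrac{\alpha}{b}\log(br_{k}^{2b})\Big)x \\
&\quad + (br_{k}^{2b}-x)\log(br_{k}^{2b}-x) - (br_{k}^{2b}-x), \\
\int f_{3,k}(x)\,dx &= -\frac{b^{2}-6b\alpha+6\alpha^{2}}{12b}\log x - \log|x-br_{k}^{2b}| + \frac{b\,br_{k}^{2b}}{x-br_{k}^{2b}} + (b-\alpha)\log(br_{k}^{2b}-x),
\end{align*}
where the second last uses the decomposition $\tfrac{x}{(x-a)^{2}}=\tfrac{1}{x-a}+\tfrac{a}{(x-a)^{2}}$.

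Next, I would evaluate $F_{1,k}^{(\epsilon)}$ by splitting the domain at $t_{k}$; the resulting contributions at $x=t_{k}$ from $f_{1,k-1}$ and $f_{1,k}$ combine via the driving identity above into $\tfrac{(r_{k}^{2b}-r_{k-1}^{2b})^{2}}{4\log(r_{k}/r_{k-1})}$, while the contributions at the endpoints $A_{k}=\tfrac{br_{k-1}^{2b}}{1-\epsilon}$ and $B_{k}=\tfrac{br_{k}^{2b}}{1+\epsilon}$ give the explicit $\epsilon$-dependent remainders. The identity $F_{2,k}^{(\epsilon)}=-\tfrac{B_{k}-A_{k}}{2}$ is immediate from \eqref{lol10 start}. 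For $F_{5,k}^{(n,\epsilon)}$ nothing is to be done.

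The main obstacle is the bookkeeping in $F_{3,k}^{(n,\epsilon)}$ and $F_{6,k}^{(n,\epsilon)}$: in each, contributions of $f_{1,k-1},f_{1,k}$ at $t_{k}$ have to be combined with those of $f_{2,k-1},f_{2,k}$ (and, for $F_{6,k}^{(n,\epsilon)}$, also $f_{1,k-1}',f_{1,k}',f_{3,k-1},f_{3,k}$). Here the key observations are: (i) at $x=t_{k}$, one has $\log x - \log(br_{\ell}^{2b})=-\log(r_{\ell}/r_{k})^{\text{something}}$ via the driving identity, so pairs of logarithms coming from $f_{1,k-1},f_{1,k}$ and from $f_{2,k-1},f_{2,k}$ cancel against each other modulo the explicit $\log(r_{k-1}/r_{k})$ term; (ii) the $x^{-1}$ piece of $f_{3,\ell}$ integrates to $\log x$ which at $t_{k}$ again produces $\log(r_{k-1}/r_{k})$ times $-\tfrac{b^{2}-6b\alpha+6\alpha^{2}}{12b}$, contributing to the coefficient of $\log\tfrac{r_{k-1}}{r_{k}}$ in $F_{6,k}^{(n,\epsilon)}$; (iii) at $x=A_{k}$ and $x=B_{k}$ one simplifies $\log(br_{\ell}^{2b}-A_{k})=\log\tfrac{\epsilon b r_{k-1}^{2b}}{1-\epsilon}$ and $\log(B_{k}-br_{\ell}^{2b})=\log\tfrac{\epsilon b r_{k}^{2b}}{1+\epsilon}$ to produce the $\log\epsilon$ and $\log(1\pm\epsilon)$ terms; (iv) the rational-in-$x$ pieces of $f_{3,\ell}$ at $x=t_{k}$ give $\tfrac{b^{2}r_{k}^{2b}}{br_{k}^{2b}-t_{k}}$ and $\tfrac{b^{2}r_{k-1}^{2b}}{t_{k}-br_{k-1}^{2b}}$ while at $x=A_{k},B_{k}$ they give the $-2b/\epsilon$ pole. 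Finally I would check that all the $\theta_{k}$-dependent weights from Lemma \ref{lemma: S2km1 k even} line up to produce the coefficients $\tfrac{1}{2}\pm\alpha\pm\theta$ in front of $\log(r_{\ell}^{b}\sqrt{2\pi})$ and $(\theta_{k}-\tfrac{1}{2})\log\big(\tfrac{t_{k}-br_{k-1}^{2b}}{br_{k}^{2b}-t_{k}}\big)$, as well as the coefficient $\tfrac{1+b^{2}+6b\alpha}{6}-\theta_{k}+\theta_{k}^{2}$ of $\log(r_{k-1}/r_{k})$. The computation is elementary but lengthy; organizing the simplification by first collecting all contributions at $x=t_{k}$ (using the driving identity), then at $x=A_{k}$ and $x=B_{k}$, keeps the bookkeeping manageable.
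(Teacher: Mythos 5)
Your plan matches the paper's own proof, which is a one--line remark: substitute the definitions of $f_{1,k},f_{2,k},f_{3,k},A_{k},B_{k},t_{k}$ and compute elementary primitives; you have simply spelled out that calculation. Two algebraic slips, however, would derail the bookkeeping if carried out literally.

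First, your ``driving identity'' has a spurious factor of $b$: from \eqref{def of t2k in proof} the correct relation is
\begin{align*}
2t_{k}\log\Big(\tfrac{r_{k}}{r_{k-1}}\Big) = r_{k}^{2b}-r_{k-1}^{2b},
\end{align*}
not $b(r_{k}^{2b}-r_{k-1}^{2b})$ (your second displayed form, $\log(r_{k}^{2b}/r_{k-1}^{2b}) = b(r_{k}^{2b}-r_{k-1}^{2b})/t_{k}$, is correct and is what actually produces the term $\tfrac{(r_{k}^{2b}-r_{k-1}^{2b})^{2}}{4\log(r_{k}/r_{k-1})}$ in $F_{1,k}^{(\epsilon)}$). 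Second, the primitive you wrote for $f_{3,k}$ drops a factor $b$: the decomposition $\tfrac{x}{(x-a)^{2}}=\tfrac{1}{x-a}+\tfrac{a}{(x-a)^{2}}$ gives $\int\tfrac{bx}{(x-a)^{2}}dx = b\log|x-a| - \tfrac{ba}{x-a}$, so the correct antiderivative is
\begin{align*}
\int f_{3,k}(x)\,dx = \frac{b^{2}r_{k}^{2b}}{x-br_{k}^{2b}} - \frac{b^{2}-6b\alpha+6\alpha^{2}}{12b}\log x - \alpha\log\big|x-br_{k}^{2b}\big|,
\end{align*}
i.e.\ the logarithmic contributions combine to $-\alpha\log|x-br_{k}^{2b}|$, not $(b-\alpha-1)\log|x-br_{k}^{2b}|$ as your version implies. (This primitive is exactly the one the paper records later for $\mathfrak{g}_{k,4}+\mathfrak{h}_{k,4}$ in the proof of Lemma~\ref{lemma: expansion of various sums}; indeed $f_{3,k}=\mathfrak{g}_{k,4}+\mathfrak{h}_{k,4}$.) With these two corrections your route reproduces the lemma; the rest of the plan, including the evaluation of the rational piece at $A_{k},B_{k}$ to produce $-2b/\epsilon$ and the cancellation pattern at $x=t_{k}$, is accurately described.
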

\begin{proof}
It suffices to substitute the quantities $f_{1,k},f_{2,k},f_{3,k}, A_{k}, B_{k}$ by their definitions \eqref{def of jkstar and tkstar}, \eqref{lol10 start}--\eqref{lol10 end}, and to perform some simple primitives.
\end{proof}

We now turn our attention to the sums $S_{2k}$, $k=1,\ldots,2g$. Their analysis is very different from the analysis of $S_{2k-1}$. We first make apparent the terms that are not exponentially small.
\begin{lemma}
Let $k \in \{1,3,\ldots,2g-1\}$. There exists $c>0$ such that
\begin{align}\label{lol32}
& S_{2k} = \sum_{j=j_{k,-}}^{j_{k,+}} \log \bigg( \frac{\gamma(a_{j},nr_{k}^{2b})}{\Gamma(a_{j})} \bigg) + \bigO(e^{-cn}), \qquad \mbox{as } n \to + \infty.
\end{align}
Let $k \in \{2,4,\ldots,2g\}$. There exists $c>0$ such that
\begin{align}\label{lol33}
& S_{2k} = \sum_{j=j_{k,-}}^{j_{k,+}} \log \bigg( 1- \frac{\gamma(a_{j},nr_{k}^{2b})}{\Gamma(a_{j})} \bigg) + \bigO(e^{-cn}), \qquad \mbox{as } n \to + \infty.
\end{align}
\end{lemma}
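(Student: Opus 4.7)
The strategy is to isolate, within the alternating sum
\[
\Sigma_{j} := \sum_{\ell=1}^{2g+1}(-1)^{\ell+1}\frac{\gamma(\tfrac{j+\alpha}{b},nr_\ell^{2b})}{\Gamma(\tfrac{j+\alpha}{b})},
\]
the unique non-negligible contribution coming from $\ell=k$, and then to verify that the exponentially small remainder survives the logarithm thanks to a uniform lower bound on the surviving main term.

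\textbf{Step 1 (reduction to the term $\ell=k$).} For $j\in\{j_{k,-},\ldots,j_{k,+}\}$, the definition \eqref{def of jk plus and minus} gives $\lambda_{j,k}\in[1-\epsilon,1+\epsilon]$, hence $\lambda_{j,\ell}=(r_\ell/r_k)^{2b}\lambda_{j,k}$ stays bounded away from $1$ for every $\ell\neq k$ (uniformly in $j$) thanks to \eqref{cond on epsilon 1}: one finds $\delta_0>0$, depending only on the radii and not on $\epsilon$, such that $\lambda_{j,\ell}\leq 1-\delta_0$ for $\ell<k$ and $\lambda_{j,\ell}\geq 1+\delta_0$ for $\ell>k$, with Lemma~\ref{lemma:various regime of gamma} handling $\ell=2g+1$ when $r_{2g+1}=+\infty$. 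Lemma~\ref{lemma: asymp of gamma for lambda bounded away from 1} therefore yields a constant $c_0>0$ (independent of $\epsilon$) for which
\[
\frac{\gamma(a_j,nr_\ell^{2b})}{\Gamma(a_j)}=\bigO(e^{-c_0 n})\ (\ell<k),\qquad \frac{\gamma(a_j,nr_\ell^{2b})}{\Gamma(a_j)}=1+\bigO(e^{-c_0 n})\ (\ell>k),
\]
uniformly in $j\in\{j_{k,-},\ldots,j_{k,+}\}$. A parity count then suffices: for $k\in\{1,3,\ldots,2g-1\}$ both index sets $\{1,\ldots,k-1\}$ and $\{k+1,\ldots,2g+1\}$ have an even number of elements, so $\sum_{\ell<k}(-1)^{\ell+1}=\sum_{\ell>k}(-1)^{\ell+1}=0$ and one obtains $\Sigma_j=\gamma(a_j,nr_k^{2b})/\Gamma(a_j)+\bigO(e^{-c_0 n})$; for $k\in\{2,4,\ldots,2g\}$ both sets have an odd number of elements and both sign sums equal $+1$, which, combined with the sign $(-1)^{k+1}=-1$ of the $\ell=k$ term, yields $\Sigma_j=1-\gamma(a_j,nr_k^{2b})/\Gamma(a_j)+\bigO(e^{-c_0 n})$.

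\textbf{Step 2 (passage through the logarithm).} This is the only delicate step: when $\lambda_{j,k}$ approaches the boundary $1\pm\epsilon$ the main term itself can be very small. Using Lemma~\ref{lemma: uniform} and the estimate $|\eta_{j,k}|\leq C\epsilon$ valid for $j$ in the relevant range, one obtains $\gamma(a_j,nr_k^{2b})/\Gamma(a_j)\geq c\,e^{-c_2\epsilon^2 n}$ in the odd case and $1-\gamma(a_j,nr_k^{2b})/\Gamma(a_j)\geq c\,e^{-c_2\epsilon^2 n}$ in the even case, for constants $c,c_2>0$ independent of $\epsilon$. Shrinking $\epsilon$ if necessary so that $c_2\epsilon^2<c_0/2$ (compatibly with \eqref{cond on epsilon 1}--\eqref{cond on epsilon 2}), the additive $\bigO(e^{-c_0 n})$ remainder becomes a relative $\bigO(e^{-c_0 n/2})$ error under the logarithm, and summing over the $\bigO(n)$ indices $j\in\{j_{k,-},\ldots,j_{k,+}\}$ produces the $\bigO(e^{-cn})$ control claimed in \eqref{lol32}--\eqref{lol33}. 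The main obstacle is precisely this competition between two exponentially small quantities with different decay rates, resolved by the key fact that $c_0$ comes from the fixed spacing of the radii (through \eqref{cond on epsilon 1}) and is independent of $\epsilon$, whereas $c_2\epsilon^2$ can be taken as small as needed.
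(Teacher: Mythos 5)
Your approach matches the paper's: for $j$ in the range in question, $\lambda_{j,\ell}$ stays bounded away from $1$ for every $\ell\neq k$, so Lemma~\ref{lemma: asymp of gamma for lambda bounded away from 1} shows those contributions are $\bigO(e^{-cn})$ uniformly, and a parity count reduces the inner alternating sum to the $\ell=k$ term. Your Step~2 supplies a justification the paper's proof leaves implicit --- namely that near the endpoints of the summation range the surviving term itself can be exponentially small, of order $e^{-c_2\epsilon^2 n}$, so one must ensure the discarded $\bigO(e^{-c_0 n})$ error decays strictly faster before pulling it out of the logarithm --- and your resolution (observing that $c_0$ depends only on the radii while the main-term decay rate scales with $\epsilon^2$, then shrinking $\epsilon$) correctly closes that gap.
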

\begin{proof}
By definition of $j_{k,-}$, $j_{k,+}$ and $\lambda_{j,\ell}$ (see \eqref{def of aj lambdajl etajl} and \eqref{def of jk plus and minus}), for $j \in \{j_{k,-},\ldots,j_{k,+}\}$ we have 
\begin{align}\label{lol31}
(1-\epsilon) \frac{r_{\ell}^{2b}}{r_{k}^{2b}} \leq \lambda_{j,\ell} \leq (1+\epsilon) \frac{r_{\ell}^{2b}}{r_{k}^{2b}} \quad \mbox{ and } \quad (1-\epsilon) \frac{r_{\ell}^{2b}-r_{k}^{2b}}{r_{k}^{2b}} \leq \lambda_{j,\ell}-\lambda_{k,\ell} \leq (1+\epsilon) \frac{r_{\ell}^{2b}-r_{k}^{2b}}{r_{k}^{2b}}.
\end{align}
Since $\epsilon>0$ is fixed, the second part of \eqref{lol31} implies that for each $\ell \neq k$, $\lambda_{j,\ell}-\lambda_{k,\ell}$ remains bounded away from $0$ as $n \to + \infty$ uniformly for $j \in \{j_{k,-},\ldots,j_{k,+}\}$, and the first part of \eqref{lol31} combined with \eqref{cond on epsilon 1} implies that for all $j \in \{j_{k,-},\ldots,j_{k,+}\}$ we have
\begin{align*}
\begin{cases}
\lambda_{j,\ell} \in [1-\epsilon,1+\epsilon], & \mbox{if } \ell = k, \\
\lambda_{j,\ell} \leq (1+\epsilon) \frac{r_{\ell}^{2b}}{r_{k}^{2b}} < 1-\epsilon, & \mbox{if } \ell \leq k-1, \\
\lambda_{j,\ell} \geq (1-\epsilon) \frac{r_{\ell}^{2b}}{r_{k}^{2b}} > 1+\epsilon, & \mbox{if } \ell \geq k+1.
\end{cases}
\end{align*}
Thus, by \eqref{def of S2k} and Lemma \ref{lemma: asymp of gamma for lambda bounded away from 1} (i)--(ii), we have
\begin{align*}
S_{2k} & = \sum_{j=j_{k,-}}^{j_{k,+}} \log \bigg( \sum_{\ell=1}^{k-1} (-1)^{\ell+1} \bigO(e^{-\frac{a_{j}\eta_{j,\ell}^{2}}{2}}) +  (-1)^{k+1} \frac{\gamma(\tfrac{j+\alpha}{b},nr_{k}^{2b})}{\Gamma(\tfrac{j+\alpha}{b})} + \sum_{\ell=k+1}^{2g+1} (-1)^{\ell+1}\big(1+\bigO(e^{-\frac{a_{j}\eta_{j,\ell}^{2}}{2}})\big) \bigg) \\
& = \sum_{j=j_{k,-}}^{j_{k,+}} \log \bigg( (-1)^{k+1} \frac{\gamma(\tfrac{j+\alpha}{b},nr_{k}^{2b})}{\Gamma(\tfrac{j+\alpha}{b})} + \sum_{\ell=k+1}^{2g+1} (-1)^{\ell+1} \bigg) + \bigO(e^{-cn}),
\end{align*}
as $n \to + \infty$ for some constant $c>0$, and the claim follows.
\end{proof}
Let $M=n^{\frac{1}{12}}$. We now split the sums on the right-hand sides of \eqref{lol32} and \eqref{lol33} into three parts $S_{2k}^{(1)}$, $S_{2k}^{(2)}$, $S_{2k}^{(3)}$, which are defined as follows
\begin{align}\label{asymp prelim of S2kpvp}
& S_{2k}^{(v)} = \begin{cases}
\displaystyle \sum_{j:\lambda_{j,k}\in I_{v}}  \log \bigg( \frac{\gamma(a_{j},nr_{k}^{2b})}{\Gamma(a_{j})} \bigg), & \mbox{if } k \in \{1,3,\ldots,2g-1\}, \\
\displaystyle  \sum_{j:\lambda_{j,k}\in I_{v}}  \log \bigg( 1-\frac{\gamma(a_{j},nr_{k}^{2b})}{\Gamma(a_{j})} \bigg), & \mbox{if } k \in \{2,4,\ldots,2g\},
\end{cases} \quad v=1,2,3, 
\end{align}
where
\begin{align}\label{def of the three intervals}
I_{1} = [1-\epsilon,1-\tfrac{M}{\sqrt{n}}), \qquad I_{2} = [1-\tfrac{M}{\sqrt{n}},1+\tfrac{M}{\sqrt{n}}], \qquad I_{3} = (1+\tfrac{M}{\sqrt{n}},1+\epsilon].
\end{align}
With this notation, the asymptotics \eqref{lol32} and \eqref{lol33} can be rewritten as
\begin{align}\label{S2k splitting in three parts}
& S_{2k}=S_{2k}^{(1)} + S_{2k}^{(2)} + S_{2k}^{(3)} + \bigO(e^{-cn}), \qquad \mbox{as } n \to + \infty, \qquad k=1,2,\ldots,2g.
\end{align}
Define also 
\begin{align*}
g_{k,-} := \bigg\lceil \frac{bnr_{k}^{2b}}{1+\frac{M}{\sqrt{n}}}-\alpha \bigg\rceil, \qquad g_{k,+} := \bigg\lfloor \frac{bnr_{k}^{2b}}{1-\frac{M}{\sqrt{n}}}-\alpha \bigg\rfloor, \qquad k=1,2,\ldots,2g,
\end{align*}
so that (formally) we can write
\begin{align}\label{sums lambda j}
& \sum_{j:\lambda_{j,k}\in I_{3}} = \sum_{j=j_{k,-}}^{g_{k,-}-1}, \qquad \sum_{j:\lambda_{j,k}\in I_{2}} = \sum_{j= g_{k,-}}^{g_{k,+}}, \qquad \sum_{j:\lambda_{j,k}\in I_{1}} = \sum_{j= g_{k,+}+1}^{j_{k,+}}. 
\end{align}
The individual sums $S_{2k}^{(1)}$, $S_{2k}^{(2)}$ and $S_{2k}^{(3)}$ depend on this new parameter $M$, but their sum $S_{2k}^{(1)} + \smash{S_{2k}^{(2)}} + \smash{S_{2k}^{(3)}}$ does not. Note also that $S_{2k}^{(2)}$ is independent of the other parameter $\epsilon$, while $S_{2k}^{(1)}$ and $S_{2k}^{(3)}$ do depend on $\epsilon$. The analysis of $S_{2k}^{(2)}$ is very different from the one needed for $S_{2k}^{(1)}$ and $S_{2k}^{(3)}$. For $S_{2k}^{(1)}$ and $S_{2k}^{(3)}$, we will approximate several sums of the form $\sum_{j} f(j/n)$ for some functions $f$, while for $S_{2k}^{(2)}$, we will approximate several sums of the form $\sum_{j} h(M_{j,k})$ for some functions $h$, where $M_{j,k}:=\sqrt{n}(\lambda_{j,k}-1)$. As can be seen from \eqref{asymp prelim of S2kpvp} and \eqref{def of the three intervals}, the sum $S_{2k}^{(2)}$ involves the $j$'s for which 
\begin{align}\label{lol55}
\lambda_{j,k} \in I_{2}=[1-\tfrac{M}{\sqrt{n}},1+\tfrac{M}{\sqrt{n}}], \qquad \mbox{i.e. } M_{j,k} \in [-M,M].
\end{align}
Let us briefly comment on our choice of $M$. An essential difficulty in analyzing $S_{2k}^{(1)}$, $S_{2k}^{(2)}$, $S_{2k}^{(3)}$ is that all the functions $f$ and $h$ will blow up near certain points. To analyze $\smash{S_{2k}^{(2)}}$, it would be simpler to define $M$ as being, for example, of order $\log n$, but in this case the sums $\sum_{j} f(j/n)$ involve some $j/n$'s that are too close to the poles of $f$. On the other hand, if $M$ would be of order $\sqrt{n}$, then $\smash{S_{2k}^{(1)}}$ and $S_{2k}^{(3)}$ could be analyzed in essentially the same way as the sums $S_{2k-1}^{(1)}$ and $S_{2k-1}^{(2)}$ of Lemmas \ref{lemma:S2km1 part 1} and \ref{lemma:S2km1 part 2} above (and if $M=\epsilon \sqrt{n}$, then the sums $S_{2k}^{(1)}$ and $S_{2k}^{(3)}$ are even empty sums), but in this case the sums $\sum_{j} h(M_{j,k})$ involve some $M_{j,k}$'s that are too close to the poles of $h$. Thus we are tight up from both sides: $M$ of order $\log n$ is not large enough, and $M$ of order $\sqrt{n}$ is too large. The reason why we choose exactly $M=\smash{n^{\frac{1}{12}}}$ is very technical and will be discussed later.

We also mention that sums of the form $\sum_{j} h(M_{j,k})$ were already approximated in \cite{Charlier 2d jumps}, so we will be able to recycle some results from there. However, even for these sums, our situation presents an important extra difficulty compared with \cite{Charlier 2d jumps}, namely that in \cite{Charlier 2d jumps} the functions $h$ are bounded, while in our case they blow up near either $+ \infty$ or $-\infty$. 

We now introduce some new quantities that will appear in the large $n$ asymptotics of $S_{2k}^{(1)}$, $S_{2k}^{(2)}$ and $S_{2k}^{(3)}$. For $k \in \{1,2,\ldots,g\}$, define
\begin{align*}
& \theta_{k,-}^{(n,M)} := g_{k,-} - \bigg( \frac{bn r_{k}^{2b}}{1+\frac{M}{\sqrt{n}}} - \alpha \bigg) = \bigg\lceil \frac{bn r_{k}^{2b}}{1+\frac{M}{\sqrt{n}}} - \alpha \bigg\rceil - \bigg( \frac{bn r_{k}^{2b}}{1+\frac{M}{\sqrt{n}}} - \alpha \bigg), \\
& \theta_{k,+}^{(n,M)} := \bigg( \frac{bn r_{k}^{2b}}{1-\frac{M}{\sqrt{n}}} - \alpha \bigg) - g_{k,+} = \bigg( \frac{bn r_{k}^{2b}}{1-\frac{M}{\sqrt{n}}} - \alpha \bigg) - \bigg\lfloor \frac{bn r_{k}^{2b}}{1-\frac{M}{\sqrt{n}}} - \alpha \bigg\rfloor.
\end{align*}
Clearly, $\theta_{k,-}^{(n,M)},\theta_{k,+}^{(n,M)} \in [0,1)$. For what follows, it is useful to note that $M_{j,k}$ is decreasing as $j$ increases, and that $\sum_{j=g_{k,-}}^{g_{k,+}}1$ is of order $\frac{M}{\sqrt{n}}$ as $n \to + \infty$.

We start with a general lemma needed for the analysis of $S_{2k}^{(2)}$. 
\begin{lemma}(Adapted from \cite[Lemma 2.7]{Charlier 2d jumps})\label{lemma:Riemann sum}
Let $h \in C^{3}(\mathbb{R})$ and $k \in \{1,\ldots,2g\}$. As $n \to + \infty$, we have
\begin{align}
& \sum_{j=g_{k,-}}^{g_{k,+}}h(M_{j,k}) = br_{k}^{2b} \int_{-M}^{M} h(t) dt \; \sqrt{n} - 2 b r_{k}^{2b} \int_{-M}^{M} th(t) dt + \bigg( \frac{1}{2}-\theta_{k,-}^{(n,M)} \bigg)h(M)+ \bigg( \frac{1}{2}-\theta_{k,+}^{(n,M)} \bigg)h(-M) \nonumber \\
& + \frac{1}{\sqrt{n}}\bigg[ 3br_{k}^{2b} \int_{-M}^{M}t^{2}h(t)dt + \bigg( \frac{1}{12}+\frac{\theta_{k,-}^{(n,M)}(\theta_{k,-}^{(n,M)}-1)}{2} \bigg)\frac{h'(M)}{br_{k}^{2b}} - \bigg( \frac{1}{12}+\frac{\theta_{k,+}^{(n,M)}(\theta_{k,+}^{(n,M)}-1)}{2} \bigg)\frac{h'(-M)}{br_{k}^{2b}} \bigg] \nonumber \\
& + \bigO\Bigg(  \frac{1}{n^{3/2}} \sum_{j=g_{k,-}+1}^{g_{k,+}} \bigg( (1+|M_{j}|^{3}) \tilde{\mathfrak{m}}_{j,n}(h) + (1+M_{j}^{2})\tilde{\mathfrak{m}}_{j,n}(h') + (1+|M_{j}|) \tilde{\mathfrak{m}}_{j,n}(h'') + \tilde{\mathfrak{m}}_{j,n}(h''') \bigg)   \Bigg), \label{sum f asymp 2}
\end{align}
where, for $\tilde{h} \in C(\mathbb{R})$ and $j \in \{g_{k,-}+1,\ldots,g_{k,+}\}$, we define $\tilde{\mathfrak{m}}_{j,n}(\tilde{h}) := \max_{x \in [M_{j,k},M_{j-1,k}]}|\tilde{h}(x)|$.
\end{lemma}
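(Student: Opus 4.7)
The plan is to follow the strategy of the proof of Lemma~2.7 in \cite{Charlier 2d jumps}: apply an Euler--Maclaurin-type expansion to $\sum_{j=g_{k,-}}^{g_{k,+}}H(j)$, where $H(j):=h(M_{j,k})$ is viewed as a smooth function of the integer variable $j$, and then change variables from $j$ to $t=M_{j,k}$ to convert the resulting integral into one against $dt$. Concretely, first I would apply Euler--Maclaurin to sufficiently high order to obtain
\begin{align*}
\sum_{j=g_{k,-}}^{g_{k,+}}H(j)=\int_{g_{k,-}}^{g_{k,+}}H(x)\,dx+\frac{H(g_{k,-})+H(g_{k,+})}{2}+\frac{H'(g_{k,+})-H'(g_{k,-})}{12}+E_{n},
\end{align*}
where $E_{n}$ is controlled by $H'''$ on $[g_{k,-},g_{k,+}]$. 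Each derivative $H^{(p)}(x)$ produces, by the chain rule, a factor of $\sqrt{n}$ together with a polynomial in $M_{x,k}$, which after summation is responsible for the weights $1+|M_{j}|^{p}$ attached to $\tilde{\mathfrak m}_{j,n}(h^{(3-p)})$ in the stated error.

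The next step is the substitution $t=M_{x,k}$, so that $x+\alpha=\frac{bnr_{k}^{2b}}{1+t/\sqrt{n}}$ and $dx=-\frac{br_{k}^{2b}\sqrt{n}}{(1+t/\sqrt{n})^{2}}\,dt$. Since $|t|\leq M=n^{1/12}$ on the range of interest, one has the uniform Taylor expansion
\begin{align*}
\frac{br_{k}^{2b}\sqrt{n}}{(1+t/\sqrt{n})^{2}}=br_{k}^{2b}\sqrt{n}-2br_{k}^{2b}\,t+\frac{3br_{k}^{2b}\,t^{2}}{\sqrt{n}}+\bigO(n^{-1}),
\end{align*}
which, multiplied by $h(t)$ and integrated over (a slight perturbation of) $[-M,M]$, produces the three integral terms $br_{k}^{2b}\sqrt{n}\int_{-M}^{M}h$, $-2br_{k}^{2b}\int_{-M}^{M}th$, and $3br_{k}^{2b}n^{-1/2}\int_{-M}^{M}t^{2}h$ appearing in the statement.

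The third step is to reconcile the endpoints. The images of $g_{k,\pm}$ under $j\mapsto M_{j,k}$ lie slightly inside $[-M,M]$, and a short computation using the definitions of $g_{k,\pm}$ and of $\theta_{k,\pm}^{(n,M)}$ yields $M_{g_{k,-},k}=M-\theta_{k,-}^{(n,M)}/(br_{k}^{2b}\sqrt{n})+\bigO(n^{-1})$ and $M_{g_{k,+},k}=-M+\theta_{k,+}^{(n,M)}/(br_{k}^{2b}\sqrt{n})+\bigO(n^{-1})$. The corresponding short-interval corrections of the integral give, by the mean value theorem, contributions of the form $-\theta_{k,\pm}^{(n,M)}h(\pm M)$ at order $1$ and a further term in $h'(\pm M)$ at order $n^{-1/2}$. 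Combining these with the boundary value $\tfrac12(H(g_{k,-})+H(g_{k,+}))$ from Euler--Maclaurin, Taylor-expanded around $\pm M$, and with the derivative term $\tfrac{1}{12}(H'(g_{k,+})-H'(g_{k,-}))$ (using $H'(x)=-br_{k}^{2b}\sqrt{n}(1+M_{x,k}/\sqrt{n})^{-2}h'(M_{x,k})$), the various boundary contributions coalesce into precisely the coefficients $\tfrac12-\theta_{k,\pm}^{(n,M)}$ in front of $h(\pm M)$ and $\tfrac{1}{12}+\tfrac12\theta_{k,\pm}^{(n,M)}(\theta_{k,\pm}^{(n,M)}-1)$ in front of $h'(\pm M)/(br_{k}^{2b})$, up to the stated remainder.

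The main obstacle is the boundary bookkeeping: the same numbers $\theta_{k,\pm}^{(n,M)}$ enter via the interval correction, via the Taylor expansion of $H(g_{k,\pm})$, and via the Taylor expansion of $H'(g_{k,\pm})$, and one must verify that their combination collapses to exactly the coefficients in \eqref{sum f asymp 2}. A secondary obstacle, which is not present in \cite{Charlier 2d jumps}, is that here $h$ need not be bounded on $\mathbb{R}$; each application of the chain rule to $H^{(p)}(x)$ must therefore be tracked together with a polynomial weight in $M_{x,k}$, which is the source of the factors $1+|M_{j}|^{3}$, $1+M_{j}^{2}$, $1+|M_{j}|$ appearing in the error term. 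Once these weights are carried consistently through Euler--Maclaurin and through the change of variables, the argument from \cite{Charlier 2d jumps} applies verbatim and yields \eqref{sum f asymp 2}.
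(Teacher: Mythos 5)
Your proposal takes a recognizably different route from the paper's proof, though both are Euler--Maclaurin--style Riemann sum corrections. The paper's proof works entirely in the $t$-variable: it writes $\int_{M_{g_{k,+}}}^{M_{g_{k,-}}}h(t)\,dt$ as a sum of integrals over the non-uniform mesh $[M_{j,k},M_{j-1,k}]$, Taylor-expands each piece around the left endpoint, substitutes the expansion of $M_{j-1,k}-M_{j,k}$ in powers of $n^{-1/2}$, rearranges to isolate $\sum_j h(M_{j,k})$, and then handles the resulting correction sums recursively by re-applying the same expansion. You instead apply classical Euler--Maclaurin to $H(j)=h(M_{j,k})$ in the integer variable $j$ (uniform mesh), and then change variables $t=M_{x,k}$ in the resulting integral, expanding the Jacobian. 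Your route uses a standard textbook tool and avoids the paper's recursion, at the cost of pushing more work into chain-rule bookkeeping for $H^{(p)}$ and into expanding the Jacobian; the paper's route works with $h^{(p)}$ directly but requires iterating the expansion. Both should give the identical final formula.

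A few points need correcting, though none is fatal. You state that each derivative $H^{(p)}(x)$ ``produces, by the chain rule, a factor of $\sqrt{n}$''; this has the wrong sign in the exponent. Since $\frac{dM_{x,k}}{dx}\approx -\frac{1}{br_{k}^{2b}\sqrt{n}}$, each application of the chain rule introduces a factor of order $n^{-1/2}$, and the $n^{-3/2}$ prefactor in the error is indeed $(n^{-1/2})^{3}$; as written, your claim would produce the wrong error scaling. Secondly, your endpoint estimate $M_{g_{k,-},k}=M-\theta_{k,-}^{(n,M)}/(br_{k}^{2b}\sqrt{n})+\bigO(n^{-1})$ is not quite right: the next term is $-2M\theta_{k,-}^{(n,M)}/(br_{k}^{2b}n)=\bigO(M/n)=\bigO(n^{-11/12})$, larger than $\bigO(n^{-1})$, though still smaller than the $n^{-1/2}$ terms being tracked, so it does not affect the final formula. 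Finally, the boundary bookkeeping you correctly identify as the crux of the argument is only sketched; carrying it through is where the real work lies.
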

\begin{remark}
Note that $\tilde{\mathfrak{m}}_{j,n}$ depends on $k$, although this is not indicated in the notation.
\end{remark}
\begin{remark}
If $|h|, |h'|, |h''|$ and $|h'''|$ are bounded, then the error term simplifies to $\bigO(M^{4}n^{-1})$, which agrees with \cite[Lemma 2.7]{Charlier 2d jumps}.
\end{remark}
\begin{proof}
This lemma was proved in \cite[Lemma 2.7]{Charlier 2d jumps} in the case where $|h|, |h'|, |h''|$ and $|h'''|$ are bounded. The more general case considered here only requires more careful estimates on the various error terms. For simplicity, we will write $M_{j}$ instead of $M_{j,k}$. By Taylor's theorem, we have
\begin{align}
 & \int_{M_{g_{k,+}}}^{M_{g_{k,-}}}h(t)dt = \sum_{j=g_{k,-}+1}^{g_{k,+}}\int_{M_{j}}^{M_{j-1}}h(t)dt = \sum_{j=g_{k,-}+1}^{g_{k,+}} \bigg\{ h(M_{j})(M_{j-1}-M_{j}) \nonumber \\
&  + h'(M_{j})\frac{(M_{j-1}-M_{j})^{2}}{2} + h''(M_{j})\frac{(M_{j-1}-M_{j})^{3}}{6} \bigg\} + \sum_{j=g_{k,-}+1}^{g_{k,+}}\int_{M_{j}}^{M_{j-1}}\frac{(x-M_{j})^{3}}{6}h'''(\xi_{j,n}(x))dt, \label{riemann sum 1}
\end{align}
for some $\xi_{j,n}(x) \in [M_{j},x]$. For the summand of the last term, we have the bound
\begin{align*}
\bigg| \int_{M_{j}}^{M_{j-1}}\frac{(x-M_{j})^{3}}{6}h'''(\xi_{j,n}(x))dt \bigg| \leq \frac{(M_{j-1}-M_{j})^{4}}{24}\tilde{\mathfrak{m}}_{j,n}(h''').
\end{align*}
Since the following asymptotics
\begin{align}
M_{j-1}-M_{j} = \frac{bn^{3/2}r_{k}^{2b}}{(j+\alpha)(j-1+\alpha)} = \frac{1}{br_{k}^{2b}n^{1/2}} + \frac{2M_{j}}{br_{k}^{2b}n}+\bigg( \frac{1}{b^{2}r_{k}^{4b}}+\frac{M_{j}^{2}}{br_{k}^{2b}} \bigg)\frac{1}{n^{3/2}}+\bigO(M_{j}n^{-2}), \label{diff of Mj}
\end{align}
hold as $n \to +\infty$ uniformly for $j \in \{g_{k,-}+1,\ldots,g_{k,+}\}$, by rearranging the terms in \eqref{riemann sum 1} we obtain
\begin{align}
& \sum_{j=g_{k,-}+1}^{g_{k,+}}h(M_{j}) = br_{k}^{2b} \int_{M_{g_{k,+}}}^{M_{g_{k,-}}} h(t) dt \; \sqrt{n} -\frac{1}{\sqrt{n}} \sum_{j=g_{k,-}+1}^{g_{k,+}}\bigg( 2 M_{j}h(M_{j}) + \frac{1}{2br_{k}^{2b}}h'(M_{j}) \bigg) \nonumber \\
& -\frac{1}{n}\sum_{j=g_{k,-}+1}^{g_{k,+}}\bigg( \frac{1}{br_{k}^{2b}} h(M_{j}) + M_{j}^{2}h(M_{j}) + \frac{2}{br_{k}^{2b}} M_{j}h'(M_{j}) + \frac{1}{6b^{2}r_{k}^{4b}}h''(M_{j}) \bigg) + \widetilde{\mathcal{E}}_{1} \label{sum f asymp}
\end{align}
as $n \to + \infty$, where $\widetilde{\mathcal{E}}_{1}$ satisfies
\begin{align*}
|\widetilde{\mathcal{E}}_{1}| \leq \frac{C}{n^{3/2}} \sum_{j=g_{k,-}+1}^{g_{k,+}} \bigg( |M_{j}| \tilde{\mathfrak{m}}_{j,n}(h) + (1+M_{j}^{2})\tilde{\mathfrak{m}}_{j,n}(h') + |M_{j}| \tilde{\mathfrak{m}}_{j,n}(h'') + \tilde{\mathfrak{m}}_{j,n}(h''') \bigg) 
\end{align*}
for all large enough $n$ and for a certain $C>0$. The sums appearing on the right-hand side of \eqref{sum f asymp} can be analyzed similarly. For example, we have
\begin{subequations}\label{lol1}
\begin{align}
& \frac{1}{\sqrt{n}}\sum_{j=g_{k,-}+1}^{g_{k,+}}M_{j}h(M_{j}) = br_{k}^{2b}\int_{M_{g_{k,+}}}^{M_{g_{k,-}}} th(t) dt  \nonumber \\
&  - \frac{1}{n}\sum_{j=g_{k,-}+1}^{g_{k,+}}\bigg( 2M_{j}^{2}h(M_{j}) + \frac{1}{2br_{k}^{2b}} h(M_{j}) + \frac{1}{2br_{k}^{2b}}M_{j}h'(M_{j}) \bigg) + \widetilde{\mathcal{E}}_{2}, \\
& \frac{1}{\sqrt{n}}\sum_{j=g_{k,-}+1}^{g_{k,+}}h'(M_{j}) = br_{k}^{2b} \Big( h(M_{g_{k,-}})-h(M_{g_{k,+}}) \Big)  \nonumber \\
& - \frac{1}{n}\sum_{j=g_{k,-}+1}^{g_{k,+}} \bigg( 2 M_{j}h'(M_{j}) + \frac{1}{2br_{k}^{2b}}h''(M_{j}) \bigg) + \widetilde{\mathcal{E}}_{3}, \\
& \sum_{j=g_{k,-}+1}^{g_{k,+}} M_{j}^{2}h(M_{j}) = br_{k}^{2b} \int_{M_{g_{k,+}}}^{M_{g_{k,-}}} t^{2}h(t) dt \; \sqrt{n} +\widetilde{\mathcal{E}}_{4}, \\
& \sum_{j=g_{k,-}+1}^{g_{k,+}}h''(M_{j}) = br_{k}^{2b} \Big( h'(M_{g_{k,-}})-h'(M_{g_{k,+}}) \Big) \sqrt{n} + \widetilde{\mathcal{E}}_{5},
\end{align}
\end{subequations}
where $|\widetilde{\mathcal{E}}_{2}|$, $|\widetilde{\mathcal{E}}_{3}|$, $|\widetilde{\mathcal{E}}_{4}|$, $|\widetilde{\mathcal{E}}_{5}|$ are all bounded by
\begin{align}\label{a bound}
\frac{C}{n^{3/2}} \sum_{j=g_{k,-}+1}^{g_{k,+}} \bigg( (1+|M_{j}|^{3}) \tilde{\mathfrak{m}}_{j,n}(h) + (1+M_{j}^{2})\tilde{\mathfrak{m}}_{j,n}(h') + (1+|M_{j}|) \tilde{\mathfrak{m}}_{j,n}(h'') + \tilde{\mathfrak{m}}_{j,n}(h''') \bigg)
\end{align}
for some $C>0$ and for all sufficiently large $n$. After substituting \eqref{lol1} in \eqref{sum f asymp} we obtain
\begin{align}
& \sum_{j=g_{k,-}}^{g_{k,+}}h(M_{j})  = br_{k}^{2b}\int_{M_{g_{k,+}}}^{M_{g_{k,-}}} h(t) dt \; \sqrt{n} - 2 b r_{k}^{2b} \int_{M_{g_{k,+}}}^{M_{g_{k,-}}} th(t) dt + \frac{h(M_{g_{k,-}})+h(M_{g_{k,+}})}{2} \nonumber \\
& + \frac{br_{k}^{2b}}{\sqrt{n}}\bigg( 3 \int_{M_{g_{k,+}}}^{M_{g_{k,-}}} t^{2}h(t) dt + \frac{h'(M_{g_{k,-}})-h'(M_{g_{k,+}})}{12b^{2}r_{k}^{4b}} \bigg) + \widetilde{\mathcal{E}}_{6}, \label{lol2}
\end{align}
where $|\widetilde{\mathcal{E}}_{6}|$ is also bounded by \eqref{a bound} for a certain $C>0$ and for all large enough $n$. We then use
\begin{align*}
& M_{g_{k,-}} = M - \frac{\theta_{k,-}^{(n,M)}}{br_{k}^{2b}\sqrt{n}} - \frac{2M \theta_{k,-}^{(n,M)}}{br_{k}^{2b}n} + \bigO(M^{2}n^{-\frac{3}{2}}), & & \mbox{as } n \to + \infty, \\
& M_{g_{k,+}} = -M + \frac{\theta_{k,+}^{(n,M)}}{b r_{k}^{2b}\sqrt{n}} - \frac{2M \theta_{k,+}^{(n,M)}}{br_{k}^{2b}n} + \bigO(M^{2}n^{-\frac{3}{2}}), & & \mbox{as } n \to + \infty,
\end{align*}
to obtain asymptotics for the integrals in \eqref{lol2}. The errors in these expansions are smaller than \eqref{a bound}. The claim then follows after a computation.
\end{proof}

\begin{lemma}\label{lemma:S2ktilde}
Let $k \in \{1,2,\ldots,2g\}$. As $n \to + \infty$, we have
\begin{align}\label{lol39}
& S_{2k}^{(2)} = G_{4,k}^{(M)}\sqrt{n} + G_{6,k}^{(M)} + G_{7,k}^{(M)} \frac{1}{\sqrt{n}} + \bigO(M^{9}n^{-1}),
\end{align}
where
\begin{align}
& G_{4,k}^{(M)} = br_{k}^{2b} \int_{-M}^{M}h_{0,k}(x)dx, \label{def of G4pMp} \\
& G_{6,k}^{(M)} = - 2br_{k}^{2b}\int_{-M}^{M}x h_{0,k}(x)dx + \bigg( \frac{1}{2}-\theta_{k,-}^{(n,M)} \bigg)h_{0,k}(M) + \bigg( \frac{1}{2}-\theta_{k,+}^{(n,M)} \bigg)h_{0,k}(-M) \nonumber \\
&  + br_{k}^{2b}\int_{-M}^{M}h_{1,k}(x)dx, \label{def of G6pMp} \\
& G_{7,k}^{(M)} = 3br_{k}^{2b}\int_{-M}^{M}x^{2}h_{0,k}(x)dx + \bigg( \frac{1}{12}+\frac{\theta_{k,-}^{(n,M)}(\theta_{k,-}^{(n,M)}-1)}{2} \bigg)\frac{h_{0,k}'(M)}{br_{k}^{2b}} \nonumber \\
& - \bigg( \frac{1}{12}+\frac{\theta_{k,+}^{(n,M)}(\theta_{k,+}^{(n,M)}-1)}{2} \bigg)\frac{h_{0,k}'(-M)}{br_{k}^{2b}} -2b r_{k}^{2b}\int_{-M}^{M}xh_{1,k}(x)dx \nonumber \\
& + \bigg( \frac{1}{2}-\theta_{k,-}^{(n,M)} \bigg)h_{1,k}(M) + \bigg( \frac{1}{2}-\theta_{k,+}^{(n,M)} \bigg)h_{1,k}(-M) + br_{k}^{2b}\int_{-M}^{M}h_{2,k}(x)dx, \label{def of G7pMp}
\end{align}
and
\begin{align*}
& h_{0,k}(x) = \begin{cases}
\log \big( \frac{1}{2}\mathrm{erfc} \big( -\frac{xr_{k}^{b}}{\sqrt{2}} \big) \big), & \mbox{if } k \in \{1,3,5,\ldots,2g-1\}, \\
\log \big( 1- \frac{1}{2}\mathrm{erfc} \big( -\frac{xr_{k}^{b}}{\sqrt{2}} \big) \big), & \mbox{if } k \in \{2,4,6,\ldots,2g\},
\end{cases} \\
& h_{1,k}(x) = \begin{cases}
\frac{e^{-\frac{x^{2}r_{k}^{2b}}{2}}}{\sqrt{2\pi}\frac{1}{2}\mathrm{erfc}\big(-\frac{x r_{k}^{b} }{\sqrt{2}}\big)}\Big( \frac{1}{3r_{k}^{b}}-\frac{5x^{2}r_{k}^{b}}{6} \Big), & \mbox{if } k \in \{1,3,5,\ldots,2g-1\}, \\[0.5cm]
\frac{-e^{-\frac{x^{2}r_{k}^{2b}}{2}}}{\sqrt{2\pi}\big(1-\frac{1}{2}\mathrm{erfc}\big(-\frac{x r_{k}^{b} }{\sqrt{2}}\big)\big)}\Big( \frac{1}{3r_{k}^{b}}-\frac{5x^{2}r_{k}^{b}}{6} \Big), & \mbox{if } k \in \{2,4,6,\ldots,2g\},
\end{cases} \\
& h_{2,k}(x) = \begin{cases}
\frac{e^{-\frac{x^{2}r_{k}^{2b}}{2}}}{\sqrt{2\pi}\mathrm{erfc}\big(-\frac{x r_{k}^{b} }{\sqrt{2}}\big)} \Big( \frac{-25}{36}x^{5}r_{k}^{3b} + \frac{73}{36}x^{3}r_{k}^{b}+\frac{x}{6r_{k}^{b}}\Big) - \frac{e^{- x^{2}r_{k}^{2b}}}{\pi \, \mathrm{erfc}^{2}\big(-\frac{x r_{k}^{b} }{\sqrt{2}}\big)} \Big( \frac{1}{3r_{k}^{b}}-\frac{5x^{2}r_{k}^{b}}{6} \Big)^{2}, \\
\frac{-e^{-\frac{x^{2}r_{k}^{2b}}{2}}}{2\sqrt{2\pi}\big(1-\frac{1}{2}\mathrm{erfc}\big(-\frac{x r_{k}^{b} }{\sqrt{2}}\big)\big)} \Big( \frac{-25}{36}x^{5}r_{k}^{3b} + \frac{73}{36}x^{3}r_{k}^{b}+\frac{x}{6r_{k}^{b}}\Big) - \frac{e^{- x^{2}r_{k}^{2b}}}{4\pi \big(1-\frac{1}{2}\mathrm{erfc}\big(-\frac{x r_{k}^{b} }{\sqrt{2}}\big)\big)^{2}} \Big( \frac{1}{3r_{k}^{b}}-\frac{5x^{2}r_{k}^{b}}{6} \Big)^{2}.
\end{cases}
\end{align*}
In the above equation for $h_{2,k}$, the first line reads for $k \in \{1,3,5,\ldots,2g-1\}$ and the second line reads for $k \in \{2,4,6,\ldots,2g\}$.
\end{lemma}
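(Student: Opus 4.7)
The plan is to insert Temme's uniform expansion (Lemma \ref{lemma: uniform}) into each term of the sum defining $S_{2k}^{(2)}$, expand in powers of $1/\sqrt{n}$ with the scaled variable $M_{j,k}=\sqrt{n}(\lambda_{j,k}-1)\in[-M,M]$ held fixed, and then apply the Riemann sum approximation of Lemma \ref{lemma:Riemann sum} to each coefficient function.

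First, for $j\in\{g_{k,-},\ldots,g_{k,+}\}$, write $\lambda=\lambda_{j,k}=1+\epsilon_{j}$ with $\epsilon_{j}=M_{j,k}/\sqrt{n}$ and $a=a_{j}=nr_{k}^{2b}/\lambda$. Using $\eta^{2}=2(\lambda-1-\log\lambda)$ and $a\eta^{2}/2=nr_{k}^{2b}(\lambda-1-\log\lambda)/\lambda$, a direct Taylor expansion of the elementary functions $\epsilon\mapsto\lambda-1-\log\lambda$ and $\epsilon\mapsto 1/\lambda$ yields
\begin{align*}
\eta\sqrt{a/2} &= \tfrac{M_{j,k}r_{k}^{b}}{\sqrt{2}} - \tfrac{5M_{j,k}^{2}r_{k}^{b}}{6\sqrt{2}\,\sqrt{n}} + \bigO\!\left(\tfrac{1+M_{j,k}^{4}}{n}\right), \\
e^{-a\eta^{2}/2} &= e^{-M_{j,k}^{2}r_{k}^{2b}/2}\Bigl(1+\bigO\!\left(\tfrac{1+|M_{j,k}|^{3}}{\sqrt{n}}\right)\Bigr), \qquad \sqrt{2\pi a}=r_{k}^{b}\sqrt{2\pi n}\Bigl(1+\bigO\!\left(\tfrac{|M_{j,k}|}{\sqrt{n}}\right)\Bigr),
\end{align*}
and, using \eqref{def of c0 and c1}, $c_{0}(\eta)=-\tfrac{1}{3}+\tfrac{M_{j,k}}{12\sqrt{n}}+\bigO(M_{j,k}^{2}/n)$, $c_{1}(\eta)=\bigO(1)$. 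Substituting into \eqref{Temme exact formula}–\eqref{asymp of Ra} (keeping enough terms so that the remainder is $\bigO(n^{-3/2})$ times a polynomial in $M_{j,k}$) and taking the first two terms of \eqref{asymp of Ra}, one obtains
\begin{align*}
\frac{\gamma(a_{j},nr_{k}^{2b})}{\Gamma(a_{j})} = \tfrac{1}{2}\mathrm{erfc}\!\left(-\tfrac{M_{j,k}r_{k}^{b}}{\sqrt{2}}\right) + \frac{e^{-M_{j,k}^{2}r_{k}^{2b}/2}}{\sqrt{2\pi n}}\Bigl(\tfrac{1}{3r_{k}^{b}}-\tfrac{5M_{j,k}^{2}r_{k}^{b}}{6}\Bigr) + \frac{P(M_{j,k})}{n} + \bigO\!\left(\tfrac{\mathrm{poly}(M_{j,k})}{n^{3/2}}\right),
\end{align*}
for an explicit function $P$ that, after taking the logarithm in the next step, will combine with the square of the $1/\sqrt{n}$ correction to produce $h_{2,k}$.

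Second, take $\log$ for $k$ odd (resp.\ $\log(1-\cdot)$ for $k$ even). Writing $\gamma/\Gamma = E+R_{1}/\sqrt{n}+R_{2}/n+\cdots$ with $E=\tfrac{1}{2}\mathrm{erfc}(-M_{j,k}r_{k}^{b}/\sqrt{2})$, we use $\log(E+u) = \log E+u/E-u^{2}/(2E^{2})+\cdots$ (and its analogue $\log(1-E-u)$ in the even case). The leading term yields $h_{0,k}(M_{j,k})$; the $1/\sqrt{n}$ term yields exactly $h_{1,k}(M_{j,k})$ as stated (the sign flip in the even case comes from $\log(1-\cdot)$); and at order $1/n$ the piece $R_{2}/E$ from $c_{1}(\eta)$ and higher-order expansions of $\eta\sqrt{a/2}$ combines with $-R_{1}^{2}/(2E^{2})$ to give $h_{2,k}(M_{j,k})$. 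In this step, the fact that $|M_{j,k}|\le M=n^{1/12}$ keeps the ratio of successive correction terms small; making the error bookkeeping fully uniform is the first place where the exponent in $M^{9}/n$ arises.

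Third, apply Lemma \ref{lemma:Riemann sum} to the three sums $\sum_{j=g_{k,-}}^{g_{k,+}} h_{i,k}(M_{j,k})$, $i=0,1,2$. By the explicit endpoint values $M_{g_{k,-}}=M+\bigO(1/\sqrt{n})$ and $M_{g_{k,+}}=-M+\bigO(1/\sqrt{n})$ together with the definition of $\theta_{k,\pm}^{(n,M)}$, the boundary terms of \eqref{sum f asymp 2} produce precisely the quantities in \eqref{def of G4pMp}–\eqref{def of G7pMp}. The $h_{0,k}$-sum contributes at orders $\sqrt{n}$, $1$, and $1/\sqrt{n}$; the $h_{1,k}$-sum (already carrying a $1/\sqrt{n}$ prefactor) contributes at orders $1$ and $1/\sqrt{n}$; and the $h_{2,k}$-sum (with prefactor $1/n$) contributes at order $1/\sqrt{n}$. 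Summing these gives the claimed asymptotics with the coefficients $G_{4,k}^{(M)}, G_{6,k}^{(M)}, G_{7,k}^{(M)}$.

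The main obstacle is the error control. Unlike in \cite{Charlier 2d jumps}, the functions $h_{0,k},h_{1,k},h_{2,k}$ and their derivatives are not bounded: using \eqref{large y asymp of erfc} one sees that, as $|x|\to\infty$ on the "bad" side, $h_{0,k}(x)\sim -x^{2}r_{k}^{2b}/2$, while $h_{1,k},h_{2,k}$ and successive derivatives grow like explicit polynomials in $x$ (coming from the $e^{-x^{2}r_{k}^{2b}/2}/\mathrm{erfc}(\mp xr_{k}^{b}/\sqrt{2})$ factor). One therefore has to bound $\tilde{\mathfrak{m}}_{j,n}(h_{i,k}^{(\ell)})$ by fixed polynomials in $|M_{j,k}|$ uniformly in $j\in\{g_{k,-}+1,\ldots,g_{k,+}\}$, show that the Taylor remainders inherited from Step 1 and Step 2 are bounded by polynomials in $|M_{j,k}|$ times $n^{-3/2}$, and then check that summing over the $\bigO(M\sqrt{n})$ indices and absorbing the worst polynomial factor (a power $|M_{j,k}|^{8}$ coming from $h_{2,k}$ together with the $|M_{j,k}|^{3}\tilde{\mathfrak{m}}_{j,n}(h)$ term in \eqref{sum f asymp 2}) yields the overall error $\bigO(M^{9}/n)$. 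With $M=n^{1/12}$, this is compatible with the target $\bigO(n^{-1/12})$ in \eqref{asymp in main thm}.
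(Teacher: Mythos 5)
Your proposal is correct and follows essentially the same route as the paper's proof: substitute Temme's uniform expansion into each summand, Taylor-expand all ingredients in powers of $1/\sqrt{n}$ at fixed scaled variable $M_{j,k}$ to extract $h_{0,k}, h_{1,k}, h_{2,k}$, and then apply the Riemann sum approximation of Lemma~\ref{lemma:Riemann sum} to the three resulting sums, controlling the error via polynomial growth bounds on the $h_{i,k}$ and their derivatives. The only cosmetic difference is that you expand the ingredients of $R_{a_j}(\eta_{j,k})$ (namely $c_0$, $c_1$, $e^{-a\eta^2/2}/\sqrt{2\pi a}$) individually, whereas the paper Taylor-expands $R_{a_j}$ as a function of $\eta$ and then invokes analyticity and Cauchy estimates to bound the derivative remainder; both are valid and your accounting of the $\bigO(M^9 n^{-1})$ error, while slightly conflating the Taylor-remainder source with the Riemann-sum source, reaches the correct bound.
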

\begin{remark}
Note that the error term $\bigO(M^{9}n^{-1})$ above is indeed small as $n \to + \infty$, because $M=n^{\frac{1}{12}}$.
\end{remark}
\begin{proof}
We only do the proof for $k$ odd (the case of $k$ even is similar). For convenience, in this proof we will use $\lambda_{j}$, $\eta_{j}$ and $M_{j}$ in place of $\lambda_{j,k}$, $\eta_{j,k}$ and $M_{j,k}$. From \eqref{Temme exact formula}, \eqref{asymp prelim of S2kpvp} and \eqref{sums lambda j}, we see that
\begin{align}\label{lol38}
S_{2k}^{(2)} = \sum_{j:\lambda_{j,k}\in I_{2}}  \log \bigg( \frac{\gamma(a_{j},nr_{k}^{2b})}{\Gamma(\frac{j+\alpha}{b})} \bigg) = \sum_{j=g_{k,-}}^{g_{k,+}} \log \bigg( \frac{1}{2}\mathrm{erfc}\Big(-\eta_{j} \sqrt{a_{j}/2}\Big) - R_{a_{j}}(\eta_{j}) \bigg).
\end{align}
Recall from \eqref{lol55} that for all $j \in \{j:\lambda_{j}\in I_{2}\}$, we have 
\begin{align*}
1-\frac{M}{\sqrt{n}} \leq \lambda_{j} = \frac{bnr_{k}^{2b}}{j+\alpha} \leq 1+\frac{M}{\sqrt{n}}, \qquad -M \leq M_{j} \leq M.
\end{align*}
Hence, using \eqref{def of aj lambdajl etajl} we obtain 
\begin{align}
& \eta_{j} = (\lambda_{j}-1)\bigg( 1 - \frac{\lambda_{j}-1}{3} + \frac{7}{36}(\lambda_{j}-1)^{2} +\bigO((\lambda_{j}-1)^{3})\bigg) = \frac{M_{j}}{\sqrt{n}} - \frac{M_{j}^{2}}{3n} + \frac{7M_{j}^{3}}{36n^{3/2}} + \bigO\bigg(\frac{M^{4}}{n^{2}}\bigg), \nonumber \\
& -\eta_{j} \sqrt{a_{j}/2} = - \frac{M_{j} r_{k}^{b}}{\sqrt{2}} + \frac{5M_{j}^{2} r_{k}^{b}}{6\sqrt{2}\sqrt{n}} - \frac{53 M_{j}^{3} r_{k}^{b}}{72\sqrt{2}n} +\bigO(M^{4}n^{-\frac{3}{2}}), \label{asymp etaj and etajsqrtajover2} 
\end{align}
as $n \to + \infty$ uniformly for $j \in \{j:\lambda_{j}\in I_{2}\}$. By Taylor's theorem, for each $j \in \{j:\lambda_{j}\in I_{2}\}$ we have
\begin{align}
& \frac{1}{2}\mathrm{erfc}\Big(-\eta_{j} \sqrt{a_{j}/2}\Big) = \frac{1}{2}\mathrm{erfc}\Big(- \frac{M_{j} r_{k}^{b}}{\sqrt{2}}\Big)  + \frac{1}{2}\mathrm{erfc}'\Big(- \frac{M_{j} r_{k}^{b}}{\sqrt{2}}\Big) \bigg( -\eta_{j} \sqrt{a_{j}/2}+\frac{M_{j} r_{k}^{b}}{\sqrt{2}} \bigg) \nonumber \\
& + \frac{1}{4}\mathrm{erfc}''\Big(- \frac{M_{j} r_{k}^{b}}{\sqrt{2}}\Big) \bigg( -\eta_{j} \sqrt{a_{j}/2}+\frac{M_{j} r_{k}^{b}}{\sqrt{2}} \bigg)^{2} + \frac{1}{12}\mathrm{erfc}'''\big(\xi_{j}\big) \bigg( -\eta_{j} \sqrt{a_{j}/2}+\frac{M_{j} r_{k}^{b}}{\sqrt{2}} \bigg)^{3}, \label{lol34}
\end{align}
for a certain $\xi_{j} \in [- \frac{M_{j} r_{k}^{b}}{\sqrt{2}},-\eta_{j} \sqrt{a_{j}/2}]$. Using \eqref{def of erfc}, $\mathrm{erfc}'''(x) = \frac{4}{\sqrt{\pi}}(1-2x^{2})e^{-x^{2}}$ and \eqref{asymp etaj and etajsqrtajover2}, we infer that there exists a constant $C>0$ such that
\begin{align}
\Bigg| \frac{\frac{1}{12}\mathrm{erfc}'''\big(\xi_{j}\big) \big( -\eta_{j} \sqrt{a_{j}/2}+\frac{M_{j} r_{k}^{b}}{\sqrt{2}} \big)^{3}}{\frac{1}{2}\mathrm{erfc}\big(- \frac{M_{j} r_{k}^{b}}{\sqrt{2}}\big)} \Bigg| \leq C (1+M_{j}^{8})n^{-\frac{3}{2}} \label{lol35}
\end{align}
holds for all sufficiently large $n$ and all $j \in \{j:\lambda_{j}\in I_{2}\}$. Similarly, by Taylor's theorem, for each $j \in \{j:\lambda_{j}\in I_{2}\}$ we have
\begin{align}
R_{a_{j}}(\eta_{j}) = R_{a_{j}}(\tfrac{M_{j}}{\sqrt{n}}) + R_{a_{j}}'(\tfrac{M_{j}}{\sqrt{n}}) (\eta_{j}-\tfrac{M_{j}}{\sqrt{n}})+ \tfrac{1}{2}R_{a_{j}}''(\tilde{\xi}_{j}) (\eta_{j}-\tfrac{M_{j}}{\sqrt{n}})^{2}, \label{lol36}
\end{align}
for some $\tilde{\xi}_{j} \in [\eta_{j},\frac{M_{j}}{\sqrt{n}}]$. Furthermore, $R_{a}(\eta)$ is analytic with respect to $\lambda$ (see \cite[p. 285]{Temme}), in particular near $\lambda = 1$ (or $\eta=0$), and the expansion \eqref{asymp of Ra} holds in fact uniformly for $|\arg z | \leq 2\pi -\epsilon'$ for any $\epsilon'>0$ (see e.g. \cite[p. 325]{Paris}). It then follows from Cauchy's formula that \eqref{asymp of Ra} can be differentiated with respect to $\eta$ without increasing the error term. Thus, differentiating twice \eqref{asymp of Ra} we conclude that there exists $C>0$ such that
\begin{align}\label{lol37}
\Bigg|\frac{\frac{1}{2} R_{a_{j}}''(\tilde{\xi}_{j}) (\eta_{j}-\tfrac{M_{j}}{\sqrt{n}})^{2}}{\frac{1}{2}\mathrm{erfc}\big(- \frac{M_{j} r_{k}^{b}}{\sqrt{2}}\big)}\Bigg| \leq C (1+M_{j}^{6})n^{-\frac{3}{2}}
\end{align}
holds for all sufficiently large $n$ and all $j \in \{j:\lambda_{j}\in I_{2}\}$. Combining \eqref{lol38}, \eqref{asymp etaj and etajsqrtajover2}, \eqref{lol34}, \eqref{lol35}, \eqref{lol36} and \eqref{lol37} with \eqref{asymp of Ra} and \eqref{def of c0 and c1}, we obtain after a computation that
\begin{align}
S_{2k}^{(2)} & = \sum_{j=j_{k,-}}^{j_{k,+}} \log \bigg\{ \frac{1}{2}\mathrm{erfc}\Big(-\frac{M_{j} r_{k}^{b}}{\sqrt{2}}\Big)+\frac{1}{2}\mathrm{erfc}'\Big(-\frac{M_{j} r_{k}^{b}}{\sqrt{2}}\Big)\frac{5M_{j}^{2}r_{k}}{6\sqrt{2}\sqrt{n}} \nonumber \\
& + \frac{M_{j}^{3}}{288n}\bigg( 25 M_{j} r_{k}^{2b} \mathrm{erfc}''\Big( -\frac{M_{j} r_{k}^{b}}{\sqrt{2}} \Big) -53 \sqrt{2} r_{k}^{b}\mathrm{erfc}'\Big( -\frac{M_{j} r_{k}^{b}}{\sqrt{2}} \Big) \bigg) \nonumber \\
& +\frac{e^{-\frac{M_{j}^{2} r_{k}^{2b}}{2}}}{\sqrt{2\pi}}\bigg( \frac{1}{3 r_{k}^{b}\sqrt{n}} - \bigg[\frac{M_{j}}{12 r_{k}^{b}} + \frac{5M_{j}^{3} r_{k}^{b}}{18} \bigg]\frac{1}{n} \bigg) \bigg\} + \sum_{j=j_{k,-}}^{j_{k,+}}\bigO(M_{j}^{8}n^{-3/2}) \nonumber \\
& = \sum_{j=j_{k,-}}^{j_{k,+}} h_{0,k}(M_{j})+\frac{1}{\sqrt{n}} \sum_{j=j_{k,-}}^{j_{k,+}} h_{1,k}(M_{j})+\frac{1}{n} \sum_{j=j_{k,-}}^{j_{k,+}} h_{2,k}(M_{j}) + \bigO(M^{9}n^{-1}), \label{asymp of S2kp2p in proof}
\end{align}
as $n \to + \infty$. Each of these three sums can be expanded using Lemma \ref{lemma:Riemann sum}. The errors in these expansions can be estimated as follows. First, note that the function $e^{-\frac{x^{2}r_{k}^{2b}}{2}} \mathrm{erfc}\big(\frac{-x r_{k}^{b} }{\sqrt{2}}\big)^{-1}$ is exponentially small as $x \to + \infty$, and is bounded by a polynomial of degree $1$ as $x \to - \infty$. Hence, the functions $h_{0,k}(x)$, $h_{1,k}(x)$ and $h_{2,k}(x)$ also tend to $0$ exponentially fast as $x \to + \infty$, while as $x \to -\infty$ they are bounded by polynomials of degree $2$, $3$ and $6$, respectively. The derivatives of $h_{0,k}(x)$, $h_{1,k}(x)$ and $h_{2,k}(x)$ can be estimated similarly. Using Lemma \ref{lemma:Riemann sum}, we then find that the fourth term in the large $n$ asymptotics of $\sum_{j=j_{k,-}}^{j_{k,+}} h_{0,k}(M_{j})$ is 
\begin{align*}
& \bigO\bigg( \frac{1}{n^{3/2}} \sum_{j=g_{k,-}+1}^{g_{k,+}} \bigg( (1+|M_{j}|^{3}) \tilde{\mathfrak{m}}_{j,n}(h_{0,k}) + (1+M_{j}^{2})\tilde{\mathfrak{m}}_{j,n}(h_{0,k}') + (1+|M_{j}|) \tilde{\mathfrak{m}}_{j,n}(h_{0,k}'') + \tilde{\mathfrak{m}}_{j,n}(h_{0,k}''') \bigg) \\
& = \bigO\bigg(\frac{M^{6}}{n} \bigg), \qquad \mbox{as } n \to + \infty.
\end{align*}
Similarly, the third term in the asymptotics of $\frac{1}{\sqrt{n}} \sum_{j=j_{k,-}}^{j_{k,+}} h_{1,k}(M_{j})$ is $\bigO(\frac{M^{6}}{n})$, and the second term in the asymptotics of $\frac{1}{n} \sum_{j=j_{k,-}}^{j_{k,+}} h_{2,k}(M_{j})$ is $\bigO(\frac{M^{8}}{n})$. All these errors are, in particular, $\bigO(M^{9}n^{-1})$. Hence, by substituting the asymptotics of these three sums in \eqref{asymp of S2kp2p in proof}, we find the claim.
\end{proof}
The quantities $G_{4,k}^{(M)}$, $G_{6,k}^{(M)}$ and $G_{7,k}^{(M)}$ appearing in \eqref{lol39} depend quite complicatedly on $M$. The goal of the following lemma is to find more explicit asymptotics for $S_{2k}^{(2)}$. We can do that at the cost of introducing a new type of error terms. Indeed, the error $\bigO(M^{9}n^{-1})$ of \eqref{lol39} is an error that only restrict $M$ to be ``not too large". In Lemma \ref{lemma:S2ktilde large M} below, there is another kind of error term that restrict $M$ to be ``not too small". 
\begin{lemma}\label{lemma:S2ktilde large M}
Let $k \in \{1,3,\ldots,2g-1\}$. As $n \to + \infty$, we have
\begin{align*}
& S_{2k}^{(2)} = \widetilde{G}_{4,k}^{(M)} \sqrt{n} + \widetilde{G}_{6,k}^{(M)}  + \bigO\bigg(\frac{M^{5}}{\sqrt{n}}\bigg)+\bigO\bigg(\frac{\sqrt{n}}{M^{7}}\bigg),
\end{align*}
where
\begin{align}
& \widetilde{G}_{4,k}^{(M)} = -\frac{br_{k}^{4b}}{6}M^{3} - br_{k}^{2b} M \log M + br_{k}^{2b}\big( 1-\log(r_{k}^{b}\sqrt{2\pi}) \big)M \nonumber \\
& + br_{k}^{2b}\int_{-\infty}^{-1}\bigg( h_{0,k}(x)-\bigg[ -\frac{r_{k}^{2b}}{2}x^{2}-\log(-x)-\log(r_{k}^{b}\sqrt{2\pi}) \bigg] \bigg)dx + br_{k}^{2b} \int_{-1}^{+\infty}h_{0,k}(x)dx \nonumber \\
& + br_{k}^{2b}\Big( \log(r_{k}^{b}\sqrt{2\pi})-1+\frac{r_{k}^{2b}}{6} \Big) + \frac{b}{M} - \frac{5b}{6r_{k}^{2b}M^{3}} + \frac{37b}{15r_{k}^{4b}M^{5}}, \label{def of G4tilde for k odd} \\
& \widetilde{G}_{6,k}^{(M)} = - \frac{11}{24}br_{k}^{4b}M^{4}  - br_{k}^{2b} M^{2}\log M + r_{k}^{2b} \bigg( \frac{b}{4} - b \log (r_{k}^{b}\sqrt{2\pi}) + \frac{2\theta_{k,+}^{(n,M)}-1}{4} \bigg)M^{2} \nonumber \\
& + \frac{2\theta_{k,+}^{(n,M)}-1}{2} \log M - \frac{br_{k}^{2b}}{4}  -2br_{k}^{2b}\int_{-1}^{+\infty} xh_{0,k}(x)dx \nonumber \\
& -2br_{k}^{2b}\int_{-\infty}^{-1} \bigg\{ xh_{0,k}(x) - \bigg( -\frac{r_{k}^{2b}}{2}x^{3} - x \log(-x) - \log(r_{k}^{b}\sqrt{2\pi})x - \frac{1}{r_{k}^{2b}x} \bigg) \bigg\} dx \nonumber \\
& + br_{k}^{2b} \int_{-\infty}^{-1} \bigg\{ h_{1,k}(x)-\bigg( \frac{5}{6}r_{k}^{2b}x^{3}+\frac{x}{2}- \frac{2}{r_{k}^{2b}x} \bigg) \bigg\}dx + br_{k}^{2b} \int_{-1}^{+\infty} h_{1,k}(x)dx \nonumber \\
& + \bigg(br_{k}^{2b} + \frac{2\theta_{k,+}^{(n,M)}-1}{2} \bigg) \log(r_{k}^{b}\sqrt{2\pi}) + \frac{11}{24}br_{k}^{4b}. \label{def of G6tilde for k odd}
\end{align}
Let $k \in \{2,4,\ldots,2g\}$. As $n \to + \infty$, we have
\begin{align*}
& S_{2k}^{(2)} = \widetilde{G}_{4,k}^{(M)} \sqrt{n} + \widetilde{G}_{6,k}^{(M)} + \bigO\bigg(\frac{M^{5}}{\sqrt{n}}\bigg)+\bigO\bigg(\frac{\sqrt{n}}{M^{7}}\bigg),
\end{align*}
where
\begin{align*}
& \widetilde{G}_{4,k}^{(M)} = -\frac{br_{k}^{4b}}{6}M^{3} - br_{k}^{2b} M \log M + br_{k}^{2b}\big( 1-\log(r_{k}^{b}\sqrt{2\pi}) \big)M \\
& + br_{k}^{2b}\int_{-\infty}^{1} h_{0,k}(x) dx + br_{k}^{2b} \int_{1}^{+\infty} \bigg( h_{0,k}(x) -\bigg[ -\frac{r_{k}^{2b}}{2}x^{2}-\log(x)-\log(r_{k}^{b}\sqrt{2\pi}) \bigg]\bigg) dx \\
& + br_{k}^{2b}\Big( \log(r_{k}^{b}\sqrt{2\pi})-1+\frac{r_{k}^{2b}}{6} \Big) + \frac{b}{M} - \frac{5b}{6r_{k}^{2b}M^{3}} + \frac{37b}{15r_{k}^{4b}M^{5}}, \\
& \widetilde{G}_{6,k}^{(M)} = \frac{11}{24}br_{k}^{4b}M^{4} + br_{k}^{2b} M^{2}\log M + r_{k}^{2b} \bigg( -\frac{b}{4} + b \log (r_{k}^{b}\sqrt{2\pi}) + \frac{2\theta_{k,+}^{(n,M)}-1}{4} \bigg)M^{2} \\
& + \frac{2\theta_{k,+}^{(n,M)}-1}{2} \log M + \frac{br_{k}^{2b}}{4} -2br_{k}^{2b}\int_{-\infty}^{1}  xh_{0,k}(x)  dx \\
& -2br_{k}^{2b}\int_{1}^{+\infty} \bigg\{ xh_{0,k}(x) - \bigg( -\frac{r_{k}^{2b}}{2}x^{3} - x \log x - \log(r_{k}^{b}\sqrt{2\pi})x - \frac{1}{r_{k}^{2b}x} \bigg) \bigg\}dx \\
& + br_{k}^{2b} \int_{-\infty}^{1} h_{1,k}(x) dx + br_{k}^{2b} \int_{1}^{+\infty} \bigg\{ h_{1,k}(x) - \bigg( \frac{5}{6}r_{k}^{2b}x^{3}+\frac{x}{2}- \frac{2}{r_{k}^{2b}x} \bigg) \bigg\}dx \\
& + \bigg(-br_{k}^{2b} + \frac{2\theta_{k,+}^{(n,M)}-1}{2} \bigg) \log(r_{k}^{b}\sqrt{2\pi}) - \frac{11}{24}br_{k}^{4b}.
\end{align*}
\end{lemma}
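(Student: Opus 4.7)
The plan is to start from the expression \eqref{lol39} of Lemma~\ref{lemma:S2ktilde} and extract the $M$-dependence of $G_{4,k}^{(M)},G_{6,k}^{(M)},G_{7,k}^{(M)}$ as $M\to\infty$. I treat the odd case $k\in\{1,3,\dots,2g-1\}$ in detail; the even case is symmetric under $x\mapsto -x$ (swapping the roles of the $+\infty$ and $-\infty$ tails). The essential input is the large-$|x|$ asymptotic expansion of $h_{0,k},h_{1,k},h_{2,k}$ derived from \eqref{large y asymp of erfc}. Using $\mathrm{erfc}(y)=\tfrac{e^{-y^{2}}}{y\sqrt{\pi}}\bigl(1-\tfrac{1}{2y^{2}}+\tfrac{3}{4y^{4}}-\tfrac{15}{8y^{6}}+\bigO(y^{-8})\bigr)$ with $y=-xr_{k}^{b}/\sqrt{2}$, one finds that as $x\to+\infty$ all three functions and their derivatives decay super-exponentially, while as $x\to-\infty$
\begin{align*}
h_{0,k}(x) &= -\tfrac{r_{k}^{2b}}{2}x^{2}-\log(-x)-\log(r_{k}^{b}\sqrt{2\pi}) + \tfrac{1}{r_{k}^{2b}x^{2}} + \bigO(x^{-4}),\\
h_{1,k}(x) &= \tfrac{5}{6}r_{k}^{2b}x^{3}+\tfrac{x}{2}-\tfrac{2}{r_{k}^{2b}x}+\bigO(x^{-3}),
\end{align*}
with further explicit inverse-$x$ coefficients available to any order. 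These are precisely the polynomials subtracted inside $\widetilde{G}_{4,k}^{(M)}$ and $\widetilde{G}_{6,k}^{(M)}$.

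For each of the integrals $\int_{-M}^{M}h_{0,k}$, $\int_{-M}^{M}xh_{0,k}$, $\int_{-M}^{M}h_{1,k}$ appearing in $G_{4,k}^{(M)},G_{6,k}^{(M)}$, I split $\int_{-M}^{M}=\int_{-M}^{-1}+\int_{-1}^{M}$. The part $\int_{-1}^{M}$ equals $\int_{-1}^{+\infty}$ up to super-exponentially small error. On $[-M,-1]$ I write $h_{j,k}=P_{j}+(h_{j,k}-P_{j})$, where $P_{j}$ is the polynomial asymptotic above, so that $(h_{j,k}-P_{j})$ is integrable on $(-\infty,-1]$. The polynomial contribution $\int_{-M}^{-1}P_{j}(x)\,dx$ is elementary: it produces exactly the explicit $M^{3},M\log M,M$ terms in $\widetilde{G}_{4,k}^{(M)}$ and the $M^{4},M^{2}\log M,M^{2},\log M$ terms in $\widetilde{G}_{6,k}^{(M)}$. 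The regular part gives $\int_{-\infty}^{-1}(h_{j,k}-P_{j})\,dx$, up to the tail $\int_{-\infty}^{-M}(h_{j,k}-P_{j})\,dx$, which I expand using the next few inverse-$M$ terms. For $G_{4,k}^{(M)}$ this tail expansion produces precisely $\tfrac{b}{M}-\tfrac{5b}{6r_{k}^{2b}M^{3}}+\tfrac{37b}{15r_{k}^{4b}M^{5}}+\bigO(M^{-7})$, whose error, multiplied by the prefactor $br_{k}^{2b}\sqrt{n}$, yields the announced $\bigO(\sqrt{n}/M^{7})$. The analogous boundary terms $(\tfrac{1}{2}-\theta_{k,\pm}^{(n,M)})h_{0,k}(\pm M)$, $(\tfrac{1}{2}-\theta_{k,\pm}^{(n,M)})h_{1,k}(\pm M)$, $h_{0,k}'(\pm M)$ are treated identically using the same asymptotics.

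The contribution $G_{7,k}^{(M)}/\sqrt{n}$ need not be computed precisely. The leading polynomial tails of $x^{2}h_{0,k}$, $xh_{1,k}$, $h_{2,k}$ grow like $x^{5}$, so every term of $G_{7,k}^{(M)}$ is $\bigO(M^{5})$ and hence $G_{7,k}^{(M)}/\sqrt{n}=\bigO(M^{5}/\sqrt{n})$, which is absorbed into the stated error. The residual $\bigO(M^{9}/n)$ from Lemma~\ref{lemma:S2ktilde} is strictly smaller: with $M=n^{1/12}$, $M^{9}/n=n^{-1/4}$ whereas $M^{5}/\sqrt{n}=\sqrt{n}/M^{7}=n^{-1/12}$. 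For the case $k\in\{2,4,\dots,2g\}$, the substitution $x\mapsto -x$ in the integrals converts $h_{j,k}$ for $k$ even into the structurally identical object (with poles at $+\infty$ instead of $-\infty$), and the same argument applies.

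The main obstacle is bookkeeping rather than analysis: one must match the polynomial-in-$M$ coefficients (including the explicit $1/M,1/M^{3},1/M^{5}$ corrections in $\widetilde{G}_{4,k}^{(M)}$) by tracking the first several inverse-$x$ terms in $h_{0,k}-P_{0}$, and for $\widetilde{G}_{6,k}^{(M)}$ one must combine contributions from three integrals, four boundary values, and two derivative boundary terms into a single compact expression. The calculation is lengthy but mechanical once the $\mathrm{erfc}$ tail expansion is carried to sufficient order.
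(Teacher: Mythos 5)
Your approach is the same as the paper's: start from Lemma~\ref{lemma:S2ktilde}, use the asymptotic expansion of $h_{0,k},h_{1,k},h_{2,k}$ as $x\to-\infty$ (derived from \eqref{large y asymp of erfc}) together with super-exponential decay at $+\infty$, split the finite integrals around a fixed point, subtract the polynomial tails, and collect explicit powers of $M$. The outline is correct.

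Two remarks on specific points. First, you wrote $h_{0,k}(x)=-\tfrac{r_k^{2b}}{2}x^2-\log(-x)-\log(r_k^b\sqrt{2\pi})+\tfrac{1}{r_k^{2b}x^2}+\bigO(x^{-4})$ as $x\to-\infty$, but the sign of the $x^{-2}$ term should be negative: $-\tfrac{r_k^{-2b}}{x^2}+\tfrac{5r_k^{-4b}}{2x^4}-\tfrac{37r_k^{-6b}}{3x^6}+\bigO(x^{-8})$. It is precisely this minus sign, after the subtraction $\int_{-M}^{-1}=\int_{-\infty}^{-1}-\int_{-\infty}^{-M}$ and multiplication by $br_k^{2b}$, that produces the $+\tfrac{b}{M}$ term you asserted, so your final claim is right but your intermediate expansion has a sign typo that would be fatal if propagated.

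Second, and more seriously, the control of the $G_{7,k}^{(M)}/\sqrt{n}$ term is glossed over. You state that the integrands $x^2h_{0,k}$, $xh_{1,k}$, $h_{2,k}$ ``grow like $x^5$, so every term of $G_{7,k}^{(M)}$ is $\bigO(M^5)$''. These two statements are inconsistent ($\bigO(x^5)$ integrands give $\bigO(M^6)$ integrals, which with $M=n^{1/12}$ would be $\bigO(\sqrt{n})$, not an error term at all); the correct growth of all three integrands is $\bigO(x^4)$. For $x^2h_{0,k}$ and $xh_{1,k}$ this is immediate from the leading terms. For $h_{2,k}$, however, this is genuinely nontrivial: the two summands displayed in the definition of $h_{2,k}$ each individually behave like $x^6$ as $x\to-\infty$, and it is only after a cancellation of both the $x^6$ and $x^5$ coefficients that one obtains $h_{2,k}(x)=\bigO(x^4)$ (this is the content of \eqref{lol41} and the remark following Lemma~\ref{lemma:S2ktilde large M}). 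Without verifying this cancellation, the bound $G_{7,k}^{(M)}/\sqrt{n}=\bigO(M^7/\sqrt{n})=\bigO(n^{1/12})$ is all you get, which diverges and breaks the lemma. So this verification is not ``bookkeeping rather than analysis'' — it is the point of the whole exercise of computing $h_{2,k}$ and $G_{7,k}^{(M)}$ explicitly, and your proof must carry it out.
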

\begin{remark}
With our choice $M=n^{\frac{1}{12}}$, both errors $\bigO\big(\frac{M^{5}}{\sqrt{n}}\big)$ and $\bigO\big(\frac{\sqrt{n}}{M^{7}}\big)$ are of the same order:
\begin{align*}
\frac{M^{5}}{\sqrt{n}} = \frac{\sqrt{n}}{M^{7}} = M^{-1} = n^{-\frac{1}{12}}.
\end{align*}
So $M=n^{\frac{1}{12}}$ is the choice that produces the best control of the total error. However this still does not really explain why we chose $M=n^{\frac{1}{12}}$. Indeed, in the above asymptotics one could have easily computed the next term of order $\frac{\sqrt{n}}{M^{7}}$ if this was needed. The real reason why we chose $M=n^{\frac{1}{12}}$ is because the sums $S_{2k}^{(1)}$ and $S_{2k}^{(3)}$, which are analyzed below, also contain a term of order $\frac{\sqrt{n}}{M^{7}}$ in their asymptotics, and this term is hard to compute explicitly. 
\end{remark}
\begin{proof}
We only do the proof for $k$ odd. As already mentioned in the proof of Lemma \ref{lemma:S2ktilde}, $h_{0,k}(x)$, $h_{1,k}(x)$ and $h_{2,k}(x)$ are exponentially small as $x \to + \infty$, and since $M=n^{\frac{1}{12}}$, this implies that there exists $c>0$ such that
\begin{align*}
& \int_{-1}^{M} x^{\ell}h_{j,k}(x)dx = \int_{-1}^{+\infty} x^{\ell}h_{j,k}(x)dx + \bigO(e^{-n^{c}}), \qquad \mbox{as } n \to + \infty, \quad j=0,1,2, \; \ell = 0,1,2.
\end{align*}
On the other hand, as $x \to -\infty$, we have
\begin{align}
& h_{0,k}(x) = -\frac{r_{k}^{2b}}{2}x^{2}-\log(-x)-\log\big( r_{k}^{b}\sqrt{2\pi} \big) - \frac{r_{k}^{-2b}}{x^{2}} + \frac{5r_{k}^{-4b}}{2x^{4}} - \frac{37r_{k}^{-6b}}{3x^{6}} + \bigO(x^{-8}), \label{lol40} \\
& h_{1,k}(x) = \frac{5}{6}r_{k}^{2b}x^{3}+\frac{x}{2}-\frac{2r_{k}^{-2b}}{x} + \bigO(x^{-3}), \\
& h_{2,k}(x) = \bigO(x^{4}). \label{lol41}
\end{align}
Using \eqref{lol40}--\eqref{lol41} and the definitions \eqref{def of G4pMp}--\eqref{def of G7pMp} of $G_{4,k}^{(M)}$, $G_{6,k}^{(M)}$, $G_{7,k}^{(M)}$, we obtain that
\begin{align*}
& G_{4,k}^{(M)}\sqrt{n} = \widetilde{G}_{4,k}^{(M)}\sqrt{n} + \bigO\bigg(\frac{\sqrt{n}}{M^{7}}\bigg), \\
& G_{6,k}^{(M)} = \widetilde{G}_{6,k}^{(M)} + \bigO\bigg( \frac{1}{M^{2}} \bigg), \\
& \frac{G_{7,k}^{(M)}}{\sqrt{n}} = \bigO\bigg(\frac{M^{5}}{\sqrt{n}}\bigg), 
\end{align*}
as $n \to + \infty$, and the claim follows.
\end{proof}
\begin{remark}
From the above proof, we see that $\frac{G_{7,k}^{(M)}}{\sqrt{n}} = \bigO\big(\frac{M^{5}}{\sqrt{n}}\big)$ as $n  \to + \infty$, and therefore $\smash{G_{7,k}^{(M)}}$ will not contribute at all in our final answer. It was however very important to compute $\smash{G_{7,k}^{(M)}}$ explicitly. Indeed, as can be seen from the statement of Lemma \ref{lemma:S2ktilde}, $h_{2,k}$ consists of two parts, and it is easy to check that each of these two parts is of order $x^{6}$ as $x \to -\infty$. Thus the fact that actually we have $h_{2,k}(x)=\bigO(x^{4})$ (see \eqref{lol41}) means that there are great cancellations in the asymptotic behavior of these two parts, and this is not something one could have detected in advance without computing explicitly $\smash{G_{7,k}^{(M)}}$ and $h_{2,k}$.
\end{remark}
Now we rewrite the integrals in $\widetilde{G}_{4,k}^{(M)}$ and $\widetilde{G}_{6,k}^{(M)}$ more elegantly. 
\begin{lemma}\label{lemma:S2ktilde large M and nice integrals}
The quantities $\widetilde{G}_{4,k}^{(M)}$ and $\widetilde{G}_{6,k}^{(M)}$ appearing in Lemma \ref{lemma:S2ktilde large M} can be rewritten more explicitly as follows. For $k \in \{1,3,\ldots,2g-1\}$, we have
\begin{align*}
& \widetilde{G}_{4,k}^{(M)} = -\frac{br_{k}^{4b}}{6}M^{3} - br_{k}^{2b} M \log M + br_{k}^{2b}\big( 1-\log(r_{k}^{b}\sqrt{2\pi}) \big)M + \sqrt{2}br_{k}^{b} \int_{-\infty}^{0}\log \bigg( \frac{1}{2}\mathrm{erfc}(y) \bigg)dy \\
& + \sqrt{2}br_{k}^{b}\int_{0}^{+\infty} \bigg[ \log \bigg( \frac{1}{2}\mathrm{erfc}(y) \bigg)+y^{2}+\log y + \log(2\sqrt{\pi}) \bigg]dy + \frac{b}{M} - \frac{5b}{6r_{k}^{2b}M^{3}} + \frac{37b}{15r_{k}^{4b}M^{5}},   \\
& \widetilde{G}_{6,k}^{(M)} = - \frac{11}{24}br_{k}^{4b}M^{4} - br_{k}^{2b} M^{2}\log M + r_{k}^{2b} \bigg( \frac{b}{4} - b \log (r_{k}^{b}\sqrt{2\pi}) + \frac{2\theta_{k,+}^{(n,M)}-1}{4} \bigg)M^{2}   \\
& + \frac{2\theta_{k,+}^{(n,M)}-1}{2} \log \Big(Mr_{k}^{b}\sqrt{2\pi}\Big)+2b \int_{-\infty}^{0}\bigg\{ 2y \log \bigg(\frac{1}{2}\mathrm{erfc}(y) \bigg) + \frac{e^{-y^{2}}}{\sqrt{\pi}\mathrm{erfc}(y)}\frac{1-5y^{2}}{3} \bigg\}dy \\
& +2b \int_{0}^{+\infty} \bigg\{ 2y \log \bigg(\frac{1}{2}\mathrm{erfc}(y) \bigg) + \frac{e^{-y^{2}}}{\sqrt{\pi}\mathrm{erfc}(y)}\frac{1-5y^{2}}{3} + \frac{11}{3}y^{3} + 2y\log y + \bigg( \frac{1}{2}+2\log(2\sqrt{\pi}) \bigg)y \bigg\}dy,
\end{align*}
and for $k \in \{2,4,\ldots,2g\}$, we have
\begin{align*}
& \widetilde{G}_{4,k}^{(M)} = -\frac{br_{k}^{4b}}{6}M^{3} - br_{k}^{2b} M \log M + br_{k}^{2b}\big( 1-\log(r_{k}^{b}\sqrt{2\pi}) \big)M + \sqrt{2}br_{k}^{b} \int_{-\infty}^{0}\log \bigg( \frac{1}{2}\mathrm{erfc}(y) \bigg)dy \\
& + \sqrt{2}br_{k}^{b}\int_{0}^{+\infty} \bigg[ \log \bigg( \frac{1}{2}\mathrm{erfc}(y) \bigg)+y^{2}+\log y + \log(2\sqrt{\pi}) \bigg]dy + \frac{b}{M} - \frac{5b}{6r_{k}^{2b}M^{3}} + \frac{37b}{15r_{k}^{4b}M^{5}},  \\
& \widetilde{G}_{6,k}^{(M)} =  \frac{11}{24}br_{k}^{4b}M^{4} + br_{k}^{2b} M^{2}\log M + r_{k}^{2b} \bigg( -\frac{b}{4} + b \log (r_{k}^{b}\sqrt{2\pi}) + \frac{2\theta_{k,-}^{(n,M)}-1}{4} \bigg)M^{2}   \\
& + \frac{2\theta_{k,-}^{(n,M)}-1}{2} \log \Big(Mr_{k}^{b}\sqrt{2\pi}\Big)-2b \int_{-\infty}^{0}\bigg\{ 2y \log \bigg(\frac{1}{2}\mathrm{erfc}(y) \bigg) + \frac{e^{-y^{2}}}{\sqrt{\pi}\mathrm{erfc}(y)}\frac{1-5y^{2}}{3} \bigg\}dy \\
& -2b \int_{0}^{+\infty} \bigg\{ 2y \log \bigg(\frac{1}{2}\mathrm{erfc}(y) \bigg) + \frac{e^{-y^{2}}}{\sqrt{\pi}\mathrm{erfc}(y)}\frac{1-5y^{2}}{3} + \frac{11}{3}y^{3} + 2y\log y + \bigg( \frac{1}{2}+2\log(2\sqrt{\pi}) \bigg)y \bigg\}dy.
\end{align*}
\end{lemma}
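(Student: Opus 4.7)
The plan is to apply a parity-dependent change of variables in each integral of the preceding lemma, followed by elementary calculus to absorb the resulting boundary contributions. The natural substitution is $y = -x r_{k}^{b}/\sqrt{2}$ when $k$ is odd, and $y = x r_{k}^{b}/\sqrt{2}$ when $k$ is even (using the identity $1 - \tfrac{1}{2}\mathrm{erfc}(-z) = \tfrac{1}{2}\mathrm{erfc}(z)$ to rewrite $h_{0,k}$). Under this substitution, $h_{0,k}(x)$ becomes $\log(\tfrac{1}{2}\mathrm{erfc}(y))$, and since $dx = \mp \sqrt{2}/r_{k}^{b}\,dy$, the prefactor $br_{k}^{2b}\,dx$ becomes $\sqrt{2}br_{k}^{b}\,dy$ after accounting for the reversal of bounds, which matches the prefactor in the target. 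A direct computation using $x^{2} r_{k}^{2b}/2 = y^{2}$ shows that $br_{k}^{2b} h_{1,k}(x)\,dx$ transforms into $\tfrac{2be^{-y^{2}}(1-5y^{2})}{3\sqrt{\pi}\mathrm{erfc}(y)}\,dy$, which is exactly the integrand appearing in the target expression for $\widetilde{G}_{6,k}^{(M)}$. Similarly, the subtracted regularizer $-\tfrac{r_{k}^{2b}}{2}x^{2} - \log|x| - \log(r_{k}^{b}\sqrt{2\pi})$ becomes $-y^{2} - \log|y| - \log(2\sqrt{\pi})$ after substitution, and the cubic regularizer in $\widetilde{G}_{6,k}^{(M)}$ transforms analogously.

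After these substitutions the transformed integrals have splitting point $y = r_{k}^{b}/\sqrt{2}$ (the image of $x = \pm 1$), whereas the target formula splits at $y = 0$. To reconcile this, I would shift the boundary from $r_{k}^{b}/\sqrt{2}$ to $0$ in each half-line, which for $\widetilde{G}_{4,k}^{(M)}$ produces the elementary residual $\sqrt{2}br_{k}^{b}\int_{0}^{r_{k}^{b}/\sqrt{2}}[y^{2} + \log y + \log(2\sqrt{\pi})]\,dy$. Using only the power rule and $\int_{0}^{a}\log y\,dy = a\log a - a$, this residual evaluates to $\tfrac{br_{k}^{4b}}{6} + br_{k}^{2b}[\log(r_{k}^{b}\sqrt{2\pi}) - 1]$, which exactly cancels the constant term $br_{k}^{2b}(\log(r_{k}^{b}\sqrt{2\pi}) - 1 + r_{k}^{2b}/6)$ that appears in the preceding lemma but is absent from the present statement. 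The analogous residual computation for $\widetilde{G}_{6,k}^{(M)}$ combines shifts of the integrals of $h_{0,k}$, $x h_{0,k}$ and $h_{1,k}$ through the integrals $\int_{0}^{r_{k}^{b}/\sqrt{2}}y[y^{2} + \log y + \log(2\sqrt{\pi})]\,dy$ and $\int_{0}^{r_{k}^{b}/\sqrt{2}}[\tfrac{11}{3}y^{3} + 2y\log y + (\tfrac{1}{2} + 2\log(2\sqrt{\pi}))y]\,dy$; summing these accounts for the cancellation of the remaining constant terms $\mp\tfrac{11}{24}br_{k}^{4b}$, $\pm\tfrac{br_{k}^{2b}}{4}$, and $(\mp br_{k}^{2b} + \tfrac{2\theta_{k,\pm}^{(n,M)} - 1}{2})\log(r_{k}^{b}\sqrt{2\pi})$ in $\widetilde{G}_{6,k}^{(M)}$.

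The only real obstacle is the bookkeeping: one must carefully track the parity of $k$ (which flips the sign of the substitution and swaps the roles of the two half-lines $y \gtrless 0$), the reversal of integration bounds, and the assignment of each subtracted tail term to the correct piece of the split integral. Once all substitutions are written down carefully and all residual integrals computed, no analytic difficulty arises and the identity follows from the elementary primitives $\int y^{n}\,dy$ and $\int \log y\,dy$.
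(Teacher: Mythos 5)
Your proposal is correct and follows essentially the same route as the paper: the substitution $y=\mp x r_{k}^{b}/\sqrt{2}$ (with $\mathrm{erfc}(z)+\mathrm{erfc}(-z)=2$ handling the even-$k$ case), followed by moving the splitting point from $x=\pm 1$ to $y=0$ and evaluating the elementary residual integrals, which cancel the leftover constants from the preceding lemma. The one detail worth making explicit in a full write-up is that for $\widetilde{G}_{6,k}^{(M)}$ the $1/x$ pieces of the two subtracted tails (from $-2xh_{0,k}$ and from $h_{1,k}$) must first be regrouped so that they cancel, since neither is integrable at $0$ on its own; the combined residual integrands you wrote down are consistent with this regrouping.
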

\begin{proof}
For $k \in \{1,3,\ldots,2g-1\}$, using the change of variables $y = -\frac{xr_{k}^{b}}{\sqrt{2}}$ and splitting the two integrals around $0$ instead of $-1$, we obtain
\begin{align*}
& \int_{-\infty}^{-1}\bigg( h_{0,k}(x)-\bigg[ -\frac{r_{k}^{2b}}{2}x^{2}-\log(-x)-\log(r_{k}^{b}\sqrt{2\pi}) \bigg] \bigg)dx +  \int_{-1}^{+\infty}h_{0,k}(x)dx = 1-\frac{r_{k}^{2b}}{6} - \log(r_{k}^{b}\sqrt{2\pi}) \\
& + \frac{\sqrt{2}}{r_{k}^{b}} \bigg\{ \int_{-\infty}^{0}\log \bigg( \frac{1}{2}\mathrm{erfc}(y) \bigg)dy + \int_{0}^{+\infty} \bigg[ \log \bigg( \frac{1}{2}\mathrm{erfc}(y) \bigg)+y^{2}+\log y + \log(2\sqrt{\pi}) \bigg]dy \bigg\}.
\end{align*}
It is also possible to split the integrals in $\widetilde{G}_{6,k}^{(M)}$ around $0$ by regrouping some integrands as follows
\begin{align*}
& -2\int_{-\infty}^{-1} \bigg\{ xh_{0,k}(x) - \bigg( -\frac{r_{k}^{2b}}{2}x^{3} - x \log(-x) - \log(r_{k}^{b}\sqrt{2\pi})x - \frac{1}{r_{k}^{2b}x} \bigg) \bigg\} dx -2\int_{-1}^{+\infty} xh_{0,k}(x)dx \\
& + \int_{-\infty}^{-1} \bigg\{ h_{1,k}(x)-\bigg( \frac{5}{6}r_{k}^{2b}x^{3}+\frac{x}{2}- \frac{2}{r_{k}^{2b}x} \bigg) \bigg\}dx +  \int_{-1}^{+\infty} h_{1,k}(x)dx = \frac{1}{4} - \frac{11r_{k}^{2b}}{24} -\log \big( r_{k}^{b}\sqrt{2\pi} \big) \\
& + \frac{2}{r_{k}^{2b}}\int_{-\infty}^{0} \bigg\{ 2y \log \bigg( \frac{1}{2}\mathrm{erfc}(y)\bigg) + \frac{e^{-y^{2}}}{\sqrt{\pi}\mathrm{erfc}(y)}\frac{1-5y^{2}}{3}  \bigg\}dy \\
& + \frac{2}{r_{k}^{2b}}\int_{0}^{+\infty} \bigg\{ 2y \log \bigg( \frac{1}{2}\mathrm{erfc}(y) \bigg) + \frac{e^{-y^{2}}}{\sqrt{\pi}\mathrm{erfc}(y)}\frac{1-5y^{2}}{3} + \frac{11}{3}y^{3} + 2y \log (y) + \bigg( \frac{1}{2}+2\log(2\sqrt{\pi}) \bigg)y  \bigg\}dy.
\end{align*}
We obtain the claim for $k$ odd after substituting these expressions in the definitions \eqref{def of G4tilde for k odd} and \eqref{def of G6tilde for k odd} of $\widetilde{G}_{4,k}^{(M)}$ and $\widetilde{G}_{6,k}^{(M)}$. The proof for $k \in \{2,4,\ldots,2g\}$ is similar (it requires furthermore the use of the functional relation $\mathrm{erfc}(y)+\mathrm{erfc}(-y)=2$), and we omit the details.
\end{proof}
Now we turn our attention to the sums $S_{2k}^{(3)}$ and $S_{2k}^{(1)}$. The analogue of these sums in \cite{Charlier 2d jumps} were relatively simple to analyze, see \cite[Lemmas 2.5 and 2.6]{Charlier 2d jumps}. In this paper, the sums $\{S_{2k}^{(3)}\}_{k \,\mathrm{odd}}$ and $\{S_{2k}^{(1)}\}_{k \,\mathrm{even}}$ are straightforward to analyze (and even simpler than in \cite[Lemmas 2.5 and 2.6]{Charlier 2d jumps}). However, the sums $\{S_{2k}^{(1)}\}_{k \,\mathrm{odd}}$ and $\{S_{2k}^{(3)}\}_{k \,\mathrm{even}}$ are challenging (their large $n$ asymptotics depend on both $\epsilon$ and $M$ in a complicated way). We start with the sums $\{S_{2k}^{(3)}\}_{k \,\mathrm{odd}}$ and $\{S_{2k}^{(1)}\}_{k \,\mathrm{even}}$.
\begin{lemma}\label{lemma:S2k p3p and p1p: the easy ones}
Let $k \in \{1,3,\ldots,2g-1\}$. As $n \to + \infty$, we have
\begin{align*}
S_{2k}^{(3)}=\bigO(n^{-10}).
\end{align*}
Let $k \in \{2,4,\ldots,2g\}$. As $n \to + \infty$, we have
\begin{align*}
S_{2k}^{(1)}=\bigO(n^{-10}).
\end{align*}
\end{lemma}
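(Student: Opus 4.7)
The plan is to treat both statements in parallel, since each involves a range of $j$'s on which $\lambda_{j,k}$ lies on the ``correct side'' of $1$ at distance at least $M/\sqrt{n}$, and hence is amenable to Lemma~\ref{lemma: asymp of gamma for lambda bounded away from 1}. For $k$ odd and $v=3$, $\lambda_{j,k}\in I_{3}\subset(1+\tfrac{M}{\sqrt{n}},1+\epsilon]$, so Lemma~\ref{lemma: asymp of gamma for lambda bounded away from 1}(i) yields
\[
\frac{\gamma(a_{j},nr_{k}^{2b})}{\Gamma(a_{j})} = 1+ \frac{e^{-a_{j}\eta_{j,k}^{2}/2}}{\sqrt{2\pi}}\Bigl(\tfrac{-1}{\lambda_{j,k}-1}\tfrac{1}{\sqrt{a_{j}}}+\bigO(n^{-3/2})\Bigr).
\]
For $k$ even and $v=1$, $\lambda_{j,k}\in I_{1}\subset[1-\epsilon,1-\tfrac{M}{\sqrt{n}})$, so Lemma~\ref{lemma: asymp of gamma for lambda bounded away from 1}(ii) gives the same right-hand side (without the leading $1$), whence $1-\frac{\gamma(a_{j},nr_{k}^{2b})}{\Gamma(a_{j})} = 1 - (\text{same exponentially small term})$. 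In both cases taking $\log$ produces a summand bounded in modulus by a constant times $\frac{e^{-a_{j}\eta_{j,k}^{2}/2}}{|\lambda_{j,k}-1|\sqrt{a_{j}}}$.

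The core of the argument is the uniform lower bound $a_{j}\eta_{j,k}^{2}/2\geq c\,M^{2}$ for some $c>0$ depending only on $b,r_{k},\epsilon$, valid for all $j$ with $\lambda_{j,k}\in I_{1}\cup I_{3}$. Using the Taylor expansion $\lambda-1-\log\lambda=\tfrac{(\lambda-1)^{2}}{2}(1+\bigO(\lambda-1))$ in the definition~\eqref{lol8} gives $\eta_{j,k}^{2}=(\lambda_{j,k}-1)^{2}(1+\bigO(\lambda_{j,k}-1))$, uniformly for $\lambda_{j,k}$ in a neighborhood of $1$; together with $a_{j}\sim br_{k}^{2b}n$ and $|\lambda_{j,k}-1|\geq M/\sqrt{n}$ this yields $a_{j}\eta_{j,k}^{2}/2\geq c'M^{2}$ on the portion of $I_{1}\cup I_{3}$ close to~$1$. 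For the complementary portion (where $|\lambda_{j,k}-1|$ remains bounded away from $0$), the function $\lambda\mapsto \frac{\lambda-1-\log\lambda}{\lambda-1}$ is strictly positive on $[1-\epsilon,1-M/\sqrt{n}]\cup[1+M/\sqrt{n},1+\epsilon]$ (in fact bounded below), so $a_{j}\eta_{j,k}^{2}/2$ is of order $n$, which dominates $M^{2}$. This gives the uniform bound $e^{-a_{j}\eta_{j,k}^{2}/2}\leq e^{-c\,M^{2}}=e^{-c\,n^{1/6}}$.

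To conclude, the prefactor $\frac{1}{|\lambda_{j,k}-1|\sqrt{a_{j}}}$ is at worst of size $\frac{\sqrt{n}}{M}\cdot\frac{1}{\sqrt{n}}=M^{-1}$, so each summand is bounded by $C\,M^{-1}e^{-c\,n^{1/6}}$. The number of summands is bounded by the length of $I_{1}$ or $I_{3}$ in $j$-space, which is at most $\#\{j_{k,-},\ldots,j_{k,+}\}=\bigO(n)$. Combining these estimates,
\[
|S_{2k}^{(3)}|,\ |S_{2k}^{(1)}| \;\leq\; C\,n\,M^{-1}e^{-c\,n^{1/6}} \;=\; \bigO\!\big(e^{-c'n^{1/6}}\big),
\]
which is in particular $\bigO(n^{-10})$ as $n\to+\infty$. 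The only genuinely nontrivial point is the uniform lower bound on $a_{j}\eta_{j,k}^{2}$; once that is in hand the rest is bookkeeping, and since the estimate is exponential with no cancellation issues, one does not need the subtler Riemann-sum machinery of Lemmas~\ref{lemma:Riemann sum NEW} and~\ref{lemma:Riemann sum}.
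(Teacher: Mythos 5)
Your core idea --- establishing the uniform lower bound $a_{j}\eta_{j,k}^{2}/2 \geq c\,M^{2}$ for $\lambda_{j,k}$ on the appropriate side of $I_{2}$, then summing $\bigO(n)$ exponentially small terms to conclude $\bigO(n\,e^{-cn^{1/6}}) = \bigO(n^{-10})$ --- is exactly the argument the paper gives, and the lower bound itself is derived correctly. However, there is a real gap in the opening step: Lemma~\ref{lemma: asymp of gamma for lambda bounded away from 1} is stated for $\lambda \geq 1+\delta$ with a \emph{fixed} $\delta > 0$ (part (i)) or for $\lambda$ in a \emph{fixed} compact subset of $(0,1)$ (part (ii)). On $I_{3} = (1+\tfrac{M}{\sqrt{n}},1+\epsilon]$ (resp.\ $I_{1} = [1-\epsilon,1-\tfrac{M}{\sqrt{n}})$) the value $\lambda_{j,k}$ can approach $1$ at the $n$-dependent rate $M/\sqrt{n}\to 0$, so that hypothesis fails. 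Concretely, the coefficient $\frac{1+10\lambda+\lambda^{2}}{12(\lambda-1)^{3}}$ in that expansion is of size $n^{3/2}/M^{3}$ near the inner edge of $I_{3}$, so the term you absorb into ``$\bigO(n^{-3/2})$'' is actually of size $M^{-3}$, and the $\bigO(a^{-5/2})$ remainder of Lemma~\ref{lemma: asymp of gamma for lambda bounded away from 1} is not uniform as $\delta\to 0$.

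The paper sidesteps this by not using Lemma~\ref{lemma: asymp of gamma for lambda bounded away from 1} at all here. Instead it goes back to Lemma~\ref{lemma: uniform}: the exact identity \eqref{Temme exact formula} together with the expansion \eqref{asymp of Ra} (whose coefficients $c_{j}(\eta)$ are bounded uniformly on $\eta\in\mathbb{R}$), combined with \eqref{large y asymp of erfc} applied to $y=\eta_{j,k}\sqrt{a_{j}/2}\geq cM\to+\infty$ and the relation $\mathrm{erfc}(-y)=2-\mathrm{erfc}(y)$, directly yields $\gamma(a_{j},nr_{k}^{2b})/\Gamma(a_{j}) = 1 + \bigO(e^{-c^{2}M^{2}/2})$ uniformly over the relevant $j$'s, with no precise prefactor needed. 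Replacing your appeal to Lemma~\ref{lemma: asymp of gamma for lambda bounded away from 1} by this use of Lemma~\ref{lemma: uniform} repairs the argument; the rest of your bookkeeping (including the conservative $M^{-1}$ prefactor, which is unnecessary but harmless) then goes through.
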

\begin{proof}
Let $k \in \{1,3,\ldots,2g-1\}$. Recall from \eqref{asymp prelim of S2kpvp} that 
\begin{align*}
S_{2k}^{(3)} = \sum_{j:\lambda_{j,k}\in I_{3}}  \log \bigg( \frac{\gamma(a_{j},nr_{k}^{2b})}{\Gamma(a_{j})} \bigg),
\end{align*}
and from \eqref{def of the three intervals} that $I_{3} = (1+\tfrac{M}{\sqrt{n}},1+\epsilon]$. We then infer, by \eqref{def of aj lambdajl etajl}, that there exists a constant $c>0$ such that $\sqrt{a_{j}}\eta_{j,k} \geq cM$ holds for all large $n$ and $j \in \{j:\lambda_{j,k}\in I_{3}\}$. By \eqref{Temme exact formula}, \eqref{asymp of Ra}, \eqref{large y asymp of erfc} and $\mathrm{erfc}(-y) = 2-\mathrm{erfc}(y)$, this implies
\begin{align*}
\frac{\gamma(a_{j},nr_{k}^{2b})}{\Gamma(a_{j})} = \frac{1}{2}\mathrm{erfc}\Big(-\eta_{j,k} \sqrt{a_{j}/2}\Big) - R_{a_{j}}(\eta_{j,k}) = 1+\bigO(e^{-\frac{c^{2}}{2}M^{2}}), \qquad \mbox{as } n \to + \infty
\end{align*}
uniformly for $j \in \{j:\lambda_{j,k}\in I_{3}\}$. Since $M=n^{\frac{1}{12}}$, the claim is proved for $k$ odd. The proof for $k$ even is similar and we omit it.
\end{proof}
We now focus on $\{S_{2k}^{(1)}\}_{k \,\mathrm{odd}}$ and $\{S_{2k}^{(3)}\}_{k \,\mathrm{even}}$.
\begin{lemma}\label{lemma: exact decomposition for the interpolating sums}
Let $k \in \{1,3,\ldots,2g-1\}$. We have
\begin{align}\label{lol52}
& S_{2k}^{(1)} = \sum_{j=g_{k,+}+1}^{j_{k,+}} \log \bigg\{ \frac{1}{2}\mathrm{erfc} \Big( -\frac{\eta_{j,k}}{\sqrt{2}}\sqrt{a_{j}} \Big) \bigg\} + \sum_{j=g_{k,+}+1}^{j_{k,+}} \log \Bigg\{ 1-\frac{R_{a_{j}}(\eta_{j,k})}{\frac{1}{2}\mathrm{erfc} \big( -\frac{\eta_{j,k}}{\sqrt{2}}\sqrt{a_{j}} \big)} \Bigg\},
\end{align}
where
\begin{align}
& -\frac{\eta_{j,k}}{\sqrt{2}} = \sqrt{\frac{br_{k}^{2b}}{j/n+\frac{\alpha}{n}}-1-\log \bigg( \frac{br_{k}^{2b}}{j/n+\frac{\alpha}{n}} \bigg)}, \qquad \sqrt{a_{j}} = \frac{\sqrt{n}}{\sqrt{b}}\sqrt{j/n+\frac{\alpha}{n}}, \label{lol42} \\
& R_{a_{j}}(\eta_{j,k}) = \frac{\exp\big( -\frac{a_{j}\eta_{j,k}^{2}}{2} \big) }{\sqrt{n}\sqrt{2\pi b^{-1}}\sqrt{j/n+\frac{\alpha}{n}}} \Bigg\{ c_{0}( \eta_{j,k} ) + \frac{b \; c_{1}( \eta_{j,k} )}{n(j/n+\frac{\alpha}{n})}  + \bigO(n^{-2}) \Bigg\}, \label{lol48}
\end{align}
and the last expansion holds as $n \to + \infty$ uniformly for $j \in \{g_{k,+}+1,\ldots,j_{k,+}\}$. We recall that the functions $c_{0}$ and $c_{1}$ are defined in \eqref{def of c0 and c1}.

\medskip \noindent Let $k \in \{2,4,\ldots,2g\}$. We have
\begin{align}\label{lol53}
& S_{2k}^{(3)} = \sum_{j=j_{k,-}}^{g_{k,-}-1} \log \bigg\{1- \frac{1}{2}\mathrm{erfc} \Big( -\frac{\eta_{j,k}}{\sqrt{2}}\sqrt{a_{j}} \Big) \bigg\} + \sum_{j=j_{k,-}}^{g_{k,-}-1} \log \Bigg\{ 1+\frac{R_{a_{j}}(\eta_{j,k})}{1- \frac{1}{2}\mathrm{erfc} \big( -\frac{\eta_{j,k}}{\sqrt{2}}\sqrt{a_{j}} \big)} \Bigg\},
\end{align}
where
\begin{align*}
& -\frac{\eta_{j,k}}{\sqrt{2}} = -\sqrt{\frac{br_{k}^{2b}}{j/n+\frac{\alpha}{n}}-1-\log \bigg( \frac{br_{k}^{2b}}{j/n+\frac{\alpha}{n}} \bigg)}, \qquad \sqrt{a_{j}} = \frac{\sqrt{n}}{\sqrt{b}}\sqrt{j/n+\frac{\alpha}{n}}, \\
& R_{a_{j}}(\eta_{j,k}) = \frac{\exp\Big( -\frac{a_{j}\eta_{j,k}^{2}}{2} \big) \Big)}{\sqrt{n}\sqrt{2\pi b^{-1}}\sqrt{j/n+\frac{\alpha}{n}}} \Bigg\{ c_{0}( \eta_{j,k} ) + \frac{b \; c_{1}( \eta_{j,k} )}{n(j/n+\frac{\alpha}{n})}  + \bigO(n^{-2}) \Bigg\},
\end{align*}
and the last expansion holds as $n \to + \infty$ uniformly for $j \in \{j_{k,-},\ldots,g_{k,-}-1\}$.
\end{lemma}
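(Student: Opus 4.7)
The statement is essentially an algebraic rewriting of the definition of $S_{2k}^{(1)}$ (respectively $S_{2k}^{(3)}$) combined with Temme's exact identity \eqref{Temme exact formula} and the uniform asymptotic expansion \eqref{asymp of Ra}, so the plan is to follow the definitions and simplify. First, for $k \in \{1,3,\ldots,2g-1\}$, I would start from \eqref{asymp prelim of S2kpvp} and \eqref{sums lambda j}, which give
\[
S_{2k}^{(1)} = \sum_{j=g_{k,+}+1}^{j_{k,+}} \log \bigg( \frac{\gamma(a_{j},nr_{k}^{2b})}{\Gamma(a_{j})} \bigg),
\]
then substitute the identity \eqref{Temme exact formula} of Lemma \ref{lemma: uniform} and factor out $\frac{1}{2}\mathrm{erfc}(-\eta_{j,k}\sqrt{a_{j}/2})$ from inside each logarithm. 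This immediately splits the log into two sums, yielding \eqref{lol52}. The dominant factor $\frac{1}{2}\mathrm{erfc}(-\eta_{j,k}\sqrt{a_{j}/2})$ is nonzero on the range of summation because $\lambda_{j,k}\in I_{1}$ forces $\eta_{j,k}<0$ bounded away from $0$ uniformly once $\lambda_{j,k}\leq 1-M/\sqrt{n}$.

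Next, I would verify the explicit expressions in \eqref{lol42} directly from the definitions \eqref{def of aj lambdajl etajl} and \eqref{lol8}. We have $\sqrt{a_{j}} = \sqrt{(j+\alpha)/b} = \frac{\sqrt{n}}{\sqrt{b}}\sqrt{j/n + \alpha/n}$, and since $\lambda_{j,k}\in I_{1}$ implies $\lambda_{j,k}<1$ so $\eta_{j,k}<0$, one gets $-\eta_{j,k}/\sqrt{2} = \sqrt{\lambda_{j,k}-1-\ln\lambda_{j,k}} = \sqrt{br_{k}^{2b}/(j/n+\alpha/n) - 1 - \log(br_{k}^{2b}/(j/n+\alpha/n))}$ with the principal branch of the square root, exactly as in \eqref{lol42}.

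For the expansion \eqref{lol48} of $R_{a_{j}}(\eta_{j,k})$, I would invoke the uniform asymptotic expansion \eqref{asymp of Ra} of Lemma \ref{lemma: uniform}, truncating at the term $c_{1}$. Since the coefficients $c_{j}(\eta)$ are bounded on $\mathbb{R}$ and \eqref{asymp of Ra} holds uniformly for $\lambda\in(0,+\infty)$, and since on our range $a_{j} = (j+\alpha)/b$ with $j \sim bnr_{k}^{2b}$ tends to $+\infty$ uniformly, the truncation produces an error of order $a_{j}^{-2}e^{-a_{j}\eta_{j,k}^{2}/2}/\sqrt{a_{j}}$, which absorbed into the prefactor $e^{-a_{j}\eta_{j,k}^{2}/2}/(\sqrt{n}\sqrt{2\pi/b}\sqrt{j/n+\alpha/n})$ is $\bigO(n^{-2})$ uniformly on $\{g_{k,+}+1,\ldots,j_{k,+}\}$. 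Plugging $\sqrt{a_{j}} = \sqrt{n}\sqrt{(j/n+\alpha/n)/b}$ and $a_{j} = (j+\alpha)/b$ into $\frac{1}{\sqrt{2\pi a_{j}}}(c_{0}(\eta_{j,k}) + c_{1}(\eta_{j,k})/a_{j})$ and factoring produces exactly \eqref{lol48}.

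For $k \in \{2,4,\ldots,2g\}$, the proof is the same up to a sign. Starting from $S_{2k}^{(3)} = \sum_{j=j_{k,-}}^{g_{k,-}-1}\log(1 - \gamma(a_{j},nr_{k}^{2b})/\Gamma(a_{j}))$, Temme's identity and $\frac{1}{2}\mathrm{erfc}(x)+\frac{1}{2}\mathrm{erfc}(-x)=1$ give $1 - \gamma/\Gamma = \frac{1}{2}\mathrm{erfc}(\eta_{j,k}\sqrt{a_{j}/2}) + R_{a_{j}}(\eta_{j,k}) = [1 - \frac{1}{2}\mathrm{erfc}(-\eta_{j,k}\sqrt{a_{j}/2})]\,[1 + R_{a_{j}}/(1 - \frac{1}{2}\mathrm{erfc}(-\eta_{j,k}\sqrt{a_{j}/2}))]$, and now $\lambda_{j,k}>1$ implies $\eta_{j,k}>0$, which is why $-\eta_{j,k}/\sqrt{2}$ acquires a minus sign in the stated formula. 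The main (and only) obstacle is a purely bookkeeping one, namely to keep track of the signs produced by $\lambda_{j,k}$ crossing $1$ between the $I_{1}$ and $I_{3}$ intervals; all analytic content comes from Lemma \ref{lemma: uniform} and is already available.
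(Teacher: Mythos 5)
Your proof is correct and follows exactly the same route as the paper, which simply invokes a direct application of Lemma \ref{lemma: uniform}: you unwind the definition of $S_{2k}^{(v)}$, factor out the $\mathrm{erfc}$ term in Temme's identity \eqref{Temme exact formula}, and then substitute $a_{j}=(j+\alpha)/b$ into the expansion \eqref{asymp of Ra}. One small slip worth noting: on $I_{1}$ the quantity $\eta_{j,k}$ is \emph{not} bounded away from $0$ (at $j=g_{k,+}+1$ it is of order $M/\sqrt{n}\to 0$), but this is harmless since $\mathrm{erfc}$ is strictly positive on all of $\mathbb{R}$, so the factorization is always legitimate without any lower bound on $|\eta_{j,k}|$.
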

\begin{proof}
This follows from a direct application of Lemma \ref{lemma: uniform}. 
\end{proof}
The asymptotic analysis of $\{S_{2k}^{(1)}\}_{k \,\mathrm{odd}}$ and $\{S_{2k}^{(3)}\}_{k \,\mathrm{even}}$ is challenging partly because, as can be seen from the statement of Lemma \ref{lemma: exact decomposition for the interpolating sums}, there are four types of $n$-dependent parameters which vary at different speeds. Indeed, as $n \to + \infty$ and $j \in \{g_{k,+}+1,\ldots,j_{k,+}\}$, the quantities $\sqrt{a_{j}}$, $\eta_{j,k}$, $j/n$ and $\alpha/n$ are of orders $\sqrt{n}$, $j/n-br_{k}^{2b}$, $1$ and $\frac{1}{n}$ respectively. In particular, for $j$ close to $g_{k,+}+1$, $\eta_{j,k}$ is of order $\frac{M}{\sqrt{n}}$, while for $j$ close to $j_{k,+}$, it is of order $1$. In the next lemma, we obtain asymptotics for the right-hand sides of \eqref{lol52} and \eqref{lol53}. These asymptotics will then be evaluated more explicitly using Lemma \ref{lemma:Riemann sum NEW}. 

\begin{lemma}\label{lemma:preliminary expansions}
Let $k \in \{1,3,\ldots,2g-1\}$. As $n \to + \infty$, we have
\begin{align}
& \sum_{j=g_{k,+}+1}^{j_{k,+}} \log \bigg\{ \frac{1}{2}\mathrm{erfc} \Big( -\eta_{j,k}\sqrt{\tfrac{a_{j}}{2}} \Big) \bigg\} = n\sum_{j=g_{k,+}+1}^{j_{k,+}} \mathfrak{g}_{k,1}(j/n) + \log n \sum_{j=g_{k,+}+1}^{j_{k,+}} \mathfrak{g}_{k,2}(j/n) +\sum_{j=g_{k,+}+1}^{j_{k,+}} \mathfrak{g}_{k,3}(j/n) \nonumber \\
&  + \frac{1}{n}\sum_{j=g_{k,+}+1}^{j_{k,+}} \mathfrak{g}_{k,4}(j/n)+ \frac{1}{n^{2}}\sum_{j=g_{k,+}+1}^{j_{k,+}} \mathfrak{g}_{k,5}(j/n)+ \frac{1}{n^{3}}\sum_{j=g_{k,+}+1}^{j_{k,+}} \mathfrak{g}_{k,6}(j/n) + \bigO\bigg(\frac{\sqrt{n}}{M^{7}}\bigg), \label{lol46} \\
& \sum_{j=g_{k,+}+1}^{j_{k,+}} \log \Bigg\{ 1-\frac{R_{a_{j}}(\eta_{j,k})}{\frac{1}{2}\mathrm{erfc} \big( -\eta_{j,k}\sqrt{\frac{a_{j}}{2}} \big)} \Bigg\} = \sum_{j=g_{k,+}+1}^{j_{k,+}} \mathfrak{h}_{k,3}(j/n) + \frac{1}{n}\sum_{j=g_{k,+}+1}^{j_{k,+}} \mathfrak{h}_{k,4}(j/n)+ \bigO(M^{-2}), \label{lol47}
\end{align}
where
\begin{align}
& \mathfrak{g}_{k,1}(x) = - \frac{br_{k}^{2b}-x-x\log (\frac{br_{k}^{2b}}{x})}{b}, \label{def of gk1} \\
& \mathfrak{g}_{k,2}(x) = - \frac{1}{2}, \label{def of gk2} \\
& \mathfrak{g}_{k,3}(x) = \frac{1}{2}\log \bigg( \frac{b}{4\pi} \bigg) - \frac{1}{2}\log \bigg( br_{k}^{2b}-x-x\log \bigg( \frac{br_{k}^{2b}}{x} \bigg) \bigg) + \frac{\alpha}{b}\log \bigg( \frac{br_{k}^{2b}}{x} \bigg), \label{def of gk3} \\
& \mathfrak{g}_{k,4}(x) = -\frac{1}{2} \frac{b}{br_{k}^{2b}-x-x\log(\frac{br_{k}^{2b}}{x})} + \frac{1}{2}\frac{\alpha \log \big( \frac{br_{k}^{2b}}{x} \big)}{br_{k}^{2b}-x-x\log(\frac{b r_{k}^{2b}}{x})} - \frac{\alpha^{2}}{2bx}, \label{def of gk4} \\
& \mathfrak{g}_{k,5}(x) = \frac{5b^{2}}{8(br_{k}^{2b}-x-x\log(\frac{br_{k}^{2b}}{x}))^{2}} - \frac{b \alpha \log(\frac{br_{k}^{2b}}{x})}{2(br_{k}^{2b}-x-x\log(\frac{br_{k}^{2b}}{x}))^{2}} \nonumber \\
& \hspace{1.5cm} + \alpha^{2} \frac{-br_{k}^{2b}+x+x\log(\frac{br_{k}^{2b}}{x})+x \log^{2}(\frac{br_{k}^{2b}}{x})}{4x(br_{k}^{2b}-x-x\log(\frac{br_{k}^{2b}}{x}))^{2}} + \frac{\alpha^{3}}{6bx^{2}}, \label{def of gk5} \\
& \mathfrak{g}_{k,6}(x) = \frac{-37b^{3}}{24(br_{k}^{2b}-x-x\log(\frac{br_{k}^{2b}}{x}))^{3}} + \frac{5b^{2}\alpha \log(\frac{b r_{k}^{2b}}{x})}{4(br_{k}^{2b}-x-x\log(\frac{br_{k}^{2b}}{x}))^{3}} \nonumber \\
& + \frac{b \alpha^{2} \big( br_{k}^{2b}-x-x\log(\frac{br_{k}^{2b}}{x}) - 2x \log^{2}(\frac{br_{k}^{2b}}{x}) \big)}{4x(br_{k}^{2b}-x-x\log(\frac{br_{k}^{2b}}{x}))^{3}} \nonumber \\
& + \alpha^{3}\frac{(x-br_{k}^{2b})^{2}+5x(x-br_{k}^{2b})\log(\frac{br_{k}^{2b}}{x})+4x^{2}\log^{2}(\frac{br_{k}^{2b}}{x})+2x^{2}\log^{3}(\frac{br_{k}^{2b}}{x})}{12x^{2}(br_{k}^{2b}-x-x\log(\frac{br_{k}^{2b}}{x}))^{3}} - \frac{\alpha^{4}}{12bx^{3}}, \label{def of gk6} \\
& \mathfrak{h}_{k,3}(x) = \log \bigg( \sqrt{2x}\frac{\sqrt{br_{k}^{2b}-x-x\log(\frac{br_{k}^{2b}}{x})}}{|x-br_{k}^{2b}|} \bigg), \label{def of hk3} \\
& \mathfrak{h}_{k,4}(x) = \frac{-b(b^{2}r_{k}^{4b}+10b r_{k}^{2b}x+x^{2})}{12x(br_{k}^{2b}-x)^{2}} + \frac{br_{k}^{2b}\alpha}{x(br_{k}^{2b}-x)} + \frac{1}{2x}\frac{x b+(x-br_{k}^{2b})\alpha}{br_{k}^{2b}-x-x\log(\frac{br_{k}^{2b}}{x})}. \label{def of hk4}
\end{align}
Let $k \in \{2,4,\ldots,2g\}$. As $n \to + \infty$, we have
\begin{align*}
& \sum_{j=j_{k,-}}^{g_{k,-}-1} \log \bigg\{1- \frac{1}{2}\mathrm{erfc} \Big( -\eta_{j,k}\sqrt{\tfrac{a_{j}}{2}} \Big) \bigg\} = n\sum_{j=j_{k,-}}^{g_{k,-}-1} \mathfrak{g}_{k,1}(j/n) + \log n \sum_{j=j_{k,-}}^{g_{k,-}-1} \mathfrak{g}_{k,2}(j/n) +\sum_{j=j_{k,-}}^{g_{k,-}-1} \mathfrak{g}_{k,3}(j/n) \\
&  + \frac{1}{n}\sum_{j=j_{k,-}}^{g_{k,-}-1} \mathfrak{g}_{k,4}(j/n)+ \frac{1}{n^{2}}\sum_{j=j_{k,-}}^{g_{k,-}-1} \mathfrak{g}_{k,5}(j/n)+ \frac{1}{n^{3}}\sum_{j=j_{k,-}}^{g_{k,-}-1} \mathfrak{g}_{k,6}(j/n) + \bigO\bigg(\frac{\sqrt{n}}{M^{7}}\bigg), \\
& \sum_{j=j_{k,-}}^{g_{k,-}-1} \log \Bigg\{ 1+\frac{R_{a_{j}}(\eta_{j,k})}{1- \frac{1}{2}\mathrm{erfc} \big( -\eta_{j,k}\sqrt{\tfrac{a_{j}}{2}} \big)} \Bigg\} = \sum_{j=j_{k,-}}^{g_{k,-}-1} \mathfrak{h}_{k,3}(j/n) + \frac{1}{n}\sum_{j=j_{k,-}}^{g_{k,-}-1} \mathfrak{h}_{k,4}(j/n)+ \bigO(M^{-2}),
\end{align*}
where the functions $\mathfrak{g}_{k,1},\ldots,\mathfrak{g}_{k,6},\mathfrak{h}_{k,3}$ and $\mathfrak{h}_{k,4}$ are as in \eqref{def of gk1}-\eqref{def of hk4}.
\end{lemma}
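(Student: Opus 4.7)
The plan is to combine Lemma~\ref{lemma: uniform} with the large-argument asymptotics \eqref{large y asymp of erfc} of $\mathrm{erfc}$ in order to expand each summand as a power series in $1/n$ whose coefficients depend only on $x:=j/n$. I will focus on the case $k$ odd; the case $k$ even follows symmetrically via $\mathrm{erfc}(-y)=2-\mathrm{erfc}(y)$, which turns $1-\tfrac12\mathrm{erfc}(-\eta\sqrt{a/2})$ into $\tfrac12\mathrm{erfc}(\eta\sqrt{a/2})$ with $\eta>0$. Since $\lambda_{j,k}\in I_1=[1-\epsilon,1-M/\sqrt n)$ for $j\in\{g_{k,+}+1,\ldots,j_{k,+}\}$, one has $\eta_{j,k}<0$ and $y_j:=-\eta_{j,k}\sqrt{a_j/2}\ge cM$ uniformly for some $c>0$, so \eqref{large y asymp of erfc} is valid term by term.

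For the first sum, I would apply \eqref{large y asymp of erfc} and expand $\log(1+z)$ in the resulting series to get
\[
\log \tfrac12\mathrm{erfc}(y_j) = -y_j^2 - \log(2\sqrt\pi) - \tfrac12\log y_j^2 - \tfrac{1}{2y_j^2} + \tfrac{5}{8y_j^4} - \tfrac{37}{24y_j^6} + \bigO(y_j^{-8}).
\]
Using the exact identity $y_j^2=\tfrac{j+\alpha}{b}(\lambda_{j,k}-1-\log\lambda_{j,k})$ with $\lambda_{j,k}=br_k^{2b}/(x+\alpha/n)$, a Taylor expansion in $\alpha/n$ gives
\[
y_j^2 = -n\,\mathfrak g_{k,1}(x) - \tfrac{\alpha}{b}\log(br_k^{2b}/x) + \tfrac{\alpha^2}{2bnx} - \tfrac{\alpha^3}{6bn^2x^2} + \bigO(n^{-3}),
\]
and analogous expansions hold for $\log y_j^2$ and for $y_j^{-2\ell}$, $\ell=1,2,3$. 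Collecting by power of $n$ identifies the coefficients with $\mathfrak g_{k,1}(x),\ldots,\mathfrak g_{k,6}(x)$ by direct algebra. For the error, I would set $u_j:=\sqrt n|\lambda_{j,k}-1|\in[M,\sqrt n\epsilon]$, noting $y_j^2\sim \tfrac{r_k^{2b}}{2}u_j^2$ near $\lambda=1$ and that consecutive $u_j$'s are spaced by $\sim(br_k^{2b}\sqrt n)^{-1}$. A Riemann-sum estimate then gives
\[
\sum_{j=g_{k,+}+1}^{j_{k,+}} y_j^{-8}\sim br_k^{2b}\sqrt n\int_M^{\sqrt n\epsilon} u^{-8}\,du = \bigO(\sqrt n/M^7),
\]
which dominates the remaining error contributions and matches the claim.

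For the second sum, I would expand $R_{a_j}(\eta_{j,k})$ via Lemma~\ref{lemma: uniform} up to the $c_1(\eta)/a$ term and divide by the erfc expansion above. The key algebraic cancellation $1+\eta c_0(\eta)=\eta/(\lambda-1)$ (immediate from $c_0(\eta)=(\lambda-1)^{-1}-\eta^{-1}$) yields
\[
1-\frac{R_{a_j}(\eta_{j,k})}{\tfrac12\mathrm{erfc}(-\eta_{j,k}\sqrt{a_j/2})} = \frac{\eta_{j,k}}{\lambda_{j,k}-1}\Bigl(1+\tfrac{1}{a_j}\bigl[(\lambda_{j,k}-1)c_1(\eta_{j,k})+\eta_{j,k}^{-2}-(\lambda_{j,k}-1)\eta_{j,k}^{-3}\bigr]+\cdots\Bigr).
\]
Taking logarithms, using $|\eta/(\lambda-1)|^2=2(\lambda-1-\log\lambda)/(\lambda-1)^2$ and substituting $\lambda=br_k^{2b}/x$ identifies the leading term with $\mathfrak h_{k,3}(x)$; the next-order term yields $\mathfrak h_{k,4}(x)/n$ after Taylor expanding in $\alpha/n$ and simplifying.

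The hard part will be the error analysis for the second sum. One must verify that $c_0(\eta)$ and $c_1(\eta)$ remain bounded as $\eta\to 0$ (explicit expansion shows $c_0(\eta)\to-\tfrac{1}{3}$, and the subtractions in $c_1(\eta)$ are designed precisely to cancel the $1/\eta^3$ singularity), so that the per-term quadratic remainder from expanding $\log(1+\cdots)$ is $\bigO((a_j|\eta_{j,k}|)^{-2})$ rather than worse. Without this cancellation, the naive bound would sum to $\bigO(\sqrt n/M^3)\gg \bigO(M^{-2})$ and destroy the estimate; with it, the Riemann-sum bound becomes $\sum_j(a_j|\eta_{j,k}|)^{-2}\sim(1/\sqrt n)\int_M^{\sqrt n\epsilon}u^{-2}du=\bigO(1/(M\sqrt n))$, comfortably $\bigO(M^{-2})$ with $M=n^{1/12}$. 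Every other step is elementary Taylor expansion combined with careful bookkeeping of the many power-series contributions.
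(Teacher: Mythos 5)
Your strategy is essentially the same as the paper's: expand $\log\tfrac12\mathrm{erfc}(y_j)$ via \eqref{large y asymp of erfc}, Taylor-expand $y_j^2$ (equivalently the paper's $\mathcal{F}_k(\tilde\alpha;x)$) in $\tilde\alpha=\alpha/n$, collect powers of $n$, and close the argument with Riemann-sum error bounds. The differences are purely organizational: the paper parametrizes the argument of $\mathrm{erfc}$ as $z+\beta_1/z+\beta_3/z^3+\cdots$ with $z=\sqrt n\,\mathcal F_k(0;j/n)$ and feeds this through the compound expansion \eqref{lol45}, whereas you Taylor-expand $y_j^2$ directly; the two are equivalent. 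Your identification of the leading term of \eqref{lol47} with $\mathfrak h_{k,3}$ via the cancellation $1+\eta c_0(\eta)=\eta/(\lambda-1)$ and the $|\eta|/|\lambda-1|$ identity is exactly the mechanism used in the paper (compare \eqref{lol49}--\eqref{lol51}). Your estimate $\sum_j y_j^{-8}=\bigO(\sqrt n/M^7)$ for the first sum is also the paper's.

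However, there is a genuine gap in your error analysis for the second sum. You estimate only the quadratic remainder from $\log(1+\cdot)$, namely $A_1^2/a_j^2=\bigO((a_j|\eta_{j,k}|)^{-2})$, and conclude a total $\bigO(1/(M\sqrt n))$. But the dominant error comes from the \emph{linear} term at order $a_j^{-2}$ that you discard when truncating the $R_{a_j}$-expansion at $c_1/a$. Working it out, the $a^{-2}$ coefficient of $\log\bigl(1-R_a(\eta)/\tfrac12\mathrm{erfc}\bigr)$ after factoring out $\eta/(\lambda-1)$ is
\[
A_2 = (\lambda-1)c_2(\eta) + \frac{(\lambda-1)c_1(\eta)}{\eta^2} - \frac{2(\lambda-1)c_0(\eta)}{\eta^4} - \tfrac12 A_1^2,
\]
where the $-2(\lambda-1)c_0(\eta)/\eta^4$ piece comes from the $\bigO(a^{-2}\eta^{-4})$ correction in the $\mathrm{erfc}^{-1}$ expansion. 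Since $c_0(0)=-\tfrac13\neq 0$ and $\lambda-1\sim\eta$, this term is $\sim \tfrac{2}{3}\eta^{-3}$, i.e.\ a \emph{cube} pole — the boundedness of $c_0,c_1$ that you invoke does not tame it. The per-term error is therefore $\bigO\bigl(a_j^{-2}|\eta_{j,k}|^{-3}\bigr)$, which after the Riemann-sum estimate gives $\bigO(M^{-2})$ (this is precisely the paper's bound $\sum_j n^{-2}\mathcal F_k(0;j/n)^{-3}\le C/M^2$). The lemma's stated error $\bigO(M^{-2})$ is thus correct, but it comes from a term you have not accounted for, and your claimed sharper $\bigO(1/(M\sqrt n))$ is false.
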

\begin{remark}\label{remark: some functions have poles}
Using that $\mathfrak{g}_{4,k}$, $\mathfrak{g}_{5,k}$, $\mathfrak{g}_{6,k}$ and $\mathfrak{h}_{k,4}$ each have a pole at $x=br_{k}^{2b}$, of order $2$, $4$, $6$ and~$1$ respectively, we can easily show that the sums
\begin{align*}
\frac{1}{n}\sum_{j=g_{k,+}+1}^{j_{k,+}} \mathfrak{g}_{k,4}(j/n), \quad  \frac{1}{n^{2}}\sum_{j=g_{k,+}+1}^{j_{k,+}} \mathfrak{g}_{k,5}(j/n), \quad \frac{1}{n^{3}}\sum_{j=g_{k,+}+1}^{j_{k,+}} \mathfrak{g}_{k,6}(j/n), \quad \frac{1}{n}\sum_{j=g_{k,+}+1}^{j_{k,+}} \mathfrak{h}_{k,4}(j/n)
\end{align*}
are, as $n \to + \infty$, of order $\frac{\sqrt{n}}{M}$, $\frac{\sqrt{n}}{M^{3}}$, $\frac{\sqrt{n}}{M^{5}}$ and $\log n$, respectively. Since $M=n^{\frac{1}{12}}$, each of these sums is thus of order greater than $1$.
\end{remark}
\begin{proof}
Let $k \in \{1,3,\ldots,2g-1\}$, and define $\mathcal{F}_{k}(\tilde{\alpha})=\mathcal{F}_{k}(\tilde{\alpha};x)$ by 
\begin{align*}
\mathcal{F}_{k}(\tilde{\alpha}) = \frac{\sqrt{x+\tilde{\alpha}}}{\sqrt{b}}\sqrt{\frac{br_{k}^{2b}}{x+\tilde{\alpha}}-1-\log \bigg( \frac{br_{k}^{2b}}{x+\tilde{\alpha}} \bigg)}.
\end{align*}
By \eqref{lol42} we have $\mathcal{F}_{k}(\frac{\alpha}{n};\frac{j}{n}) = -\frac{\eta_{j,k}\sqrt{a_{j}}}{\sqrt{2}\sqrt{n}}$ for all $j \in \{g_{k,+}+1,\ldots,j_{k,+}\}$. For each $x \in [br_{k}^{2b},br_{k+1}^{2b}]$, using Taylor's theorem, we obtain
\begin{align*}
\mathcal{F}_{k}(\tilde{\alpha};x) = \sum_{\ell=0}^{4}\frac{\mathcal{F}_{k}^{(\ell)}(0;x)}{\ell!}\tilde{\alpha}^{\ell} + \frac{\mathcal{F}_{k}^{(5)}(\xi(\tilde{\alpha};x);x)}{5!}\tilde{\alpha}^{5},
\end{align*}
for some $\xi(\tilde{\alpha};x) \in (0,\tilde{\alpha})$ if $\tilde{\alpha}>0$ and $\xi(\tilde{\alpha};x) \in(\tilde{\alpha},0)$ if $\tilde{\alpha}<0$. The functions $\mathcal{F}_{k}^{(1)},\ldots,\mathcal{F}_{k}^{(5)}$ are explicitly computable, but since their expressions are rather long we do not write them down (we simply mention that $x \mapsto \mathcal{F}_{k}(0;x)$ has a simple zero as $x \searrow br_{k}^{2b}$, while the functions $x \mapsto \mathcal{F}_{k}^{(\ell)}(0;x)$ for $\ell \geq 1$ remain bounded as $x \searrow br_{k}^{2b}$). The function $\mathcal{F}_{k}^{(5)}$ satisfies the following: there exist $C>0$ and $\delta>0$ such that
\begin{align*}
\big|\mathcal{F}_{k}^{(5)}(\xi(\tilde{\alpha};x);x)\big| \leq C, \qquad \mbox{for all } |\tilde{\alpha}| \leq \delta \mbox{ and all } x \in [br_{k}^{2b},br_{k+1}^{2b}].
\end{align*}
We thus have
\begin{align*}
-\frac{\eta_{j,k}\sqrt{a_{j}}}{\sqrt{2}\sqrt{n}} = \sum_{\ell=0}^{4}\frac{\mathcal{F}_{k}^{(\ell)}(0;\frac{j}{n})}{\ell!}\frac{\alpha^{\ell}}{n^{\ell}} + \bigO(n^{-5}), \qquad \mbox{as } n \to + \infty
\end{align*}
uniformly for $j \in \{g_{k,+}+1,\ldots,j_{k,+}\}$. These asymptotics can be rewritten as 
\begin{align}\label{lol43}
& \hspace{-0.3cm} -\hspace{-0.05cm}\frac{\eta_{j,k}\sqrt{a_{j}}}{\sqrt{2}} = \sqrt{n}\mathcal{F}_{k}(0;\tfrac{j}{n}) + \hspace{-0.1cm} \sum_{\ell=1}^{4} \frac{\beta_{2\ell-1}}{(\sqrt{n}\mathcal{F}_{k}(0;\tfrac{j}{n}))^{2\ell-1}} + \bigO(n^{-\frac{9}{2}}), \; \beta_{2\ell-1} := \frac{\mathcal{F}_{k}^{(\ell)}(0;\frac{j}{n})}{\ell!}\alpha^{\ell}\mathcal{F}_{k}(0;\tfrac{j}{n})^{2\ell-1}
\end{align}
as $n \to + \infty$ uniformly for $j \in \{g_{k,+}+1,\ldots,j_{k,+}\}$. Since $x\mapsto \mathcal{F}_{k}(0;x)$ has a simple zero at $x=br_{k}^{2b}$, there exist constants $c_{1},c_{2},c_{1}',c_{2}'>0$ such that 
\begin{align}\label{lol44}
c_{1}'M \leq c_{1} \sqrt{n}\big( \tfrac{j}{n}-br_{k}^{2b} \big) \leq \sqrt{n}\mathcal{F}_{k}(0;\tfrac{j}{n}) \leq c_{2} \sqrt{n}\big( \tfrac{j}{n}-br_{k}^{2b} \big) \leq c_{2}'\sqrt{n}
\end{align}
for all large enough $n$ and all $j \in \{g_{k,+}+1,\ldots,j_{k,+}\}$. On the other hand, using \eqref{large y asymp of erfc} we obtain
\begin{align}
& \log \bigg( \frac{1}{2}\mathrm{erfc}\bigg( z + \frac{\beta_{1}}{z} + \frac{\beta_{3}}{z^{3}}+ \frac{\beta_{5}}{z^{5}}+ \frac{\beta_{7}}{z^{7}} + \frac{\beta_{9}}{z^{9}} \bigg) \bigg) = -z^{2} -\log(z) - \log(2\sqrt{\pi}) - 2\beta_{1} - \frac{\frac{1}{2}+\beta_{1}+\beta_{1}^{2}+2\beta_{3}}{z^{2}} \nonumber \\
& + \frac{\frac{5}{8}+\beta_{1}+\frac{\beta_{1}^{2}}{2}-\beta_{3}-2\beta_{1}\beta_{3}-2\beta_{5}}{z^{4}} + \frac{-\frac{37}{24}+\beta_{1}(-\frac{5}{2}+\beta_{3}-2\beta_{5})-\frac{3\beta_{1}^{2}}{2}-\frac{\beta_{1}^{3}}{3}+\beta_{3}-\beta_{3}^{2}-\beta_{5}-2\beta_{7}}{z^{6}} \nonumber  \\
& + \bigO(z^{-8}), \label{lol45}
\end{align}
as $z \to +\infty$ uniformly for $\beta_{1},\beta_{3},\ldots,\beta_{9}$ in compact subsets of $\mathbb{R}$. Combining \eqref{lol43}, \eqref{lol44} and \eqref{lol45} (with $z=\sqrt{n}\mathcal{F}_{k}(0;\tfrac{j}{n})$), and using that
\begin{align*}
\sum_{j=g_{k,+}+1}^{j_{k,+}} \frac{1}{(\sqrt{n}\mathcal{F}_{k}(0;\tfrac{j}{n}))^{8}} = \bigO\bigg(\frac{\sqrt{n}}{M^{7}}\bigg), \qquad \mbox{as } n \to + \infty,
\end{align*}
we find \eqref{lol46} after a long but straightforward computation. To prove \eqref{lol47}, we first use \eqref{lol48} to find
\begin{align}
& \frac{R_{a_{j}}(\eta_{j,k})}{\frac{1}{2}\mathrm{erfc} \big( -\eta_{j,k}\sqrt{\frac{a_{j}}{2}} \big)} \nonumber \\
& =  \frac{\exp\big(-\frac{a_{j}\eta_{j,k}^{2}}{2}\big)}{\sqrt{n}\mathcal{F}_{k}(0;\tfrac{j}{n})\frac{1}{2}\mathrm{erfc} \big( -\eta_{j,k}\sqrt{\frac{a_{j}}{2}} \big)} \frac{\sqrt{b}\mathcal{F}_{k}(0;\tfrac{j}{n})}{\sqrt{2\pi}\sqrt{j/n + \frac{\alpha}{n}}} \Bigg\{ c_{0}( \eta_{j,k} ) + \frac{b \; c_{1}( \eta_{j,k} )}{n(j/n+\frac{\alpha}{n})}  + \bigO(n^{-2}) \Bigg\}  \label{lol49}
\end{align}
as $n \to + \infty$ uniformly for $j \in \{g_{k,+}+1,\ldots,j_{k,+}\}$. Using again \eqref{large y asymp of erfc}, we obtain
\begin{align}\label{lol50}
\frac{\exp \big( -\big( z + \frac{\beta_{1}}{z} + \frac{\beta_{3}}{z^{3}}+ \frac{\beta_{5}}{z^{5}}+ \frac{\beta_{7}}{z^{7}} + \frac{\beta_{9}}{z^{9}} \big)^{2} \big)}{\frac{z}{2}\mathrm{erfc}\big( z + \frac{\beta_{1}}{z} + \frac{\beta_{3}}{z^{3}}+ \frac{\beta_{5}}{z^{5}}+ \frac{\beta_{7}}{z^{7}} + \frac{\beta_{9}}{z^{9}} \big)} = 2\sqrt{\pi} + \frac{\sqrt{\pi}(1+2\beta_{1})}{z^{2}} + \bigO(z^{-4}), \qquad \mbox{as } z \to + \infty
\end{align}
uniformly for $\beta_{1},\beta_{3},\ldots,\beta_{9}$ in compact subsets of $\mathbb{R}$. The first ratio on the right-hand side of \eqref{lol49} can then be expanded by combining \eqref{lol50} (with $z=\sqrt{n}\mathcal{F}_{k}(0;\tfrac{j}{n})$) and \eqref{lol43}. For the second part in \eqref{lol49}, since the coefficients $c_{0}(\eta)$ and $c_{1}(\eta)$ are analytic for $\eta \in \mathbb{R}$ (the singularities at $\eta=0$ in \eqref{def of c0 and c1} are removable), we have
\begin{align}\label{lol51}
\frac{\sqrt{b}\mathcal{F}_{k}(0;\tfrac{j}{n})}{\sqrt{2\pi}\sqrt{j/n + \frac{\alpha}{n}}} \Bigg\{ c_{0}( \eta_{j,k} ) + \frac{b \; c_{1}( \eta_{j,k} )}{n(j/n+\frac{\alpha}{n})} \Bigg\} = \mathcal{F}_{k}(0;\tfrac{j}{n}) \big( \mathcal{G}_{0}(\tfrac{j}{n})+\tfrac{1}{n}\mathcal{G}_{1}(\tfrac{j}{n})+\bigO(n^{-2})  \big)
\end{align}
for some explicit $\mathcal{G}_{0}$, $\mathcal{G}_{1}$ (which we do not write down) such that $\mathcal{G}_{0}(\tfrac{j}{n})$ and $\mathcal{G}_{0}(\tfrac{j}{n})$ remain of order~$1$ as $n \to + \infty$ uniformly for $j \in \{g_{k,+}+1,\ldots,j_{k,+}\}$. After a computation using \eqref{lol49}, \eqref{lol50} and \eqref{lol51}, we find
\begin{align*}
\sum_{j=g_{k,+}+1}^{j_{k,+}} \log \Bigg\{ 1-\frac{R_{a_{j}}(\eta_{j,k})}{\frac{1}{2}\mathrm{erfc} \big( -\eta_{j,k}\sqrt{\frac{a_{j}}{2}} \big)} \Bigg\} = \sum_{j=g_{k,+}+1}^{j_{k,+}} \bigg( \mathfrak{h}_{k,3}(j/n) + \frac{1}{n} \mathfrak{h}_{k,4}(j/n)+ \bigO\bigg(\frac{1}{n^{2}\mathcal{F}(0;\frac{j}{n})^{3}}\bigg) \bigg),
\end{align*}
as $n \to + \infty$. Since $x \mapsto \mathcal{F}_{k}(0;x)$ has a simple zero at $x=br_{k}^{2b}$, we have
\begin{align*}
\sum_{j=g_{k,+}+1}^{j_{k,+}} \frac{1}{n^{2}\mathcal{F}_{k}(0;\frac{j}{n})^{3}} \leq  \frac{C}{M^{2}}, \qquad \mbox{for a certain } C>0 \mbox{ and for all sufficiently large }n,
\end{align*}
and \eqref{lol47} follows. The proof for $k \in \{2,4,\ldots,2g\}$ is similar and we omit it.
\end{proof}
By applying Lemma \ref{lemma:Riemann sum NEW} with $f$ replaced by $\mathfrak{g}_{k,1},\ldots,\mathfrak{g}_{k,6},\mathfrak{h}_{k,3}$ and $\mathfrak{h}_{4,k}$, we can obtain the large $n$ asymptotics of the various sums appearing in the above Lemma \ref{lemma:preliminary expansions}. Note that, as already mentioned in Remark \ref{remark: some functions have poles}, the functions $\mathfrak{g}_{4,k}$, $\mathfrak{g}_{5,k}$, $\mathfrak{g}_{6,k}$ and $\mathfrak{h}_{k,4}$ have poles at $x=br_{k}^{2b}$. Nevertheless, we can still apply Lemma \ref{lemma:Riemann sum NEW} to obtain precise large $n$ asymptotics for 
\begin{align*}
\frac{1}{n}\sum_{j=g_{k,+}+1}^{j_{k,+}} \mathfrak{g}_{k,4}(j/n), \quad  \frac{1}{n^{2}}\sum_{j=g_{k,+}+1}^{j_{k,+}} \mathfrak{g}_{k,5}(j/n), \quad \frac{1}{n^{3}}\sum_{j=g_{k,+}+1}^{j_{k,+}} \mathfrak{g}_{k,6}(j/n), \quad \frac{1}{n}\sum_{j=g_{k,+}+1}^{j_{k,+}} \mathfrak{h}_{k,4}(j/n),
\end{align*}
see in particular Remark \ref{remark:we can apply the lemma for functions with a pole}. 
\begin{lemma}\label{lemma: expansion of various sums}
Let $k \in \{1,3,\ldots,2g-1\}$. As $n \to + \infty$, we have
\begin{align}
& n\sum_{j=g_{k,+}+1}^{j_{k,+}} \mathfrak{g}_{k,1}(j/n) = \frac{br_{k}^{4b}(2\epsilon + \epsilon^{2} + 2 \log(1-\epsilon))}{4(1-\epsilon)^{2}}n^{2} + \frac{(1-2\alpha - 2 \theta_{k,+}^{(n,\epsilon)})(\epsilon + \log(1-\epsilon))r_{k}^{2b}}{2(1-\epsilon)}n \nonumber \\
& + \frac{br_{k}^{4b}}{6}M^{3}\sqrt{n}+ \frac{11br_{k}^{4b}}{24}M^{4} + \frac{r_{k}^{2b}(1-2\alpha-2\theta_{k,+}^{(n,M)})}{4}M^{2} \nonumber \\
& + \frac{1-6(\alpha+\theta_{k,+}^{(n,\epsilon)})+6(\alpha+\theta_{k,+}^{(n,\epsilon)})^{2}}{12b}\log(1-\epsilon) + \bigO \bigg( \frac{M^{5}}{\sqrt{n}} \bigg), \label{lol56} \\
& \log n \sum_{j=g_{k,+}+1}^{j_{k,+}} \mathfrak{g}_{k,2}(j/n) = - \frac{br_{k}^{2b}\epsilon}{2(1-\epsilon)}n \log n + \frac{br_{k}^{2b}}{2}M \sqrt{n} \log n  \nonumber \\
& + \frac{br_{k}^{2b}M^{2}-\theta_{k,+}^{(n,M)}+\theta_{k,+}^{(n,\epsilon)}}{2}\log n + \bigO \bigg( \frac{M^{3}\log n}{\sqrt{n}} \bigg), \label{lol57} \\
& \sum_{j=g_{k,+}+1}^{j_{k,+}} \mathfrak{g}_{k,3}(j/n) = n \int_{\frac{g_{k,+}+1}{n}}^{\frac{br_{k}^{2b}}{1-\epsilon}}\mathfrak{g}_{k,3}(x)dx + \frac{1}{4}\log n - \frac{1}{2}\log M - \frac{1}{2}\log(r_{k}^{b}\sqrt{2\pi}) \nonumber \\
& + \frac{1-2\alpha-2\theta_{k,+}^{(n,\epsilon)}}{2}\bigg( \frac{\alpha}{b}\log(1-\epsilon) -  \log(2\sqrt{\pi}r_{k}^{b}) - \frac{1}{2} \log \bigg(\frac{-\epsilon - \log(1-\epsilon)}{1-\epsilon} \bigg) \bigg) + \bigO \bigg( \frac{M}{\sqrt{n}} \bigg), \label{lol58} \\
& \frac{1}{n}\sum_{j=g_{k,+}+1}^{j_{k,+}} \mathfrak{g}_{k,4}(j/n) = \int_{\frac{g_{k,+}+1}{n}}^{\frac{br_{k}^{2b}}{1-\epsilon}}\mathfrak{g}_{k,4}(x)dx + \bigO( M^{-2} ), \label{lol59} \\
& \frac{1}{n^{2}}\sum_{j=g_{k,+}+1}^{j_{k,+}} \mathfrak{g}_{k,5}(j/n) = \frac{5b\sqrt{n}}{6r_{k}^{2b}M^{3}}+\bigO(M^{-2}), \label{lol60} \\
& \frac{1}{n^{3}}\sum_{j=g_{k,+}+1}^{j_{k,+}} \mathfrak{g}_{k,6}(j/n) = \frac{-37b \sqrt{n}}{15 r_{k}^{4b}M^{5}} + \bigO(M^{-4}), \label{lol61} \\
& \sum_{j=g_{k,+}+1}^{j_{k,+}} \mathfrak{h}_{k,3}(j/n) = n \int_{\frac{g_{k,+}+1}{n}}^{\frac{br_{k}^{2b}}{1-\epsilon}}\mathfrak{h}_{k,3}(x)dx + \bigg( \frac{1}{2}-\alpha-\theta_{k,+}^{(n,\epsilon)} \bigg) \log \bigg( \frac{\sqrt{2}\sqrt{-\epsilon-\log(1-\epsilon)}}{\epsilon} \bigg) + \bigO\bigg( \frac{M}{\sqrt{n}} \bigg), \label{lol62} \\
& \frac{1}{n}\sum_{j=g_{k,+}+1}^{j_{k,+}} \mathfrak{h}_{k,4}(j/n) = \int_{\frac{g_{k,+}+1}{n}}^{\frac{br_{k}^{2b}}{1-\epsilon}}\mathfrak{h}_{k,4}(x)dx + \bigg( \frac{1}{M \sqrt{n}} \bigg). \label{lol63}
\end{align}
Let $k \in \{2,4,\ldots,2g\}$. As $n \to + \infty$, we have
\begin{align*}
& n\sum_{j=j_{k,-}}^{g_{k,-}-1} \mathfrak{g}_{k,1}(j/n) = \frac{br_{k}^{4b}(2\epsilon - \epsilon^{2} - 2 \log(1+\epsilon))}{4(1+\epsilon)^{2}}n^{2} - \frac{(1+2\alpha - 2 \theta_{k,-}^{(n,\epsilon)})(\epsilon - \log(1+\epsilon))r_{k}^{2b}}{2(1+\epsilon)}n \\
& + \frac{br_{k}^{4b}}{6}M^{3}\sqrt{n}- \frac{11br_{k}^{4b}}{24}M^{4} + \frac{r_{k}^{2b}(1+2\alpha-2\theta_{k,-}^{(n,M)})}{4}M^{2} \\
& + \frac{-1+6(\theta_{k,-}^{(n,\epsilon)}-\alpha)-6(\theta_{k,-}^{(n,\epsilon)}-\alpha)^{2}}{12b}\log(1+\epsilon) + \bigO \bigg( \frac{M^{5}}{\sqrt{n}} \bigg), \\
& \log n \sum_{j=j_{k,-}}^{g_{k,-}-1} \mathfrak{g}_{k,2}(j/n) = - \frac{br_{k}^{2b}\epsilon}{2(1+\epsilon)}n \log n + \frac{br_{k}^{2b}}{2}M \sqrt{n} \log n \\
& + \frac{-br_{k}^{2b}M^{2}-\theta_{k,-}^{(n,M)}+\theta_{k,-}^{(n,\epsilon)}}{2}\log n + \bigO \bigg( \frac{M^{3}\log n}{\sqrt{n}} \bigg), \\
& \sum_{j=j_{k,-}}^{g_{k,-}-1} \mathfrak{g}_{k,3}(j/n) = n \int_{\frac{br_{k}^{2b}}{1+\epsilon}}^{\frac{g_{k,-}-1}{n}}\mathfrak{g}_{k,3}(x)dx + \frac{1}{4}\log n - \frac{1}{2}\log M - \frac{1}{2}\log(r_{k}^{b}\sqrt{2\pi}) \\
& + \frac{1+2\alpha-2\theta_{k,-}^{(n,\epsilon)}}{2}\bigg( \frac{\alpha}{b}\log(1+\epsilon) -  \log(2\sqrt{\pi}r_{k}^{b}) - \frac{1}{2} \log \bigg(\frac{\epsilon - \log(1+\epsilon)}{1+\epsilon} \bigg) \bigg) + \bigO \bigg( \frac{M}{\sqrt{n}} \bigg), \\
& \frac{1}{n}\sum_{j=j_{k,-}}^{g_{k,-}-1} \mathfrak{g}_{k,4}(j/n) = \int_{\frac{br_{k}^{2b}}{1+\epsilon}}^{\frac{g_{k,-}-1}{n}}\mathfrak{g}_{k,4}(x)dx + \bigO ( M^{-2} ), \\
& \frac{1}{n^{2}}\sum_{j=j_{k,-}}^{g_{k,-}-1} \mathfrak{g}_{k,5}(j/n) = \frac{5b\sqrt{n}}{6r_{k}^{2b}M^{3}}+\bigO(M^{-2}), \\
& \frac{1}{n^{3}}\sum_{j=j_{k,-}}^{g_{k,-}-1} \mathfrak{g}_{k,6}(j/n) = \frac{-37b \sqrt{n}}{15 r_{k}^{4b}M^{5}} + \bigO(M^{-4}), \\
& \sum_{j=j_{k,-}}^{g_{k,-}-1} \mathfrak{h}_{k,3}(j/n) = n \int_{\frac{br_{k}^{2b}}{1+\epsilon}}^{\frac{g_{k,-}-1}{n}}\mathfrak{h}_{k,3}(x)dx + \bigg( \frac{1}{2}+\alpha-\theta_{k,-}^{(n,\epsilon)} \bigg) \log \bigg( \frac{\sqrt{2}\sqrt{\epsilon-\log(1+\epsilon)}}{\epsilon} \bigg) + \bigO\bigg( \frac{M}{\sqrt{n}} \bigg), \\
& \frac{1}{n}\sum_{j=j_{k,-}}^{g_{k,-}-1} \mathfrak{h}_{k,4}(j/n) = \int_{\frac{br_{k}^{2b}}{1+\epsilon}}^{\frac{g_{k,-}-1}{n}}\mathfrak{h}_{k,4}(x)dx + \bigg( \frac{1}{M \sqrt{n}} \bigg).
\end{align*}
\end{lemma}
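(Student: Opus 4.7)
The plan is to apply Lemma~\ref{lemma:Riemann sum NEW} to each of the eight sums, with $a_n = g_{k,+}+1$, $b_n = j_{k,+}$ (for $k$ odd; the $k$ even case is analogous). Writing
\begin{align*}
g_{k,+}+1 = \frac{br_k^{2b}}{1-M/\sqrt{n}}\, n + \bigl(1-\alpha-\theta_{k,+}^{(n,M)}\bigr), \qquad j_{k,+} = \frac{br_k^{2b}}{1-\epsilon}\,n + \bigl(-\alpha-\theta_{k,+}^{(n,\epsilon)}\bigr),
\end{align*}
puts us in the situation of Remark~\ref{remark:we can apply the lemma for functions with a pole}: the left endpoint $A_n := br_k^{2b}/(1-M/\sqrt{n})$ varies with $n$ and tends to $br_k^{2b}$, which is exactly the point where all the integrands are either smooth, mildly singular, or have a pole of known order. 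Crucially, $A_n > br_k^{2b}$ for all large $n$, so the integrands are $C^4$ on $[A_n, B]$ with $B := br_k^{2b}/(1-\epsilon)$.

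For $\mathfrak{g}_{k,1},\mathfrak{g}_{k,2},\mathfrak{g}_{k,3}$ and $\mathfrak{h}_{k,3}$ the behavior near $br_k^{2b}$ is at worst logarithmic. Direct application of Lemma~\ref{lemma:Riemann sum NEW} produces a main integral $n\int_{A_n}^B f(x)dx$ plus explicit boundary corrections in $f(A_n),f(B),f'(A_n),f'(B),f''(A_n),f''(B)$. The integral is split as $\int_{br_k^{2b}}^B f - \int_{br_k^{2b}}^{A_n} f$, and the second piece is expanded in powers of $A_n - br_k^{2b} = br_k^{2b}(M/\sqrt{n}) + br_k^{2b}(M/\sqrt{n})^2 + \ldots$ using the Taylor expansion of $f$ at $br_k^{2b}$; for $\mathfrak{g}_{k,1}$, for instance, the double zero $\mathfrak{g}_{k,1}(x) = -\frac{(x-br_k^{2b})^2}{2b^2 r_k^{2b}} + \bigO((x-br_k^{2b})^3)$ produces exactly the $\sqrt{n}\,M^3$ and $M^4$ terms in \eqref{lol56}. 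Similarly the boundary quantities $f^{(\ell)}(A_n)$ are expanded in $M/\sqrt{n}$, and the logarithmic singularities of $\mathfrak{g}_{k,3}$ and $\mathfrak{h}_{k,3}$ generate the $\log M$ contributions in \eqref{lol58} and \eqref{lol62}. For the constant $\mathfrak{g}_{k,2}\equiv -\tfrac12$, one simply expands $-\tfrac12(b_n-a_n)$ in $M/\sqrt{n}$ and $\epsilon$.

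For the four singular sums involving $\mathfrak{g}_{k,4},\mathfrak{g}_{k,5},\mathfrak{g}_{k,6},\mathfrak{h}_{k,4}$, which have poles of orders $2$, $4$, $6$ and $1$ at $br_k^{2b}$, Lemma~\ref{lemma:Riemann sum NEW} still applies since the pole is strictly to the left of $A_n$. The dominant contribution comes from the main term $n\int_{A_n}^B f$, divided by the appropriate power of $n$: the antiderivative evaluated at $A_n$ produces $(A_n - br_k^{2b})^{1-p} \sim (br_k^{2b})^{1-p}(M/\sqrt{n})^{1-p}$ where $p$ is the pole order. For $\mathfrak{g}_{k,5}$ and $\mathfrak{g}_{k,6}$ this yields the claimed main terms $\tfrac{5b\sqrt{n}}{6r_k^{2b}M^3}$ and $\tfrac{-37b\sqrt{n}}{15r_k^{4b}M^5}$, while for $\mathfrak{g}_{k,4}$ and $\mathfrak{h}_{k,4}$ one merely recovers $\int_{A_n}^B f$ with no additional large boundary contribution. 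The error terms from Lemma~\ref{lemma:Riemann sum NEW}, driven by $\mathfrak{m}_{A,n}(f''')\asymp(M/\sqrt{n})^{-(p+3)}$ and $\sum_j \mathfrak{m}_{j,n}(f'''')$, collapse after a short computation into $\bigO(M^{-2})$ or $\bigO(M^{-4})$ because each factor of $n^{-1}$ in the Riemann sum remainder compensates a factor of $\sqrt{n}/M$ coming from the near-pole behavior.

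The main obstacle is the careful bookkeeping of the many subleading contributions: each sum produces several terms of the form $n^{\rho}M^{\sigma}$ or $n^{\rho}\log M$ arising from the Taylor expansions of $A_n$, the boundary values $f^{(\ell)}(A_n)$, and the fractional parts $\theta_{k,\pm}^{(n,M)}, \theta_{k,\pm}^{(n,\epsilon)}$, and one must verify that these match the stated formulas term by term. The most technical point is confirming for $\mathfrak{g}_{k,5},\mathfrak{g}_{k,6},\mathfrak{h}_{k,4}$ that the \emph{a priori} large error contributions from Lemma~\ref{lemma:Riemann sum NEW} genuinely collapse into $\bigO(M^{-2})$ or $\bigO(M^{-4})$ after all cancellations. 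The argument for $k\in\{2,4,\ldots,2g\}$ is completely parallel upon replacing $1-M/\sqrt{n},\,1-\epsilon$ by $1+M/\sqrt{n},\,1+\epsilon$ and $\theta_{k,+}^{(n,*)}$ by $\theta_{k,-}^{(n,*)}$, with the corresponding sign changes in the boundary expansions.
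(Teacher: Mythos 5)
Your proposal is correct and follows essentially the same route as the paper: apply Lemma~\ref{lemma:Riemann sum NEW}, Taylor-expand near the moving left endpoint, and separately track the boundary and integral contributions depending on whether the integrand is smooth, logarithmically singular, or has a pole at $br_k^{2b}$. The only difference from the paper's implementation is cosmetic: you take $A=br_k^{2b}/(1-M/\sqrt{n})$ and absorb the fractional parts into $a_0=1-\alpha-\theta_{k,+}^{(n,M)}$, while the paper instead sets $A=(g_{k,+}+1)/n$ and $a_0=0$, letting the $\theta$-dependence enter through the Taylor expansion of $A$; both lead to the same boundary corrections and the same bookkeeping. Your identification of the sources of the various terms is accurate (e.g.\ the $M^3\sqrt{n}$ and $M^4$ terms in \eqref{lol56} come from the third- and fourth-power parts of $n^2\int_{br_k^{2b}}^{A}\mathfrak{g}_{k,1}$, with the cubic Taylor coefficient of $\mathfrak{g}_{k,1}$ supplying the $-\frac{1}{24}br_k^{4b}M^4$ correction that turns $\frac{1}{2}$ into $\frac{11}{24}$), and your explanation of why the \emph{a priori} large Riemann-sum error collapses to $\bigO(M^{-2})$ or $\bigO(M^{-4})$ for the functions with poles matches the paper's estimates of $f^{(\ell)}(A)/n^{\ell}$ and $\mathfrak{m}_{j,n}(f'''')/n^4$ near the pole.
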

\begin{proof}
We start with the proof of \eqref{lol56}. Using Lemma \ref{lemma:Riemann sum NEW} with $f$ replaced by $\mathfrak{g}_{k,1}$ and with 
\begin{align}\label{A a0 B b0 in proof}
A = \frac{g_{k,+}+1}{n} = \frac{br_{k}^{2b}}{1-\frac{M}{\sqrt{n}}} + \frac{1-\alpha-\theta_{k,+}^{(n,M)}}{n}, \quad a_{0}=0, \quad B=\frac{br_{k}^{2b}}{1-\epsilon}, \quad b_{0} = -\alpha-\theta_{k,+}^{(n,\epsilon)},
\end{align}
we obtain
\begin{align*}
 n \sum_{j=g_{k,+}+1}^{j_{k,+}} \mathfrak{g}_{k,1}(j/n) & = n^{2} \int_{A}^{B} \mathfrak{g}_{k,1}(x)dx + n\frac{\mathfrak{g}_{k,1}(A)+(1+2b_{0})\mathfrak{g}_{k,1}(B)}{2}  \nonumber \\
& + \frac{-\mathfrak{g}_{k,1}'(A)+(1+6b_{0}+6b_{0}^{2})\mathfrak{g}_{k,1}'(B)}{12}+ \bigO(n^{-1}).
\end{align*}
Since $\mathfrak{g}_{k,1}$ and all its derivatives remain bounded at $br_{k}^{2}$, $\int_{A}^{B} \mathfrak{g}_{k,1}(x)dx$, $\mathfrak{g}_{k,1}(A)$ and $\mathfrak{g}_{k,1}'(A)$ can be expanded by standard Taylor series. Using then the primitive $\int \mathfrak{g}_{k,1}(x)dx = \frac{3x^{2}}{4b}+\frac{x^{2}}{2b}\log(\frac{br_{k}^{2b}}{x})-xr_{k}^{2b}$, we find \eqref{lol56}. Since $\mathfrak{g}_{k,2}(x)=-\frac{1}{2}$, we have
\begin{align*}
\log n \sum_{j=g_{k,+}+1}^{j_{k,+}} \mathfrak{g}_{k,2}(j/n) = \frac{n \log n}{2} \bigg( \frac{br_{k}^{2b}}{1-\frac{M}{\sqrt{n}}} - \frac{br_{k}^{2b}}{1-\epsilon} \bigg) + \frac{\theta_{k,+}^{(n,\epsilon)}-\theta_{k,+}^{(n,M)}}{2} \log n,
\end{align*}
and a direct expansion yields \eqref{lol57}. We next turn to the proof of \eqref{lol58}. Using Lemma \ref{lemma:Riemann sum NEW} with $f$ replaced by $\mathfrak{g}_{k,3}$ and with $A,a_{0},B,b_{0}$ as in \eqref{A a0 B b0 in proof}, we get
\begin{align*}
\sum_{j=g_{k,+}+1}^{j_{k,+}} \mathfrak{g}_{k,3}(j/n) & = n \int_{A}^{B} \mathfrak{g}_{k,3}(x)dx + \frac{\mathfrak{g}_{k,3}(A)+(1+2b_{0})\mathfrak{g}_{k,3}(B)}{2} + \bigO\bigg(\frac{\log n}{n}\bigg),
\end{align*}
where we have used that $\frac{\mathfrak{g}_{k,3}'(A)}{n}=\bigO(\frac{\log n}{n})$ to estimate the error term. After expanding $\mathfrak{g}_{k,3}(A)$, we find \eqref{lol58}. The expansion \eqref{lol59} direct follows from Lemma \ref{lemma:Riemann sum NEW} with $f$ replaced by $\mathfrak{g}_{k,4}$ and with $A,a_{0},B,b_{0}$ as in \eqref{A a0 B b0 in proof}, and from the fact that $\frac{\mathfrak{g}_{k,4}(A)}{n} \lesssim \frac{1}{n(A-br_{k}^{2b})^{2}}=\bigO(\frac{1}{M^{2}})$. Next, applying Lemma \ref{lemma:Riemann sum NEW} with $f$ replaced by $\mathfrak{g}_{k,5}$ and with $A,a_{0},B,b_{0}$ as in \eqref{A a0 B b0 in proof}, we get
\begin{align*}
\frac{1}{n^{2}}\sum_{j=g_{k,+}+1}^{j_{k,+}} \mathfrak{g}_{k,5}(j/n) = \frac{1}{n} \int_{A}^{B} \mathfrak{g}_{k,5}(x)dx + \bigO\bigg(\frac{1}{M^{4}}\bigg), \qquad \mbox{as } n \to + \infty,
\end{align*}
where we have used $\frac{\mathfrak{g}_{k,5}(A)}{n^{2}} \lesssim \frac{1}{n^{2}(A-br_{k}^{2b})^{4}}=\bigO(\frac{1}{M^{4}})$ to estimate the error term. Since
\begin{align*}
\mathfrak{g}_{k,5}(x) = \frac{5 b^{4}r_{k}^{4b}}{2(x-br_{k}^{2b})^{4}} + \bigO\bigg( \frac{1}{(x-br_{k}^{2b})^{3}} \bigg), \qquad \mbox{as } x \to br_{k}^{2b},
\end{align*}
we have
\begin{align*}
\frac{1}{n} \int_{A}^{B} \mathfrak{g}_{k,5}(x)dx = \bigg[ \frac{5b^{4}r_{k}^{4b}}{6n(br_{k}^{2b}-x)^{3}} \bigg]_{A}^{B} + \bigO \bigg( \frac{1}{n(A-br_{k}^{2b})^{2}} \bigg) = \frac{5b \sqrt{n}}{6 r_{k}^{2b}M^{3}} + \bigO(M^{-2}), \qquad \mbox{as } n \to +\infty,
\end{align*}
and \eqref{lol60} follows. Similarly, using Lemma \ref{lemma:Riemann sum NEW} with $f$ replaced by $\mathfrak{g}_{k,6}$ and with $A,a_{0},B,b_{0}$ as in \eqref{A a0 B b0 in proof}, we get
\begin{align*}
\frac{1}{n^{3}}\sum_{j=g_{k,+}+1}^{j_{k,+}} \mathfrak{g}_{k,6}(j/n) = \frac{1}{n^{2}} \int_{A}^{B} \mathfrak{g}_{k,6}(x)dx + \bigO\bigg(\frac{1}{M^{6}}\bigg), \qquad \mbox{as } n \to + \infty.
\end{align*}
The expansion
\begin{align*}
\mathfrak{g}_{k,6}(x) = -\frac{37 b^{6}r_{k}^{6b}}{3(x-br_{k}^{2b})^{6}} + \bigO \big( (x-br_{k}^{2b})^{-5} \big), \qquad \mbox{as } x \to br_{k}^{2b},
\end{align*}
implies that 
\begin{align*}
\frac{1}{n^{2}} \int_{A}^{B} \mathfrak{g}_{k,6}(x)dx = -\frac{37b\sqrt{n}}{15 r_{k}^{4b}M^{5}} + \bigO(M^{-4}), \qquad \mbox{as } n \to +\infty,
\end{align*}
and \eqref{lol61} follows. Next, using Lemma \ref{lemma:Riemann sum NEW} with $A,a_{0},B,b_{0}$ as in \eqref{A a0 B b0 in proof} with $f$ replaced by $\mathfrak{h}_{k,3}$, we get
\begin{align*}
\sum_{j=g_{k,+}+1}^{j_{k,+}} \mathfrak{h}_{k,3}(j/n) = n \int_{A}^{B}\mathfrak{h}_{k,3}(x)dx + \frac{\mathfrak{h}_{k,3}(A)+(1+2b_{0})\mathfrak{h}_{k,3}(B)}{2} + \bigO(n^{-1}), \qquad \mbox{as } n \to +\infty,
\end{align*}
where we have used that $\mathfrak{h}_{k,3}'(A)=\bigO(1)$. After expanding $\mathfrak{h}_{k,3}(A)$ as $n \to \infty$ we get \eqref{lol62}. Since $\mathfrak{h}_{k,4}$ has a simple pole at $br_{k}^{2b}$, we have $\frac{\mathfrak{h}_{k,4}(A)}{n} = \bigO(\frac{1}{M\sqrt{n}})$ as $n \to +\infty$. Formula \eqref{lol63} now directly follows from Lemma \ref{lemma:Riemann sum NEW} with $A,a_{0},B,b_{0}$ as in \eqref{A a0 B b0 in proof} with $f$ replaced by $\mathfrak{h}_{k,4}$. This finishes the proof of the formulas corresponding to $k$ odd. The formulas corresponding to $k$ even can be proved in a similar way.
\end{proof}
We have not been able to find explicit primitives for any of the functions $\mathfrak{g}_{k,3}$, $\mathfrak{g}_{k,4}$, $\mathfrak{h}_{k,3}$ and $\mathfrak{h}_{k,4}$. However, quite remarkably the functions $\mathfrak{g}_{k,3}+\mathfrak{h}_{k,3}$ and $\mathfrak{g}_{k,4}+\mathfrak{h}_{k,4}$ admit easy primitives, namely
\begin{align*}
& \int \Big(\mathfrak{g}_{k,3}(x)+\mathfrak{h}_{k,3}(x)\Big)dx = \frac{x}{2}+\frac{\alpha x}{b} + \frac{x}{2}\log \bigg( \frac{bx}{2\pi} \bigg) + \frac{\alpha x}{b}\log \frac{b r_{k}^{2b}}{x} + (br_{k}^{2b}-x)\log |x-br_{k}^{2b}|, \\
& \int \Big(\mathfrak{g}_{k,4}(x)+\mathfrak{h}_{k,4}(x)\Big)dx = \frac{b^{2}r_{k}^{2b}}{x-br_{k}^{2b}} - \frac{b^{2}-6b \alpha + 6 \alpha^{2}}{12b}\log(x) - \alpha \log|x-br_{k}^{2b}|.
\end{align*}
Using these primitives, we obtain the following large $n$ asymptotics: for $k \in \{1,3,\ldots,2g-1\}$, we have
\begin{align}
& n \int_{\frac{g_{k,+}+1}{n}}^{\frac{br_{k}^{2b}}{1-\epsilon}}(\mathfrak{g}_{k,3}(x)+\mathfrak{h}_{k,3}(x))dx = \frac{\epsilon\big(b+2\alpha-2b \log \big( \frac{\epsilon}{1-\epsilon} r_{k}^{b}\sqrt{2\pi} \big)\big)+(2\alpha-b)\log(1-\epsilon)}{2(1-\epsilon)}r_{k}^{2b}n \nonumber \\
& - \frac{br_{k}^{2b}M}{2}\sqrt{n}\log n + br_{k}^{2b}M(\log (M r_{k}^{b}\sqrt{2\pi}) -1 )\sqrt{n} + \frac{2\alpha-b}{4}r_{k}^{2b}M^{2} \nonumber \\
&  + (1+br_{k}^{2b}M^{2}-\alpha - \theta_{k,+}^{(n,M)})\log \bigg( \frac{Mr_{k}^{b}\sqrt{2\pi}}{\sqrt{n}} \bigg) + \bigO\bigg( \frac{M^{3}\log n}{\sqrt{n}} \bigg), \label{lol11} \\
& \int_{\frac{g_{k,+}+1}{n}}^{\frac{br_{k}^{2b}}{1-\epsilon}}(\mathfrak{g}_{k,4}(x)+\mathfrak{h}_{k,4}(x))dx = \frac{b \sqrt{n}}{M} + \alpha \log \bigg( \frac{M}{\epsilon\sqrt{n}} \bigg) + \frac{b}{\epsilon} + \frac{b^{2}+6b \alpha+6\alpha^{2}}{12b}\log(1-\epsilon)+ \bigO(M^{-2}), \label{lol12}
\end{align}
and for $k \in \{2,4,\ldots,2g\}$, we have
\begin{align}
& n \int_{\frac{br_{k}^{2b}}{1+\epsilon}}^{\frac{g_{k,-}-1}{n}}(\mathfrak{g}_{k,3}(x)+\mathfrak{h}_{k,3}(x))dx = \frac{\epsilon\big(b+2\alpha-2b \log \big( \frac{\epsilon}{1+\epsilon} r_{k}^{b}\sqrt{2\pi} \big)\big)+(b-2\alpha)\log(1+\epsilon)}{2(1+\epsilon)}r_{k}^{2b}n \nonumber \\
& - \frac{br_{k}^{2b}M}{2}\sqrt{n}\log n + br_{k}^{2b}M(\log (M r_{k}^{b}\sqrt{2\pi}) -1 )\sqrt{n} + \frac{b-2\alpha}{4}r_{k}^{2b}M^{2} \nonumber \\
&  + (1-br_{k}^{2b}M^{2}+\alpha - \theta_{k,-}^{(n,M)})\log \bigg( \frac{Mr_{k}^{b}\sqrt{2\pi}}{\sqrt{n}} \bigg) + \bigO\bigg( \frac{M^{3}\log n}{\sqrt{n}} \bigg), \label{lol13} \\
& \int_{\frac{br_{k}^{2b}}{1+\epsilon}}^{\frac{g_{k,-}-1}{n}}(\mathfrak{g}_{k,4}(x)+\mathfrak{h}_{k,4}(x))dx = -\frac{b \sqrt{n}}{M} - \alpha \log \bigg( \frac{M}{\epsilon\sqrt{n}} \bigg) + \frac{b}{\epsilon} - \frac{b^{2}+6b \alpha+6\alpha^{2}}{12b}\log(1+\epsilon)+ \bigO(M^{-2}). \label{lol14}
\end{align}
Substituting the asymptotics of Lemma \ref{lemma: expansion of various sums} and \eqref{lol11}--\eqref{lol14} in Lemma \ref{lemma:preliminary expansions} and then in Lemma \ref{lemma: exact decomposition for the interpolating sums}, and simplifying, we obtain (after a long computation) the following explicit large $n$ asymptotics of $\{S_{2k}^{(1)}\}_{k \,\mathrm{odd}}$ and $\{S_{2k}^{(3)}\}_{k \,\mathrm{even}}$. 
\begin{lemma}\label{lemma: final asymp for S2kp1p}
Let $k \in \{1,3,\ldots,2g-1\}$. As $n \to + \infty$, we have
\begin{align*}
& S_{2k}^{(1)} = \frac{br_{k}^{4b}(2\epsilon +\epsilon^{2} + 2 \log(1-\epsilon))}{4(1-\epsilon)^{2}}n^{2} - \frac{br_{k}^{2b}\epsilon}{2(1-\epsilon)}n\log n + \frac{r_{k}^{2b}}{2(1-\epsilon)} \bigg\{ \Big( 1- 2\theta_{k,+}^{(n,\epsilon)}+b-2b \log \big( r_{k}^{b}\sqrt{2\pi} \big) \Big)\epsilon  \\
& + \Big( 1-b-2\theta_{k,+}^{(n,\epsilon)} \Big)\log(1-\epsilon) - 2b \epsilon \log \bigg(\frac{\epsilon}{1-\epsilon}\bigg) \bigg\} n \\
& + \bigg\{ \frac{br_{k}^{4b}}{6}M^{3} + br_{k}^{2b} \Big( \log(M) + \log(r_{k}^{b}\sqrt{2\pi}) -1 \Big)M-\frac{b}{M}+\frac{5b}{6r_{k}^{2b}M^{3}}-\frac{37b}{15r_{k}^{4b}M^{5}} \bigg\} \sqrt{n} \\
& + \frac{2\theta_{k,+}^{(n,\epsilon)}-1}{4} \log n + \frac{11}{24} br_{k}^{4b}M^{4} + br_{k}^{2b}M^{2}\log M + \bigg\{ \frac{1-b-2\theta_{k,+}^{(n,M)}}{4} + b \log(r_{k}^{b}\sqrt{2\pi}) \bigg\} r_{k}^{2b}M^{2} \\
& + \frac{1-2\theta_{k,+}^{(n,M)}}{2}\log M + \Big(\theta_{k,+}^{(n,\epsilon)}-\theta_{k,+}^{(n,M)}\Big)\log(r_{k}^{b}\sqrt{2\pi}) + \frac{2\theta_{k,+}^{(n,\epsilon)}-1}{2}\log \epsilon + \frac{b}{\epsilon} \\
& + \frac{1+3b+b^{2}-6(1+b)\theta_{k,+}^{(n,\epsilon)}+6(\theta_{k,+}^{(n,\epsilon)})^{2}}{12b} \log(1-\epsilon)  + \bigO\bigg(\frac{M^{5}}{\sqrt{n}}\bigg)+\bigO\bigg(\frac{\sqrt{n}}{M^{7}}\bigg).
\end{align*}
Let $k \in \{2,4,\ldots,2g\}$. As $n \to + \infty$, we have
\begin{align*}
& S_{2k}^{(3)} = \frac{br_{k}^{4b}(2\epsilon -\epsilon^{2} - 2 \log(1+\epsilon))}{4(1+\epsilon)^{2}}n^{2} - \frac{br_{k}^{2b}\epsilon}{2(1+\epsilon)}n\log n + \frac{r_{k}^{2b}}{2(1+\epsilon)} \bigg\{ \Big( 2\theta_{k,-}^{(n,\epsilon)}-1+b-2b \log \big( r_{k}^{b}\sqrt{2\pi} \big) \Big)\epsilon  \\
& + \Big( 1+b-2\theta_{k,-}^{(n,\epsilon)} \Big)\log(1+\epsilon) - 2b \epsilon \log \bigg(\frac{\epsilon}{1+\epsilon}\bigg) \bigg\} n \\
& + \bigg\{ \frac{br_{k}^{4b}}{6}M^{3} + br_{k}^{2b} \Big( \log(M) + \log(r_{k}^{b}\sqrt{2\pi}) -1 \Big)M-\frac{b}{M}+\frac{5b}{6r_{k}^{2b}M^{3}}-\frac{37b}{15r_{k}^{4b}M^{5}} \bigg\} \sqrt{n} \\
& + \frac{2\theta_{k,-}^{(n,\epsilon)}-1}{4} \log n - \frac{11}{24} br_{k}^{4b}M^{4} - br_{k}^{2b}M^{2}\log M + \bigg\{ \frac{1+b-2\theta_{k,-}^{(n,M)}}{4} - b \log(r_{k}^{b}\sqrt{2\pi}) \bigg\} r_{k}^{2b}M^{2} \\
& + \frac{1-2\theta_{k,-}^{(n,M)}}{2}\log M + \Big(\theta_{k,-}^{(n,\epsilon)}-\theta_{k,-}^{(n,M)}\Big)\log(r_{k}^{b}\sqrt{2\pi}) + \frac{2\theta_{k,-}^{(n,\epsilon)}-1}{2}\log \epsilon + \frac{b}{\epsilon} \\
& + \frac{-1+3b-b^{2}+6(1-b)\theta_{k,-}^{(n,\epsilon)}-6(\theta_{k,-}^{(n,\epsilon)})^{2}}{12b} \log(1+\epsilon)  + \bigO\bigg(\frac{M^{5}}{\sqrt{n}}\bigg)+\bigO\bigg(\frac{\sqrt{n}}{M^{7}}\bigg).
\end{align*}
\end{lemma}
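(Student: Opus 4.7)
The statement is essentially a corollary of the preceding three lemmas: Lemma~\ref{lemma: exact decomposition for the interpolating sums} gives the exact decomposition of $S_{2k}^{(1)}$ (resp.\ $S_{2k}^{(3)}$) into two sums — one involving $\log\big(\tfrac{1}{2}\mathrm{erfc}(-\eta_{j,k}\sqrt{a_j/2})\big)$ (or its complement) and one involving $\log(1 - R_{a_j}/(\tfrac{1}{2}\mathrm{erfc}))$; Lemma~\ref{lemma:preliminary expansions} further expands each of these pieces as a finite sum of Riemann-sum-type terms $\sum f(j/n)$ with the explicit integrands $\mathfrak{g}_{k,1},\ldots,\mathfrak{g}_{k,6},\mathfrak{h}_{k,3},\mathfrak{h}_{k,4}$; Lemma~\ref{lemma: expansion of various sums} evaluates each such Riemann sum to the required precision using Lemma~\ref{lemma:Riemann sum NEW}. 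The plan is simply to substitute these expansions back up the chain and simplify.

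Concretely, I would do the following. First, for $k \in \{1,3,\ldots,2g-1\}$, I rewrite the right-hand side of \eqref{lol52} as the sum of the seven contributions of Lemma~\ref{lemma:preliminary expansions} (that is, the six $\mathfrak{g}_{k,\ell}$-sums and the two $\mathfrak{h}_{k,\ell}$-sums). For the five sums involving $\mathfrak{g}_{k,1},\mathfrak{g}_{k,2},\mathfrak{g}_{k,5},\mathfrak{g}_{k,6}$ I substitute directly \eqref{lol56}, \eqref{lol57}, \eqref{lol60}, \eqref{lol61}. The delicate part is the remaining sums involving $\mathfrak{g}_{k,3}$, $\mathfrak{g}_{k,4}$, $\mathfrak{h}_{k,3}$ and $\mathfrak{h}_{k,4}$: individually the primitives are not elementary, but the combined functions $\mathfrak{g}_{k,3}+\mathfrak{h}_{k,3}$ and $\mathfrak{g}_{k,4}+\mathfrak{h}_{k,4}$ admit the explicit primitives displayed just before \eqref{lol11}. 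Hence I pair the $\mathfrak{g}_{k,3}$-asymptotics from \eqref{lol58} with the $\mathfrak{h}_{k,3}$-asymptotics from \eqref{lol62}, and the $\mathfrak{g}_{k,4}$-asymptotics from \eqref{lol59} with the $\mathfrak{h}_{k,4}$-asymptotics from \eqref{lol63}; in both pairings the individual integrals that are not explicitly computable combine, via those primitives, into the closed-form expansions \eqref{lol11}--\eqref{lol12}. Collecting the resulting $n^2$, $n\log n$, $n$, $\sqrt n$, $\log n$ and $\bigO(1)$ contributions and adding up yields the claimed formula for $S_{2k}^{(1)}$. The case $k \in \{2,4,\ldots,2g\}$ (i.e.\ $S_{2k}^{(3)}$) is handled identically using \eqref{lol13}--\eqref{lol14}; the sign changes between the two cases come entirely from the sign of $-\eta_{j,k}/\sqrt 2$ in Lemma~\ref{lemma: exact decomposition for the interpolating sums} and from $\mathrm{erfc}(-y)+\mathrm{erfc}(y)=2$.

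The error term $\bigO(M^{5}/\sqrt{n}) + \bigO(\sqrt{n}/M^{7})$ simply inherits the dominant errors appearing in Lemma~\ref{lemma:preliminary expansions} and Lemma~\ref{lemma: expansion of various sums}: the former contributes $\bigO(\sqrt{n}/M^{7})$ from the truncation of the $\mathrm{erfc}$-expansion \eqref{lol45} after five terms, while the latter contributes $\bigO(M^{5}/\sqrt{n})$ from applying Lemma~\ref{lemma:Riemann sum NEW} (with the fourth-derivative remainder) to $\mathfrak{g}_{k,1}$. All the other errors — $\bigO(M^{3}\log n/\sqrt{n})$, $\bigO(M^{-2})$, $\bigO(M/\sqrt n)$, $\bigO(1/(M\sqrt n))$, $\bigO(M^{-4})$ — are absorbed into one of these two.

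The main obstacle is not conceptual but organizational: it is a long bookkeeping exercise in which one must verify that the logarithms of the various $M$-, $\epsilon$-, $\theta^{(n,\epsilon)}_{k,\pm}$- and $\theta^{(n,M)}_{k,\pm}$-dependent pieces combine exactly into the final expression. In particular, the $\sqrt n\log n$ contributions from $\mathfrak{g}_{k,2}$ and from the pair $(\mathfrak{g}_{k,3},\mathfrak{h}_{k,3})$ must cancel against the Taylor expansion of $\mathfrak{g}_{k,1}$ near $br_k^{2b}$; the $\log n$ coefficient must reduce to $\frac{2\theta_{k,+}^{(n,\epsilon)}-1}{4}$ after cancellation of the $br_k^{2b}M^2$-terms from $\mathfrak{g}_{k,2}$ and $\mathfrak{g}_{k,3}+\mathfrak{h}_{k,3}$; the $M^4$, $M^3\sqrt n$ and $M^2$ pieces must combine into precisely the quantities that will, in the next stage of the argument, cancel against the corresponding pieces of $\widetilde{G}_{4,k}^{(M)}$ and $\widetilde{G}_{6,k}^{(M)}$ from Lemma~\ref{lemma:S2ktilde large M and nice integrals}. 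I would carry out the cancellations coefficient by coefficient in decreasing order of $n$-power, using the explicit formulas for $\mathfrak{g}_{k,3}+\mathfrak{h}_{k,3}$ and $\mathfrak{g}_{k,4}+\mathfrak{h}_{k,4}$ to keep expressions manageable, and treat the $k$-even case by straightforward sign-tracking.
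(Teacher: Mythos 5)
Your proposal correctly reproduces the paper's own argument: the paper also obtains Lemma~\ref{lemma: final asymp for S2kp1p} by substituting the formulas of Lemma~\ref{lemma: expansion of various sums} together with \eqref{lol11}--\eqref{lol14} (the explicit closed forms for $\mathfrak{g}_{k,3}+\mathfrak{h}_{k,3}$ and $\mathfrak{g}_{k,4}+\mathfrak{h}_{k,4}$) into Lemma~\ref{lemma:preliminary expansions} and then into Lemma~\ref{lemma: exact decomposition for the interpolating sums}, and simplifying. The only trivial slip is the phrase ``five sums'' (it should be four: $\mathfrak{g}_{k,1},\mathfrak{g}_{k,2},\mathfrak{g}_{k,5},\mathfrak{g}_{k,6}$); otherwise the bookkeeping plan and the error-tracking are exactly as in the paper.
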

Recall from \eqref{S2k splitting in three parts} that 
\begin{align*}
& S_{2k}=S_{2k}^{(1)} + S_{2k}^{(2)} + S_{2k}^{(3)} + \bigO(e^{-cn}), \qquad \mbox{as } n \to + \infty.
\end{align*}
By combining Lemmas \ref{lemma:S2ktilde large M}, \ref{lemma:S2ktilde large M and nice integrals}, \ref{lemma:S2k p3p and p1p: the easy ones} and \ref{lemma: final asymp for S2kp1p} and simplifying, we finally obtain (after another long computation) the large $n$ asymptotics of $S_{2k}$. 
\begin{lemma}\label{lemma: S2k FINAL all k}
Let $k \in \{1,2,\ldots,2g\}$. As $n \to + \infty$, we have
\begin{align*}
S_{2k} = E_{1,k}^{(\epsilon)} n^{2}  + E_{2,k}^{(\epsilon)} n \log n + E_{3,k}^{(n,\epsilon)} n + E_{4,k}\sqrt{n} + E_{5,k}^{(n,\epsilon)} \log n + E_{6,k}^{(n,\epsilon)} + \bigO\bigg(\frac{M^{5}}{\sqrt{n}} + \frac{\sqrt{n}}{M^{7}}\bigg),
\end{align*}
where, for $k \in \{1,3,\ldots,2g-1\}$, the coefficients $E_{1,k}^{(\epsilon)}$, $E_{2,k}^{(\epsilon)}$, $E_{3,k}^{(n,\epsilon)}$, $E_{4,k}$, $E_{5,k}^{(n,\epsilon)}$, $E_{6,k}^{(n,\epsilon)}$ are given by
\begin{align*}
& E_{1,k}^{(\epsilon)} = \frac{br_{k}^{4b}(2\epsilon +\epsilon^{2} + 2 \log(1-\epsilon))}{4(1-\epsilon)^{2}}, \\
& E_{2,k}^{(\epsilon)} = - \frac{br_{k}^{2b}\epsilon}{2(1-\epsilon)}, \\
& E_{3,k}^{(n,\epsilon)} = \frac{(1-b+2b\epsilon-2\theta_{k,+}^{(n,\epsilon)})\log(1-\epsilon) + \epsilon\big(1+b-2\theta_{k,+}^{(n,\epsilon)}-2b\log \big( \epsilon r_{k}^{b}\sqrt{2\pi} \big)\big)}{2(1-\epsilon)}r_{k}^{2b}, \\
& E_{4,k} = \sqrt{2}br_{k}^{b} \int_{-\infty}^{0}\log \bigg( \frac{1}{2}\mathrm{erfc}(y) \bigg)dy + \sqrt{2}br_{k}^{b}\int_{0}^{+\infty} \bigg[ \log \bigg( \frac{1}{2}\mathrm{erfc}(y) \bigg)+y^{2}+\log y + \log(2\sqrt{\pi}) \bigg]dy, \\
& E_{5,k}^{(n,\epsilon)} =  \frac{2\theta_{k,+}^{(n,\epsilon)}-1}{4}, \\
& E_{6,k}^{(n,\epsilon)} = \frac{1+3b+b^{2}-6(1+b)\theta_{k,+}^{(n,\epsilon)}+6(\theta_{k,+}^{(n,\epsilon)})^{2}}{12b}\log(1-\epsilon) + \frac{b}{\epsilon} + \frac{2\theta_{k,+}^{(n,\epsilon)}-1}{2}\log \Big(\epsilon r_{k}^{b}\sqrt{2\pi}\Big) \\
& +2b \int_{-\infty}^{0} \bigg\{ 2y\log \bigg( \frac{1}{2}\mathrm{erfc}(y)\bigg) + \frac{e^{-y^{2}}(1-5y^{2})}{3\sqrt{\pi}\mathrm{erfc}(y)} \bigg\}dy \\
& +2b \int_{0}^{+\infty} \bigg\{ 2y\log \bigg( \frac{1}{2}\mathrm{erfc}(y)\bigg) + \frac{e^{-y^{2}}(1-5y^{2})}{3\sqrt{\pi}\mathrm{erfc}(y)} + \frac{11}{3}y^{3} + 2y \log y + \bigg( \frac{1}{2} + 2 \log(2\sqrt{\pi}) \bigg)y \bigg\}dy,
\end{align*}
while for $k \in \{2,4,\ldots,2g\}$, the coefficients $E_{1,k}^{(\epsilon)}$, $E_{2,k}^{(\epsilon)}$, $E_{3,k}^{(n,\epsilon)}$, $E_{5,k}^{(n,\epsilon)}$, $E_{6,k}^{(n,\epsilon)}$ are given by
\begin{align*}
& E_{1,k}^{(\epsilon)} = \frac{br_{k}^{4b}(2\epsilon -\epsilon^{2} - 2 \log(1+\epsilon))}{4(1+\epsilon)^{2}}, \\
& E_{2,k}^{(\epsilon)} = - \frac{br_{k}^{2b}\epsilon}{2(1+\epsilon)}, \\
& E_{3,k}^{(n,\epsilon)} = \frac{(1+b+2b\epsilon-2\theta_{k,-}^{(n,\epsilon)})\log(1+\epsilon) + \epsilon\big(-1+b+2\theta_{k,-}^{(n,\epsilon)}-2b\log \big( \epsilon r_{k}^{b}\sqrt{2\pi} \big)\big)}{2(1+\epsilon)}r_{k}^{2b}, \\
& E_{4,k} = \sqrt{2}br_{k}^{b} \int_{-\infty}^{0}\log \bigg( \frac{1}{2}\mathrm{erfc}(y) \bigg)dy + \sqrt{2}br_{k}^{b}\int_{0}^{+\infty} \bigg[ \log \bigg( \frac{1}{2}\mathrm{erfc}(y) \bigg)+y^{2}+\log y + \log(2\sqrt{\pi}) \bigg]dy, \\
& E_{5,k}^{(n,\epsilon)} = \frac{2\theta_{k,-}^{(n,\epsilon)}-1}{4}, \\
& E_{6,k}^{(n,\epsilon)} = \frac{-1+3b-b^{2}+6(1-b)\theta_{k,-}^{(n,\epsilon)}-6(\theta_{k,-}^{(n,\epsilon)})^{2}}{12b}\log(1+\epsilon) + \frac{b}{\epsilon} + \frac{2\theta_{k,-}^{(n,\epsilon)}-1}{2}\log \Big(\epsilon r_{k}^{b}\sqrt{2\pi}\Big) \\
& -2b \int_{-\infty}^{0} \bigg\{ 2y\log \bigg( \frac{1}{2}\mathrm{erfc}(y)\bigg) + \frac{e^{-y^{2}}(1-5y^{2})}{3\sqrt{\pi}\mathrm{erfc}(y)} \bigg\}dy \\
& -2b \int_{0}^{+\infty} \bigg\{ 2y\log \bigg( \frac{1}{2}\mathrm{erfc}(y)\bigg) + \frac{e^{-y^{2}}(1-5y^{2})}{3\sqrt{\pi}\mathrm{erfc}(y)} + \frac{11}{3}y^{3} + 2y \log y + \bigg( \frac{1}{2} + 2 \log(2\sqrt{\pi}) \bigg)y \bigg\}dy.
\end{align*}
\end{lemma}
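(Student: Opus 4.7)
The proof is a bookkeeping exercise: combine the decomposition \eqref{S2k splitting in three parts}, which says
\[
S_{2k} = S_{2k}^{(1)} + S_{2k}^{(2)} + S_{2k}^{(3)} + \bigO(e^{-cn}),
\]
with the asymptotics of the three constituent sums that have already been derived. For $k$ odd one uses Lemma \ref{lemma: final asymp for S2kp1p} for $S_{2k}^{(1)}$, Lemmas \ref{lemma:S2ktilde large M} and \ref{lemma:S2ktilde large M and nice integrals} (the ``nice integral'' form of $\widetilde G_{4,k}^{(M)}$, $\widetilde G_{6,k}^{(M)}$) for $S_{2k}^{(2)}$, and Lemma \ref{lemma:S2k p3p and p1p: the easy ones} for the negligible $S_{2k}^{(3)}$. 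For $k$ even the roles of $S_{2k}^{(1)}$ and $S_{2k}^{(3)}$ are swapped and sign changes appear in the $\mathrm{erfc}$-integrals; otherwise the structure is identical.

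My plan is to substitute these asymptotics, then group the resulting terms according to their order in $n$ (namely $n^{2}$, $n\log n$, $n$, $\sqrt n$, $\log n$, $1$, and the remainder). The crucial consistency check, which is also where the real content lies, is that all $M$-dependent quantities must disappear, since $M=n^{1/12}$ was an artificial splitting parameter while $S_{2k}$ itself depends only on $\epsilon$. Concretely, at the $\sqrt n$ level one must verify cancellation of the polynomial and logarithmic powers of $M$:
\[
\Bigl\{\tfrac{b r_{k}^{4b}}{6}M^{3}+br_{k}^{2b}M\log M + br_{k}^{2b}(\log(r_{k}^{b}\sqrt{2\pi})-1)M-\tfrac{b}{M}+\tfrac{5b}{6r_{k}^{2b}M^{3}}-\tfrac{37b}{15 r_{k}^{4b}M^{5}}\Bigr\}
\]
coming from $S_{2k}^{(1)}$ (resp.\ $S_{2k}^{(3)}$ when $k$ is even) cancels against the matching block in $\widetilde G_{4,k}^{(M)}$, leaving the clean $\mathrm{erfc}$-integrals which form the coefficient $E_{4,k}$. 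Similarly, at the constant level the terms proportional to $M^{4}$, $M^{2}\log M$, $M^{2}$, $\log M$ and the $\theta_{k,\pm}^{(n,M)}$-dependent pieces must cancel between $S_{2k}^{(1)}$ (or $S_{2k}^{(3)}$) and $\widetilde G_{6,k}^{(M)}$, leaving only the two $\mathrm{erfc}$-integrals displayed in $E_{6,k}^{(n,\epsilon)}$ plus explicit $\epsilon$-dependent pieces.

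The main technical obstacle is precisely keeping track of these cancellations: each of the asymptotic formulas involves upwards of ten distinct terms, and the cancellations are neither term-by-term obvious nor guaranteed by a symmetry — they occur only after simultaneously using the expansions of $\theta_{k,\pm}^{(n,M)}$, the primitive-based evaluations \eqref{lol11}--\eqref{lol14}, and the integral identities of Lemma \ref{lemma:S2ktilde large M and nice integrals}. A good way to control the book-keeping is to compute symbolically the $M$-derivative of $S_{2k}^{(1)}+S_{2k}^{(2)}$ modulo the stated error, check that it vanishes (thereby confirming all polynomial and logarithmic $M$ pieces cancel), and only then read off the remaining $\epsilon$-dependent expression. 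Once the cancellations are verified, it remains to combine the surviving $\epsilon$-dependent blocks: the parts of $\widetilde G_{4,k}^{(M)}$ and $\widetilde G_{6,k}^{(M)}$ that are independent of $M$ contribute $E_{4,k}$ and the $\mathrm{erfc}$-integrals in $E_{6,k}^{(n,\epsilon)}$, while the remaining terms from $S_{2k}^{(1)}$ contribute the $\epsilon$-dependent coefficients $E_{1,k}^{(\epsilon)},\ldots,E_{6,k}^{(n,\epsilon)}$.

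Finally, for the error term, the two bounds $\bigO(M^{5}/\sqrt n)$ coming from Lemma \ref{lemma:S2ktilde large M} and $\bigO(\sqrt n/M^{7})$ coming from the analysis of $S_{2k}^{(1)}$ (via Lemma \ref{lemma:preliminary expansions} and Remark \ref{remark: some functions have poles}) dominate all other error contributions, so one retains $\bigO(M^{5}/\sqrt n + \sqrt n/M^{7})$ as claimed. The case of even $k$ is entirely parallel: the same substitutions and cancellations go through after replacing $S_{2k}^{(1)}$ by $S_{2k}^{(3)}$ and $\theta_{k,+}^{(n,\cdot)}$ by $\theta_{k,-}^{(n,\cdot)}$, and with the sign flip of the $\mathrm{erfc}$-integrals already encoded in Lemmas \ref{lemma:S2ktilde large M} and \ref{lemma:S2ktilde large M and nice integrals}.
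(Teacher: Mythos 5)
Your proposal is correct and follows the paper's argument exactly: combine the decomposition $S_{2k}=S_{2k}^{(1)}+S_{2k}^{(2)}+S_{2k}^{(3)}+\bigO(e^{-cn})$ from \eqref{S2k splitting in three parts} with Lemmas \ref{lemma:S2ktilde large M}, \ref{lemma:S2ktilde large M and nice integrals}, \ref{lemma:S2k p3p and p1p: the easy ones}, and \ref{lemma: final asymp for S2kp1p}, then simplify. Your observation that all $M$-dependent terms must cancel identically (since $S_{2k}$ depends only on $\epsilon$, not on the auxiliary parameter $M$) is precisely the consistency check the paper itself records in the remark immediately following this lemma.
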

\begin{remark}
Recall that although $S_{2k}^{(1)}$, $S_{2k}^{(2)}$ and $S_{2k}^{(3)}$ depend on $M$, the sum $S_{2k}$ is independent of $M$. As can be seen from the above, all the coefficients $E_{1,k}^{(\epsilon)}$, $E_{2,k}^{(\epsilon)}$, $E_{3,k}^{(n,\epsilon)}$, $E_{5,k}^{(n,\epsilon)}$, $E_{6,k}^{(n,\epsilon)}$ are independent of $M$, as it must.
\end{remark}
For $x \in \mathbb{R}$, $\rho \in (0,1)$ and $a>0$, define
\begin{align}
\Theta(x;\rho,a) & = x(x-1)\log (\rho) + x \log(a) \nonumber \\
& + \sum_{j=0}^{+\infty} \log \bigg( 1+a\, \rho^{2(j+x)} \bigg) + \sum_{j=0}^{+\infty} \log \bigg( 1+a^{-1} \rho^{2(j+1-x)} \bigg). \label{def of Theta}
\end{align}
By shifting the indices of summation, it can be checked that 
\begin{align*}
\Theta(x+1;\rho,a)=\Theta(x;\rho,a), \qquad \mbox{for all } x \in \mathbb{R}, \; \rho \in (0,1), \; a>0,
\end{align*}
i.e. $x \mapsto \Theta(x;\rho,a)$ is periodic of period $1$. To complete the proof of Theorem \ref{thm:main thm} we will need the following lemma.
\begin{lemma}\label{lemma: Theta is expressible in terms of Jacobi}
We have
\begin{align*}
\Theta(x;\rho,a) = \frac{1}{2}\log \bigg( \frac{\pi a \rho^{-\frac{1}{2}}}{\log(\rho^{-1})} \bigg) + \frac{(\log a)^{2}}{4\log(\rho^{-1})} - \sum_{j=1}^{+\infty} \log(1-\rho^{2j}) + \log \theta \bigg( x + \frac{\log(a \rho)}{2\log(\rho)} \bigg| \frac{\pi i}{\log(\rho^{-1})} \bigg),
\end{align*}
where $\theta$ is the Jacobi theta function given by \eqref{def of Jacobi theta}.
\end{lemma}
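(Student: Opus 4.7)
The plan is to transform $\Theta(x;\rho,a)$ into a standard theta-function expression by first applying the Jacobi triple product identity and then applying the modular transformation $\tau\mapsto -1/\tau$. Write $u=a\rho^{2x}$, $p=\rho^2$. Shifting the index in the second sum of \eqref{def of Theta} by $k=j+1$, the two infinite sums in $\Theta$ become
\begin{align*}
\log \prod_{j=0}^{\infty}(1+u\,p^{j}) + \log \prod_{k=1}^{\infty}(1+u^{-1}p^{k}).
\end{align*}
The Jacobi triple product identity in the form
\begin{align*}
\prod_{n=1}^{\infty}(1-\rho^{2n})(1+u\,\rho^{2n-2})(1+u^{-1}\rho^{2n}) \;=\; \sum_{n=-\infty}^{\infty}\rho^{n(n-1)}u^{n}
\end{align*}
then gives, upon taking logarithms,
\begin{align*}
\sum_{j=0}^{\infty}\log(1+a\rho^{2(j+x)}) + \sum_{j=0}^{\infty}\log(1+a^{-1}\rho^{2(j+1-x)}) = -\sum_{n=1}^{\infty}\log(1-\rho^{2n}) + \log S(x;\rho,a),
\end{align*}
where $S(x;\rho,a):=\sum_{n\in\mathbb{Z}}\rho^{n(n-1)}(a\rho^{2x})^{n}$.

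Next, I rewrite $S$ as a value of the Jacobi theta function. Setting $\tau_{1}=\tfrac{i\log(\rho^{-1})}{\pi}\in i(0,+\infty)$ so that $\pi i\tau_{1}=\log\rho$, and $z_{1}=\dfrac{(2x-1)\log\rho+\log a}{2\pi i}$, one checks directly from \eqref{def of Jacobi theta} that $S(x;\rho,a)=\theta(z_{1}|\tau_{1})$. The target formula involves $\tau_{2}=\tfrac{\pi i}{\log(\rho^{-1})}$, and the crucial identity $\tau_{1}\tau_{2}=-1$ tells us to apply the Jacobi imaginary transformation
\begin{align*}
\theta(z|\tau)=\frac{1}{\sqrt{-i\tau}}\,e^{-\pi i z^{2}/\tau}\,\theta\!\left(\tfrac{z}{\tau}\Big|-\tfrac{1}{\tau}\right).
\end{align*}
A short computation with $L:=\log(\rho^{-1})$ yields $\tfrac{z_{1}}{\tau_{1}}=x+\tfrac{\log(a\rho)}{2\log\rho}-1$ (whose integer shift is absorbed by the $1$-periodicity of $\theta(\,\cdot\,|\tau_{2})$), $-i\tau_{1}=L/\pi$, and
\begin{align*}
-\pi i\,\frac{z_{1}^{2}}{\tau_{1}} = \frac{[(1-2x)L+\log a]^{2}}{4L} = \frac{(1-2x)^{2}L}{4}+\frac{(1-2x)\log a}{2}+\frac{(\log a)^{2}}{4L}.
\end{align*}

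Finally, I substitute these into $\Theta(x;\rho,a)=x(x-1)\log\rho+x\log a-\sum_{n\geq 1}\log(1-\rho^{2n})+\log\theta(z_{1}|\tau_{1})$. The polynomial-in-$x$ terms combine as
\begin{align*}
-x^{2}L+xL+x\log a+\frac{(1-2x)^{2}L}{4}+\frac{(1-2x)\log a}{2}=\frac{L}{4}+\frac{\log a}{2},
\end{align*}
and since $\tfrac{L}{4}+\tfrac{\log a}{2}+\tfrac{1}{2}\log\tfrac{\pi}{L}=\tfrac{1}{2}\log\!\bigl(\tfrac{\pi a\rho^{-1/2}}{\log(\rho^{-1})}\bigr)$, the claimed identity follows. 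The entire argument is essentially algebraic once the triple product and the modular transformation are in place; the only mildly delicate point is matching the arguments of $\theta$ (tracking the integer shift and the sign of $z_{1}/\tau_{1}$), which the $1$-periodicity of $\theta(\,\cdot\,|\tau)$ makes harmless.
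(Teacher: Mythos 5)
Your proof is correct and follows essentially the same route as the paper: Jacobi triple product to get $\Theta$ in terms of $\log\theta\big(\tfrac{(2x-1)\log\rho+\log a}{2\pi i}\big|\tfrac{i\log(\rho^{-1})}{\pi}\big)$, then the imaginary transformation $\tau\mapsto -1/\tau$ and a polynomial-in-$x$ cancellation to reach the stated form. The only cosmetic difference is that you use $\theta(z/\tau|-1/\tau)$ rather than $\theta(-z/\tau|-1/\tau)$, which is harmless by evenness of $\theta$, and you carry out the algebra that the paper leaves to the reader.
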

\begin{proof}
The statement follows from two remarkable identities of the Jacobi theta function. First, using the Jacobi triple product formula (see e.g. \cite[Eq 20.5.3]{NIST})
\begin{align}\label{triple product formula}
\theta(z|\tau) = \prod_{\ell=1}^{+\infty} (1-e^{2 i \pi \tau \ell})(1+2 e^{i \pi \tau (2\ell-1)}\cos(2\pi z)+e^{i \pi \tau (4\ell-2)}),
\end{align}
we obtain
\begin{align}
\Theta(x,\rho,a) & = x(x-1)\log(\rho) + x \log(a) - \sum_{j=1}^{+\infty} \log(1-\rho^{2j}) \nonumber \\
& + \log \theta \bigg( \frac{(2x-1)\log (\rho)+\log(a)}{2\pi i} \bigg| \frac{\log(\rho^{-1})}{\pi}i \bigg). \label{first identity for Theta}
\end{align}
The claim then follows from a computation using the following Jacobi imaginary transformation (see e.g. \cite[Eq (20.7.32)]{NIST})
\begin{align*}
(-i \tau)^{1/2}\theta(z|\tau) = e^{i \pi \tau' z^{2}}\theta(z \tau'|\tau'), \qquad \mbox{ where } \tau' = -\frac{1}{\tau}.
\end{align*}
\end{proof}

We now finish the proof of Theorem \ref{thm:main thm}.

\begin{proof}[Proof of Theorem \ref{thm:main thm}]
Combining \eqref{log Dn as a sum of sums} with Lemmas \ref{lemma: S0}, \ref{lemma: S2km1 k odd}, \ref{lemma: S2km1 k even} and \ref{lemma: S2k FINAL all k}, we obtain
\begin{align*}
& \log \mathcal{P}_{n} = S_{0}+\sum_{k=1,3,...}^{2g+1}S_{2k-1}+\sum_{k=2,4,...}^{2g}S_{2k-1}+\sum_{k=1}^{2g}S_{2k} \\
& = \bigO(e^{-cn})+\sum_{k=1,3,...}^{2g+1}\bigO(e^{-cn}) \\
& + \sum_{k=2,4,...}^{2g} \bigg\{ F_{1,k}^{(\epsilon)} n^{2}  + F_{2,k}^{(\epsilon)} n \log n + F_{3,k}^{(n,\epsilon)} n + F_{5,k}^{(n,\epsilon)} \log n + F_{6,k}^{(n,\epsilon)} + \widetilde{\Theta}_{k,n} + \bigO\bigg( \frac{(\log n)^{2}}{n} \bigg) \bigg\} \\
& + \sum_{k=1}^{2g} \bigg\{ E_{1,k}^{(\epsilon)} n^{2}  + E_{2,k}^{(\epsilon)} n \log n + E_{3,k}^{(n,\epsilon)} n + E_{4,k}\sqrt{n} + E_{5,k}^{(n,\epsilon)} \log n + E_{6,k}^{(n,\epsilon)} + \bigO\bigg(\frac{M^{5}}{\sqrt{n}} + \frac{\sqrt{n}}{M^{7}}\bigg) \bigg\}
\end{align*}
as $n \to +\infty$, for a certain constant $c>0$. Recall that $M=n^{-\frac{1}{12}}$, so that $\frac{M^{5}}{\sqrt{n}} = \frac{\sqrt{n}}{M^{7}} = n^{-\frac{1}{12}}$. Let $C_{1},\ldots,C_{6},\mathcal{F}_{n}$ be the quantities defined in the statement of Theorem \ref{thm:main thm}. Using the formulas of Lemmas \ref{lemma:Explicit expression for the F coeff} and \ref{lemma: S2k FINAL all k}, we get
\begin{align*}
& \sum_{k=2,4,...}^{2g} F_{1,k}^{(\epsilon)} + \sum_{k=1}^{2g}  E_{1,k}^{(\epsilon)} \\
& = \sum_{k=2,4,...}^{2g} \bigg( \frac{(r_{k}^{2b}-r_{k-1}^{2b})^{2}}{4\log(\frac{r_{k}}{r_{k-1}})} + \frac{br_{k-1}^{4b}}{(1-\epsilon)^{2}}\frac{1-4\epsilon - 2 \log(1-\epsilon)}{4} - \frac{br_{k}^{4b}}{(1+\epsilon)^{2}}\frac{1+4\epsilon - 2 \log(1+\epsilon)}{4}  \bigg) \\
& + \sum_{k=1,3,...}^{2g-1} \frac{br_{k}^{4b}(2\epsilon +\epsilon^{2} + 2 \log(1-\epsilon))}{4(1-\epsilon)^{2}} + \sum_{k=2,4,...}^{2g} \frac{br_{k}^{4b}(2\epsilon -\epsilon^{2} - 2 \log(1+\epsilon))}{4(1+\epsilon)^{2}} \\
& = \sum_{k=2,4,...}^{2g} \frac{(r_{k}^{2b}-r_{k-1}^{2b})^{2}}{4\log(\frac{r_{k}}{r_{k-1}})} + \frac{b}{4} \sum_{k=1}^{2g} (-1)^{k+1}r_{k}^{4b} = C_{1}.
\end{align*}
Similarly,
\begin{align*}
\sum_{k=2,4,...}^{2g} F_{2,k}^{(\epsilon)} + \sum_{k=1}^{2g}  E_{2,k}^{(\epsilon)} = \sum_{k=2,4,...}^{2g} \bigg( \frac{br_{k-1}^{2b}}{2(1-\epsilon)} -\frac{br_{k}^{2b}}{2(1+\epsilon)} \bigg)  - \sum_{k=1,3,...}^{2g-1}  \frac{br_{k}^{2b}\epsilon}{2(1-\epsilon)} - \sum_{k=2,4,...}^{2g} \frac{br_{k}^{2b}\epsilon}{2(1+\epsilon)} = C_{2}.
\end{align*}
Using again Lemmas \ref{lemma:Explicit expression for the F coeff} and \ref{lemma: S2k FINAL all k}, we obtain after a long computation that
\begin{align*}
\sum_{k=2,4,...}^{2g} F_{3,k}^{(n,\epsilon)} + \sum_{k=1}^{2g}  E_{3,k}^{(n,\epsilon)} = C_{3} \quad \mbox{ and } \quad \sum_{k=2,4,...}^{2g} F_{5,k}^{(n,\epsilon)} + \sum_{k=1}^{2g}  E_{5,k}^{(n,\epsilon)} = C_{5}.
\end{align*}
It is also readily checked that $\sum_{k=1}^{2g}  E_{4,k}=C_{4}$. From \eqref{first identity for Theta} and Lemma \ref{lemma:Explicit expression for the F coeff}, we infer that
\begin{align*}
\sum_{k=2,4,...}^{2g} \widetilde{\Theta}_{k,n} = \sum_{k=1}^{g} \bigg\{ & \Theta\bigg( \theta_{2k},\frac{r_{2k-1}}{r_{2k}},\frac{t_{2k}-br_{2k-1}^{2b}}{br_{2k}^{2b}-t_{2k}} \bigg)  \\
& + \theta_{2k}(\theta_{2k}-1)\log \Big( \frac{r_{2k}}{r_{2k-1}} \Big) + \theta_{2k} \log \bigg(\frac{br_{2k}^{2b}-t_{2k}}{t_{2k}-br_{2k-1}^{2b}} \bigg) \bigg\}.
\end{align*}
Furthermore, by Lemma \ref{lemma: Theta is expressible in terms of Jacobi}, $\theta_{k} = j_{k,\star}-\lfloor j_{k,\star} \rfloor$, and $j_{k,\star} = n t_{k} -\alpha$,
\begin{align*}
\Theta\bigg( \theta_{2k},\frac{r_{2k-1}}{r_{2k}},\frac{t_{2k}-br_{2k-1}^{2b}}{br_{2k}^{2b}-t_{2k}} \bigg) & = \frac{\log \pi}{2} - \frac{1}{2} \log \bigg( \frac{br_{2k}^{2b}-t_{2k}}{t_{2k}-br_{2k-1}^{2b}} \bigg) + \frac{1}{4} \log \bigg( \frac{r_{2k}}{r_{2k-1}} \bigg) - \frac{1}{2} \log \log \bigg( \frac{r_{2k}}{r_{2k-1}} \bigg) \\
&  + \frac{\big[ \log \big(\frac{br_{2k}^{2b}-t_{2k}}{t_{2k}-br_{2k-1}^{2b}} \big) \big]^{2}}{4 \log \big( \frac{r_{2k}}{r_{2k-1}} \big)} - \sum_{j=1}^{+\infty} \log \bigg( 1-\bigg( \frac{r_{2k-1}}{r_{2k}} \bigg)^{2j} \bigg) \\
& + \log \theta \Bigg(t_{2k}n + \frac{1}{2} - \alpha + \frac{\log \big(\frac{br_{2k}^{2b}-t_{2k}}{t_{2k}-br_{2k-1}^{2b}} \big)}{2 \log \big( \frac{r_{2k}}{r_{2k-1}} \big)} \Bigg| \frac{\pi i}{\log(\frac{r_{2k}}{r_{2k-1}})} \Bigg),
\end{align*}
where we have also used the fact that $\theta(x+1|\tau)=\theta(x|\tau)$. Combining the above two equations yields
\begin{align}
& \sum_{k=2,4,...}^{2g} \widetilde{\Theta}_{k,n} = \mathcal{F}_{n} + \frac{g}{2}\log(\pi) + \sum_{j=1}^{g} \bigg\{ \bigg(\frac{1}{4}+\theta_{2k}^{2}-\theta_{2k}\bigg) \log \bigg( \frac{r_{2k}}{r_{2k-1}} \bigg) - \frac{1}{2} \log \log \bigg( \frac{r_{2k}}{r_{2k-1}} \bigg) \nonumber \\
& + \frac{\big[ \log \big(\frac{br_{2k}^{2b}-t_{2k}}{t_{2k}-br_{2k-1}^{2b}} \big) \big]^{2}}{4 \log \big( \frac{r_{2k}}{r_{2k-1}} \big)} - \sum_{j=1}^{+\infty} \log \bigg( 1-\bigg( \frac{r_{2k-1}}{r_{2k}} \bigg)^{2j} \bigg) + \bigg( \theta_{2k} - \frac{1}{2}\bigg) \log \bigg( \frac{br_{2k}^{2b}-t_{2k}}{t_{2k}-br_{2k-1}^{2b}} \bigg) \bigg\}. \label{lol64}
\end{align}
On the other hand, using Lemmas \ref{lemma:Explicit expression for the F coeff} and \ref{lemma: S2k FINAL all k}, we obtain (after a lot of cancellations)
\begin{align}
\sum_{k=2,4,...}^{2g} F_{6,k}^{(n,\epsilon)} + \sum_{k=1}^{2g} E_{6,k}^{(n,\epsilon)} = \sum_{k=1}^{g} \bigg\{ & \bigg( \theta_{2k}-\theta_{2k}^{2}-\frac{1+b^{2}}{6} \bigg) \log \bigg( \frac{r_{2k}}{r_{2k-1}} \bigg) + \frac{b^{2}r_{2k}^{2b}}{br_{2k}^{2b}-t_{2k}} \nonumber \\
& + \frac{b^{2}r_{2k-1}^{2b}}{t_{2k}-br_{2k-1}^{2b}} + \bigg( \frac{1}{2}-\theta_{2k} \bigg) \log \bigg( \frac{br_{2k}^{2b}-t_{2k}}{t_{2k}-br_{2k-1}^{2b}} \bigg) \bigg\}. \label{lol65}
\end{align}
By combining \eqref{lol64} and \eqref{lol65}, we finally obtain
\begin{align*}
\sum_{k=2,4,...}^{2g} (F_{6,k}^{(n,\epsilon)} +\widetilde{\Theta}_{k,n} ) + \sum_{k=1}^{2g}  E_{6,k}^{(n,\epsilon)} & = C_{6} + \mathcal{F}_{n}.
\end{align*}
This finishes the proof of Theorem \ref{thm:main thm}. 
\end{proof}
\section{Proof of Theorem \ref{thm:main thm 2}: the case $r_{2g}=+\infty$}\label{section:proof 2}
As in Section \ref{section:proof}, we start with \eqref{exact formula for log Pn}, but now we split $\log \mathcal{P}_{n}$ into $4g$ parts
\begin{align}\label{log Dn as a sum of sums 2}
\log \mathcal{P}_{n} = S_{0} + \sum_{k=1}^{2g-1}(S_{2k-1}+S_{2k}) + S_{4g-1},
\end{align}
with 
\begin{align*}
& S_{0} = \sum_{j=1}^{M'} \log \bigg( \sum_{\ell=1}^{2g+1} (-1)^{\ell+1}\frac{\gamma(\tfrac{j+\alpha}{b},nr_{\ell}^{2b})}{\Gamma(\tfrac{j+\alpha}{b})} \bigg),   \\
& S_{2k-1} = \sum_{j=j_{k-1,+}+1}^{j_{k,-}-1} \hspace{-0.3cm} \log \bigg( \sum_{\ell=1}^{2g+1} (-1)^{\ell+1}\frac{\gamma(\tfrac{j+\alpha}{b},nr_{\ell}^{2b})}{\Gamma(\tfrac{j+\alpha}{b})} \bigg), & & k=1,\ldots,2g-1,  \\
& S_{2k} = \sum_{j=j_{k,-}}^{j_{k,+}} \log \bigg( \sum_{\ell=1}^{2g+1} (-1)^{\ell+1}\frac{\gamma(\tfrac{j+\alpha}{b},nr_{\ell}^{2b})}{\Gamma(\tfrac{j+\alpha}{b})} \bigg), & & k=1,\ldots,2g-1,
\end{align*}
and 
\begin{align*}
S_{4g-1} = \sum_{j=j_{2g-1,+}+1}^{n} \hspace{-0.3cm} \log \bigg( \sum_{\ell=1}^{2g+1} (-1)^{\ell+1}\frac{\gamma(\tfrac{j+\alpha}{b},nr_{\ell}^{2b})}{\Gamma(\tfrac{j+\alpha}{b})} \bigg).
\end{align*}
The sums $S_{0},S_{1},\ldots,S_{4g-2}$ can be analyzed exactly as in Section \ref{section:proof}. For the large $n$ asymptotics of these sums, see Lemma \ref{lemma: S0} for $S_{0}$, Lemma \ref{lemma: S2km1 k odd} for $S_{2k-1}$ with $k \in \{1,3,\ldots,2g-1\}$, Lemma \ref{lemma: S2km1 k even} for $S_{2k-1}$ with $k \in \{2,4,\ldots,2g-2\}$, and Lemma \ref{lemma: S2k FINAL all k} for $S_{2k}$ with $k \in \{1,2,\ldots,2g-1\}$. Thus it only remains to determine the large $n$ asymptotics of $S_{4g-1}$ in this section. These asymptotics are stated in the following lemma. 
\begin{lemma}\label{lemma: S2km1 k =2g}
Let $k = 2g$. As $n \to + \infty$, we have
\begin{align*}
& S_{2k-1} = F_{1,k}^{(\epsilon)} n^{2}  + F_{2,k}^{(\epsilon)} n \log n + F_{3,k}^{(n,\epsilon)} n + F_{5,k}^{(n,\epsilon)} \log n + F_{6,k}^{(n,\epsilon)} + \bigO\bigg( \frac{\log n}{n} \bigg),
\end{align*}
where
\begin{align*}
& F_{1,k}^{(\epsilon)} = \frac{br_{k-1}^{4b}}{(1-\epsilon)^{2}}\frac{1-4\epsilon - 2 \log(1-\epsilon)}{4} + \frac{3}{4b} + \frac{1}{2b}\log(br_{k-1}^{2b}) -r_{k-1}^{2b}, \\
& F_{2,k}^{(\epsilon)} = \frac{br_{k-1}^{2b}}{2(1-\epsilon)}-\frac{1}{2}, \\
& F_{3,k}^{(n,\epsilon)} = \frac{r_{k-1}^{2b}}{1-\epsilon}\bigg\{ \frac{2\alpha-1+2\theta_{k-1,+}^{(n,\epsilon)}}{2}(\epsilon + \log(1-\epsilon)) - \frac{b+2\alpha}{2} - b \log b + \frac{b}{2}\log(2\pi) - b^{2}\log(r_{k-1}) \\
&  -\frac{2\alpha-b}{2}\log(1-\epsilon) + b \epsilon \log \bigg( \frac{\epsilon b r_{k-1}^{2b}}{1-\epsilon} \bigg) \bigg\} + \frac{b+2\alpha+1}{2b} - \frac{r_{k-1}^{2b}}{2}+\frac{1}{2}\log \bigg( \frac{b}{2\pi} \bigg) + \frac{1+2\alpha}{2b}\log \big( br_{k-1}^{2b} \big) \\
& -(1-br_{k-1}^{2b})\log \big( 1-br_{k-1}^{2b} \big), \\
& F_{5,k}^{(n,\epsilon)} = -\frac{\theta_{k-1,+}^{(n,\epsilon)}+\alpha}{2}, \\
& F_{6,k}^{(n,\epsilon)} = -\frac{1+3b+b^{2}-6(1+b)\theta_{k-1,+}^{(n,\epsilon)}+6(\theta_{k-1,+}^{(n,\epsilon)})^{2}}{12b} \log(1-\epsilon) -\frac{b}{\epsilon} + \bigg(\frac{1}{2}-\theta_{k-1,+}^{(n,\epsilon)}\bigg)\log \epsilon \\
&  + \bigg( \frac{1}{2}-\alpha-\theta_{k-1,+}^{(n,\epsilon)} \bigg) \log \big( r_{k-1}^{b}\sqrt{2\pi} \big) + \frac{1}{4}\log \bigg( \frac{b}{4\pi} \bigg) - \frac{1+2\alpha}{2}\log \big( 1-br_{k-1}^{2b} \big) + \frac{b^{2}r_{k-1}^{2b}}{1-br_{k-1}^{2b}} \\
& +b + \frac{b^{2}+6b\alpha + 6\alpha^{2}+6\alpha+1}{12b}\log \big( br_{k-1}^{2b} \big).
\end{align*}
\end{lemma}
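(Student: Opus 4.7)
The plan is to adapt the analysis developed for $\{S_{2k}^{(1)}\}_{k\,\mathrm{odd}}$ in Lemmas~\ref{lemma: exact decomposition for the interpolating sums}--\ref{lemma: final asymp for S2kp1p} to the tail sum $S_{4g-1}$, exploiting the crucial simplification that $\lambda_{j,2g-1}$ is bounded away from $1$ throughout the range $j\in\{j_{2g-1,+}+1,\ldots,n\}$. As a consequence no transition region near $\lambda=1$ has to be carved out, the auxiliary parameter $M$ never enters, and the analysis is actually easier than for $\smash{\{S_{2k}^{(1)}\}_{k\,\mathrm{odd}}}$; the only genuinely new feature is the appearance of boundary contributions at the upper endpoint $j=n$ (equivalently $x=j/n=1$).

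The first step is to reduce $S_{4g-1}$ to a single-term sum. For $j\in\{j_{2g-1,+}+1,\ldots,n\}$ and $\ell\in\{1,\ldots,2g-1\}$ one has $\lambda_{j,\ell}\leq\lambda_{j,2g-1}<1-\epsilon$, with $\lambda_{j,\ell}$ also bounded away from $0$; hence Lemma~\ref{lemma: asymp of gamma for lambda bounded away from 1}(ii) gives exponential smallness, and the inequality $\eta_{j,\ell}^{2}>\eta_{j,2g-1}^{2}$ for $\ell<2g-1$ shows that the $\ell=2g-1$ term dominates, so $S_{4g-1}=\sum_{j=j_{2g-1,+}+1}^{n}\log\bigl(\gamma(a_j,nr_{2g-1}^{2b})/\Gamma(a_j)\bigr)+\bigO(e^{-cn})$. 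Next, I would apply the exact decomposition of Lemma~\ref{lemma: exact decomposition for the interpolating sums} with index $k=2g-1$ together with the preliminary expansion of Lemma~\ref{lemma:preliminary expansions}, rewriting the resulting sum as a combination of Riemann sums of $\mathfrak{g}_{2g-1,1},\ldots,\mathfrak{g}_{2g-1,6}$ and $\mathfrak{h}_{2g-1,3},\mathfrak{h}_{2g-1,4}$. Because $\mathcal{F}_{2g-1}(0;x)$ stays bounded away from $0$ for $x\in[(j_{2g-1,+}+1)/n,1]$, the delicate $\bigO(\sqrt n/M^{7})$ error of Lemma~\ref{lemma:preliminary expansions} improves to $\bigO(n^{-1})$ or $\bigO((\log n)/n)$.

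The third step is to apply Lemma~\ref{lemma:Riemann sum NEW} to each resulting Riemann sum with lower endpoint $A=br_{2g-1}^{2b}/(1-\epsilon)$, $a_0=1-\alpha-\theta_{2g-1,+}^{(n,\epsilon)}$ and upper endpoint $B=1$, $b_0=0$. Contributions from $A$ are identical to those in Lemma~\ref{lemma: final asymp for S2kp1p} for odd $k$ (after the relabeling $r_k\!\leftrightarrow\!r_{k-1}=r_{2g-1}$) and they reproduce the portions of $F_{3,k}^{(n,\epsilon)}$ and $F_{6,k}^{(n,\epsilon)}$ carrying the prefactor $r_{k-1}^{2b}/(1-\epsilon)$ as well as the $\log(1-\epsilon)$, $b/\epsilon$, $\log\epsilon$ and $\log(r_{k-1}^{b}\sqrt{2\pi})$ terms. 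The genuinely new contributions come from $x=1$ and can be computed via the explicit primitives
\begin{align*}
\int \mathfrak{g}_{2g-1,1}(x)\,dx &= \tfrac{3x^{2}}{4b}+\tfrac{x^{2}}{2b}\log\!\bigl(\tfrac{br_{2g-1}^{2b}}{x}\bigr)-xr_{2g-1}^{2b}, \\
\int\!\bigl(\mathfrak{g}_{2g-1,3}+\mathfrak{h}_{2g-1,3}\bigr)(x)\,dx &= \tfrac{x}{2}+\tfrac{\alpha x}{b}+\tfrac{x}{2}\log\!\tfrac{bx}{2\pi}+\tfrac{\alpha x}{b}\log\!\tfrac{br_{2g-1}^{2b}}{x}+(br_{2g-1}^{2b}-x)\log|x-br_{2g-1}^{2b}|, \\
\int\!\bigl(\mathfrak{g}_{2g-1,4}+\mathfrak{h}_{2g-1,4}\bigr)(x)\,dx &= \tfrac{b^{2}r_{2g-1}^{2b}}{x-br_{2g-1}^{2b}}-\tfrac{b^{2}-6b\alpha+6\alpha^{2}}{12b}\log x-\alpha\log|x-br_{2g-1}^{2b}|,
\end{align*}
together with $\mathfrak{g}_{2g-1,2}(1)=-\tfrac{1}{2}$. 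Evaluating the first line at $x=1$ recovers the extra part $\tfrac{3}{4b}+\tfrac{1}{2b}\log(br_{2g-1}^{2b})-r_{2g-1}^{2b}$ of $F_{1,k}^{(\epsilon)}$, and the count $-\tfrac{1}{2}(n-j_{2g-1,+})$ reproduces both the $-\tfrac{1}{2}$ summand of $F_{2,k}^{(\epsilon)}$ and the $-\tfrac{\theta_{k-1,+}^{(n,\epsilon)}+\alpha}{2}$ in $F_{5,k}^{(n,\epsilon)}$. Similarly, the combined primitive $\mathfrak{g}_{2g-1,3}+\mathfrak{h}_{2g-1,3}$ at $x=1$ yields the remaining terms of $F_{3,k}^{(n,\epsilon)}$, in particular $-(1-br_{k-1}^{2b})\log(1-br_{k-1}^{2b})$, and $\mathfrak{g}_{2g-1,4}+\mathfrak{h}_{2g-1,4}$ at $x=1$ produces the new constants $\tfrac{b^{2}r_{k-1}^{2b}}{1-br_{k-1}^{2b}}$, $-\tfrac{1+2\alpha}{2}\log(1-br_{k-1}^{2b})$ and $\tfrac{b^{2}+6b\alpha+6\alpha^{2}+6\alpha+1}{12b}\log(br_{k-1}^{2b})$ in $F_{6,k}^{(n,\epsilon)}$.

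The main obstacle is the bookkeeping: assembling the many boundary and integral contributions into the closed-form expressions in the statement, notably by exploiting (as in \eqref{lol11}--\eqref{lol14}) the fact that $\mathfrak{g}_{2g-1,3}+\mathfrak{h}_{2g-1,3}$ and $\mathfrak{g}_{2g-1,4}+\mathfrak{h}_{2g-1,4}$ admit elementary primitives even though neither summand does. No new analytic input is required, and the error estimate $\bigO((\log n)/n)$ claimed in the lemma follows directly from the error terms in Lemmas~\ref{lemma:Riemann sum NEW}, \ref{lemma:preliminary expansions} and \ref{lemma: exact decomposition for the interpolating sums} specialized to this regime.
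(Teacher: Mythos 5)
Your argument reaches the correct conclusion but by a somewhat different route than the paper's own proof. The paper reduces $S_{4g-1}$ to the single-term sum $\sum_{j}\log\bigl(\gamma(a_{j},nr_{2g-1}^{2b})/\Gamma(a_{j})\bigr)$ exactly as you do, and then invokes the method of Lemma~\ref{lemma:S2km1 part 1}: expand each summand directly via Lemma~\ref{lemma: asymp of gamma for lambda bounded away from 1}(ii) to obtain Riemann sums of $f_{1,k-1},f_{2,k-1},f_{3,k-1}$, and apply Lemma~\ref{lemma:Riemann sum NEW}. You instead reuse the machinery of Lemmas~\ref{lemma: exact decomposition for the interpolating sums}--\ref{lemma: expansion of various sums}, which was built for the boundary sums $S_{2k}^{(1)},S_{2k}^{(3)}$ and therefore carries baggage designed to control the approach $\lambda\to 1$; these lemmas are stated for the range $\{g_{k,+}+1,\ldots,j_{k,+}\}$, so strictly you must adapt them to the new range rather than apply them verbatim. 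The two routes are nevertheless provably equivalent: a short computation gives $f_{1,k}=\mathfrak{g}_{k,1}$, $f_{2,k}=\mathfrak{g}_{k,3}+\mathfrak{h}_{k,3}$ and $f_{3,k}=\mathfrak{g}_{k,4}+\mathfrak{h}_{k,4}$, while the extra pieces $\mathfrak{g}_{k,5}$, $\mathfrak{g}_{k,6}$ (which you do not address) contribute only $\bigO(n^{-1})$ and $\bigO(n^{-2})$ here since $x=j/n$ stays bounded away from the pole at $br_{2g-1}^{2b}$; this should be noted. Your route does have the mild advantage that the elementary primitives of $\mathfrak{g}_{k,3}+\mathfrak{h}_{k,3}$ and $\mathfrak{g}_{k,4}+\mathfrak{h}_{k,4}$ are already recorded in the paper.

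One statement in your proposal is wrong, though harmless: the contributions from the endpoint $A=br_{2g-1}^{2b}/(1-\epsilon)$ are not ``identical to those in Lemma~\ref{lemma: final asymp for S2kp1p} for odd $k$''; they are the negatives of those terms. In $S_{4g-2}^{(1)}$ this point is the upper endpoint $B$ with $b_{0}=-\alpha-\theta_{2g-1,+}^{(n,\epsilon)}$ and weight $(1+2b_{0})$, whereas in $S_{4g-1}$ it is the lower endpoint $A$ with $a_{0}=1-\alpha-\theta_{2g-1,+}^{(n,\epsilon)}$ and weight $(1-2a_{0})$, and $(1-2a_{0})=-(1+2b_{0})$. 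This sign flip is exactly what makes the $\epsilon$-dependence cancel between $S_{4g-2}$ and $S_{4g-1}$ in the final assembly; the correct comparison is with the lower-endpoint $A_{k}$-contribution of Lemma~\ref{lemma:S2km1 part 1} (written explicitly in Lemma~\ref{lemma:Explicit expression for the F coeff}), where the terms do coincide. Since inserting the correct $a_{0}$ into Lemma~\ref{lemma:Riemann sum NEW} produces the right signs automatically, the misattribution does not invalidate the proof, and your identification of the genuinely new contributions from the upper endpoint $x=1$ via the explicit primitives is correct.
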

\begin{proof}
In the same way as in Lemma \ref{lemma: S2km1 k even splitting}, as $n \to + \infty$ we find
\begin{align*}
S_{2k-1} = S_{2k-1}^{(1)}+\bigO(e^{-cn}), \qquad \mbox{where} \qquad S_{2k-1}^{(1)} = \sum_{j=j_{k-1,+}+1}^{n} \log \bigg( \frac{\gamma(\tfrac{j+\alpha}{b},nr_{k-1}^{2b})}{\Gamma(\tfrac{j+\alpha}{b})} \bigg).
\end{align*}
The large $n$ asymptotics of $S_{2k-1}^{(1)}$ can be obtained in a similar (and simpler, because there are no theta functions) way than in Lemma \ref{lemma:S2km1 part 1} using Lemma \ref{lemma:Riemann sum NEW}. We omit further details.
\end{proof}
By substituting the asymptotics of Lemmas \ref{lemma: S0}, \ref{lemma: S2km1 k odd}, \ref{lemma: S2km1 k even}, \ref{lemma: S2k FINAL all k} and \ref{lemma: S2km1 k =2g} in \eqref{log Dn as a sum of sums 2}, and then simplifying, we obtain the statement of Theorem \ref{thm:main thm 2}. 

\section{Proof of Theorem \ref{thm:main thm 3}: the case $r_{1}=0$}\label{section:proof 3}
We use again \eqref{exact formula for log Pn}, but now we split $\log \mathcal{P}_{n}$ into $4g-1$ parts as follows
\begin{align}\label{log Dn as a sum of sums 3}
\log \mathcal{P}_{n} = S_{3} + S_{4} + \sum_{k=3}^{2g}(S_{2k-1}+S_{2k}) + S_{4g+1}
\end{align}
with 
\begin{align*}
& S_{2k-1} = \sum_{j=j_{k-1,+}+1}^{j_{k,-}-1} \hspace{-0.3cm} \log \bigg( \sum_{\ell=1}^{2g+1} (-1)^{\ell+1}\frac{\gamma(\tfrac{j+\alpha}{b},nr_{\ell}^{2b})}{\Gamma(\tfrac{j+\alpha}{b})} \bigg), & & k=3,\ldots,2g+1,  \\
& S_{2k} = \sum_{j=j_{k,-}}^{j_{k,+}} \log \bigg( \sum_{\ell=1}^{2g+1} (-1)^{\ell+1}\frac{\gamma(\tfrac{j+\alpha}{b},nr_{\ell}^{2b})}{\Gamma(\tfrac{j+\alpha}{b})} \bigg), & & k=2,\ldots,2g,
\end{align*}
and 
\begin{align*}
S_{3} = \sum_{j=1}^{j_{2,-}-1} \log \bigg( \sum_{\ell=1}^{2g+1} (-1)^{\ell+1} \frac{\gamma(\tfrac{j+\alpha}{b},nr_{\ell}^{2b})}{\Gamma(\tfrac{j+\alpha}{b})} \bigg).
\end{align*}
The sums $S_{4},S_{5},\ldots,S_{4g+1}$ can be analyzed exactly as in Section \ref{section:proof}. Their large $n$ asymptotics is given by Lemma \ref{lemma: S2km1 k odd} for $S_{2k-1}$ with $k \in \{3,5,\ldots,2g+1\}$, Lemma \ref{lemma: S2km1 k even} for $S_{2k-1}$ with $k \in \{4,6,\ldots,2g\}$, and Lemma \ref{lemma: S2k FINAL all k} for $S_{2k}$ with $k \in \{2,3,\ldots,2g\}$.  Thus it only remains to analyze $S_{3}$ in this section. This analysis is different than in the previous Sections \ref{section:proof} and \ref{section:proof 2} and requires the asymptotics of $\gamma(a,z)$ as $z\to +\infty$ uniformly for $\frac{a}{z} \in [0,\frac{1}{1+\epsilon/2}]$. These asymptotics are not covered by Lemma \ref{lemma: uniform}, but are also known in the literature, see e.g. \cite{Paris}. 
\begin{lemma}(Taken from \cite[Section 4]{Paris}) \label{lemma: Paris}
As $z \to + \infty$ and simultaneously $\frac{z-a}{\sqrt{z}} \to + \infty$, we have
\begin{align*}
\frac{\gamma(a,z)}{\Gamma(a)} = 1-\frac{z^{a}e^{-z}}{\Gamma(a)}\bigg( \frac{1}{z-a} - \frac{z}{(z-a)^{3}} + \bigO(z^{-3}) \bigg).
\end{align*}
\end{lemma}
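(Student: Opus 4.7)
The plan is to pass to the complementary incomplete gamma function $\Gamma(a,z):=\Gamma(a)-\gamma(a,z)=\int_z^\infty t^{a-1}e^{-t}\,dt$, so the claim becomes
\begin{equation*}
\Gamma(a,z)=z^a e^{-z}\Bigl(\tfrac{1}{z-a}-\tfrac{z}{(z-a)^3}+\bigO(z^{-3})\Bigr).
\end{equation*}
Writing the integrand as $e^{-\phi(t)}$ with $\phi(t)=t-(a-1)\log t$ and $\phi'(t)=(t-a+1)/t$, the identity $e^{-\phi(t)}\,dt=\phi'(t)^{-1}\,d(-e^{-\phi(t)})$ sets up a Laplace-style integration by parts. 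Under the hypothesis $(z-a)/\sqrt{z}\to\infty$, $\phi'$ is bounded below by $(z-a+1)/z$ on $[z,\infty)$, so each IBP produces a boundary contribution at $t=z$ together with a strictly smaller remainder integral; iterating twice should deliver the two leading terms.

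Concretely, the first IBP yields
\begin{equation*}
\Gamma(a,z)=\frac{z^a e^{-z}}{z-a+1}-(a-1)\int_z^\infty\frac{t^{a-1}e^{-t}}{(t-a+1)^2}\,dt,
\end{equation*}
and the second, applied to the remainder using the antiderivative $-t(t-a+1)^{-3}e^{-\phi(t)}$, contributes the boundary term $-(a-1)z^a e^{-z}/(z-a+1)^3$. Setting $u:=z-a$ and substituting $a-1=z-u-1$, I would expand $(u+1)^{-1}$ and $(u+1)^{-3}$ as geometric series in $u^{-1}$; a short algebraic verification shows that the coefficient of $u^{-2}$ cancels between the two boundary terms, leaving precisely $\frac{1}{z-a}-\frac{z}{(z-a)^3}$ modulo lower-order corrections, which is the asserted form.

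The main obstacle is controlling the remainder after two IBPs. A crude bound $(t-a+1)^{-k}\le(z-a+1)^{-k}$ on $[z,\infty)$ gives $|R_1|\le \tfrac{a-1}{(z-a+1)^2}\Gamma(a,z)$; since $R_1<0$ one also has $\Gamma(a,z)\le \tfrac{z^a e^{-z}}{z-a+1}$, and combining yields $|R_1|\le \tfrac{(a-1)z^a e^{-z}}{(z-a+1)^3}$. Running the same scheme on the twice-iterated remainder produces a bound of order $\tfrac{(a-1)^2 z^a e^{-z}}{(z-a+1)^5}$, and the hypothesis $(z-a)^2\gg z\ge a-1$ ensures both that this geometric chain of estimates converges and that the error is of strictly lower order than the two displayed terms. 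Tracking constants through the rewriting in the variable $u=z-a$ then delivers the claimed expansion with an error of the stated size.
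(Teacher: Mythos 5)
The paper does not prove this lemma --- it cites Paris without proof --- so your integration-by-parts argument is an independent derivation, and the framework is sound: $\phi'(t)=(t-a+1)/t$, the two boundary terms $z^ae^{-z}/(z-a+1)$ and $-(a-1)z^ae^{-z}/(z-a+1)^3$ are correct, and the $u^{-2}$ cancellation you point out after substituting $u=z-a$ does occur. However, two issues need attention.

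First, your remainder bound after the second IBP is incomplete. The derivative $\frac{d}{dt}\bigl(t(t-a+1)^{-3}\bigr)$ equals $-2(t-a+1)^{-3}-3(a-1)(t-a+1)^{-4}$, so $|R_2|$ picks up \emph{two} contributions: one of order $(a-1)z^ae^{-z}(z-a+1)^{-4}$ and one of the order you wrote, $(a-1)^2z^ae^{-z}(z-a+1)^{-5}$. The first term, which you dropped, is the larger of the two whenever $a-1<z-a+1$.

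Second, and more importantly, the claimed $O(z^{-3})$ does not follow from the hypothesis $(z-a)/\sqrt{z}\to\infty$ alone. Carrying the expansion in $u=z-a$ one step further gives, exactly,
\[
\frac{1}{z-a+1}-\frac{a-1}{(z-a+1)^3}=\frac{1}{z-a}-\frac{z}{(z-a)^3}-\frac{1}{(z-a)^3}+O\!\bigl(z(z-a)^{-4}\bigr),
\]
and combined with the remainder estimate the total correction is of size $(z-a)^{-3}+(a+1)(z-a)^{-4}$, which equals $O(z^{-3})$ only if $z-a\gtrsim z$, i.e.\ $a/z$ bounded away from $1$. For example, $a=z-2\sqrt{z}\log z$ satisfies $(z-a)/\sqrt{z}\to\infty$ but yields a correction $(z-a)^{-3}\sim z^{-3/2}(\log z)^{-3}\gg z^{-3}$. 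The paper in fact applies the lemma only with $a/z\in[0,1/(1+\epsilon/2)]$, as stated in the sentence introducing it, so the stronger restriction is available in the regime of use; you should invoke it explicitly, or else state the error in the form $O((z-a)^{-3})+O((a+1)(z-a)^{-4})$, which is what your argument actually delivers under the lemma's hypothesis as written.
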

We are now in a position to obtain the large $n$ asymptotics of $S_{3}$. 
\begin{lemma}\label{lemma: S2km1 k =2}
Let $k = 2$. As $n \to + \infty$, we have
\begin{align*}
& S_{2k-1} = F_{1,k}^{(\epsilon)} n^{2}  + F_{2,k}^{(\epsilon)} n \log n + F_{3,k}^{(n,\epsilon)} n + F_{5,k}^{(n,\epsilon)} \log n + F_{6,k}^{(n,\epsilon)} + \bigO\bigg( \frac{\log n}{n} \bigg),
\end{align*}
where
\begin{align*}
& F_{1,k}^{(\epsilon)} = - \frac{br_{k}^{4b}}{(1+\epsilon)^{2}}\frac{1+4\epsilon - 2 \log(1+\epsilon)}{4}, \\
& F_{2,k}^{(\epsilon)} = -\frac{br_{k}^{2b}}{2(1+\epsilon)}, \\
& F_{3,k}^{(n,\epsilon)} =  \frac{r_{k}^{2b}}{1+\epsilon}\bigg\{ \big( 1+\alpha - \theta_{k,-}^{(n,\epsilon)} \big)\epsilon - b^{2} \log(r_{k}) + \alpha + \frac{1}{2} + \frac{b}{2} + b \epsilon \log \bigg( \frac{\epsilon}{1+\epsilon} \bigg) - \frac{b}{2}\log(2\pi) \\
& + \frac{2\theta_{k,-}^{(n,\epsilon)}-1-b}{2}\log(1+\epsilon) \bigg\}, \\
& F_{5,k}^{(n,\epsilon)} = -\frac{1+b^{2}+6\alpha+6\alpha^{2}-3b(3+4\alpha)}{12b}-\frac{\theta_{k,-}^{(n,\epsilon)}}{2}, \\
& F_{6,k}^{(n,\epsilon)} = \frac{1-3b+b^{2}+6(b-1)\theta_{k,-}^{(n,\epsilon)}+6(\theta_{k,-}^{(n,\epsilon)})^{2}}{12b} \log(1+\epsilon) -\frac{b}{\epsilon} + \frac{1-2\theta_{k,-}^{(n,\epsilon)}}{2} \log \epsilon \\
& + \bigg( 2b(1+\alpha)-\alpha-\alpha^{2}-\frac{1+3b+b^{2}}{6} \bigg) \log(r_{k}) + \frac{\alpha+1}{2}\log(2\pi) - \theta_{k,-}^{(n,\epsilon)} \log \big( r_{k}^{b}\sqrt{2\pi} \big) \\
& - \frac{1-3b+b^{2}+6\alpha-6b\alpha + 6 \alpha^{2}}{12b}\log(b) - \mathcal{G}(b,\alpha),
\end{align*}
where $\mathcal{G}(b,\alpha)$ is defined in \eqref{lol17}.
\end{lemma}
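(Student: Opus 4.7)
}
First, I would reduce $S_{3}$ to a single dominant contribution. Because $r_{1}=0$ makes the $\ell=1$ term vanish, and because the telescoping identity $\sum_{\ell=2}^{2g+1}(-1)^{\ell+1}=0$ holds, one can rewrite
\begin{align*}
\sum_{\ell=1}^{2g+1}(-1)^{\ell+1}\frac{\gamma(a_{j},nr_{\ell}^{2b})}{\Gamma(a_{j})} = \sum_{\ell=2}^{2g}(-1)^{\ell}\bigg(1-\frac{\gamma(a_{j},nr_{\ell}^{2b})}{\Gamma(a_{j})}\bigg),
\end{align*}
the $\ell=2g+1$ contribution being zero since $\gamma(a,\infty)=\Gamma(a)$. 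For $j\in\{1,\ldots,j_{2,-}-1\}$ one has $\lambda_{j,\ell}\geq\lambda_{j,2}\geq 1+\epsilon$ for every $\ell\geq 2$, so by Lemma \ref{lemma: asymp of gamma for lambda bounded away from 1}(i) each term $1-\gamma(a_{j},nr_{\ell}^{2b})/\Gamma(a_{j})$ is exponentially small and the $\ell=2$ term dominates the rest by a factor $\bigO(e^{-cn})$ uniformly in $j$. Hence $S_{3}=\sum_{j=1}^{j_{2,-}-1}\log\big(1-\gamma(a_{j},nr_{2}^{2b})/\Gamma(a_{j})\big)+\bigO(e^{-cn})$ for some $c>0$.

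Second, I would extract pointwise asymptotics for the summand via the Paris expansion (Lemma \ref{lemma: Paris}). Setting $z:=nr_{2}^{2b}$, one has $z-a_{j}\geq\tfrac{\epsilon}{1+\epsilon}z$ throughout the range, so $(z-a_{j})/\sqrt{z}\gtrsim\sqrt{n}$ and Lemma \ref{lemma: Paris} applies uniformly, giving
\begin{align*}
1-\frac{\gamma(a_{j},z)}{\Gamma(a_{j})} = \frac{z^{a_{j}}e^{-z}}{\Gamma(a_{j})(z-a_{j})}\bigg(1-\frac{z}{(z-a_{j})^{2}}+\bigO(z^{-2})\bigg).
\end{align*}
Taking logarithms yields, uniformly in $j$,
\begin{align*}
\log\bigg(1-\frac{\gamma(a_{j},z)}{\Gamma(a_{j})}\bigg) = a_{j}\log z - z - \log\Gamma(a_{j}) - \log(z-a_{j}) + \bigO\bigg(\frac{z}{(z-a_{j})^{2}}\bigg).
\end{align*}

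Third, I would sum term by term. The contributions from $a_{j}\log z$, $-z$, and $-\log(z-a_{j})$ are Riemann sums of smooth functions of $j/n$ on the interval $[1/n,\,br_{2}^{2b}/(1+\epsilon)]$ and I would expand them with Lemma \ref{lemma:Riemann sum NEW} (using Remark \ref{remark:we can apply the lemma for functions with a pole} to handle the endpoint $A=1/n\searrow 0$, since $x\mapsto\log x$ is not smooth at $0$), taking $B=br_{2}^{2b}/(1+\epsilon)$ and endpoint offset $b_{0}=-\alpha-\theta_{2,-}^{(n,\epsilon)}$, with elementary primitives. The crucial remaining sum is $-\sum_{j=1}^{j_{2,-}-1}\log\Gamma(a_{j})$, which I would evaluate directly from the definition \eqref{lol17} of $\mathcal{G}(b,\alpha)$: with $N:=j_{2,-}-1$, that definition gives
\begin{align*}
\sum_{j=1}^{N}\log\Gamma(a_{j}) = \frac{N^{2}}{2b}\log N - \frac{3+2\log b}{4b}N^{2} + \cdots + \frac{1-3b+b^{2}+6\alpha-6b\alpha+6\alpha^{2}}{12b}\log N + \mathcal{G}(b,\alpha) + o(1),
\end{align*}
and expanding $N = bnr_{2}^{2b}/(1+\epsilon)-\alpha-\theta_{2,-}^{(n,\epsilon)}$ in descending powers of $n$ produces contributions to every coefficient $F_{1,2}^{(\epsilon)},\ldots,F_{6,2}^{(n,\epsilon)}$; in particular the $-\mathcal{G}(b,\alpha)$ term and the $-\log b$ coefficient in $F_{6,2}^{(n,\epsilon)}$ originate entirely from this step.

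Fourth, I would collect and simplify. The main obstacle is the lengthy algebraic bookkeeping at the end: verifying that the contributions arising from the three Riemann-sum expansions and from $\sum\log\Gamma(a_{j})$ combine to match exactly the stated expressions, especially the intricate $F_{3,2}^{(n,\epsilon)}$ and the $\log\epsilon$, $\log(1+\epsilon)$, $\log r_{2}$, $\log(2\pi)$ pieces of $F_{6,2}^{(n,\epsilon)}$. A secondary technical point is confirming that the summed pointwise error $\sum_{j=1}^{j_{2,-}-1}\bigO(z/(z-a_{j})^{2})$ is $\bigO(\log n / n)$ (using that $z-a_{j}\gtrsim n\epsilon$ away from the endpoint and integrates to a logarithm), which is absorbed in the stated error. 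Apart from this, the argument parallels Lemma \ref{lemma: S2km1 k =2g} of the preceding section, with Paris's expansion and the special constant $\mathcal{G}(b,\alpha)$ playing the roles taken there by Temme's expansion and the standard Euler--Maclaurin evaluation of $\sum\log\Gamma(a_{j})$.
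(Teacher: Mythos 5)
Your Steps 1--3 follow the same route as the paper: telescope out the vanishing $\ell=1$ and $\ell=2g+1$ contributions, show the $\ell=2$ term dominates exponentially, invoke Paris's expansion (Lemma \ref{lemma: Paris}), and handle $\sum_j \log\Gamma(a_j)$ directly from the definition \eqref{lol17} of $\mathcal{G}(b,\alpha)$. This is all correct and is essentially the paper's argument. However, there is a genuine gap in your Step 2/Step 4 error analysis. Your logarithmic expansion discards a term of size $\bigO\big(z/(z-a_j)^2\big)$, and in Step 4 you assert that $\sum_{j=1}^{j_{2,-}-1}\bigO\big(z/(z-a_j)^2\big)=\bigO(\log n/n)$. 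This is false: with $z=nr_2^{2b}$ and $z-a_j\asymp n$ \emph{uniformly} on the whole range (indeed $z-a_j\in\big[\frac{\epsilon}{1+\epsilon}nr_2^{2b},\,nr_2^{2b}\big]$ up to $\bigO(1)$), each summand is $\asymp 1/n$ and there are $\asymp n$ of them, so the sum is $\bigO(1)$, not $\bigO(\log n/n)$. Concretely, approximating by an integral,
\begin{align*}
-\sum_{j=1}^{j_{2,-}-1}\frac{z}{(z-a_j)^2} \approx -\,b\int_{0}^{br_{2}^{2b}/(1+\epsilon)}\frac{b^{2}r_{2}^{2b}}{(br_{2}^{2b}-x)^{2}}\,dx = -\frac{b}{\epsilon},
\end{align*}
which is precisely the $-\tfrac{b}{\epsilon}$ term appearing in $F_{6,k}^{(n,\epsilon)}$; your proposal would therefore fail to reproduce this (and other $\epsilon$- and $\alpha$-dependent) pieces of the constant.

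The fix is to keep the $-z/(z-a_j)^2$ term as a genuine contribution rather than an error, Taylor-expand it in descending powers of $n$ with $x=j/n$ as $-\tfrac{b^2 r_2^{2b}}{n(br_2^{2b}-x)^2}+\bigO(n^{-2})$, and combine it with the $\bigO(1/n)$ correction that also emerges when one expands $-\log(z-a_j)=-\log n+\log b-\log(br_2^{2b}-x)+\tfrac{\alpha}{n(br_2^{2b}-x)}+\bigO(n^{-2})$. Adding these two $\tfrac{1}{n}$-scale contributions gives exactly the integrand $\tfrac{1}{n}\tfrac{-\alpha x-b(b-\alpha)r_2^{2b}}{(x-br_2^{2b})^2}$ that appears in the paper's display \eqref{lol54}, which is then treated with Lemma \ref{lemma:Riemann sum NEW}; only the truly $\bigO(n^{-2})$ remainders can be summed into the $\bigO(n^{-1})$ error. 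Two smaller remarks: in your Step 3 the invocation of Remark \ref{remark:we can apply the lemma for functions with a pole} for the left endpoint $A=1/n$ is not needed here, since the function actually being summed is $\log(br_2^{2b}-x)$, which is smooth at $x=0$ (there is no $\log x$ singularity); and your formula for $N=j_{2,-}-1$ should read $N=\tfrac{bnr_2^{2b}}{1+\epsilon}-\alpha+\theta_{2,-}^{(n,\epsilon)}-1$ (your version has the wrong sign on $\theta_{2,-}^{(n,\epsilon)}$ and is missing the $-1$), though this is just a slip and does not affect the method.
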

\begin{proof}
In a similar way as in Lemma \ref{lemma: S2km1 k even splitting}, as $n \to + \infty$ we find
\begin{align*}
S_{2k-1} = S_{2k-1}^{(2)}+\bigO(e^{-cn}), \qquad \mbox{where} \qquad S_{2k-1}^{(2)} = \sum_{j=1}^{j_{k,-}-1} \log \bigg( 1- \frac{\gamma(\tfrac{j+\alpha}{b},nr_{k-1}^{2b})}{\Gamma(\tfrac{j+\alpha}{b})} \bigg).
\end{align*}
Using Lemma \ref{lemma: Paris}, we conclude that as $n \to + \infty$, 
\begin{align}\label{lol54}
& S_{2k-1} = -\sum_{j=1}^{j_{k,-}-1} \log \Gamma \big( \tfrac{j+\alpha}{b} \big) +  \sum_{j=1}^{j_{k,-}-1} \bigg\{\frac{j/n}{b}n \log n + \Big( 2 \log (r_{k}) j/n - r_{k}^{2b} \Big) n  + \frac{\alpha-b}{b}\log n  \\
& +   2\alpha \log r_{k} - \log \bigg(\frac{br_{k}^{2b}-j/n}{b}\bigg)  + \frac{1}{n} \frac{- \alpha j/n - b(b-\alpha)r_{k}^{2b}}{(j/n-br_{k}^{2b})^{2}} \bigg\} + \bigO(n^{-1}).
\end{align}
Using \cite[formula 5.11.1]{NIST}, namely
\begin{align*}
\log \Gamma(z) = z \log z - z - \frac{\log z}{2} + \frac{\log 2\pi}{2} + \frac{1}{12 z} + \bigO(z^{-3}), \qquad \mbox{as } z \to + \infty,
\end{align*}
we obtain
\begin{align*}
& \sum_{j=1}^{j_{k,-}-1} \log \Gamma \big( \tfrac{j+\alpha}{b} \big) = \frac{br_{k}^{4b}}{2(1+\epsilon)^{2}} n^{2}\log n - \frac{br_{k}^{4b}}{4(1+\epsilon)^{2}} \bigg( 3-2\log \bigg( \frac{r_{k}^{2b}}{1+\epsilon} \bigg) \bigg) n^{2}  \\
& + \frac{2\theta_{k,-}^{(n,\epsilon)}-1-b}{2(1+\epsilon)}r_{k}^{2b} n \log n + \frac{b \log(2\pi) + (2\theta_{k,-}^{(n,\epsilon)}-1-b)(\log(\frac{r_{k}^{2b}}{1+\epsilon})-1)}{2(1+\epsilon)}r_{k}^{2b}n  \\
& + \frac{1+3b+b^{2}-6(1+b)\theta_{k,-}^{(n,\epsilon)}+6(\theta_{k,-}^{(n,\epsilon)})^{2}}{12b} \log \bigg( \frac{nr_{k}^{2b}}{1+\epsilon} \bigg) + \frac{\theta_{k,-}^{(n,\epsilon)}-\alpha-1}{2}\log(2\pi) \\
& + \frac{1-3b+b^{2}+6\alpha-6b\alpha+6\alpha^{2}}{12b}\log b + \mathcal{G}(b,\alpha) + \bigO(n^{-1}), \qquad \mbox{as } n \to + \infty.
\end{align*}
The second sum on the right-hand side of \eqref{lol54} can be expanded explicitly using Lemma \ref{lemma:Riemann sum NEW}, and after a long computation we obtain
\begin{align*}
& \sum_{j=1}^{j_{k,-}-1} \bigg\{\frac{j/n}{b}n \log n + \Big( 2 \log (r_{k}) j/n - r_{k}^{2b} \Big) n  + \frac{\alpha-b}{b}\log n +   2\alpha \log r_{k} - \log \bigg(\frac{br_{k}^{2b}-j/n}{b}\bigg) \\
&   + \frac{1}{n} \frac{- \alpha j/n - b(b-\alpha)r_{k}^{2b}}{(j/n-br_{k}^{2b})^{2}} \bigg\} = \frac{br_{k}^{4b}}{2(1+\epsilon)^{2}}n^{2}\log n + \bigg( \frac{b^{2}r_{k}^{4b}}{(1+\epsilon)^{2}}\log(r_{k}) - \frac{br_{k}^{4b}}{1+\epsilon} \bigg)n^{2} \\
& \frac{r_{k}^{2b}}{1+\epsilon}\frac{2\theta_{k,-}^{(n,\epsilon)}-1-2b}{2}n \log n + r_{k}^{2b} \bigg\{ \bigg(  \frac{b(2\theta_{k,-}^{(n,\epsilon)}-1)}{1+\epsilon}-2b^{2} \bigg) \log r_{k} -\theta_{k,-}^{(n,\epsilon)} + \alpha + 1 + \frac{b}{1+\epsilon} \\
& + \frac{\epsilon b}{1+\epsilon} \log \bigg( \frac{\epsilon r_{k}^{2b}}{1+\epsilon} \bigg) \bigg\} n + \frac{(2b-\alpha-\theta_{k,-}^{(n,\epsilon)})(1+\alpha-\theta_{k,-}^{(n,\epsilon)})}{2b}\log n + \Big( 2b(1+\alpha)-\alpha-\alpha^{2} \\
& -(1+2b)\theta_{k,-}^{(n,\epsilon)}+(\theta_{k,-}^{(n,\epsilon)})^{2} \Big) \log r_{k} + \frac{1-2\theta_{k,-}^{(n,\epsilon)}}{2}\log \epsilon -\frac{b}{\epsilon}.
\end{align*}
Substituting the last two asymptotic formulas in \eqref{lol54} and simplifying, we obtain the claim.
\end{proof}
By combining the asymptotics of Lemmas \ref{lemma: S2km1 k odd}, \ref{lemma: S2km1 k even}, \ref{lemma: S2k FINAL all k} and \ref{lemma: S2km1 k =2g} with \eqref{log Dn as a sum of sums 3}, and then simplifying, we obtain the statement of Theorem \ref{thm:main thm 3}.
\section{Proof of Theorem \ref{thm:main thm 4}: the case $r_{2g}=+\infty$ and $r_{1}=0$}
We split $\log \mathcal{P}_{n}$ into $4g-3$ parts
\begin{align}\label{log Dn as a sum of sums 4}
\log \mathcal{P}_{n} = S_{3} + S_{4} + \sum_{k=3}^{2g-1}(S_{2k-1}+S_{2k}) + S_{4g-1}
\end{align}
with 
\begin{align*}
& S_{2k-1} = \sum_{j=j_{k-1,+}+1}^{j_{k,-}-1} \hspace{-0.3cm} \log \bigg( \sum_{\ell=1}^{2g+1} (-1)^{\ell+1}\frac{\gamma(\tfrac{j+\alpha}{b},nr_{\ell}^{2b})}{\Gamma(\tfrac{j+\alpha}{b})} \bigg), & & k=3,\ldots,2g-1,  \\
& S_{2k} = \sum_{j=j_{k,-}}^{j_{k,+}} \log \bigg( \sum_{\ell=1}^{2g+1} (-1)^{\ell+1}\frac{\gamma(\tfrac{j+\alpha}{b},nr_{\ell}^{2b})}{\Gamma(\tfrac{j+\alpha}{b})} \bigg), & & k=2,\ldots,2g-1,
\end{align*}
and 
\begin{align*}
S_{3} = \sum_{j=1}^{j_{2,-}-1} \log \bigg( \sum_{\ell=1}^{2g+1} (-1)^{\ell+1} \frac{\gamma(\tfrac{j+\alpha}{b},nr_{\ell}^{2b})}{\Gamma(\tfrac{j+\alpha}{b})} \bigg), \quad S_{4g-1} = \hspace{-0.1cm} \sum_{j=j_{2g-1,+}+1}^{n} \hspace{-0.3cm} \log \bigg( \sum_{\ell=1}^{2g+1} (-1)^{\ell+1}\frac{\gamma(\tfrac{j+\alpha}{b},nr_{\ell}^{2b})}{\Gamma(\tfrac{j+\alpha}{b})} \bigg).
\end{align*}
The sums $S_{4},S_{5},\ldots,S_{4g-2}$ can be analyzed exactly as in Section \ref{section:proof}, $S_{4g-1}$ can be analyzed as in Section \ref{section:proof 2}, and $S_{3}$ as in Section \ref{section:proof 3}. More precisely, their large $n$ asymptotics are given by Lemma \ref{lemma: S2km1 k odd} for $S_{2k-1}$ with $k \in \{3,5,\ldots,2g-1\}$, Lemma \ref{lemma: S2km1 k even} for $S_{2k-1}$ with $k \in \{4,6,\ldots,2g-2\}$, Lemma \ref{lemma: S2k FINAL all k} for $S_{2k}$ with $k \in \{2,3,\ldots,2g-1\}$, Lemma \ref{lemma: S2km1 k =2g} for $S_{4g-1}$, and Lemma \ref{lemma: S2km1 k =2} for $S_{3}$. Substituting all these asymptotics in \eqref{log Dn as a sum of sums 4} and simplifying, we obtain the asymptotic formula of Theorem \ref{thm:main thm 4}.
\paragraph{Acknowledgements.} The author owes a great debt of gratitude to Oleg Lisovyy for pointing out that $\Theta$ in \eqref{def of Theta} is related to the Jacobi theta function as in \eqref{first identity for Theta} via the Jacobi triple product formula. The author is also grateful to Yacin Ameur, Sung-Soo Byun and Gr\'{e}gory Schehr for useful remarks, and to the referee for a careful reading. Support is acknowledged from the European Research Council, Grant Agreement No. 682537, the Swedish Research Council, Grant No. 2015-05430, the Ruth and Nils-Erik Stenb\"ack Foundation, and the Novo Nordisk Fonden Project Grant 0064428.

\footnotesize

\end{document}